\def\BAL#1\EAL{\begin{align*}#1\end{align*}}
\def\BALN#1\EALN{\begin{align}#1\end{align}}
\def\BG#1\EG{\begin{gather}#1\end{gather}}
\definecolor{ForestGreen}{rgb}{0.1333,0.5451,0.1333}
\definecolor{DarkRed}{rgb}{0.8,0,0}
\definecolor{Red}{rgb}{1,0,0}
\newcommand{\eat}[1]{}
\declaretheorem[numberwithin=section]{theorem}
\declaretheorem[numberlike=theorem]{lemma}
\declaretheorem[numberlike=theorem,name=Lemma]{lem}
\declaretheorem[numberlike=theorem]{fact}
\declaretheorem[numberlike=theorem,name=Proposition]{prop}
\declaretheorem[numberlike=theorem]{corollary}
\declaretheorem[numberlike=theorem,name=Corollary]{cor}
\declaretheorem[numberlike=theorem]{claim}
\declaretheorem[numberlike=theorem,style=definition]{definition}
\declaretheorem[numberlike=theorem,name=Definition,style=definition]{defn}
\newtheorem*{theorem*}{Theorem}
\newcommand{\e}{\epsilon}
\newcommand{\cS}{{\cal S}}
\newcommand{\tO}{\tilde{O}}
\global\long\def\polylog{\mathrm{polylog}}
\global\long\def\almostfair{\textsc{AlmostFair}}
\global\long\def\Otil{\tilde{O}}
\global\long\def\boundary#1{N\left\langle #1\right\rangle }
\global\long\def\fbar{\bar{f}}
\global\long\def\Deltabar{\overline{\Delta}}
\global\long\def\one{\boldsymbol{1}}
\global\long\def\ex{\mathsf{def}}
 \global\long\def\exx{\overline{\mathsf{def}}}
\global\long\def\poly{\mathrm{poly}}
\global\long\def\vol{\textup{vol}}
\def\thatchapholtext#1{\textcolor{purple}{#1}}
\def\thatchaphol#1{\marginpar{$\leftarrow$\fbox{T}}\footnote{$\Rightarrow$~{\sf\textcolor{purple}{#1 --Thatchaphol}}}}
\def\danupon#1{\textcolor{orange}{DN: #1}}
\def\jason#1{\marginpar{$\leftarrow$\fbox{J}}\footnote{$\Rightarrow$~{\sf\textcolor{blue}{#1 --Jason}}}}
\def\note#1{#1}
\def\alert#1{\textcolor{red}{#1}}
\def\thatchapholtext#1{}
\def\thatchaphol#1{}
\def\danupon#1{}
\def\shen#1{}
\def\sorrachai#1{}
\def\jason#1{}
\def\note#1{} 
\def\alert#1{}
\begin{document}

\title{Near-Linear Time Approximations for Cut Problems via Fair Cuts}

\author{}
\date{}
\author{Jason Li\thanks{Simons Institute for Theory of Computing, UC Berkeley. Email: {\tt jmli@alumni.cmu.edu}} \and Danupon Nanongkai\thanks{Max Planck Institute for Informatics \& University of Copenhagen \& KTH. Email: {\tt danupon@gmail.com}} \and Debmalya Panigrahi\thanks{Department of Computer Science, Duke University. Email: {\tt debmalya@cs.duke.edu}} \and Thatchaphol Saranurak\thanks{University of Michigan, Ann Arbor. Email: {\tt thsa@umich.edu}}}
\maketitle

\pagenumbering{gobble}

\begin{abstract}
We introduce the notion of {\em fair cuts} as an approach to leverage approximate $(s,t)$-mincut (equivalently $(s,t)$-maxflow) algorithms in undirected graphs to obtain near-linear time approximation algorithms for several cut problems. Informally, for any $\alpha\geq 1$, an $\alpha$-fair $(s,t)$-cut is an $(s,t)$-cut such that there exists an $(s,t)$-flow that uses $1/\alpha$ fraction of the capacity of \emph{every} edge in the cut. (So, any $\alpha$-fair cut is also an $\alpha$-approximate mincut, but not vice-versa.) We give an algorithm for $(1+\epsilon)$-fair $(s,t)$-cut in $\tilde{O}(m)$-time, thereby matching the best runtime for $(1+\epsilon)$-approximate $(s,t)$-mincut [Peng, SODA '16]. We then demonstrate the power of this approach by showing that this result almost immediately leads to several applications: 
\begin{itemize}
    \item the first nearly-linear time $(1+\epsilon)$-approximation algorithm that computes all-pairs maxflow values (by constructing an approximate Gomory-Hu tree). Prior to our work, such a result was not known even for the special case of Steiner mincut [Dinitz and Vainstein, STOC '94; Cole and Hariharan, STOC '03]; 
    \item the first almost-linear-work subpolynomial-depth parallel algorithms for computing $(1+\epsilon)$-approximations for all-pairs maxflow values (again via an approximate Gomory-Hu tree) in unweighted graphs; 
    \item the first near-linear time expander decomposition algorithm that works even when the expansion parameter is polynomially small; this subsumes previous incomparable algorithms [Nanongkai and Saranurak, FOCS '17; Wulff-Nilsen, FOCS '17; Saranurak and Wang, SODA '19].
\end{itemize}
\end{abstract}

\clearpage
\tableofcontents
\clearpage

\pagenumbering{arabic}

\section{Introduction}
\label{sec:introduction}

In the $(s,t)$-mincut problem, we are given an $n$-vertex $m$-edge graph $G=(V, E)$ with integer edge weights $w:E\rightarrow \mathbb{Z}_+$ bounded by $U$. The goal is to minimize the sum of the weight of edges whose removal make $s$ unable to reach $t$. Unless stated otherwise, the input graphs are assumed to be undirected throughout the paper.

The $(s,t)$-mincut problem and its dual---$(s,t)$-maxflow---are among the most fundamental tools in graph algorithms and optimization. %
In particular, many reductions have been recently developed to show that if  $(s,t)$-mincut (equivalently, $(s,t)$-maxflow) can be solved in almost or nearly linear time, then so are a number of fundamental graph problems. 
These problems include vertex connectivity~\cite{LiNPSY21} and Gomory-Hu tree~\cite{AbboudKLPST21-GHtreeSubcubic} in unweighted graphs, deterministic global mincut and Steiner mincut~\cite{LiP20}, edge connectivity augmentation and edge splitting-off~\cite{CenLP22}, and hypergraph global mincut~\cite{ChekuriQ21,MukhopadhyayN21-submodular}.

All these results require {\em exact} $(s,t)$-mincut algorithms. In other words, these reductions cannot exploit {\em approximate} $(s, t)$-mincut algorithms which can offer many advantages. 
For example, while the best randomized {\em $(1+\epsilon)$-approximate} $(s, t)$-mincut algorithm takes {\em nearly-linear}\footnote{By nearly-linear time, we mean a running time of $\tO(m)$.} time on weighted graphs~\cite{Peng16} (and {\em almost-linear}\footnote{By almost-linear time, we mean a running time of $m^{1+o(1)}$.} time for deterministic algorithms~\cite{Sherman13,KelnerLOS14}), the fastest {\em exact} algorithms
require $\tilde O\left(\min(m+n^{3/2}, m^{\frac{3}{2}-\frac{1}{328}},m^{4/3+o(1)}U^{1/3})\right)$ time \cite{GaoLP21,LiuS20,BrandLLSSSW21minimum,BrandLNPSS0W20-matching}\footnote{Throughout, we use $\tilde O$ to hide $\poly\log(n)$.} and are all inherently randomized.\footnote{\label{foot:indep result}In an independent result \cite{ChenKLPGS22}, an almost-linear time randomized algorithm has been shown for the $(s,t)$-mincut problem. Even when this independent result is taken into account, the best $(1+\epsilon)$-approximation algorithms are still superior to the best exact algorithm with respect to time complexities and randomness requirements.}

Moreover, in many popular models of computation such as parallel computing, distributed computing, etc., computing exact $(s,t)$-mincut is still far from efficient, and using approximation algorithms might be the only alternative. For example, it is known that the $(1+\epsilon)$-approximation algorithm  (implied by \cite{chang2019improved,Sherman13}) on undirected unweighted graphs requires almost-linear work and sub-polynomial depth in PRAM. In contrast, we are far from emulating this result for exact algorithms. In fact, the first small step toward solving exact $(s,t)$-mincut with  almost-linear work and sub-polynomial depth would be doing so for the much simpler problem of $(s,t)$-reachability. And, the latter would involve  
breaking a major $\Omega(\sqrt{n})$ depth barrier.%
\footnote{This is due to the reduction from directed maxflow to undirected maxflow (see e.g.~\cite{madry2011graphs}) which works in the parallel setting. The reduction implies that if we can solve $(s,t)$-mincut exactly on undirected unweighted graphs in $O(W)$ work and $O(D)$ depth, then we can solve $(s,t)$-mincut exactly on {\em directed} unweighted graphs with $\tilde O(W)$ work and $\tilde O(D)$ depth. The latter captures the $st$-reachability problem as a special case.}
Another example is in the distributed setting (the CONGEST model), where a nearly optimal algorithm for computing $(1+\epsilon)$-approximate $(s,t)$-mincut exists \cite{ghaffari2015near} while no nontrival algorithm is known for the exact version.  
These advantages of approximate $(s,t)$-mincuts motivate a natural question: 
{\em Can the existing reductions work with approximate $(s,t)$-mincut algorithms instead of the exact ones?}


To answer the above question, let us discuss first {\em why} many reductions work only with exact $(s,t)$-mincut. 
A crucial property of exact $(s,t)$-mincuts in undirected graphs that is used by these reductions (e.g., for Gomory-Hu tree, deterministic global mincut, Steiner cut, edge connectivity augmentation, and edge splitting-off) is the following {\em uncrossing property}: 
%
%
\begin{quote}
	(Uncrossing Property) For any vertices $s$ and $t$, let $X\subset V$ be an $(s,t)$-mincut. Then, for any $u,v\in X$, there exists $Y\subset X$ that is a $(u, v)$-mincut.
\end{quote}
The uncrossing property is very useful from an algorithmic perspective since it gives a natural recursive tool -- after finding an $(s,t)$-mincut, we can recurse on each side of the cut to find a $(u,v)$-mincut for every pair of vertices $(u,v)$ on the same side of the cut. Indeed, the uncrossing property is more generally true for symmetric, submodular minimization problems and is at the heart of most of the beautiful structure displayed by undirected graph cuts and other symmetric, submodular functions. 
The uncrossing property, however, does {\em not} hold for $(1+\epsilon)$-approximate mincuts in general. This is the main bottleneck that prevents these reductions from being robust to approximation. As a result, for these problems, we fail to exploit the benefits of $(1+\epsilon)$-approximate $(s,t)$-mincut algorithms.

\subsection{Our contributions} 
We subvert the above bottleneck by introducing a more robust notion of approximate mincuts called {\em fair cuts}. Informally, an $\alpha$-fair $(s,t)$-cut is an $(s,t)$-cut such that there exists an $(s,t)$-flow $f$ that uses $1/\alpha$ fraction of the capacity of every edge in the cut. (The reader should think of $\alpha$ as being close to $1$.) Formally:

\begin{defn}[Fair Cut]\label{def:fair cut intro}
	Let $G=(V,E)$ be an undirected graph with edge capacities $c\in\mathbb{R}_{>0}^{E}$.
	Let $s,t$ be two vertices in $V$. For any parameter $\alpha\ge 1$,
	we say that a cut $(S,T)$ is a \emph{$\alpha$-fair $(s,t)$-cut}
	if there exists a feasible $(s,t)$-flow $f$ such that $f(u,v)\ge\frac{1}{\alpha}\cdot c(u,v)$
	for every $(u,v)\in E(S,T)$ where $u\in S$ and $v\in T$.
\end{defn}

Observe that a $1$-fair $(s,t)$-cut is an exact $(s,t)$-mincut. Moreover, an $\alpha$-fair $(s,t)$-cut is also an $\alpha$-approximate $(s,t)$-mincut. However, {\em not} all $\alpha$-approximate $(s,t)$-mincuts are $\alpha$-fair $(s,t)$-cuts.\footnote{As a simple example, consider a path $v-s-t$ on three vertices. Clearly, the cut $\{s\}$ contains both edges and is therefore a $2$-approximate $(s,t)$-mincut. However, there is no $(s,t)$-flow that can saturate both edges to fraction $\frac{1}{2}$. To motivate our choice of terminology (fair cuts), note that if an $(s,t)$-cut is a $\alpha$-approximate $(s,t)$-mincut, it follows by flow-cut duality that any $(s,t)$-maxflow will {\em cumulatively} saturate the edges of the cut to a fraction $\ge \frac{1}{1+\alpha}$. But, as we saw in the previous example, this saturation need not be {\em fair} in the sense that some edges might not be saturated at all. In this context, a $\alpha$-fair cut demands the additional property that {\em each} edge be saturated to a fraction $\ge \frac{1}{\alpha}$ (in the sense of ``max-min'' fairness).}
In other words, a set of $\alpha$-fair cuts is a proper subset of $\alpha$-approximate cuts and a superset of exact $(s,t)$-mincuts. 

We show that the notion of fair cuts allow us to combine the key features of both approximate cuts and exact cuts.  
First, fair cuts admit a property for approximate cuts that is analogous to uncrossing for exact mincuts, which we prove in \Cref{sec:uncrossing} for completeness.
\begin{restatable}[Approximate Uncrossing Property]{lemma}{Uncrossing}\label{lem:uncrossing-property}
	For any vertices $s$ and $t$, let $(S,T)$ be an $\alpha$-fair $(s,t)$-mincut. Then, for any $u,v\in S$, there exists $R\subset S$ such that $(R,V\setminus R)$ is an $\alpha$-approximate $(u,v)$-mincut.
\end{restatable}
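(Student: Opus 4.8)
The plan is to prove this by finding a subcut $R \subseteq S$ on which the restriction of the fair flow $f$ is still a feasible flow, and then invoking the fair-cut guarantee to bound the capacity of $(R, V \setminus R)$. Let $f$ be the feasible $(s,t)$-flow witnessing that $(S,T)$ is an $\alpha$-fair $(s,t)$-cut, so $f(e) \ge \frac{1}{\alpha} c(e)$ for every edge $e \in E(S,T)$ directed from $S$ to $T$. Fix $u, v \in S$. The natural move is to look at the residual graph of $f$, or more precisely to ``project'' $f$ onto the side $S$. First I would argue that we may assume $f$ saturates no cycles and is acyclic (cycle-canceling does not decrease the flow on any cut edge that is not part of a cycle, and in fact we can choose $f$ acyclic from the start), so that $f$ restricted to $G[S]$ together with the excess/deficit it creates at vertices of $S$ behaves like a flow with multiple sources and sinks inside $S$.

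Next, the key step: consider the graph $G[S]$ where we add a super-source $s^*$ connected to every vertex $x \in S$ that has net outflow under $f$ across the cut $(S,T)$ (i.e.\ $x$ sends flow into $T$), with capacity equal to that net outflow, and symmetrically a super-sink. Actually the cleaner approach is: since $(S,T)$ is an $(s,t)$-cut with $s,t$ on opposite sides and $u,v \in S$, I would apply the uncrossing idea at the level of flows. Route the $(s,t)$-flow $f$; then within $S$, find a minimum $(u,v)$-cut $R$ in $G[S]$ with the caveat that edges leaving $S$ are ``free capacity'' already used by $f$. The cleanest formalization is to pick $R \subseteq S$ to be the $(u,v)$-mincut in the \emph{residual graph} $G_f$ restricted appropriately — specifically, take a minimum $(u,v)$-cut in $G_f[S]$ and let $R$ be the side containing $u$. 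Then the capacity of $(R, V\setminus R)$ in $G$ decomposes into edges inside $S$ (from $R$ to $S \setminus R$) and edges from $R$ to $T$. For the edges from $R$ to $T$: these are already $\frac1\alpha$-saturated by $f$, so $f$ crosses $(R, V\setminus R)$ on those edges with value $\ge \frac1\alpha c$. For the edges from $R$ to $S\setminus R$: since $R$ is a $(u,v)$-mincut in the residual graph, the residual capacity across is matched by residual flow, meaning $f$ plus some augmentation saturates a $\frac1\alpha$ (in fact better) fraction. Combining, one exhibits a feasible $(u,v)$-flow of value $\ge \frac{1}{\alpha} c(R, V\setminus R)$, which by flow-cut duality makes $(R, V\setminus R)$ an $\alpha$-approximate $(u,v)$-mincut.

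More carefully, the argument I would actually write: let $f$ be the witnessing acyclic $(s,t)$-flow. Define $g$ to be the flow $f$ ``truncated to $S$'': for each edge with both endpoints in $S$, $g$ agrees with $f$; every vertex $x\in S$ now has some imbalance $\mathrm{ex}_g(x)$ equal to the net amount $f$ sends from $x$ into $T$ (for $x$ with edges to $T$), which is nonnegative at such $x$ except we also have $s$ acting as a source and $t$ is not in $S$. Treat $g$ as a flow in $G[S]$ with sources at $\{s\}\cup\{x : \mathrm{ex}_g(x)>0\}$. Now find in $G[S]$ a minimum cut $(R, S\setminus R)$ separating $u$ from $v$, chosen so that it is the ``closest to $u$'' such cut; route a maxflow $h$ from $u$ to $v$ in $G[S]$. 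Then I would combine $g$ and $h$ to produce a genuine feasible $(u,v)$-flow in all of $G$: the flow sends $h$-units from $u$ to $v$ while also pushing the excesses $\mathrm{ex}_g$ out across the cut edges into $T$ and back in, using $f$'s routing inside $T$ from $t$. The point is that on every edge of $E(R, V\setminus R)$ — whether it goes to $S\setminus R$ or to $T$ — the combined flow uses $\ge \frac1\alpha$ of the capacity: for $R$-to-$T$ edges this is inherited directly from $f$'s fairness, and for $R$-to-$(S\setminus R)$ edges it follows because $(R,S\setminus R)$ is a mincut in $G[S]$ so a maxflow saturates it, and that maxflow value is at least $\frac1\alpha$ of the cut capacity since $f$ already routes that much across. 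Hence $c(R, V\setminus R) \le \alpha \cdot \mathrm{val}(u,v\text{-flow}) \le \alpha \cdot \mathrm{mincut}(u,v)$, giving the claim.

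The main obstacle I anticipate is making the ``combine $g$ and $h$ into a single feasible $(u,v)$-flow'' step fully rigorous: one must verify flow conservation at every vertex of $S$ after superposing the $(u,v)$-maxflow $h$ inside $S$ with the truncated flow $g$ (whose excesses must be absorbed by routing through $T$ via the $t$-part of $f$), and one must check that no edge capacity is violated in the superposition, while still retaining the $\frac1\alpha$-saturation lower bound on \emph{every} edge of the cut $(R,V\setminus R)$. This is essentially a careful bookkeeping argument — the kind of flow-surgery/uncrossing computation that is standard for exact mincuts but needs the fair-flow hypothesis to carry the approximation factor through — so I expect the write-up to hinge on setting up the right auxiliary graph (a ``contracted'' version of $G$ where $T$ is collapsed appropriately, or equivalently $G_f$ with $s,t$ identified) so that a single maxflow computation in that auxiliary graph directly yields both $R$ and the certifying flow.
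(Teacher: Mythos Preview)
Your approach is substantially more complicated than necessary and, as you yourself flag, the flow-combination step has a real gap. You attempt to build $R$ as a $(u,v)$-mincut inside $G[S]$ (or a residual/auxiliary graph) and then \emph{construct} a certifying $(u,v)$-flow by superposing the $(u,v)$-maxflow $h$ in $G[S]$ with pieces of the fair $(s,t)$-flow $f$. But the superposition you describe is not a $(u,v)$-flow: $f$ (or its truncation $g$) has source $s$ and sink $t$, not $u$ and $v$, and routing the boundary excesses ``through $T$ via the $t$-part of $f$'' sends mass to $t$, not to $v$. So the flow you end up with has four special vertices $\{s,t,u,v\}$, may violate capacities on edges used by both $h$ and $f$, and in any case does not certify anything about $\lambda_G(u,v)$. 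Your final inequality $c(R,V\setminus R)\le\alpha\cdot\mathrm{val}(\text{$(u,v)$-flow})$ therefore has no valid $(u,v)$-flow to point to.

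The paper's proof avoids all of this by choosing $R$ differently: start from an \emph{actual} $(u,v)$-mincut $U$ in the whole graph $G$ (with $t\notin U$, swapping $u,v$ if needed) and set $R=U\cap S$. Then one only needs $\delta(U\cap S)\le\alpha\,\delta(U)$, which follows from a direct edge-by-edge comparison: decomposing $E(U,V\setminus U)$ and $E(U\cap S,V\setminus(U\cap S))$ shows they share two pieces, and the remaining pieces satisfy $\delta(U\cap S,\,U\setminus S)\le\alpha\,\delta(U\setminus S,\,V\setminus U)$ because the fair flow $f$ pushes at least $\frac1\alpha\,\delta(U\cap S,\,U\setminus S)$ units of flow into $U\setminus S$ across $E(S,T)$, and all of it must exit $U\setminus S$ toward $t$ through $E(U\setminus S,V\setminus(U\cup S))$. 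No auxiliary graphs, no residual constructions, no flow surgery --- the fair flow $f$ is used only as a counting device to compare two edge sets, never combined with another flow. The key idea you are missing is that $R$ should be obtained by \emph{intersecting a global mincut with $S$}, not by computing a cut inside $S$.
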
%
Second,  while computing a fair cut can be harder than an approximate mincut (since any fair cut is an approximate mincut but not vice-versa),
we give a nearly-linear time algorithm for computing a $(1+\epsilon)$-fair $(s,t)$-mincut.

\begin{theorem}[Fair Cut]\label{thm:fair}
	Given a graph $G=(V,E)$, two vertices
	$s,t\in V$, and $\epsilon\in(0,1]$, we can compute with high probability
	a $(1+\epsilon)$-fair $(s,t)$-cut in $\Otil(m/\epsilon^{3})$ time.
\end{theorem}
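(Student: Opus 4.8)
The plan is to obtain the $(1+\epsilon)$-fair cut by running an approximate max-flow/min-cut primitive iteratively and ``patching up'' the cut in stages, each stage fixing the edges that are still badly under-saturated. The starting point is Peng's near-linear time $(1+\epsilon)$-approximate $(s,t)$-maxflow algorithm~\cite{Peng16}: one run gives a flow $f$ and a cut $(S,T)$ with $\operatorname{val}(f)\ge (1-\epsilon)\cdot c(S,T)$, but, as the path example in the introduction shows, individual cut edges may be nearly unsaturated. The key idea is that the set $B\subseteq E(S,T)$ of ``bad'' edges — those with $f(e) < \tfrac{1}{1+\epsilon} c(e)$ — has small total capacity relative to the flow value (this follows from the approximation guarantee plus counting), so we can afford to modify the instance locally around these edges and recompute. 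Concretely, I would contract (or add large-capacity auxiliary arcs) to force the flow to route through the under-saturated edges: for each bad edge $(u,v)$ with $u\in S, v\in T$, we want to push more flow across it, which we can incentivize by temporarily boosting supply/demand or by working in the residual graph of $f$ and finding an augmenting approximate flow that must cross $B$.

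The main technical device I expect to need is a carefully designed \emph{sequence of flow computations} whose cuts are nested and converge: start with $(S_0, T_0)$, and in iteration $i$ move a small set of vertices from one side to the other so as to eliminate the currently-bad edges while controlling how many \emph{new} bad edges are created on the boundary. The natural potential to track is the total capacity of the current boundary (or the total ``deficit'' $\sum_{e\in E(S,T)}\max(0,\,c(e)/(1+\epsilon) - f(e))$); each round should shrink this potential by a constant factor, so that $O(\log(mU)/\epsilon)$ rounds suffice, giving the claimed $\tilde O(m/\epsilon^3)$ total (one $\epsilon$ from the number of rounds, and the running time of the approximate flow subroutine contributing the rest). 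To make a round work I would, roughly: (i) identify the bad edges $B$ of the current cut; (ii) build an auxiliary graph that saturates the good edges (e.g.\ by pre-routing the current flow and passing to the residual graph) and in which routing a large flow forces the bad edges toward saturation or forces the cut to retreat past them; (iii) run the approximate maxflow routine on this auxiliary graph; (iv) read off the new flow (by composing with the pre-routed flow) and the new cut. Feasibility and the flow-conservation bookkeeping across rounds is routine but must be done carefully so that the final $f$ is a single feasible $(s,t)$-flow in the original $G$ with $f(e)\ge \tfrac{1}{1+\epsilon}c(e)$ simultaneously on all surviving boundary edges.

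The step I expect to be the main obstacle is controlling the \emph{creation of new bad edges}: when we move vertices across the cut to get rid of the old bad edges, the new boundary may contain edges that were interior before and are now badly saturated, and a priori this could go on forever (or the boundary capacity could grow). The crucial structural claim to establish is a monotonicity/shrinkage statement — something like: one can always choose the vertex set to move so that the new bad-edge capacity (or the deficit potential) is at most a constant fraction of the old one. I would try to prove this by a min-cut argument in the auxiliary/residual graph: the new cut is chosen as an \emph{approximate} min-cut there, and an exact min-cut in that graph would have \emph{zero} deficit (it is a genuine mincut of a graph in which the good edges are already saturated), so a $(1+\epsilon)$-approximate one has deficit bounded by $\epsilon$ times the relevant capacity scale, which is what drives the geometric decrease. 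Getting the accounting so that (a) the potential provably drops, (b) it is initially at most $\poly(mU)$ and terminates once below $1$ (or uses a scaling/rounding argument on capacities to avoid the $\log U$ dependence), and (c) the per-round cost is $\tilde O(m/\epsilon^2)$, is the delicate part; everything else is assembling known near-linear-time flow machinery and residual-graph bookkeeping.
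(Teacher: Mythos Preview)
Your outer loop is essentially right and matches the paper: maintain a cut $(S^j,T^j)$, track the deficit potential $\sum_{e\in E(S^j,T^j)}\max\{0,(1-k\beta)c(e)-f(e)\}$, and show it shrinks by a $(1-\Theta(\beta))$ factor per round so that $\tilde O(1/\epsilon)$ rounds suffice. The paper does exactly this in Section~5.

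The genuine gap is the per-round subroutine. You propose to use approximate max-flow as a black box, passing to the residual graph of the current flow and computing another approximate min-cut there. This does not give the structural guarantee you need. In the residual graph the remaining max-flow value is only $O(\epsilon)\cdot c(S,T)$, so a $(1+\epsilon')$-approximate min-cut there has \emph{absolute} error $O(\epsilon\epsilon')\cdot c(S,T)$, which tells you nothing about \emph{which} boundary edges of the original cut end up saturated; in particular it does not let you argue the new deficit is a constant fraction of the old one. More fundamentally, a black-box approximate flow gives you only an aggregate guarantee $\operatorname{val}(f)\ge(1-\epsilon)c(S,T)$, whereas what you need is a \emph{per-edge} guarantee on the surviving boundary together with a certificate that whatever you pruned strictly decreased the cut size. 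Those two guarantees do not fall out of any composition of black-box calls I can see, and the paper explicitly does not obtain them that way.

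The paper's fix is to open up Sherman's MWU-based approximate max-flow and modify its behavior when it detects infeasibility. Concretely, on one side $T$ they set up the demand ``every boundary edge of $T$ sends $(1-\beta)$ of its capacity to $t$'' and run MWU; whenever a round produces a threshold cut $D^i$ witnessing $\Delta(D^i)>\delta(D^i)$ (i.e.\ the current demand is infeasible), instead of terminating they \emph{prune} $D^i$ out, zero its demand, and keep running MWU. The output is a partition $(P,T')$ with two simultaneous guarantees: (i) a flow of congestion $1+\epsilon$ that sends exactly $(1-\beta)c(e)$ across every \emph{old} boundary edge of $T'$, and (ii) $\delta(T')\le\delta(T)-\beta\,\delta(P,V\setminus T)$. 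This ``continue-after-infeasibility'' trick inside MWU is the main technical idea, and it is precisely what makes your Case-analysis work: if $P$ is small you get (i) and the deficit halves; if $P$ is large you get (ii) and the cut value drops. Your proposal correctly isolates this as ``the step I expect to be the main obstacle,'' but the residual-graph heuristic you sketch does not resolve it.
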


We note that the only reason why our algorithm is randomized is because
we use the \emph{congestion approximator} by \cite{RackeST14,Peng16}. This
can be made deterministic based on an algorithm by \cite{Chuzhoy20det}, but the
running time would be $m^{1+o(1)}/\epsilon^3$ instead.
Moreover, we remark that although we will focus on $(1+\epsilon)$-fair $(s, t)$-cuts, the corresponding $(s, t)$-flow can be obtained from a fair cut in $\tO(m/\epsilon)$ time using a standard application of a $(1+\epsilon)$-approximate max-flow algorithm of Sherman~\cite{Sherman2017area}.

\subsection{Applications} 

We demonstrate the power of fair cuts by using it to improve the time complexity of several problems.

\paragraph{Gomory-Hu Tree.} The Gomory-Hu  (GH) tree is a compact representation of a $(u,v)$-mincut (and therefore, $(u,v)$-maxflow values) between every pair of vertices $(u,v)$ of a graph, and has a large number of applications.
It captures fundamental questions such as global, $(s,t)-$, and Steiner mincuts as special cases. There has been much progress on exact and approximation algorithms for this problem recently (e.g.,~\cite{LiP21,AbboudKT20focs,AbboudKT20soda,AbboudKT21,AbboudKLPST21-GHtreeSubcubic,AbboudKT21-focs,AbboudKT22-soda,LiPS21,Zhang-simple,Zhang-weighted}). The fastest among these is the {\em $(1+\epsilon)$-approximation} algorithm by Li and Panigrahi \cite{LiP21} whose time complexity is equal to poly-logarthmic calls to any {\em exact} $(s,t)$-mincut algorithm, i.e. $\tilde O\left(\min(m+n^{3/2}, m^{\frac{3}{2}-\frac{1}{328}},m^{4/3+o(1)}U^{1/3})\right)$. 

By replacing the exact max-flow calls by our $(1+\epsilon)$-fair cut algorithm in~\cite{LiP21}, we get a nearly-linear time algorithm for approximating the Gomory-Hu tree (which is equivalent to finding all-pairs maxflow values by known reductions, e.g., \cite{AbboudKT20focs}):

\begin{theorem}[Nearly-linear time Gomory-Hu tree]
	\label{thm:ghtree}
	For any $\e > 0$, there is a $\tO(m\cdot \poly(1/\epsilon))$-time randomized algorithm that constructs, with high probability, a $(1+\e)$-approximate Gomory-Hu tree in weighted undirected graphs.
\end{theorem}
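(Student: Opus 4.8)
The plan is to open up the $(1+\epsilon)$-approximate Gomory--Hu tree algorithm of Li and Panigrahi~\cite{LiP21} and replace each of its (polylogarithmically many) exact $(s,t)$-maxflow/mincut computations by a call to the $(1+\epsilon')$-fair cut algorithm of \Cref{thm:fair}, for a suitably rescaled accuracy parameter $\epsilon' = c\,\epsilon/\polylog(n)$. Since a $(1+\epsilon')$-fair cut is in particular a $(1+\epsilon')$-approximate $(s,t)$-mincut, every quantitative bound in~\cite{LiP21} that was derived from an exact mincut value now holds with one extra $(1+\epsilon')$ factor; since each fair-cut call takes $\tO(m/(\epsilon')^{3}) = \tO(m\cdot\poly(1/\epsilon))$ time and each succeeds with high probability, a union bound over the $\polylog(n)$ calls gives overall success with high probability, and as the remaining ``glue'' work in~\cite{LiP21} is already $\tO(m)$, the total running time is $\tO(m\cdot\poly(1/\epsilon))$. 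What must be checked is that this substitution still produces a $(1+\epsilon)$-approximate Gomory--Hu tree.

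To do this I would first isolate exactly which property of \emph{exact} mincuts the recursion of~\cite{LiP21} uses. Their construction repeatedly picks a pivot, computes (approximate) minimum cuts separating the pivot from the current terminal set, makes the family of such cuts laminar by \emph{uncrossing}, contracts each side, and recurses; the single-source-mincut subroutine that does most of the maxflow calls (via an isolating-cuts-type argument) likewise relies on the uncrossing of the computed cuts. The structural input to all of these steps is precisely the uncrossing property: if $X$ is an $(s,t)$-mincut and $u,v\in X$, then some $(u,v)$-mincut lies inside $X$. By \Cref{lem:uncrossing-property}, $(1+\epsilon')$-fair cuts enjoy the corresponding \emph{approximate} uncrossing property --- the uncrossed set still lies on one side and is still a $(1+\epsilon')$-approximate $(u,v)$-mincut --- and since~\cite{LiP21} is itself an approximation algorithm that already manipulates cuts only up to a $(1+\e)$ factor, this is exactly the interface it needs. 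Thus each invocation of an exact maxflow in~\cite{LiP21} can be replaced by a fair-cut call, at the cost of $O(1)$ extra $(1+\epsilon')$ factors per recursion level, while all laminarity/side-of-the-cut invariants are preserved.

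Next I would track how the $(1+\epsilon')$ slacks accumulate. The recursion (including the internal single-source-mincut subroutine) has depth $\polylog(n)$, and at each level replacing an exact mincut by a fair cut and applying approximate uncrossing once multiplies the relevant cut estimates by at most $(1+\epsilon')^{O(1)}$; composing over all levels yields a factor $(1+\epsilon')^{\polylog(n)}$, which is at most $1+\epsilon$ once we take $\epsilon' = c\,\epsilon/\polylog(n)$ with $c$ a small constant. This rescaling only multiplies the per-call running time by $\poly(\polylog(n)/\epsilon) = \poly(1/\epsilon)\cdot\polylog(n)$, which the $\tO(\cdot)$ absorbs. One also checks the routine point that the $\tO(m)$-time combinatorial steps of~\cite{LiP21} (contractions, assembling the tree, and the standard reductions between all-pairs maxflow values and the Gomory--Hu tree, e.g.~\cite{AbboudKT20focs}) are insensitive to whether the cuts they handle are exact or only $(1+\epsilon')$-approximate, since they only use each cut as a vertex partition together with its capacity.

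The main obstacle is the argument of the second paragraph: one must verify that~\cite{LiP21} depends on exact mincuts \emph{solely} through the uncrossing property, so that \Cref{lem:uncrossing-property} is a genuine drop-in replacement --- most delicately, that their isolating-cuts/single-source-mincut subroutine (where the bulk of the maxflow calls live) remains correct when its cuts are $(1+\epsilon')$-fair rather than exact, and that the error it thereby incurs is multiplicative rather than additive, so that rescaling $\epsilon$ by a $\polylog$ factor is enough to neutralize it. Once that is pinned down, the rest is bookkeeping of approximation factors and running times.
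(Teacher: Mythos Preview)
Your high-level plan---replace exact maxflow calls in~\cite{LiP21} by fair-cut calls, rescale $\epsilon$ by a $\polylog$ factor, and rely on approximate uncrossing---is the same strategy the paper follows, and your identification of the isolating-cuts subroutine as the delicate point is correct. However, the paper uncovers a specific wrinkle that your sketch does not anticipate, and which the bare statement of \Cref{lem:uncrossing-property} does not resolve.

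The issue is on the ``large'' side of the Gomory--Hu recursion. At each level one computes disjoint approximate isolating cuts $\{S_v\}_{v\in R}$ and contracts them all simultaneously to form $G_\lar$. For $p,q\in U_\lar$ one must show $\lambda_{G_\lar}(p,q)\le(1+\gamma)\lambda_G(p,q)$. Applying \Cref{lem:uncrossing-property} once per $S_v$ does not work: the lemma takes an \emph{exact} $(p,q)$-mincut as input and outputs an \emph{approximate} one, so iterating it over $|R|$ sets would give $(1+\gamma)^{|R|}$, and $|R|$ is not $\polylog$. Nor is $\bigcup_v S_v$ itself a fair cut, so you cannot apply the lemma once to the union. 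The paper's fix (\Cref{thm:isolating2} and \Cref{lem:large}) is to require that each returned isolating cut $S_v$ is a $v$-sided $(1+\gamma)$-fair cut, and then to uncross a $(p,q)$-mincut with all the $S_v$ simultaneously via a direct charging argument: because the $S_v$ are disjoint, every edge of the original mincut is charged at most once, yielding a single $(1+\gamma)$ factor regardless of $|R|$. This is why the paper carries two accuracy scales: a very fine $\gamma=\epsilon^2/\log^6 n$ for the large side (which is traversed $\tilde O(\epsilon^{-1})$ times) and a coarser $O(\epsilon)$ for the small side (handled by submodularity in \Cref{lem:small}, traversed $O(\log n)$ times). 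Your ``$O(1)$ factors per recursion level'' heuristic is therefore correct in the end, but only after this non-black-box strengthening of the isolating-cuts output and the accompanying charging argument.
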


Prior to our work, a nearly-linear time (approximation) algorithm was not known even for the special case of the {\em Steiner mincut problem}. In this problem~\cite{DinitzV94,ColeH03,HariharanKP07,BHKP07,LiP20}, we are interested in finding a cut of minimum value that disconnects a given set of terminal vertices. For this problem, Li and Panigrahi~\cite{LiP20} gave an exact algorithm using poly-logarithmic exact max-flow calls. Before our work, no improvement in the running time was known if we allow $(1+\e)$-approximation instead of the exact Steiner mincut. Since the Steiner mincut problem is a minimal generalization of global and $(s,t)$-mincuts, our paper is the first to obtain nearly-linear time (approximation) algorithms for cut problems that go beyond these two problems.

\medskip\noindent{\em Parallelization.}
Since the use of exact max-flow is the only bottlenect to parallelize the approximate GH tree algorithm of  \cite{LiP21}, the following parallel algorithm also follows. 

\begin{theorem}[Parallel GH-tree]\label{thm:ghtree-parallel-intro}
	For any $\e > 0$, there is a $\tO(m^{1+o(1)}/ \poly(\epsilon))$-work $(m^{o(1)}/ \poly(\epsilon))$-depth randomized algorithm that constructs, with high probability, a $(1+\e)$-approximate Gomory-Hu tree in unweighted undirected graphs.
\end{theorem}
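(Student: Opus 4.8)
The plan is to take the sequential argument behind \Cref{thm:ghtree} and verify that each of its pieces has an almost-linear-work, subpolynomial-depth parallel implementation. Recall that the $(1+\epsilon)$-approximate Gomory--Hu tree algorithm of Li and Panigrahi~\cite{LiP21} is a \emph{reduction}: it runs $\polylog(n)$ rounds, each making $\polylog(n)$ black-box calls to an $(s,t)$-maxflow/mincut routine on (sub)graphs of total size $\tO(m)$, interleaved with elementary graph surgery (connected components, contractions, sorting) and with the isolating-cut partition of~\cite{LiP20} (itself $O(\log n)$ maxflow calls arranged in a binary-search pattern). In the sequential setting one replaces every exact maxflow call by the $(1+\epsilon)$-fair cut algorithm of \Cref{thm:fair}, invoking the approximate uncrossing property (\Cref{lem:uncrossing-property}) wherever the original argument used exact uncrossing; this is exactly what yields \Cref{thm:ghtree}. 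Hence for the parallel statement it suffices to (i) parallelize the fair-cut algorithm of \Cref{thm:fair}, and (ii) check that the surrounding reduction parallelizes and that the whole recursion has only $m^{o(1)}$ depth.

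For (i) I would open up the fair-cut algorithm. Its core is a congestion approximator in the sense of~\cite{RackeST14,Peng16}, together with an iterative outer loop (in the style of Sherman's approximate maxflow) of $\poly(1/\epsilon)\cdot m^{o(1)}$ iterations, each of which routes a candidate flow against the current demand and updates the cut. Each iteration is dominated by one application of the congestion approximator --- essentially a matrix--vector product against a laminar/tree structure --- plus $O(m)$-work vector operations over the edge set, so a single iteration is $\tO(m)$ work and $\polylog(n)$ depth. The substantive point is \emph{building} the congestion approximator in parallel; here I would invoke the known parallel constructions of hierarchical tree-cut sparsifiers, which on unweighted graphs produce an $m^{o(1)}$-quality congestion approximator in $m^{1+o(1)}$ work and $m^{o(1)}$ depth. (This is the origin of both the $m^{o(1)}$ factors and the restriction to unweighted inputs in the theorem; a worse approximator quality is absorbed into the iteration count.) Composing, fair cut runs in $m^{1+o(1)}/\poly(\epsilon)$ work and $m^{o(1)}/\poly(\epsilon)$ depth.

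For (ii), the glue of~\cite{LiP21} is built from standard $\tO(m)$-work, $\polylog(n)$-depth primitives, and the isolating-cut partition of~\cite{LiP20} is a constant number of rounds each consisting of $O(\log n)$ parallel fair-cut calls. The point requiring care is the recursion depth of the Gomory--Hu construction itself: a naive ``find one mincut, recurse on both sides'' scheme has $\Omega(n)$ depth. I would instead use the single-source/Steiner framework underlying~\cite{LiP21}, in which each round peels off a whole family of cuts that simultaneously shrinks every surviving piece, so that --- via the usual random-pivot argument --- the recursion bottoms out after $\polylog(n)$ rounds, contributing $\polylog(n)$ fair-cut calls in total. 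Multiplying the per-call bounds above by $\polylog(n)$ yields the claimed $m^{1+o(1)}/\poly(\epsilon)$ work and $m^{o(1)}/\poly(\epsilon)$ depth, with the error analysis (why feeding $(1+\epsilon)$-fair cuts instead of exact cuts into every step is still correct) inherited from the proof of \Cref{thm:ghtree}.

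I expect the main obstacle to be (i), and within it the parallel congestion-approximator construction: one must ensure that the recursive hierarchical decomposition producing the tree-cut sparsifier can be carried out in only $m^{o(1)}$ depth (rather than depth proportional to a naive level count), and that plugging it into the fair-cut outer loop costs only an extra $\poly(1/\epsilon)\cdot m^{o(1)}$ factor in depth; the unweighted restriction in the theorem is precisely what lets one quote an off-the-shelf parallel construction with these parameters. A secondary, bookkeeping-level obstacle is confirming that every one of the $\polylog(n)$ auxiliary steps in~\cite{LiP21} --- including any that implicitly relied on exactness of the cuts handed to them --- remains correct under $(1+\epsilon)$-fair cuts, but this is already settled by the sequential argument for \Cref{thm:ghtree}.
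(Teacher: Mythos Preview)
Your proposal is correct and follows essentially the same route as the paper: parallelize the fair-cut algorithm by plugging in a parallel congestion approximator (this is indeed the source of both the $m^{o(1)}$ factors and the unweighted restriction), then observe that the isolating-cut and Gomory--Hu machinery of \cite{LiP20,LiP21} is built from parallel-friendly primitives with $\polylog(n)\cdot\poly(1/\epsilon)$ recursion depth. One technical point you slightly undersell: within each MWU round of \textsc{AlmostFair}, the ``deletion set'' $D^i$ is computed by a sweep cut over vertex potentials (\Cref{lem:1}), which is not just a vector operation; the paper parallelizes it via sorting, prefix sums, and a reduction to Karger's parallel $1$-respecting mincut, but this is a detail rather than a gap in your plan.
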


We are not aware of prior work on parallel GH algorithms (except some experiments, e.g. \cite{MaskeCD20,CohenRD17}). This is likely because previous GH trees algorithms, even the approximate ones \cite{LiP21}, inherently require solving max-flow exactly, which is well beyond current techniques in the parallel setting.

\paragraph{Expander Decomposition.}
In the last decade, numerous fast graph algorithms are based on fast algorithms for computing an expander decomposition. For some examples of such applications, see e.g.~\cite{SpielmanT04,KelnerLOS14,Sherman13,NanongkaiSW17,chang2019improved,bernstein2020fully}.

We say that a (weighted) graph $G = (V,E)$ is a $\phi$-expander if for every cut $(S,V\setminus S)$, we have that the cut size $\delta(S) \ge \min\{\vol(S),\vol(V\setminus S)\}$ where the volume of $S$ is $\vol(S)=\sum_{v \in S} \deg(v)$.
A $(\epsilon,\phi)$-expander decomposition of $G$ is a partition $\{V_1,\dots,V_k\}$ of vertices such that each $G[V_i]$ is a $\phi$-expander and $\sum_i \delta(V_i) \le \epsilon \cdot \vol(V)$, i.e., the total weight of edges crossing the partition is at most $\epsilon$-fraction.

There are two incomparable fastest algorithms for computing expander decompositions. First, \cite{NanongkaiS17,Wulff-Nilsen17} gave $m^{1+o(1)}$-time algorithms that computes a $(\phi n^{o(1)},\phi)$-expander decomposition for any $\phi>0$. These subpolynomial factors are sometimes undesirable.
Second, \cite{SaranurakW19} gave a $\tO(m/\phi)$-time algorithm that computes a $(\tO(\phi),\phi)$-expander decomposition for any $\phi>0$. This algorithm is slower than the first one when $\phi < 1/n^{0.1}$.
Using fair cuts, we obtain an algorithm that subsumes both these sets of results and is optimal up to poly-logarithmic factors. 

\begin{theorem}[Near-linear expander decomposition]
	For any $\phi>0$, there is a randomized $\tilde{O}(m)$-time algorithm that with high probability computes a $(\tO(\phi),\phi)$-expander in weighted undirected graphs.
\end{theorem}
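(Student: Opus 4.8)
The plan is to take the recursive cut-matching framework of Saranurak and Wang~\cite{SaranurakW19} and replace every maximum-flow computation in it by a call to the $(1+\epsilon)$-fair cut algorithm of \Cref{thm:fair} with a constant $\epsilon$ (say $\epsilon=\tfrac1{10}$). Recall that, given $G$ and $\phi$, their algorithm runs a boosted cut-matching game that in $O(\log^2 n)$ rounds either embeds a witness expander into (most of) $G$ with congestion $O(1/\phi)$ --- certifying a near-$\phi$-expander --- or produces a cut of conductance $O(\phi\polylog n)$; a balanced such cut triggers recursion on both sides, while an unbalanced one is handled by a trimming subroutine that peels off an $O(\phi)$-conductance region so that what remains is a genuine $\phi$-expander. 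This already gives a $(\tO(\phi),\phi)$-expander decomposition; the sole bottleneck is the $\tO(m/\phi)$ running time, which comes from the fact that each flow instance routes $\Theta(\vol)$ units of demand through a network in which the original edges have capacity only $\Theta(1)$ (equivalently, one allows congestion up to $1/\phi$), so augmenting-path/blocking-flow methods need $\Omega(1/\phi)$ phases.

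Each such instance has the form: attach a source $s$ to one vertex set and a sink $t$ to another (with capacities proportional to degrees), keep the edges of $G$, and ask whether a prescribed amount $D$ of $(s,t)$-flow can be routed with congestion $\le 1/\phi$. First I would rescale the edge capacities by $1/\phi$ and invoke \Cref{thm:fair}, obtaining in $\tO(m)$ time a cut $(S,T)$ together with a feasible flow $f$ saturating \emph{every} cut edge to a $\tfrac1{1+\epsilon}$ fraction. There are two cases. If the flow value is at least $\tfrac{1}{1+\epsilon}D$, then $f$ routes almost all of the demand, and standard rounding turns it into a near-perfect matching embeddable into $G$ with congestion $(1+\epsilon)/\phi$ --- good enough for the cut player to make its usual potential progress, the extra constant being absorbed into the $\polylog$. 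Otherwise the flow value is below $\tfrac{1}{1+\epsilon}D$, hence the cut capacity is below $D$, and --- crucially because $f$ saturates \emph{each} boundary edge, in particular each uncut source edge carries a $\tfrac{1}{1+\epsilon}$ fraction of its capacity --- the original vertices on the $s$-side have most of their demand injected, so the routed flow must leave through the $G$-edges of the cut, letting one charge those edges against the demand and conclude that $(S,T)$ induces a bona fide $O(\phi\polylog n)$-conductance cut to recurse or trim on. The same substitution is performed inside the trimming subroutine, whose correctness proof in~\cite{SaranurakW19} is driven by a flow that must saturate the boundary of the trimmed set edge by edge; this is exactly the guarantee a fair cut gives, and it is precisely why a plain $(1+\epsilon)$-approximate max-flow, which saturates the cut only \emph{cumulatively}, does not suffice. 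Because \Cref{thm:fair} runs in time independent of $\phi$, and there are $\tO(1)$ flow calls per node of the recursion tree with the standard volume-charging bounding the total number of calls, the overall running time is $\tO(m)$, and all guarantees hold with high probability, inheriting the randomization of \Cref{thm:fair}; the algorithm handles weighted graphs since \Cref{thm:fair} does. (Unlike the Gomory--Hu tree application, here fair cuts are used directly as a flow-or-sparse-cut primitive rather than through the uncrossing property of \Cref{lem:uncrossing-property}.)

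The main obstacle is re-auditing the parameter bookkeeping of~\cite{SaranurakW19} under this approximate primitive: one must verify that replacing exact min-cut/max-flow by a $(1+\epsilon)$-fair cut perturbs the conductance guarantees and the witness congestion by only constant factors --- so that, after mildly raising the internal target expansion to compensate for the $\polylog$ and constant slack, each output piece is still a $\phi$-expander and the total crossing weight over the $O(\log n)$ recursion levels still telescopes to $\tO(\phi)\cdot\vol(V)$ --- and that the approximate matching returned each round still lets the Khandekar--Rao--Vazirani / Khandekar--Khot--Orecchia--Vishnoi-style cut player decrease its potential geometrically, so the cut-matching game terminates in $O(\log^2 n)$ rounds. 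A secondary technical point is to phrase each flow instance so that ``fair on the $1/\phi$-scaled network'' cleanly means ``either a congestion-$O(1/\phi)$ embedding or an $O(\phi\polylog n)$-conductance cut'', with the boundary-charging argument in the cut case and in trimming stated in terms of the per-edge saturation in \Cref{def:fair cut intro} rather than any cumulative flow-value bound.
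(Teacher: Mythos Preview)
Your proposal is correct and follows essentially the same approach as the paper: both plug the fair-cut algorithm (with a constant parameter, the paper uses $\alpha=0.1$) into the Saranurak--Wang framework, replacing the $\tO(m/\phi)$ flow subroutines in both the trimming step and the cut-matching step (specifically, the paper only needs to replace Lemma~B.6 of~\cite{SaranurakW19}), and both rely on exactly the per-edge saturation guarantee you highlight to make the conductance and boundary-charging arguments go through. The paper carries out precisely the parameter audit you flag as the main obstacle, with no further surprises.
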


\paragraph{Open problems.}  We believe that our notion of fair cuts opens up many interesting directions for future research. We mention some examples. (i) A natural goal is to extend our efficient $(1+\epsilon)$-fair $(s,t)$-cut to other computational models, such as the distributed (CONGEST) setting, where exact $(s,t)$-mincut algorithms are much slower/inefficient compared to approximate $(s,t)$-mincut algorithms. This will lead to efficient algorithms for approximating, e.g., Gomory-Hu tree and Steiner mincut in these models as well. 
(ii) The notion of {\em fair vertex cuts} can be defined in a similar fashion to fair (edge) cuts defined in this paper. It would be interesting to design an efficient algorithm for finding a fair vertex cut and use it to obtain nearly-linear time algorithms for approximating the vertex connectivity and hypergraph global mincut. These results can also be extended to other computational models. (iii) We also hope that the notion of fair cuts can be extended to more general contexts such as the minimization of symmetric, submodular functions. In turn, this will significantly improve our understanding of the approximation-efficiency tradeoff in minimization problems defined for these function classes.

\paragraph{Independent result.} Our result is obtained independently from the recently announced almost-linear time bound for min-cost flow by Chen, Kyng, Liu,  Peng, Gutenberg, and Sachdeva~\cite{ChenKLPGS22}. 
Plugging this result into existing reductions in  \cite{AbboudKLPST21-GHtreeSubcubic,LiNPSY21,CQ20,MukhopadhyayN20} help solve problems such as GH tree and vertex connectivity in unweighted graphs, approximate GH tree in weighted graphs, and hypergraph global mincut in $m^{1+o(1)}$ time. 
Even assuming this result, our algorithms are faster in both randomized and deterministic settings; for the latter, our running time is $m^{1+o(1)}$ whereas the best exact $(s,t)$-mincut algorithm takes $\tO(m \min(\sqrt{m}, n^{2/3}))$ time~\cite{GoldbergR98}. Finally, our algorithms can be adapted to other models such as parallel computation whereas this is well beyond existing techniques for exact $(s,t)$-mincut.

\section{Overview of Techniques}

\subsection{Computing Fair Cuts (Proof Idea of \Cref{thm:fair})}

Our key subroutine for computing fair cuts is called $\almostfair$.
Here, we describe at a high-level what the $\almostfair$ subroutine does,
how to use it for computing fair cuts, and finally how to obtain the $\almostfair$
subroutine itself.

Say we are given an $(s,t)$-cut $(S,T)$ which may be far from being
fair. The $\almostfair$ subroutine works on one side of the $(s,t)$-cut,
say $T$, and returns a partition $(P_{t},T')$ of $T$ such that $t\in T'$.
We think of $P_{t}$ as the part that we ``prune'' out of $T$.
Our first guarantee is that the remaining part $T'$ is ``almost
fair'' in the following sense: each boundary edge in $E(S,T')$ (i.e., those edges that are not in $E(P_t,T')$) can simultaneously send flow 
of value at least $(1-\beta)$-fraction of its capacity to $t$, for a small parameter
$\beta$ that we can choose. This guarantee alone would have been
weak if the pruned set $P_{t}$ is so big that there are few edges left
in $E(S,T')$. However, the second guarantee of $\almostfair$ says
that, if $P_{t}$ is big, then $(V\setminus T',T')$ is actually a
much smaller $(s,t)$-cut than the original cut $(S,T)$ in terms of cut value. More
formally, we have $\delta_{G}(T')\le\delta_{G}(T)-\beta\cdot \delta_{G}(S,P_{t})$
meaning that the decrease in the cut size is at least $\beta$ times
the total capacity of $E(S,P_{t})$. 

With these two guarantees of $\almostfair$, given any $(s,t)$-cut
$(S,T)$, we can iteratively improve this cut to make it fair as
follows. We call $\almostfair$ on both $S$ and $T$ and obtain $(P_{s},S')$
and $(P_{t},T')$. Let's consider two extremes. If both pruned sets
$P_{s}$ and $P_{t}$ are tiny, then there is an $(s,t)$-flow that
almost fully saturates \emph{every} edge in $E(S',T')$. This certifies
$(S,T)$ is very close to being fair as $P_{s}$ and $P_{t}$ are
tiny. However, if either $P_{s}$ or $P_{t}$ is very big, say $P_{t}$,
then $(S\cup P_{t},T')$ is an $(s,t)$-cut of much smaller value than
the original cut $(S,T)$. Therefore, this is progress too and we
can recursively work on this new cut $(S\cup P_{t},T')$. To make the
intuition on these two extremes work, we iteratively call $\almostfair$
using a parameter $\beta$ that increases slightly in every round. The full algorithm
is presented in \Cref{sec:from_almost_to_fair}.

Now, let us sketch the $\almostfair$ subroutine itself.
This subroutine is based on Sherman's algorithm for
computing a $(1+\epsilon)$-approximate max-flow~\cite{Sherman13} (for any $\epsilon > 0$), 
which in turn uses the
multiplicative weight update (MWU) framework.\footnote{Sherman's 
original presentation in \cite{Sherman13} does not explicitly use the MWU framework.
Although this alternative interpretation was already known to experts, 
our MWU-based presentation of his algorithm is arguably simpler and more intuitive.} 
Given the $t$-side $T$ of an $(s,t)$-cut, if we call Sherman's
algorithm where the demand is specified so that each boundary edge
should send flow at its full capacity to sink $t$, then the algorithm
would either return a flow satisfying this demand with congestion
$(1+\epsilon)$ or return a ``violating'' cut certifying that the
demand is not feasible. In the former case,
this would satisfy the guarantee of $\almostfair$ where $P_{t}=\emptyset$
after scaling down the flow by a $(1+\epsilon)$ factor. Unfortunately,
in the latter case, the algorithm does not guarantee the existence
of the flow that we want. The reason behind this problem is that whenever
the algorithm detects a violating cut, the algorithm is just terminated.
In a more general context, this holds for most (if not all) MWU-based
algorithms for solving linear programs; in each round of the MWU algorithm,
whenever ``the oracle'' certifies that the linear program is infeasible, then we just terminate the whole algorithm. 

Interestingly, we fix this issue by ``insisting on continuing''
the MWU algorithm. Once we detect a violating cut, we include it into
the pruned set, cancel the demand inside this pruned set, and continue
updating weights in the MWU algorithm. After the last round, the
flow constructed via MWU indeed sends flow from each remaining boundary edge
that is not pruned out, which is exactly our goal. The detailed algorithm
is presented in \Cref{sec:almost_fair}.

\subsection{From Fair Cuts to Approximate Isolating Cuts}
We believe that the notion of fair cuts can be useful in many contexts
since it offers a more robust alternative to approximate mincuts. In this 
paper, we first use it to obtain an approximate isolating cuts algorithm.
We define the isolating cuts problem first.

\begin{defn}
Given a weighted, undirected graph $G = (V, E)$ and a subset of terminals 
$S = \{s_1, s_2, \ldots, s_k\}$, the goal of the isolating cuts problem is to find
a set of disjoint sets $S_1, S_2, \ldots, S_k$ such that for each $i$,
the cut $(S_i, V\setminus S_i)$
is a mincut that separates $s_i\in S_i$ from the remaining terminals 
$S\setminus \{s_i\}\subseteq V\setminus S_i$. If $S_i$ is a
$(1+\epsilon)$-approximate mincut separating $s_i$ from the remaining 
terminals, then the corresponding problem is called the $(1+\epsilon)$-approximate
isolating cuts problem.
\end{defn}

Using fair cuts, we obtain a near-linear time algorithm for approximate
isolating cuts.

\begin{theorem}
\label{thm:iso}
There is an algorithm for finding $(1+\epsilon)$-approximate isolating cuts 
that takes $\tO(m\cdot \poly(1/\epsilon))$ time.
\end{theorem}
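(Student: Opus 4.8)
The plan is to run the standard binary-search ``isolating cuts'' framework (as in the exact algorithm of Li--Panigrahi and Abboud--Krauthgamer--Trabelsi) with every exact maxflow/mincut replaced by a $(1+\gamma)$-fair cut from \Cref{thm:fair}, for a suitably small $\gamma=\Theta(\epsilon/\log k)$, and with the approximate uncrossing behaviour of fair cuts (\Cref{lem:uncrossing-property}) --- in fact, their fairness certificates --- taking the place of exact uncrossing. Concretely: label each terminal $s_i$ by a distinct string in $\{0,1\}^{L}$ with $L=\lceil\log_2 k\rceil$; for each bit $j$ let $A_j,B_j$ be the terminals whose $j$-th bit is $0$, resp.\ $1$, contract $A_j$ into a single vertex $a_j$ and $B_j$ into $b_j$, and compute a $(1+\gamma)$-fair $(a_j,b_j)$-cut $(X_j,\overline{X_j})$ (with $a_j\in X_j$) together with its certifying flow $f_j$; un-contracting, $A_j\subseteq X_j$ and $B_j\cap X_j=\emptyset$. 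For each terminal $s_i$ let $X_j^{(i)}$ be the side of $(X_j,\overline{X_j})$ containing $s_i$ --- a $(1+\gamma)$-fair cut on its own side, certified by $f_j$ or its reversal, since complementing a fair $(a,b)$-cut gives a fair $(b,a)$-cut --- and put $R_i=\bigcap_{j\le L}X_j^{(i)}$. Finally, for each $i$ contract $V\setminus R_i$ into one vertex $v_i$ to get $G_i$, compute a $(1+\gamma)$-fair $(s_i,v_i)$-cut there, and output its $s_i$-side $Z_i$.

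Feasibility is immediate: any two distinct terminals differ in some bit $j$ with $A_j,B_j\neq\emptyset$ and there lie on opposite sides of $(X_j,\overline{X_j})$, so the $R_i$ are pairwise disjoint; hence each $Z_i\subseteq R_i$ contains $s_i$ but no other terminal and is an isolating cut, and as contracting a side of a cut preserves the capacity of every sub-cut of that side we get $\delta_G(Z_i)=\delta_{G_i}(Z_i)\le(1+\gamma)\cdot\mathrm{mincut}_{G_i}(s_i,v_i)$. So it suffices to prove $\mathrm{mincut}_{G_i}(s_i,v_i)\le(1+\gamma)^{L}\lambda_i$, where $\lambda_i$ is the true isolating-mincut value of $s_i$ in $G$; then $\delta_G(Z_i)\le(1+\gamma)^{L+1}\lambda_i\le(1+\epsilon)\lambda_i$. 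Fix any isolating mincut $U$ of $s_i$ in $G$ and intersect it with the $X_j^{(i)}$ one bit at a time. The crux is a single-step bound: if $U'$ contains $s_i$, meets no other terminal, and $X^{(i)}$ is the side containing $s_i$ of a $(1+\gamma)$-fair cut separating $A\ni s_i$ from $B$, then $\delta_G(U'\cap X^{(i)})\le(1+\gamma)\,\delta_G(U')$. To see it, set $W=U'\setminus X^{(i)}=U'\cap\overline{X^{(i)}}$; then $W$ contains neither $s_i$ nor $a,b$, so the certifying flow is conserved throughout $W$, and since every edge of $E(X^{(i)},W)\subseteq E(X^{(i)},\overline{X^{(i)}})$ is saturated to a $\tfrac{1}{1+\gamma}$ fraction of its capacity in the direction $X^{(i)}\to W$, conservation forces $c\big(E(W,\overline{X^{(i)}}\setminus W)\big)\ge\tfrac{1}{1+\gamma}\,c\big(E(X^{(i)},W)\big)$. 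Since $\overline{X^{(i)}}\setminus W=\overline{X^{(i)}}\setminus U'$ this gives $c(E(W,\overline{U'}))\ge\tfrac{1}{1+\gamma}c(E(X^{(i)},W))\ge\tfrac{1}{1+\gamma}c(E(W,U'\setminus W))$; plugging into the cut identity $\delta_G(U'\setminus W)=\delta_G(U')-c(E(W,\overline{U'}))+c(E(W,U'\setminus W))$ (valid since $W\subseteq U'$) yields $\delta_G(U'\setminus W)\le\delta_G(U')+\gamma\,c(E(W,\overline{U'}))\le(1+\gamma)\delta_G(U')$. Iterating over $j=1,\dots,L$ gives $\delta_G(U\cap R_i)\le(1+\gamma)^{L}\lambda_i$, and $U\cap R_i$ is an $(s_i,v_i)$-cut of $G_i$ of the same capacity, which proves the bound.

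Running time: the $L=O(\log k)$ fair-cut computations on ($O(m)$-edge) contractions of $G$ cost $\tO(m/\gamma^{3})=\tO(m\cdot\poly(1/\epsilon))$ total by \Cref{thm:fair}; and because the $R_i$ are disjoint (merging parallel edges into each $v_i$), $\sum_i|E(G_i)|=O(m)$, so the per-terminal fair-cut computations also cost $\tO(m\cdot\poly(1/\epsilon))$ total. The one delicate step is the single-step bound, and it is exactly where \emph{fairness} --- not mere $(1+\gamma)$-approximation --- is essential: a priori, intersecting $U$ with the $L$ approximate sides could inflate $\delta_G(U)$ by an additive $\gamma\sum_j(\text{capacity of the }j\text{-th cut})$, which is useless since those capacities may vastly exceed $\lambda_i$; the flow-conservation argument instead charges the increase at bit $j$ to $\gamma$ times the boundary of the \emph{current} set alone, turning the additive blow-ups into a multiplicative $(1+\gamma)^{L}$ that $\gamma=\Theta(\epsilon/\log k)$ absorbs. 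Everything else is bookkeeping.
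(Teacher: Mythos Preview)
Your proposal is correct and follows essentially the same approach as the paper: binary labeling of terminals, one $(1+\gamma)$-fair bipartition per bit with $\gamma=\Theta(\epsilon/\log k)$, then a per-terminal approximate mincut inside the resulting disjoint regions; your single-step bound is exactly the paper's \Cref{lem:intersect}, proved by the same flow-saturation/conservation argument on $W=U'\setminus X^{(i)}$. The only cosmetic difference is that you take $R_i=\bigcap_j X_j^{(i)}$ whereas the paper takes the connected component of $s_i$ after deleting all cut edges; both yield pairwise-disjoint regions and hence $\sum_i|E(G_i)|=O(m)$, so the running-time analysis goes through identically.
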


Li and Panigrahi~\cite{LiP20} gave an algorithm for finding {\em exact} 
isolating cuts using $O(\log n)$ $(s,t)$-max-flow/mincut computations that crucially relies 
on the uncrossing property of $(s,t)$-mincuts. This property ensures
that if we take a minimum isolating cut $X$ containing a terminal
vertex $s$ and a crossing mincut $Y$, then their
intersection $X\cap Y$ or difference $X\setminus Y$ (depending on which set 
the terminal vertex $s$ is in) is also a minimum isolating cut. 
This allows partitioning of the 
graph by removing edges corresponding to a set of mincuts, such that each 
terminal is in one of the components of this partition. For each terminal,
the corresponding minimum isolating cut is now obtained by simply contracting 
the rest of the components and running a max-flow algorithm on this contracted
graph. The advantage of this contraction is that the total size of all the graphs 
on which we are running max-flows is only a constant times the size of the overall
graph.

Unfortunately, approximate mincuts don't satisfy this uncrossing property,
which renders this method unusable if we replace exact mincut subroutines
with faster $(1+\epsilon)$-approximate mincuts. But, if we instead used 
fair cuts, then we can show the following: if $X$ is a $(1+\epsilon)$-approximate
minimum isolating cut containing terminal $s$ and $Y$ is a $(1+\alpha)$-fair cut, 
then either $X\cap Y$ or $X\setminus Y$ (whichever set contains $s$) is a 
$(1+\epsilon)(1+\alpha)$-approximate minimum isolating cut. This allows
us to use the framework in \cite{LiP20}. Since the number of fair cuts
we remove in forming the components is only $O(\log n)$, the multiplicative
growth in the approximation factor can be offset by scaling the 
parameter in fair cuts by the same logarithmic factor. The advantage 
in using fair cuts over exact mincuts is that the running time of the 
former is near-linear by \Cref{thm:fair}, which helps establish 
\Cref{thm:iso}. The details of this algorithm are presented in 
\Cref{sec:isolating}.

\subsection{From Approximate Isolating Cuts to Approximate GH-trees}

Finally, we use approximate isolating 
cuts to obtain an approximate GH tree algorithm. 
\cite{LiP21} gives a recursive algorithm for computing 
an approximate GH tree but using exact isolating cuts.
We observe that the latter can be replaced by approximate
isolating cuts provided
the approximation is {\em one-sided} in the following sense:
the ``large'' recursive subproblem needs to preserve mincut 
values exactly. But, in general, 
if we use the approximate isolating cuts subroutine
as a black box, this would not be the case. To alleviate
this concern, we augment the approximate isolating cuts 
procedure using an additional fairness condition for 
the isolating cuts returned by the algorithm. This fairness
condition ensures that although we 
do not have one-sided approximation, the approximation 
factor in the ``large'' subproblem can be controlled using
a much finer parameter than the overall approximation
factor of the algorithm, which then allows us to run 
the recursion correctly. The details of the GH tree 
algorithm establishing \Cref{thm:ghtree} are presented in 
\Cref{sec:ghtree}.

\subsection{From Fair Cuts to Near-linear time Expander Decomposition}
Via fair cuts, we will speed up the algorithm by \cite{SaranurakW19} for computing a $(\Otil(\phi),\phi)$-expander decomposition in time $\Otil(m/\phi)$ to $\Otil(m)$. There are two main steps in the algorithm by \cite{SaranurakW19}: the cut-matching step and the trimming step. The cut-matching step can be solved in $\tO(m)$ time simply by applying the near-linear-time approximate maxflow algorithm by \cite{Peng16}.\footnote{For reader who are familiar with \cite{SaranurakW19}, their algorithm applies the push-relabel flow algorithm that takes $\Otil(m/\phi)$ time, instead of using an $\Otil(m)$-time approximate max flow algorithm, because the push-relabel algorithm has fewer log factors in the running time.}
The harder step to speed up is the trimming step. 
However, we observe that the cut problem needed to be solved in this step is actually a one-sided version of the fair cut problem, which is an easier problem.\thatchaphol{Should I explain this more?} By calling our fair cut algorithm, we immediately obtain a $\Otil(m)$-time algorithm for the trimming step.  See details in \Cref{sec:expdecomp}.

\section{Preliminaries}

Given a undirected capacitated/weighted graph $G=(V,E)$ with edge capacities/weights
is $c\in\mathbb{R}_{\ge0}^{E}$ and an edge set $E'\subseteq E$, we let $c(E')=\sum_{e\in E'}c(e)$
be the total capacity of $E'$. 
For simplicity, we assume that the ratio between the largest and lowest edge capacities or weights are $\poly(n)$. For any disjoint sets $S,T\subseteq V$,
we let $\delta_{G}(S)=c(E(S,V\setminus S))$ denote the cut size of
$S$ and $\delta_{G}(S,T)=c(E(S,T))$ denote the total capacity of
edges from $S$ to $T$. For any distinct vertices $s$ and $t$, let $\lambda_G(s,t)$ be the minimum-weight $s$-$t$ cut. We sometimes omit $G$ when it is clear from the context.
\paragraph{Flow.}

A \emph{flow} $f:V\times V\rightarrow\mathbb{R}$
satisfies $f(u,v)=-f(v,u)$ and $f(u,v)=0$ for $\{u,v\}\notin E$.
The notation $f(u,v)>0$ means that mass is routed in the direction
from $u$ to $v$, and vice versa. The \emph{congestion} of $f$ is
$\max_{\{u,v\}\in E}\frac{|f(u,v)|}{c(e)}$. If the congestion is
at most $1$, we say that $f$ \emph{respects the capacity} or $f$
is \emph{feasible}. For each vertex $u\in V$, the \emph{net flow
out of vertex} $u$, denoted by $f(u)=\sum_{v\in V}f(u,v)$, is the
total mass going out of $u$ minus the total mass coming into $u$.
More generally, for any vertex set $S\subseteq V$, we can define
the \emph{net flow out of $S$ }as $f(S)=\sum_{u\in S}f(u)=\sum_{u\in S,v\in V}f(u,v)$.
The \emph{net flow out from $S$ to $T$} is denoted by $f(S,T)=\sum_{u\in S,v\in T}f(u,v)$.
Observe that we always have $f(V)=0$. 

A \emph{demand function }$\Delta:V\rightarrow\mathbb{R}$ is a function
where $\sum_{v\in V}\Delta(v)=0$. We say that flow $f$ \emph{satisfies}
demand $\Delta$ if $f(v)=\Delta(v)$ for all $v\in V$. For any $S\subseteq V$,
let $\Delta(S)=\sum_{v\in S}\Delta(v)$ be the \emph{total demand}
on $S$. Observe $\Delta(V)=f(V)=0$. By the max-flow min-cut theorem,
we have the following:
\begin{fact}
\label{fact:mfmc}For any $\epsilon\ge0$, $|\Delta(S)|\le\epsilon\cdot\delta(S)$
for all $S\subseteq V$ iff there is a flow with congestion $\epsilon$
that satisfies $\Delta$. 
\end{fact}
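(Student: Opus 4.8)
The plan is to prove the two directions separately. The forward implication (a feasible flow implies the cut condition) is immediate, while the reverse implication is the substantive one and reduces to the classical max-flow min-cut theorem via a standard super-source/super-sink gadget.

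First some harmless reductions. If $\epsilon=0$, the condition $|\Delta(S)|\le 0$ for all $S$ forces $\Delta(\{v\})=\Delta(v)=0$ for every $v$, so $\Delta\equiv 0$, and the all-zero flow has congestion $0$ and satisfies $\Delta$; conversely a congestion-$0$ flow is identically zero on edges, hence satisfies only $\Delta\equiv 0$. So assume $\epsilon>0$. Rescaling every capacity to $\epsilon\cdot c(e)$, a flow has congestion $\le\epsilon$ with respect to $c$ iff it is feasible with respect to $\epsilon c$, and $|\Delta(S)|\le\epsilon\,\delta(S)$ iff $|\Delta(S)|\le\delta_{\epsilon c}(S)$; thus it suffices to treat the case $\epsilon=1$, i.e.\ to show that $\Delta$ is routable by a feasible flow iff $|\Delta(S)|\le\delta(S)$ for all $S\subseteq V$.

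For the ($\Rightarrow$) direction, suppose $f$ is feasible and satisfies $\Delta$. For any $S\subseteq V$ we have $\Delta(S)=f(S)=f(S,V\setminus S)=\sum_{u\in S,\,v\in V\setminus S}f(u,v)$ (the internal terms cancel by antisymmetry). Applying the triangle inequality and $|f(u,v)|\le c(u,v)$ for each edge gives $|\Delta(S)|\le\sum_{\{u,v\}\in E(S,V\setminus S)}c(u,v)=\delta(S)$, as desired.

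For the ($\Leftarrow$) direction, build an auxiliary graph $G'$ by adding a super-source $a$ and super-sink $b$: for each $v$ with $\Delta(v)>0$ add an arc $a\to v$ of capacity $\Delta(v)$, and for each $v$ with $\Delta(v)<0$ add an arc $v\to b$ of capacity $-\Delta(v)$; keep the edges of $G$. Let $D=\sum_{v:\Delta(v)>0}\Delta(v)=\sum_{v:\Delta(v)<0}(-\Delta(v))$ (equal since $\Delta(V)=0$). I claim $\Delta$ is feasibly routable in $G$ iff $G'$ admits an $(a,b)$-flow of value $D$ (equivalently, one saturating all the new arcs): given a feasible routing $f$ of $\Delta$ in $G$, saturate all new arcs to extend $f$ to such an $(a,b)$-flow, and flow conservation at each $v$ holds because its net flow out within $G$ was exactly $\Delta(v)$; conversely, restricting a value-$D$ $(a,b)$-flow to the edges of $G$ yields a feasible flow whose net out at each $v$ equals $\Delta(v)$ (cancelling the saturated source/sink arc). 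By the max-flow min-cut theorem (for capacitated, possibly mixed, graphs, used as a black box), such a flow exists iff every $(a,b)$-cut of $G'$ has capacity $\ge D$. An $(a,b)$-cut is given by its $a$-side $S\cup\{a\}$ for some $S\subseteq V$, and a direct count of the crossing arcs shows its capacity is $\delta_G(S)+\sum_{v\notin S,\,\Delta(v)>0}\Delta(v)+\sum_{v\in S,\,\Delta(v)<0}(-\Delta(v))$; subtracting $D$ and simplifying, this is $\ge D$ exactly when $\delta_G(S)\ge\Delta(S)$. Since the hypothesis gives $\Delta(S)\le|\Delta(S)|\le\delta_G(S)$ for every $S\subseteq V$, all $(a,b)$-cuts have capacity $\ge D$, so the desired flow exists and $\Delta$ is feasibly routable.

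There is no real obstacle here: the statement is a textbook consequence of max-flow min-cut (a Gale/Hoffman-type feasibility criterion). The only points requiring care are the sign bookkeeping around the convention that $\Delta(v)$ is the \emph{net flow out} of $v$ (so positive-demand vertices are supplied from $a$ and negative-demand vertices drain into $b$, with the arcs oriented accordingly), and the arithmetic identifying an $(a,b)$-cut of $G'$ with a subset $S\subseteq V$ together with the computation of its capacity; I would carry out that one-line computation explicitly and leave the rest as above.
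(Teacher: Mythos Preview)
Your proof is correct. The paper does not actually prove this statement: it is labeled a \emph{Fact} and invoked as a direct restatement of the max-flow min-cut theorem, with no argument given. Your super-source/super-sink reduction to an $(a,b)$-max-flow instance is the standard textbook justification (the Gale/Hoffman feasibility criterion you allude to), and the bookkeeping matching cut capacities in $G'$ with the condition $\Delta(S)\le\delta_G(S)$ is carried out correctly.
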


For a flow $f$ and a demand function $\Delta$, 
define the \emph{excess}
$\Delta^{f}$ as $\Delta^{f}(v)=\Delta(v)-f(v)$ for every $v\in V$.
We think of excess as a remaining demand function. We say that $f$
\emph{$\epsilon$-satisfies} $\Delta$ if $|\Delta^{f}(S)|\le\epsilon\cdot\delta(S)$
for all $S\subseteq V$. That is, by \Cref{fact:mfmc}, there exists a flow
$f_{aug}$ with congestion $\epsilon$ where $f+f_{aug}$ satisfies
$\Delta$. Note that $f$ $0$-satisfies $\Delta$ iff $f$ satisfies
$\Delta$. 

For any two vertices $s,t\in V$, an \emph{$(s,t)$-cut} $(S,T)$
is a cut such that $s\in S$ and $t\in T$. An \emph{$(s,t)$-flow}
$f$ obeys $f(v)=0$ for all $v\neq s,t$. Similarly, an \emph{$(s,t)$-demand
function} $\Delta$ obeys $\Delta(v)=0$ for all $v\neq s,t$. That
is, an $(s,t)$-demand
function is satisfied only by an $(s,t)$-flow.

\paragraph{Congestion Approximators.}

When we want to argue that flow $f$ $\epsilon$-satisfies a demand
function $\Delta$, it can be inconvenient to ensure that $|\Delta^{f}(S)|\le\epsilon\cdot\delta(S)$
for all $S\subseteq V$ because there are exponentially many sets.
Surprisingly, there is a collection $\mathcal{S}$ of linearly many
sets of vertices, where if $|\Delta(S)|\le\epsilon\cdot\delta(S)$ for each $S\in{\cal S}$,
then this is also true for all $S\subseteq V$ with some $\polylog(n)$
blow-up factor. Moreover, $\mathcal{S}$ can be computed in near-linear
time.
\begin{theorem}
[Congestion approximator \cite{RackeST14,Peng16}]\label{thm:congest}There
is a randomized algorithm that, given a graph $G=(V,E)$ with $n$
vertices and $m$ edges, constructs in $\tilde{O}(m)$ time with high probability a laminar
family $\mathcal{S}$ of subsets of $V$ such that 
\begin{enumerate}
\item ${\cal S}$ contains at most $2n$ sets, 
\item each vertex appears in $O(\log n)$ sets of ${\cal S}$, and
\item for any demand function $\Delta$ on $V$, if $|\Delta(S)|\le\delta(S)$
for all $S\in\mathcal{S}$, then $|\Delta(R)|\le\gamma_{\cal S}\delta(R)$
for all $R\subseteq V$ for a \emph{quality} factor $\gamma_{\mathcal{S}}=O(\log^{4}n)$.
\end{enumerate}
\end{theorem}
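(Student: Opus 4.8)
Since this is a restatement of the hierarchical cut-based decomposition results of \cite{RackeST14,Peng16}, my plan is to recall their construction rather than invent a new one. The plan is to take $\mathcal{S}$ to be the family of subtree vertex sets of a single rooted decomposition tree (a ``R\"acke tree'') $\mathcal{T}$ whose leaves are the vertices of $V$. Items~1 and~2 are then structural: a tree with $n$ leaves has fewer than $n$ internal nodes, so $|\mathcal{S}|<2n$; and if every recursive split is roughly \emph{balanced}, $\mathcal{T}$ has depth $O(\log n)$, so each vertex lies in $O(\log n)$ sets. I would build $\mathcal{T}$ top-down: at a node $u$ with current (contracted) induced subgraph $G_u$, run the Khandekar--Rao--Vazirani cut-matching game, which in $O(\log^2 n)$ rounds either certifies that $G_u$ is an $\Omega(1/\polylog n)$-expander (then $u$'s children are singletons) or returns a roughly balanced low-conductance cut along which we recurse. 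The resulting decomposition has $O(\log n)$ levels and the quotient graph at every internal node is a $\polylog$-expander.

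For item~3, one direction is free: if $\Delta$ is routable in $G$ with congestion $1$, then $|\Delta(R)|\le\delta_G(R)$ for every $R\subseteq V$ because the net flow across a cut is at most its capacity (cf.~\Cref{fact:mfmc}), and in particular for every $S\in\mathcal{S}$. So the real content is the converse: assuming $|\Delta(S)|\le\delta(S)$ for all $S\in\mathcal{S}$, I would route $\Delta$ top-down through $\mathcal{T}$. At each internal node $u$ with children $u_1,\dots,u_k$, the aggregated demands $\Delta(V_{u_i})$ are small compared with the boundary capacities $\delta(V_{u_i})$ by hypothesis, so the $\polylog$-expansion of $u$'s quotient graph lets us route the aggregated demand among the blocks $V_{u_i}$ at congestion $\polylog$ on the edges of $G$; then recurse inside each block. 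Summing congestion over the $O(\log n)$ levels and over the $O(\log n)$ sets whose boundary a fixed edge can lie in, a careful accounting yields total congestion $\gamma_{\mathcal{S}}=O(\log^4 n)$.

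The main obstacle is the $\tO(m)$ running time. The cut-matching game is driven by (approximate) max-flow computations, but a near-linear-time approximate max-flow algorithm itself needs a congestion approximator --- a circularity. Following \cite{Peng16}, I would break it with a recursive bootstrapping scheme: build congestion approximators for a short chain of progressively sparser minors of $G$ (ultrasparsifiers / $j$-trees), with a spanning tree as the base case (a trivial quality-$1$ approximator), using the approximator at one level to run the $\tO(m)$-time approximate max-flow that powers the cut-matching game at the next level. With $O(\log n)$ levels at $\tO(m)$ work each, the total is $\tO(m)$, and the quality loses only $O(\polylog)$ per level, which the $\log^4 n$ bound absorbs. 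Alternatively, the direct ``one-shot'' $\tO(m)$-time construction of \cite{RackeST14}, which avoids the full balanced-cut recursion via a potential-function argument, can be invoked as a black box. The high-probability qualifier comes from the randomized cut-matching game and the sparsifiers, with the usual union bound over the $O(\log n)$ recursion levels.
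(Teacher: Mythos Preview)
The paper does not prove this theorem at all: it is imported verbatim as a black-box result from \cite{RackeST14,Peng16} and used only through its statement (to build the congestion approximator $\mathcal S$ in \Cref{sec:alg almostfair}). So there is no ``paper's own proof'' to compare against.

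Your sketch is a faithful high-level recollection of what those references do, and you correctly identify the one nontrivial issue --- the circularity between approximate max-flow and congestion approximators --- and Peng's ultrasparsifier bootstrapping that breaks it. One point where your outline is a bit optimistic: the KRV cut-matching game does \emph{not} by itself guarantee a balanced cut in the non-expander case; it may return a very lopsided sparse cut, so ``every split is roughly balanced, hence depth $O(\log n)$'' needs the extra machinery (near-expander plus trimming, or the potential-function argument you allude to in the RST ``one-shot'' alternative). Since you already cite that alternative as a black box, this is not a real gap --- just be aware that the clean ``balanced recursion'' picture is a simplification of what \cite{RackeST14} actually does.
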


\paragraph{Graphs with Boundary Vertices. }

Given a set $U\subseteq V$, let $G\{U\}$ denote the following ``\emph{induced
subgraph with boundary vertices}'': 
start with induced subgraph $G[U]$, and for each edge $e=(u,v)\in E(U,V\setminus U)$
with endpoint $u$ in $U$, create a new vertex $x_{e}$ and add the
edge $(x_{e},u)$ to $G\{U\}$ of the same capacity as $e$. Let $N_{G}\{U\}$
be the vertex set of $G\{U\}$ and define $N_{G}\langle U\rangle=N_{G}\{U\}\setminus U$.
We call vertices in $N_{G}\langle U\rangle$ \emph{boundary vertices}.
We simply write $N\{U\}$ and $N\langle U\rangle$ instead of $N_{G}\{U\}$
and $N_{G}\langle U\rangle$ when the context is clear. Observe that
the degree $\deg_{G\{U\}}(x_{e})$ of each boundary vertex $x_{e}\in N\langle U\rangle$
in $G\{U\}$ is simply the capacity $c(e)$ of edge $e$. We will
use this notation very often in the paper.

\paragraph{Boundary Demand Functions. }

In our context, the sink node $t\in U$ is usually given. The \emph{full
$U$-boundary demand function $\Delta_{U}:V(G\{U\})\rightarrow\mathbb{R}$
}is defined such that
\[
\Delta_{U}(v)=\begin{cases}
\deg_{G\{U\}}(v) & \mbox{if }v\in N\langle U\rangle\\
0 & \mbox{if } v\in U\setminus t\\
-\Delta(N\langle U\rangle) & \mbox{if } v=t.
\end{cases}
\]
That is, any flow satisfying $\Delta_{U}$ sends flow from each boundary
vertex of $G\{U\}$ at full capacity to $t$. We also write $\Delta_{U,t}$
when it is not clear from the context what $t$ is. More generally,
given any demand function $\Delta':V(G\{U\})\rightarrow\mathbb{R}$,
we say that $\Delta'$ is a \emph{$U$-boundary demand function }if
$\Delta'(v)=0$ for all $v\in U\setminus t$, $\Delta'(t)=-\Delta(N\langle U\rangle)$.
That is, $\Delta'$ is completely determined once we specify the demand
values on boundary vertices $N\langle U\rangle$.

\paragraph{One-Sided Fair Cut.}
Finally, the following ``one-sided'' version of a fair cut (\Cref{def:fair cut intro}) will be useful.
\begin{defn}[One-sided Fair Cut]
Let $G=(V,E)$ be an undirected graph with edge capacities $c\in\mathbb{R}_{>0}^{E}$.
Let $s$ be a vertex in $V$. For any parameter $\alpha\ge 1$,
we say that a cut $(S,T)$ is an \emph{$s$-sided $\alpha$-fair cut}
if there exists a feasible flow $f$ such that
 \begin{enumerate}
     \item $f(v)=0$ for all $v\in S\setminus\{s\}$
     \item $f(u,v)\ge\frac{1}{\alpha}\cdot c(u,v)$
for every $(u,v)\in E(S,T)$ where $u\in S$ and $v\in T$.
 \end{enumerate}
In other words, the flow $f$ sends flow from $s$ to the boundary $E(S,T)$ in a way that almost saturates every edge in $E(S,T)$, but we do not care about the behavior of $f$ beyond $E(S,T)$.
\end{defn}

Clearly, an $\alpha$-fair $(s,t)$-cut is an $s$-sided $\alpha$-fair cut since we can take the same flow $f$ that witnesses the $\alpha$-fair $(s,t)$-cut. However, we will only require the one-sided version in our isolating cuts application in \Cref{sec:isolating}.

\section{Almost Fair Cuts via Multiplicative Weight Updates}
\label{sec:almost_fair}

The key subroutine used for proving \Cref{thm:fair} is the algorithm
below. 
\begin{theorem}
[Almost Fair Cuts]\label{thm:almost fair}There is an algorithm $\almostfair(G,U,t,\epsilon,\beta)$
that, given a graph $G=(V,E)$ with a sink node $t\in V$, a set $U\subseteq V$
where $t\in U$, and parameters $\beta\ge0$ and $\epsilon>0$, returns
a partition $(P,U')$ of $U$ where $t\in U'$ with the following
properties:
\begin{enumerate}
\item \label{thm:almost fair:cut}
 $\delta_{G}(U')\le\delta_{G}(U)-\beta\delta_{G}(P,V\setminus U)$ (equivalently, $\delta_{G}(P,U')\le(1-\beta)\delta_{G}(P,V\setminus U)$),
and
\item \label{thm:almost fair:flow}There exists a flow $f'_{sat}$ in $G\{U'\}$
with congestion $(1+\epsilon)$ satisfying a $U'$-boundary demand
function $\Delta'$ such that 
\begin{align*}
\Delta'(v) & =(1-\beta)\deg_{G\{U'\}}(v) & \text{for all old boundary vertices }v\in N\langle U'\rangle\cap N\langle U\rangle\\
|\Delta'(v)| & \le(1+\epsilon)\deg_{G\{U'\}}(v) & \text{for all new boundary vertices }v\in N\langle U'\rangle\setminus N\langle U\rangle
\end{align*}
\end{enumerate}
The algorithm takes $\Otil(|E(G\{U\})|/\epsilon^{2})$ time and is
correct with high probability.\footnote{Note that the guarantee that $|\Delta'(v)| \le(1+\epsilon)\deg_{G\{U'\}}(v)$  for all new boundary vertices $v\in N\langle U'\rangle\setminus N\langle U\rangle$ in fact follows from the guarantee that $f'_{sat}$ has congestion $(1+\epsilon)$. We state both guarantees explicitly for convenience.}
\end{theorem}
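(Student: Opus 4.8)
The plan is to implement the "insisting on continuing the MWU" strategy sketched in the overview, using Sherman's near-linear-time approximate max-flow as the underlying engine and the congestion approximator of \Cref{thm:congest} to certify infeasibility cheaply. I would work in the graph $H := G\{U\}$ with the full $U$-boundary demand $\Delta_U$, which asks each boundary vertex to ship its full capacity to $t$. Run the MWU-based version of Sherman's algorithm on $H$ with target demand $\Delta_U$ and congestion slack $\epsilon$. If it succeeds in finding a flow of congestion $(1+\epsilon)$ satisfying $\Delta_U$, output $P=\emptyset$, $U'=U$: property~\ref{thm:almost fair:cut} is trivial (RHS equals $\delta_G(U)$), and property~\ref{thm:almost fair:flow} holds with $\Delta'=\Delta_U$, up to replacing the $(1-\beta)$ in the demand by a scaling of the flow (we can always scale the output flow down by $1/(1+\epsilon)$ or adjust, so the precise bookkeeping of the constants is routine). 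The interesting case is when MWU reports a violating set $S$, i.e. a set with $|\Delta^f_U(S)| > \delta_H(S)$ detected via the congestion approximator. Instead of aborting, I would \emph{absorb} the part of $S$ not containing $t$ into the pruned set: set $P \leftarrow P \cup (S\setminus\{t\})$ (restricted to $U$-vertices), zero out the residual demand inside that region, reconnect the edges from the newly pruned vertices to the rest as new boundary vertices, and continue the MWU iterations on the shrunken graph $G\{U'\}$ with the updated boundary demand. Repeat until no violating cut is found; then output the current $(P,U')$ and the flow $f'_{sat}$ accumulated by MWU.

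\textbf{Establishing property~\ref{thm:almost fair:flow}} should follow from the termination condition: once MWU runs to completion without a violating cut, standard Sherman/MWU analysis gives a flow of congestion $(1+\epsilon)$ that satisfies the current $U'$-boundary demand $\Delta'$. On the old boundary vertices $N\langle U'\rangle \cap N\langle U\rangle$ the demand was never cancelled, so it remains at (a $(1-\beta)$-scaled version of) full capacity — the $(1-\beta)$ factor is exactly what we feed in as the target so that the slack absorbed by the pruning steps plus the MWU congestion still leaves a genuinely feasible scaled-down flow; on the new boundary vertices (those created by cutting off $P$) the demand is whatever residual MWU left there, and its magnitude is bounded by $(1+\epsilon)\deg$ simply because $f'_{sat}$ has congestion $(1+\epsilon)$ and the degree of a new boundary vertex equals the capacity of the edge it represents (this is the observation recorded in the footnote).

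\textbf{Establishing property~\ref{thm:almost fair:cut}} is the accounting of cut value across pruning steps. Each time we move a chunk $Q:=S\setminus\{t\}$ from $U'$ into $P$, the change in $\delta_G(\cdot)$ of the remaining side is controlled by how saturated the edges from $Q$ outward are. The violating-cut guarantee says the residual demand crossing $S$ exceeds $\delta_H(S)$; translating this through the definition of the boundary demand $\Delta_U$ (each old boundary edge into $Q$ demands $(1-\beta)$ of its capacity) and using that the MWU flow already pushed congestion $\le 1+\epsilon$ everywhere, one shows that a $\beta$-fraction (after absorbing the $\epsilon$ slack into constants, which is where the $\epsilon^2$ running-time dependence and a careful choice of $\epsilon$ relative to $\beta$ come in) of the capacity $\delta_G(Q, V\setminus U)$ of the "external" edges at $Q$ was not needed, hence removing $Q$ decreases the cut by at least $\beta\cdot\delta_G(Q,V\setminus U)$ net. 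Summing the per-step inequalities over all pruning steps gives $\delta_G(U')\le \delta_G(U)-\beta\,\delta_G(P,V\setminus U)$; the equivalent form $\delta_G(P,U')\le(1-\beta)\delta_G(P,V\setminus U)$ is just the identity $\delta_G(U')=\delta_G(U)-\delta_G(P,V\setminus U)+\delta_G(P,U')$ rearranged (recalling $U=P\sqcup U'$ and that edges from $P$ to $V\setminus U$ leave the cut while edges from $P$ to $U'$ enter it).

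\textbf{The main obstacle} I anticipate is not any single inequality but making the "insist on continuing" argument airtight within the MWU potential-function framework: one must argue that cancelling demand inside a pruned region and re-routing the MWU weights does not destroy the monotone progress of the potential, so that the total number of MWU iterations (and hence the running time) stays $\Otil(1/\epsilon^2)$ despite the interleaved pruning steps, and that the congestion-approximator certificate for "no violating cut" at termination is genuinely valid on the \emph{final} graph $G\{U'\}$ rather than the original one. Handling the boundary-vertex bookkeeping — old vs.\ new boundary vertices, and how $\deg_{G\{U'\}}$ changes as vertices migrate into $P$ — is the fiddly part that needs care, together with choosing $\epsilon$ small enough relative to $\beta$ that the $(1+\epsilon)$ congestion slack can be folded into the $(1-\beta)$ saturation target without breaking feasibility of the final scaled flow. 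I would also need to bound the number of pruning steps (each prunes a nonempty set, but sets can be small), which I expect to handle by a recursion on $|E(G\{U'\}) |$ or by charging, keeping the total work near-linear.
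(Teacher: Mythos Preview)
Your high-level strategy---run Sherman's MWU and, whenever infeasibility is detected, prune the offending region and keep going---is exactly the paper's idea. But your proposed implementation diverges from the paper's in a way that creates the very obstacle you flag at the end, and the paper resolves that obstacle by a trick you are missing.

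\textbf{Where the paper differs.} The paper never shrinks the graph or rebuilds anything. It fixes $H=G\{U\}$ and the congestion approximator once, and runs a single MWU loop of $T=\Theta(\log n/\alpha^2)$ rounds on $H$. The target demand is $\Delta=(1-\beta)\Delta_U$ from the outset (so $\beta$ and $\epsilon$ are decoupled; there is no ``folding $\epsilon$ into $\beta$'' and Property~\ref{thm:almost fair:cut} does not involve $\epsilon$ at all). Pruning is expressed purely as restricting the demand: one maintains a decremental set $V^i$ and works with $\Delta|_{V^i}$, zeroing the demand on boundary vertices already pruned. The flow $f^i$ in each round is simply the potential-gradient flow on all of $H$.

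\textbf{The key missing idea.} In round $i$ the deletion set $D^i$ is not a set reported by the congestion approximator; it is found by a \emph{sweep cut} over the MWU potentials $\phi^i$. One shows that the level set $D^i=V_{>x^*}$ for a carefully chosen threshold $x^*$ simultaneously satisfies
\[
\Delta|_{V^{i-1}}(D^i)>\delta_H(D^i)\qquad\text{and}\qquad \langle\phi^i,\Delta|_{V^{i-1}\setminus D^i}\rangle\le\langle\phi^i,Bf^i\rangle.
\]
The first inequality is exactly what you need for Property~\ref{thm:almost fair:cut}: summing over rounds gives $\delta_{G\{U\}}(P)\le\Delta(P)=(1-\beta)\delta_G(P,V\setminus U)$, which rearranges as stated. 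The second inequality is precisely the MWU oracle condition $\langle g^i,w^i\rangle\le0$ \emph{after} the deletion, so the standard MWU potential analysis goes through unchanged---no restarting, no re-approximator, no separate bound on the number of pruning events (at most one $D^i$ per round, and $T$ is fixed). After round $T$, the average flow $\bar f$ $\epsilon$-satisfies the average demand $\overline\Delta$ on $H$ by MWU, and restricting the resulting $(1+\epsilon)$-congestion flow to $G\{U'\}$ gives $f'_{sat}$ and $\Delta'$ with the claimed values on old and new boundary vertices.

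So your instinct that ``making the insist-on-continuing argument airtight'' is the crux is correct; the resolution is this sweep-cut construction of $D^i$ on the fixed graph $H$, not rebuilding $G\{U'\}$ and re-running.
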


The rest of this section is for proving \Cref{thm:almost fair}. For
convenience, we write $H=G\{U\}$ and let $n$ and $m$ denote the
number of vertices and edges in $H$ throughout this section. Let
$B$ be the incidence matrix of $H$. Observe that, for any flow $f$
on $H$, we have $(Bf)_{v}=f(v)$ is the net flow out of $v$. We
can view $Bf$ as a vector in $\mathbb{R}^{V(H)}$. Define 
\[
\Delta=(1-\beta)\Delta_{U}
\]
as the full $U$-boundary demand function on $G\{U\}$ after scaled
down by $(1-\beta)$ factor. For any $U'\subseteq U$, the
\emph{restriction} $\Delta|_{U'}$ of $\Delta$ is a $U$-boundary
demand function obtained from $\Delta$ by zeroing out the entries
on $\boundary U\setminus\boundary{U'}$, i.e., the boundary vertices
of $U$ which are not boundaries of $U'$, and then setting the
entry on $t$ so that $\sum_{v\in V(H)}\Delta|_{U'}(v)=0$. Similarly,
we view $\Delta$ and also $\Delta|_{U'}$ as vectors in $\mathbb{R}^{V(H)}$.

\subsection{Algorithm}
\label{sec:alg almostfair}
\paragraph{Initialization.}

We start by computing a congestion approximator ${\cal S}$ of $H$
with quality $\gamma_{{\cal S}}=O(\log^{4}n)$ using \Cref{thm:congest}.
For a technical reason, it is more convenient if no set in ${\cal S}$
contains sink $t$. 
From now, we will assume this, which is justified by the following observation:
\begin{prop}
\label{prop:no t}
Given the family ${\cal S}$ from \Cref{thm:congest} and a vertex $t$, there is a linear time algorithm that returns another family ${\cal S}'$ with the same guarantee as in \Cref{thm:congest} but with additional guarantee that each set $S \in {\cal S}'$ does not contains $t$.
\end{prop}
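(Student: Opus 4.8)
The plan is to obtain $\mathcal{S}'$ from $\mathcal{S}$ by a simple complementation trick: since $\mathcal{S}$ is laminar and contains at most $2n$ sets, for every $S \in \mathcal{S}$ with $t \in S$ I will replace $S$ by its complement $V \setminus S$. Formally, set $\mathcal{S}' = \{\, S \in \mathcal{S} : t \notin S \,\} \cup \{\, V \setminus S : S \in \mathcal{S},\ t \in S \,\}$. The key observations are that (i) $\delta(S) = \delta(V\setminus S)$, and (ii) for any demand function $\Delta$ we have $\Delta(V) = 0$, hence $|\Delta(V\setminus S)| = |{-}\Delta(S)| = |\Delta(S)|$. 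Therefore the condition ``$|\Delta(S)| \le \delta(S)$ for all $S \in \mathcal{S}$'' is \emph{equivalent} to ``$|\Delta(S)| \le \delta(S)$ for all $S \in \mathcal{S}'$'', so the quality guarantee (item 3 of \Cref{thm:congest}) is preserved verbatim with the same $\gamma_{\mathcal{S}} = O(\log^4 n)$. By construction no set of $\mathcal{S}'$ contains $t$.

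Next I would verify the size and depth bounds. Complementation is a bijection on the replaced sets, so $|\mathcal{S}'| = |\mathcal{S}| \le 2n$. For the per-vertex multiplicity bound, note that a vertex $u \neq t$ appears in $S' \in \mathcal{S}'$ iff either $u \in S$ for some unreplaced $S \in \mathcal{S}$, or $u \notin S$ for some replaced $S$ (those with $t \in S$). A priori the second family could be large, so this is the one place the argument needs a little care. I would use laminarity of $\mathcal{S}$: the sets of $\mathcal{S}$ containing $t$ form a chain $S_1 \subseteq S_2 \subseteq \cdots \subseteq S_\ell$ under inclusion, and by item 2 of \Cref{thm:congest} applied to the vertex $t$ we have $\ell = O(\log n)$. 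Hence $u$ appears in at most $O(\log n)$ of the complements $V\setminus S_i$, plus the $O(\log n)$ original sets containing $u$, giving an $O(\log n)$ bound overall. (Alternatively, one can simply note $\mathcal{S}'$ need not stay laminar, but the Proposition does not require that; only the three numbered guarantees of \Cref{thm:congest} are claimed, and the ``laminar family'' phrasing there can be relaxed — I would state $\mathcal{S}'$ as just a family, which suffices for all downstream uses.)

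Finally, the running time: identifying which sets of $\mathcal{S}$ contain $t$ and forming the complements takes time proportional to $\sum_{S \in \mathcal{S}} |S|$, which by item 2 is $O(n \log n)$, i.e.\ near-linear; in fact if the sets are represented implicitly (as is standard for laminar congestion approximators, e.g.\ via the tree structure of \cite{RackeST14,Peng16}) one can record ``complemented'' as a flag and the transformation is $O(|\mathcal{S}|) = O(n)$ time. I expect the only mild subtlety to be the per-vertex multiplicity bound, which is where laminarity of the original family (and the $O(\log n)$ depth at $t$) is used; everything else is a one-line identity from $\Delta(V) = 0$ and $\delta(S) = \delta(V\setminus S)$.
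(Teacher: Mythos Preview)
Your proposal is correct and follows essentially the same approach as the paper: replace each $S\in\mathcal{S}$ containing $t$ by its complement, use $|\Delta(S)|=|\Delta(V\setminus S)|$ and $\delta(S)=\delta(V\setminus S)$ to preserve the quality guarantee, and use that only $O(\log n)$ sets of $\mathcal{S}$ contain $t$ to control the per-vertex multiplicity. One minor point: you hedge that $\mathcal{S}'$ ``need not stay laminar,'' but in fact it does---for any $A\in\mathcal{S}$ with $t\notin A$ and any $S_i\ni t$, laminarity of $\mathcal{S}$ forces either $A\subseteq S_i$ or $A\cap S_i=\emptyset$ (since $S_i\subseteq A$ is impossible), hence $A\cap(V\setminus S_i)=\emptyset$ or $A\subseteq V\setminus S_i$---and the paper uses this observation directly.
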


\begin{proof}
Replace each set $S\in{\cal S}$ where $t\in S$ with its complement
$V(H)\setminus S$. Observe that ${\cal S}$ is now a larminar family
on $V(H)\setminus t$ where $|{\cal S}|$ does not change, and the
number of sets containing each vertex may increase only by $O(\log n)$.
Hence, the first and second properties of \Cref{thm:congest} still
hold. The third property still holds as well because $|\Delta(S)|=|\Delta(V(H)\setminus S)|$
for all $S$.
\end{proof}
Our algorithm is based on the Multiplicative Weight Update framework
and so it works in rounds. For round $i$, we maintain \emph{weights}
$w_{S,\circ}^{i}\ge0$ for each $S\in\mathcal{S}$ and $\circ\in\{+,-\}$
and define the potential $\phi^{i}\in\mathbb{R}^{V(H)}$ where
\[
\phi_{v}^{i}=\sum_{S\ni v}\frac{1}{\delta_{H}(S)}(w_{S,+}^{i}-w_{S,-}^{i})
\]
for each vertex $v$. As no set $S\in{\cal S}$ contains $t$, we
will always have $\phi_{t}^{i}=0$ for all $i$. Initially, we set
$w_{S,\circ}^{1}=1$ for all $S\in\mathcal{S}$, $\circ\in\{+,-\}$.

The algorithm also maintains a decremental subset $V^{i}$ where $t\in V^{i}\subseteq V^{i-1}$
for all $i$. We initialize $V^{0}$ as follows. First, set $V^{0}=V(H)$.
While there exists $S\in\mathcal{S}$ where $\Delta|_{V^{0}}(S)>\delta_{H}(S)$,
which certifies that there is no feasible flow on $H$ satisfying $\Delta|_{V^{0}}$
by \Cref{fact:mfmc}, we update $V^{0}\gets V^{0}\setminus S$ (in
particular, the function $\Delta|_{V^{0}}$ changes). Let $D^{0}$ contain all the vertices we removed from $V^{0}$. Now, we are ready to state
the main algorithm.

\paragraph{Main Algorithm.}

For round $i=1,2,\ldots,T$ where $T=\Theta(\log(n)/\alpha^{2})$
and $\alpha=\epsilon/\gamma_{{\cal S}}$, we do the following:
\begin{enumerate}
\item Define $f^{i}$ on $H$ such that for each edge $(u,v)$, $f^{i}(u,v)$
flows from high potential to low potential at maximum capacity. That
is, for every edge $(u,v)$ in $H$,
\[
f^{i}(u,v)=\begin{cases}
c(u,v) & \text{if }\phi_{u}^{i} > \phi_{v}^{i}\\
0 & \text{if }\phi_{u}^{i} = \phi_{v}^{i}\\
-c(u,v) & \text{if }\phi_{u}^{i}<\phi_{v}^{i}.
\end{cases}
\]
\item Using \Cref{lem:1}, compute a \emph{deletion set} $D^{i}\subseteq V(H)\setminus t$
and set $V^{i}\gets V^{i-1}\setminus D^{i}$, where $D^{i}$ satisfies
the following:\label{Di} 
\begin{align*}
\text{if }D^{i}\neq\emptyset,\text{ then }\Delta|_{V^{i-1}}(D^{i}) & >\delta_{H}(D^{i})\text{, and}\\
\langle\phi^{i},\Delta|_{V^{i}}\rangle=\langle\phi^{i},\Delta|_{V^{i-1}\setminus D^{i}}\rangle & \le\langle\phi^{i},Bf^{i}\rangle.
\end{align*}
\item For each $S\in{\cal S}$, let 
\[
r_{S}^{i}=\frac{(\Delta|_{V^{i}})^{f^{i}}(S)}{\delta_{H}(S)}=\frac{\Delta|_{V^{i}}(S)-f^{i}(S)}{\delta_{H}(S)}
\]
be the \emph{relative total excess} at $S$ compared to the cut size
in round $i$.
\item Update the weights \label{enu:Update weights}
\[
w_{S,+}^{i+1}=w_{S,+}^{i}\cdot e^{\alpha r_{S}^{i}}\quad\text{and}\quad w_{S,-}^{i+1}=w_{S,-}^{i}\cdot e^{-\alpha r_{S}^{i}}.
\]
\end{enumerate}
After $T$ rounds, we compute the \emph{pruned set} $P=\cup_{i=0}^{T}D^{i}$
and let $U'=U\setminus P$. Finally, we return the partition $(P,U')$.

\subsection{Correctness}

We prove that the partition $(P,U')$ outputted by our algorithm satisfies
the requirement in \Cref{thm:almost fair}. The first important thing
to understand our algorithm is to formally see how it is captured
by the Multiplicative Weight Update (MWU) algorithm, which we recall
below:
\begin{theorem}
[Multiplicative Weights Update \cite{AroraHK12}]\label{thm:mwu} Let $J$
be a set of indices, and let $\alpha\le1$ and $\omega>0$ be parameters.
Consider the following algorithm: 
\begin{enumerate}
\item Set $w_{j}^{(1)}\gets1$ for all $j\in J$ 
\item For $i=1,2,\ldots,T$ where $T=O(\omega^{2}\log(|J|)/\alpha^{2})$: 
\begin{enumerate}
\item The algorithm is given a ``gain'' vector $g^{i}\in\mathbb{R}^{J}$
satisfying $\|g^{i}\|_{\infty}\le\omega$ and $\langle g^{i},w^{i}\rangle\le0$\label{mwu1} 
\item For each $j\in J$, set $w_{j}^{i}\gets w_{j}^{i-1}\exp(\alpha g_{j}^{i})=\exp(\alpha\sum_{i'\in[i]}g_{j}^{i'})$ 
\end{enumerate}
\end{enumerate}
At the end of the algorithm, we have $\frac{1}{T}\sum_{i\in[T]}g_{j}^{i}\le\alpha$
for all $j\in J$.\footnote{More generally, for any value $\mathsf{val}$, if we have $\langle g^{i},w^{i}\rangle\le\mathsf{val}$
for all $i$, the MWU algorithm guarantees that $\frac{1}{T}\sum_{i\in[T]}g^{i}_j\le\mathsf{val}+\alpha$, for all $j$.
Here, we use a special case when $\mathsf{val}=0$.}
\end{theorem}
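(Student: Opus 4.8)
The plan is the classical potential-function argument for multiplicative weights. I would take the potential to be the total weight $\Phi^i := \sum_{j\in J} w_j^i$, which starts at $\Phi^1 = |J|$ and stays positive throughout since every update multiplies weights by a positive number. For clarity I would first carry out the argument in the normalized case $\|g^i\|_\infty \le 1$; the general width $\omega$ enters only through a rescaling that I address at the end.

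The heart of the argument is a one-round bound on the growth of $\Phi^i$. Using the elementary inequality $e^x \le 1 + x + x^2$, valid for $|x|\le 1$, applied with $x = \alpha g_j^i$ (legitimate since $\alpha \le 1$ and $|g_j^i|\le 1$),
\[ \Phi^{i+1} \;=\; \sum_{j\in J} w_j^i\, e^{\alpha g_j^i} \;\le\; \sum_{j\in J} w_j^i\bigl(1 + \alpha g_j^i + \alpha^2 (g_j^i)^2\bigr) \;\le\; \Phi^i + \alpha\,\langle g^i, w^i\rangle + \alpha^2\,\Phi^i. \]
Here the hypothesis $\langle g^i, w^i\rangle \le 0$ is exactly what is needed to discard the linear term, leaving $\Phi^{i+1} \le (1+\alpha^2)\Phi^i \le e^{\alpha^2}\Phi^i$. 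Telescoping over $i = 1,\dots,T$ then gives $\Phi^{T+1} \le |J|\cdot e^{\alpha^2 T}$.

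For the matching lower bound, fix any index $j\in J$. Since all weights are nonnegative, $\Phi^{T+1} \ge w_j^{T+1} = \exp\!\bigl(\alpha\sum_{i\in[T]} g_j^i\bigr)$ by the closed form of the update rule. Combining this with the upper bound and taking logarithms yields $\alpha\sum_{i\in[T]} g_j^i \le \ln|J| + \alpha^2 T$, i.e. $\tfrac{1}{T}\sum_{i\in[T]} g_j^i \le \tfrac{\ln|J|}{\alpha T} + \alpha$. With $T = \Theta(\log|J|/\alpha^2)$ and a large enough hidden constant, the first term is at most a small multiple of $\alpha$; shrinking $\alpha$ by that constant factor (equivalently enlarging the constant in $T$) gives $\tfrac{1}{T}\sum_{i\in[T]} g_j^i \le \alpha$ for every $j$, which is the claim. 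The generalization in the footnote follows from the identical computation: if $\langle g^i, w^i\rangle \le \mathsf{val}\cdot\Phi^i$ for all $i$, the linear term contributes $\alpha\,\mathsf{val}\cdot\Phi^i$ instead of being dropped, the telescoped bound becomes $\Phi^{T+1} \le |J|\,e^{(\alpha\mathsf{val} + \alpha^2)T}$, and the final step gives $\tfrac{1}{T}\sum_{i\in[T]} g_j^i \le \mathsf{val} + \alpha$.

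To recover the stated width-$\omega$ version, I would rescale: replacing the $(g^i)^2 \le 1$ bound above by $(g^i)^2 \le \omega^2$ turns the telescoped estimate into $\Phi^{T+1} \le |J|\,e^{\alpha^2\omega^2 T}$, and matching the $\tfrac{\ln|J|}{\alpha T}$ term against the target requires inflating the round count by a factor $\omega^2$, giving $T = O(\omega^2\log|J|/\alpha^2)$ as in the statement. The one genuinely delicate point is precisely this width bookkeeping: ensuring $|\alpha g_j^i| \le 1$ so that $e^x \le 1 + x + x^2$ remains applicable (the mild proviso $\alpha\omega \le 1$, satisfied in all our applications since there $\omega = \Theta(1)$), and tracking how the $\omega$-factors propagate through the telescoping and into $T$. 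Everything else — the telescoping, the logarithm, and absorbing constants into the $\Theta(\cdot)$ — is routine.
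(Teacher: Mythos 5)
Your argument is the standard potential-function proof of the cited Arora--Hazan--Kale theorem; the paper itself gives no proof of this statement (it is imported by citation), so there is nothing paper-specific to compare against, and your route is exactly the textbook one. The core steps are all correct: the one-round bound $\Phi^{i+1}\le\Phi^{i}+\alpha\langle g^{i},w^{i}\rangle+\alpha^{2}\|g^{i}\|_{\infty}^{2}\,\Phi^{i}$ via $e^{x}\le 1+x+x^{2}$ for $|x|\le 1$, discarding the linear term using $\langle g^{i},w^{i}\rangle\le 0$, telescoping, and comparing against the single-coordinate lower bound $\Phi^{T+1}\ge w_{j}^{T+1}=\exp\bigl(\alpha\sum_{i\in[T]}g_{j}^{i}\bigr)$. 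The footnoted generalization with $\mathsf{val}$ also follows as you indicate.

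One caveat on the width bookkeeping, which you partly flag but do not resolve: with the update rule literally as stated ($w_{j}^{i+1}=w_{j}^{i}e^{\alpha g_{j}^{i}}$, not $e^{(\alpha/\omega^{2})g_{j}^{i}}$), your telescoped estimate yields $\frac{1}{T}\sum_{i}g_{j}^{i}\le\frac{\ln|J|}{\alpha T}+\alpha\omega^{2}$. Inflating $T$ by a factor $\omega^{2}$ controls only the first term; the additive $\alpha\omega^{2}$ does not shrink with $T$, so for non-constant $\omega$ the stated conclusion ``$\le\alpha$'' would additionally require rescaling the learning rate (equivalently, applying your normalized case to $g^{i}/\omega$ with rate $\alpha/\omega$), which changes the update rule in the statement. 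In this paper the point is immaterial: the only application has $\omega=2$ and $\alpha=\epsilon/\gamma_{\mathcal S}$, so both the proviso $\alpha\omega\le 1$ and the constant-factor slack are satisfied, and your ``shrink $\alpha$ by a constant / enlarge the constant in $T$'' remark absorbs the $\omega^{2}$ just as it absorbs the $\ln|J|/(\alpha T)$ term. It would strengthen the writeup to state explicitly that the conclusion holds with the constant in $T=O(\omega^{2}\log|J|/\alpha^{2})$ chosen large and $\alpha$ rescaled by a factor depending only on $\omega$ (constant here), rather than suggesting the width enters only through $T$.
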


To apply \Cref{thm:mwu} into our setting, we define $J=\mathcal{S}\times\{+,-\}$.
That is, we work with indices $(S,+)$ and $(S,-)$ for $S\in\mathcal{S}$.
We use the same weights $w^{i}$ and error parameter $\alpha$ as
the algorithm, and we set $\omega=2$. For each iteration $i$ and
$S\in\mathcal{S}$, we define 
\[
g_{S,\pm}^{i}=\pm r_{S}^{i}=\pm\frac{\Delta|_{V^{i}}(S)-f^{i}(S)}{\delta_{H}(S)}.
\]
Observe that the weights $w_{S,\pm}^{i}$ are updated in Step \ref{enu:Update weights}
exactly as $w_{S,\pm}^{i}\gets w_{S,\pm}^{i-1}\exp(\alpha g_{S,\pm}^{i})$.
With this setting, we show that our gain vector $g^{i}$ indeed satisfies
the condition in Step \ref{mwu1} of \Cref{thm:mwu}.
\begin{lem}
For each $i$, we have $\|g^{i}\|_{\infty}\le2$ and $\langle g^{i},w^{i}\rangle\le0$. 
\end{lem}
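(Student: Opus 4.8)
The plan is to verify the two conditions of the MWU framework (Step~\ref{mwu1} of \Cref{thm:mwu}) separately. For the bound $\|g^i\|_\infty \le 2$, recall that $g_{S,\pm}^i = \pm r_S^i = \pm\frac{\Delta|_{V^i}(S) - f^i(S)}{\delta_H(S)}$, so it suffices to show $|\Delta|_{V^i}(S)| \le \delta_H(S)$ and $|f^i(S)| \le \delta_H(S)$ for every $S \in {\cal S}$. The second inequality is immediate: $f^i$ routes each edge at capacity $|f^i(u,v)| = c(u,v)$ (or $0$), so $|f^i(S)| = |\sum_{u \in S, v \in V(H)} f^i(u,v)| \le \sum_{e \in E(S, V(H)\setminus S)} c(e) = \delta_H(S)$. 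For the first inequality, I would argue by induction on $i$ that the invariant $\Delta|_{V^i}(S) \le \delta_H(S)$ holds for all $S \in {\cal S}$: this is enforced at initialization by the while-loop constructing $V^0$ (which repeatedly removes any violating $S$), and maintained in each round because Step~\ref{Di} removes a deletion set $D^i$ precisely whenever $\Delta|_{V^{i-1}}(D^i) > \delta_H(D^i)$; I would need to check that after removing $D^i$ no set $S \in {\cal S}$ has $\Delta|_{V^i}(S)$ increasing past $\delta_H(S)$ — this is where the guarantee of \Cref{lem:1} (which produces $D^i$) must be invoked, together with laminarity of ${\cal S}$ and the fact that removing boundary demand can only decrease $\Delta|_{\cdot}(S)$ for sets $S$ not containing $t$. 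Since no set in ${\cal S}$ contains $t$ (by \Cref{prop:no t}), the one-sided bound $\Delta|_{V^i}(S) \le \delta_H(S)$ together with the symmetric bound on the complement gives $|\Delta|_{V^i}(S)| \le \delta_H(S)$. Combining, $|r_S^i| \le |\Delta|_{V^i}(S)|/\delta_H(S) + |f^i(S)|/\delta_H(S) \le 2$, hence $\|g^i\|_\infty \le 2$.

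For the inner-product condition $\langle g^i, w^i \rangle \le 0$, I would expand in terms of the potential. By definition $\phi_v^i = \sum_{S \ni v} \frac{1}{\delta_H(S)}(w_{S,+}^i - w_{S,-}^i)$, and
\[
\langle g^i, w^i\rangle = \sum_{S \in {\cal S}} \big(r_S^i w_{S,+}^i + (-r_S^i) w_{S,-}^i\big) = \sum_{S \in {\cal S}} \frac{(\Delta|_{V^i})^{f^i}(S)}{\delta_H(S)}\,(w_{S,+}^i - w_{S,-}^i).
\]
Writing $(\Delta|_{V^i})^{f^i}(S) = \sum_{v \in S}(\Delta|_{V^i}(v) - f^i(v))$ and swapping the order of summation, this equals $\sum_{v \in V(H)} (\Delta|_{V^i}(v) - f^i(v)) \sum_{S \ni v} \frac{1}{\delta_H(S)}(w_{S,+}^i - w_{S,-}^i) = \langle \phi^i, \Delta|_{V^i} - Bf^i\rangle$, since $(Bf^i)_v = f^i(v)$. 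So the condition reduces to $\langle \phi^i, \Delta|_{V^i}\rangle \le \langle \phi^i, Bf^i\rangle$, which is exactly the second property guaranteed for $D^i$ in Step~\ref{Di}. Thus this half follows directly once we unwind the definitions; the role of the potential-directed flow $f^i$ (routing from high to low potential at full capacity) is precisely to make $\langle \phi^i, Bf^i\rangle$ as large as possible, and \Cref{lem:1} is designed so that the deletion set $D^i$ restores feasibility of this inequality.

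I expect the main obstacle to be the induction maintaining $|\Delta|_{V^i}(S)| \le \delta_H(S)$ across rounds — specifically, checking that the deletion set $D^i$ produced by \Cref{lem:1} interacts correctly with \emph{all} sets $S \in {\cal S}$ simultaneously (not just the one that triggered the deletion), and that restricting the demand function $\Delta|_{V^{i-1}} \to \Delta|_{V^i}$ (which zeroes out boundary entries of removed vertices and rebalances $t$) does not create a new violation. This is a structural argument about how laminar-family cut values change under removal of boundary demand, and is the place where the precise statement of \Cref{lem:1} does the real work; the two bounds in the lemma itself are then routine consequences. (I am also implicitly using that the choice $\omega = 2$ in the MWU application matches exactly the bound $\|g^i\|_\infty \le 2$ just established, so that $T = \Theta(\log(n)/\alpha^2)$ is the correct number of rounds.)
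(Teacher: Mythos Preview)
Your proposal is essentially correct and follows the same strategy as the paper for both halves, but you are overcomplicating the first part and manufacturing an obstacle that is not there.

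For $\|g^i\|_\infty\le 2$, the paper's argument is a two-line monotonicity observation, with no induction, no appeal to \Cref{lem:1}, and no use of laminarity. Since $t\notin S$ for every $S\in\mathcal S$ (by \Cref{prop:no t}), the only entries of $\Delta|_{V^i}$ contributing to $\Delta|_{V^i}(S)$ are boundary entries, which are nonnegative; hence $\Delta|_{V^i}(S)\ge 0$. And since $V^i\subseteq V^{i-1}\subseteq\cdots\subseteq V^0$, passing to a smaller $V^i$ only zeroes out nonnegative boundary entries, so $\Delta|_{V^i}(S)\le\Delta|_{V^0}(S)\le\delta_H(S)$, the last inequality by the initialization while-loop. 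That's it: $0\le\Delta|_{V^i}(S)\le\delta_H(S)$ for every $S\in\mathcal S$ and every $i$, with no need to analyze how $D^i$ interacts with other sets in $\mathcal S$. Your ``main obstacle'' paragraph is worrying about a nonexistent difficulty. Also, your ``symmetric bound on the complement'' for the lower bound is both unnecessary and slightly off (the complement contains $t$ and is not in $\mathcal S$, so your one-sided invariant does not apply to it); the correct lower bound is simply $\Delta|_{V^i}(S)\ge 0$.

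The second half (expanding $\langle g^i,w^i\rangle$ via swapping summation to obtain $\langle\phi^i,\Delta|_{V^i}-Bf^i\rangle$ and then invoking the guarantee of Step~\ref{Di}) is exactly the paper's argument.
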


\begin{proof}
To show $\|g^{i}\|_{\infty}\le2$, we have 
\[
|g_{S,\pm}^{i}|=\left|\frac{\Delta|_{V^{i}}(S)-f^{i}(S)}{\delta_{H}(S)}\right|\le\left|\frac{\Delta|_{V^{i}}(S)}{\delta_{H}(S)}\right|+\left|\frac{f^{i}(S)}{\delta_{H}(S)}\right|\le1+1,
\]
To see why the last inequality holds, we have (1) $\Delta|_{V^{0}}(S)\le\delta_{H}(S)$
for all $S\in{\cal S}$ by the initialization of $V^{0}$, (2) $\Delta|_{V^{i}}(S)\ge0$
for all $i$ because $t\notin S$, and (3) $\Delta|_{V^{i}}(S)$ may
only decrease as $V^{i}$ is a decremental set. Also, we have $|f^{i}(S)|\le$$\delta_{H}(S)$
because each $f^{i}$ respects the capacity.

To show $\langle g^{i},w^{i}\rangle\le0$, first observe that $\langle g^{i},w^{i}\rangle=\langle\phi^{i},\Delta|_{V^{i}}\rangle-\langle\phi^{i},Bf^{i}\rangle$
exactly.
\begin{align*}
\langle g^{i},w^{i}\rangle & =\sum_{S\in\mathcal{S}}(g_{S,+}^{i}w_{S,+}^{i}+g_{S,-}^{i}w_{S,-}^{i})\\
 & =\sum_{S\in\mathcal{S}}(w_{S,+}^{i}-w_{S,-}^{i})r_{S}^{i}\\
 & =\sum_{S\in\mathcal{S}}\frac{w_{S,+}^{i}-w_{S,-}^{i}}{\delta_{H}(S)}\left(\Delta|_{V^{i}}(S)-f^{i}(S)\right)\\
 & =\sum_{S\in\mathcal{S}}\frac{w_{S,+}^{i}-w_{S,-}^{i}}{\delta_{H}(S)}\sum_{v\in S}\left(\Delta|_{V^{i}}(v)-(Bf^{i})_{v}\right)\\
 & =\sum_{v\in V(H)}\left(\Delta|_{V^{i}}(v)-(Bf^{i})_{v}\right)\sum_{S\ni v}\frac{w_{S,+}^{i}-w_{S,-}^{i}}{\delta_{H}(S)}\\
 & =\sum_{v\in V(H)}\left(\Delta|_{V^{i}}(v)-(Bf^{i})_{v}\right)\phi_{v}^{i}\\
 & =\langle\phi^{i},\Delta|_{V^{i}}\rangle-\langle\phi^{i},Bf^{i}\rangle.
\end{align*}
Since the deletion set $D^{i}$ from Step \ref{Di} is designed to
guarantee that $\langle\phi^{i},\Delta|_{V^{i}}\rangle\le\langle\phi^{i},Bf^{i}\rangle$,
we have that $\langle g^{i},w^{i}\rangle\le0$.
\end{proof}
From the above, we have verified that our algorithm is indeed captured
by the MWU algorithm. Now, we derive the implication of this fact.
Only for analysis, we define the average flow $\bar{f}=\frac{1}{T}\sum_{i=1}^{T}f^{i}\in\mathbb{R}^{E(H)}$
on $H$ and the average $U$-boundary demand function $\overline{\Delta}=\frac{1}{T}\sum_{i=1}^{T}\Delta|_{V^{i}}\in\mathbb{R}^{V(H)}$
on $H$. 
\begin{lem}
\label{lem:fbar sat Deltabar}We have $\bar{f}$ $\epsilon$-satisfies
$\overline{\Delta}$ in $H$.
\end{lem}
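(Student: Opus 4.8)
The plan is to deduce the claim directly from the MWU guarantee (\Cref{thm:mwu}) combined with the congestion approximator property (\Cref{thm:congest}). Recall that after $T$ rounds the MWU theorem with $\mathsf{val}=0$ gives $\frac{1}{T}\sum_{i\in[T]} g^i_j \le \alpha$ for every index $j\in J=\mathcal{S}\times\{+,-\}$. Unpacking the definition $g^i_{S,\pm}=\pm r^i_S = \pm\frac{\Delta|_{V^i}(S)-f^i(S)}{\delta_H(S)}$, averaging over $i$ and using linearity, the two signs together yield $\left|\frac{\overline{\Delta}(S)-\bar f(S)}{\delta_H(S)}\right| = \left|\frac{1}{T}\sum_i r^i_S\right| \le \alpha$ for every $S\in\mathcal{S}$. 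In other words, writing $\overline{\Delta}^{\bar f}$ for the excess of $\bar f$ with respect to $\overline{\Delta}$, we have $|\overline{\Delta}^{\bar f}(S)| \le \alpha\,\delta_H(S)$ for all $S\in\mathcal{S}$.

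Next I would invoke the congestion approximator: since $\mathcal{S}$ has quality $\gamma_{\mathcal{S}}$ and we have verified $|\overline{\Delta}^{\bar f}(S)/\alpha| \le \delta_H(S)$ for all $S\in\mathcal{S}$, property (3) of \Cref{thm:congest} (applied to the demand function $\overline{\Delta}^{\bar f}/\alpha$) gives $|\overline{\Delta}^{\bar f}(R)/\alpha| \le \gamma_{\mathcal{S}}\,\delta_H(R)$, i.e.\ $|\overline{\Delta}^{\bar f}(R)| \le \alpha\gamma_{\mathcal{S}}\,\delta_H(R)$ for every $R\subseteq V(H)$. Since $\alpha = \epsilon/\gamma_{\mathcal{S}}$ by the choice in the algorithm, this is exactly $|\overline{\Delta}^{\bar f}(R)| \le \epsilon\,\delta_H(R)$ for all $R\subseteq V(H)$, which by definition means $\bar f$ $\epsilon$-satisfies $\overline{\Delta}$ in $H$. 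One small technical point to check is that $\overline{\Delta}^{\bar f}$ is a genuine demand function (sums to zero over $V(H)$) so that \Cref{thm:congest} applies; this holds because each $\Delta|_{V^i}$ sums to zero by construction and $\bar f(V(H))=0$ for any flow, so the averages do too.

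The only real subtlety — and what I expect to be the main thing to get right rather than a deep obstacle — is bookkeeping the fact that the set $V^i$ (and hence the demand $\Delta|_{V^i}$) changes from round to round, so $\overline{\Delta}$ is an average of different demand functions rather than a fixed one. But since the MWU bound is applied coordinate-wise to the gain vectors $g^i$ (which already encode the round-$i$ demand $\Delta|_{V^i}$), and the congestion-approximator conclusion is linear in the demand, averaging commutes with everything and no difficulty arises: the inequality $|\overline{\Delta}(S)-\bar f(S)|\le\alpha\delta_H(S)$ is just the average of the per-round inequalities guaranteed by MWU. The rest is arithmetic with the relation $\alpha\gamma_{\mathcal{S}}=\epsilon$.
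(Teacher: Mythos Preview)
Your proof is correct and follows essentially the same approach as the paper: apply the MWU guarantee to obtain $\left|\frac{1}{T}\sum_i r^i_S\right|\le\alpha$ for all $S\in\mathcal{S}$, rewrite this as $|\overline{\Delta}^{\bar f}(S)|\le\alpha\,\delta_H(S)$, and then invoke the congestion approximator together with $\alpha\gamma_{\mathcal{S}}=\epsilon$ to extend the bound to all $R\subseteq V(H)$. Your added remarks (verifying that $\overline{\Delta}^{\bar f}$ sums to zero so \Cref{thm:congest} applies, and noting that the changing $V^i$ causes no trouble since averaging is linear) are correct and make the argument slightly more explicit than the paper's version.
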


\begin{proof}
Define $\bar{r}=\frac{1}{T}\sum_{i=1}^{T}r^{i}\in\mathbb{R}^{{\cal S}}$.
First, we prove that $|\bar{r}_{S}|\le\alpha$ for all $S\in{\cal S}$.
This is because 
\[
\pm\bar{r}_{S}=\frac{1}{T}\sum_{i\in[T]}\pm r_{S}^{i}=\frac{1}{T}\sum_{i\in[T]}g_{S,\pm}^{i}\le\alpha
\]
where the last inequality is precisely the guarantee of the MWU algorithm
from \Cref{thm:mwu}. Next observe that the excess is 
\[
\overline{\Delta}^{\bar{f}}(S)=\overline{\Delta}(S)-\bar{f}(S)=\bar{r}_{S}\delta_{H}(S).
\]
Therefore, we have that $|\overline{\Delta}^{\bar{f}}(S)|\le\alpha\delta_{H}(S)$
for all $S\in{\cal S}$. Since ${\cal S}$ is a congestion approximator,
it follows by \Cref{thm:congest} that 
\[
|\overline{\Delta}^{\bar{f}}(S)|\le\gamma_{{\cal S}}\alpha\delta_{H}(S)=\epsilon\delta_{H}(S)
\]
for all $S\subseteq V(H)$. This precisely means that $\bar{f}$ $\epsilon$-satisfies
$\overline{\Delta}$.
\end{proof}
Now, we are ready to prove \Cref{thm:almost fair:flow} of \Cref{thm:almost fair}.
By \Cref{lem:fbar sat Deltabar}, there exists a flow $\fbar_{aug}$
in $H$ with congestion $\epsilon$ such that $\fbar_{sat}:=\bar{f}+\bar{f}_{aug}$
satisfies $\Deltabar$. We define $f'_{sat}$ as the restriction of
$\fbar_{sat}$ into $G\{U'\}$. That is, for each new boundary vertex
$x_{e}\in\boundary{U'}\setminus\boundary U$ where $u$ is its unique neighbor, we set $f'_{sat}(x_{e},u)=\fbar_{sat}(e)$.
For every other edge $e\in E(G\{U'\})$, we set $f'_{sat}(e)=\bar{f}_{sat}(e)$.
Let $\Delta'$ be a $U'$-boundary demand function where, for each
$U'$-boundary vertex $v\in\boundary{U'}$, we set $\Delta'(v)=f'_{sat}(v)$
as the net flow out of $v$ via $f'_{sat}$. 
\begin{lem}
We have
\begin{enumerate}
\item $f'_{sat}$ is a flow in $G\{U'\}$ with congestion at most $(1+\epsilon)$
that satisfies $\Delta'$. 
\item $\Delta'$ is a $U'$-boundary demand function where 
\begin{align*}
\Delta'(v) & =(1-\beta)\deg_{G\{U'\}}(v) & \text{for all old boundary vertices }v\in N\langle U'\rangle\cap N\langle U\rangle\\
|\Delta'(v)| & \le(1+\epsilon)\deg_{G\{U'\}}(v) & \text{for all new boundary vertices }v\in N\langle U'\rangle\setminus N\langle U\rangle
\end{align*}
\end{enumerate}
\end{lem}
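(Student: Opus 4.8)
The plan is to prove both items by unwinding the definitions of $f'_{sat}$ and $\Delta'$ and transporting to $G\{U'\}$ the two facts just established for $\fbar_{sat}=\bar f+\fbar_{aug}$ in $H=G\{U\}$: that $\fbar_{sat}$ has congestion at most $1+\epsilon$, and that it satisfies $\Deltabar=\frac1T\sum_{i=1}^{T}\Delta|_{V^i}$. For the congestion bound I would note that each $f^i$ respects the capacities of $H$, hence so does the average $\bar f$, and adding $\fbar_{aug}$ (which has congestion $\epsilon$) keeps the congestion of $\fbar_{sat}$ at most $1+\epsilon$. Forming $f'_{sat}$ only deletes the edges and vertices of $G\{U\}$ lying strictly inside $P$ and, for each edge $e=(v,w)$ with $v\in U'$ and $w\in P$, re-routes the value $\fbar_{sat}(e)$ onto the new boundary edge at $x_e$, whose capacity is $\deg_{G\{U'\}}(x_e)=c(e)$, i.e.\ exactly the constraint $\fbar_{sat}(e)$ already obeyed; so $f'_{sat}$ has congestion at most $1+\epsilon$ in $G\{U'\}$.

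Next I would check flow conservation. The construction of $G\{U'\}$ from $G\{U\}$ gives, for every $v\in U'$, a bijection between the edges of $G\{U\}$ incident to $v$ and those of $G\{U'\}$ incident to $v$: an interior edge $(v,u)$ with $u\in P$ becomes the new boundary edge $(v,x_{(v,u)})$ carrying the same flow out of $v$, while every other incident edge (interior edges into $U'$ and old boundary edges) is kept verbatim. Hence the net flow out of $v$ is unchanged, so $f'_{sat}(v)=\fbar_{sat}(v)=\Deltabar(v)$ for every $v\in U'$. For $v\in U'\setminus\{t\}$ this equals $0$, because $\Delta=(1-\beta)\Delta_U$ vanishes on interior vertices of $U$ and restriction leaves interior entries untouched, so $\Delta|_{V^i}(v)=0$ for every $i$. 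On the boundary vertices $v\in\boundary{U'}$, the identity $f'_{sat}(v)=\Delta'(v)$ holds by the definition of $\Delta'$; and at $t$, $f'_{sat}(t)=-\sum_{v\neq t}f'_{sat}(v)$ (conservation of $f'_{sat}$) agrees with $\Delta'(t)=-\Delta'(\boundary{U'})$ (definition of a $U'$-boundary demand function) once we know $f'_{sat}$ vanishes on $U'\setminus\{t\}$. This establishes item~1, and $\Delta'$ is a $U'$-boundary demand function by construction.

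It then remains to pin down $\Delta'$ on the two kinds of boundary vertices. For a new boundary vertex $v=x_e\in\boundary{U'}\setminus\boundary U$, we have $\deg_{G\{U'\}}(x_e)=c(e)$ and the single edge at $x_e$ carries $|\fbar_{sat}(e)|\le(1+\epsilon)c(e)$ by the congestion bound, so $|\Delta'(v)|=|f'_{sat}(v)|\le(1+\epsilon)\deg_{G\{U'\}}(v)$ --- this is the second displayed line, and exactly the consequence of congestion noted in the footnote. For an old boundary vertex $v=x_e\in\boundary{U'}\cap\boundary U$, its $U$-side endpoint $z$ lies in $U'=U\setminus P$, hence $z$ was never pruned, hence the demand on $x_e$ was never cancelled; so $\Delta|_{V^i}(x_e)=(1-\beta)\deg_{G\{U\}}(x_e)=(1-\beta)c(e)$ for every round $i$, giving $\Deltabar(x_e)=(1-\beta)c(e)$, and since $\deg_{G\{U'\}}(x_e)=c(e)$ as well we get $\Delta'(v)=f'_{sat}(v)=\fbar_{sat}(v)=\Deltabar(v)=(1-\beta)\deg_{G\{U'\}}(v)$, the first displayed line.

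The step I expect to need genuine care --- rather than mere unwinding --- is the last one: one must verify from the pruning rules (the initialization of $V^0$ together with \Cref{lem:1}) that whenever a boundary vertex of $U$ is removed from some $V^i$ its interior endpoint is removed too, so that any old boundary vertex surviving into $\boundary{U'}$ really did keep its full scaled demand $(1-\beta)c(e)$ across all $T$ rounds and was not partially zeroed out. One must also re-route flow onto the new boundary edges with the orientation under which $x_e$ inherits the flow formerly entering its endpoint from $P$, so that net flow is genuinely preserved at each surviving interior vertex. Everything else is bookkeeping: the substance lies in the congestion bound and in $\fbar_{sat}$ satisfying $\Deltabar$, and this lemma just certifies that restricting to $G\{U'\}$ leaves both intact.
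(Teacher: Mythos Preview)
Your proof is correct and follows essentially the same approach as the paper's, with the same chain $\Delta'(v)=f'_{sat}(v)=\fbar_{sat}(v)=\Deltabar(v)=(1-\beta)\deg_{G\{U'\}}(v)$ for old boundary vertices and the congestion bound for new ones; you are in fact a bit more explicit than the paper about why net flow at interior vertices of $U'$ is preserved under restriction. Regarding your flagged concern: the paper defines the restriction $\Delta|_{U'}$ only for $U'\subseteq U$, zeroing out entries on $\boundary U\setminus\boundary{U'}$, so the demand at a boundary vertex $x_e$ is cancelled exactly when its \emph{interior} endpoint leaves the set---hence $x_e\in\boundary{U'}\cap\boundary U$ (which says precisely that the interior endpoint survives in $U'$) already forces $\Delta|_{V^i}(x_e)=(1-\beta)c(e)$ for every round, and no separate check on whether $x_e$ itself was swept into some $D^i$ is needed.
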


\begin{proof}
(1) As $f'_{sat}$ is a restriction of $\bar{f}_{sat}$ into $G\{U'\}$,
then the congestion of $f'_{sat}$ is at most that of $\fbar_{sat}$
which is $(1+\epsilon)$. To see why $f'_{sat}$ satisfies $\Delta'$,
we have that $\Delta'(v)=f'_{sat}(v)$ for all $U'$-boundary vertex
$v\in\boundary{U'}$ by construction. For non-boundary vertex $v\in U'\setminus t$,
we have $f'_{sat}(v)=\fbar(v)=0=\Delta'(v)$. So $f'_{sat}(v)=\Delta'(v)$
for all $v\neq t$. This implies that $f'_{sat}(t)=\Delta'(t)$ too
and so $f'_{sat}$ satisfies $\Delta'$.

(2) For each new boundary vertex $v\in N\langle U'\rangle\setminus N\langle U\rangle$,
we have $\Delta'(v)=f'_{sat}(v)$ and so $|\Delta'(v)|\le(1+\epsilon)\deg_{G\{U'\}}(v)$
because $f'_{sat}$ has congestion $(1+\epsilon)$ in $G\{U'\}$.
For each old boundary vertex $v\in N\langle U'\rangle\cap N\langle U\rangle$,
we have $\Delta'(v)=f'_{sat}(v)=\bar{f}_{sat}(v)$. As $\bar{f}_{sat}$
satisfies $\Deltabar$, we have $\bar{f}_{sat}(v)=\overline{\Delta}(v)$.
But $\overline{\Delta}(v)=(1-\beta)\deg_{G\{U\}}(v)$ as, for every
$i$, $\Delta|_{V^{i}}(v)=(1-\beta)\deg_{G\{U\}}(v)$ for every $v\notin P$.
Therefore, $\Delta'(v)=(1-\beta)\deg_{G\{U'\}}(v).$
\end{proof}

This proves \Cref{thm:almost fair:flow} of \Cref{thm:almost fair}.
It remains to prove \Cref{thm:almost fair:cut} of \Cref{thm:almost fair}. 
\begin{lem}
\label{lem:pruned cut}$\delta_{G\{U\}}(P)\le\Delta(P)$.
\end{lem}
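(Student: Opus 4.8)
The plan is to express the pruned set $P$ as a disjoint union of the deletion chunks that the algorithm produces, establish a per-chunk version of the inequality, and then glue the chunks together using the subadditivity of the cut function over disjoint sets. Write $H = G\{U\}$ throughout.

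First I would record that $D^0, D^1,\dots,D^T$ are pairwise disjoint and that $P = \bigcup_{i=0}^T D^i$: indeed $D^i\subseteq V^{i-1}$ and $V^i = V^{i-1}\setminus D^i$, so any later chunk $D^{i'}\subseteq V^{i'-1}\subseteq V^i$ is disjoint from $D^i$. Since $\Delta$ is a function on vertices, $\Delta(P) = \sum_i \Delta(D^i)$; and since for disjoint $A,B$ one has $\delta_H(A\cup B) = \delta_H(A)+\delta_H(B) - 2c(E(A,B)) \le \delta_H(A)+\delta_H(B)$, a short induction gives $\delta_H(P)\le \sum_i \delta_H(D^i)$. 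Hence it suffices to prove the per-chunk bound $\delta_H(D^i)\le \Delta(D^i)$ for every $i$.

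For the per-chunk bound, the case $D^i=\emptyset$ is trivial, so assume $D^i\neq\emptyset$. The key observation is the identity $\Delta|_{V^{i-1}}(D^i) = \Delta(D^i)$: passing from $\Delta$ to $\Delta|_{V^{i-1}}$ only zeroes out demand on vertices deleted before round $i$ — all of which lie outside $D^i$ — and reassigns the leftover to $t$, and $t\notin D^i$; since $D^i\subseteq V^{i-1}$, the demand on each vertex of $D^i$ is untouched, so the two sums coincide. For $i\ge 1$, Step~\ref{Di} (via \Cref{lem:1}) guarantees $\Delta|_{V^{i-1}}(D^i) > \delta_H(D^i)$, hence $\delta_H(D^i) < \Delta(D^i)$. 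For $i=0$, I would run the initialization while-loop so that it always deletes a \emph{largest} currently-violating set of $\mathcal S$; because deleting a set only weakly decreases $\Delta|_{\cdot}(S)$ for every $S\in\mathcal S$ (all demands inside sets of $\mathcal S$ are nonnegative, as $t$ is excluded), the sets deleted in this order are pairwise disjoint and each is deleted in full, so $D^0$ is a disjoint union of sets $S$ each satisfying $\Delta(S) = \Delta|_{V^0}(S) > \delta_H(S)$; the same subadditivity argument then yields $\delta_H(D^0)\le\sum_S\delta_H(S) < \sum_S\Delta(S) = \Delta(D^0)$.

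Combining the pieces, $\delta_{G\{U\}}(P) = \delta_H(P) \le \sum_i \delta_H(D^i) \le \sum_i \Delta(D^i) = \Delta(P)$. I expect the only genuinely delicate point to be the bookkeeping around the restriction notation $\Delta|_{V^{i-1}}$ and the $i=0$ case: specifically, making precise that restriction leaves the demand on every not-yet-deleted non-sink vertex unchanged, and that the laminarity of $\mathcal S$ lets the initialization deletions be organized into disjoint, fully-deleted chunks. Everything else is the one-line subadditivity of $\delta_H$ over disjoint unions together with the infeasibility certificates $\Delta|_{V^{i-1}}(D^i) > \delta_H(D^i)$ already supplied by the algorithm.
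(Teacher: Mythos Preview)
Your proposal is correct and follows essentially the same route as the paper: decompose $P$ into the chunks $D^0,\dots,D^T$, use subadditivity $\delta_H(P)\le\sum_i\delta_H(D^i)$, and bound each chunk via the algorithm's infeasibility certificate together with the observation $\Delta|_{V^{i-1}}(D^i)=\Delta(D^i)$ (which the paper leaves implicit in the final equality $\Delta(D^0)+\sum_{i\ge1}\Delta|_{V^{i-1}}(D^i)=\Delta(P)$).

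The one genuine difference is your treatment of $D^0$. The paper handles it by a charging argument that works for \emph{any} deletion order: at each step one removes some $S\in\mathcal S$ with $\delta_H(S)<\Delta|_{V^0}(S)$, and since $\Delta|_{V^0}(S)$ equals $\Delta$ summed over the \emph{newly} deleted vertices $S\cap V^0$, one gets $\delta_H(D^0)\le\sum_j\delta_H(S_j)<\sum_j\Delta(S_j\cap V^0_{j-1})=\Delta(D^0)$ even when the $S_j$ overlap. You instead fix a ``largest-first'' deletion order and use laminarity of $\mathcal S$ to force the deleted sets to be pairwise disjoint, so the same subadditivity argument applies cleanly. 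Both are valid; the paper's argument is order-agnostic, while yours specifies a particular implementation but makes the disjointness explicit and avoids reasoning about partial overlaps.
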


\begin{proof}
First observe that $\delta_{H}(D^{0})\le\Delta(D^{0})$ because every
time we remove a set $S$ from $V^{0}$, we have $\delta_{H}(S)<\Delta|_{V^{0}}(S)$
and we can charge $\delta_{H}(S)$ to the decrease of $\Delta|_{V^{0}}(S)$.
Next, the sets $D^{i}$ for $i\ge1$ satisfy $\delta_{H}(D^{i})\le\Delta|_{V^{i-1}}(D^{i})$,
so 
\[
\delta_{G\{U\}}(P)=\delta_{H}(P)\le\sum_{i\ge0}\delta_{H}(D^{i})\le\Delta(D^{0})+\sum_{i\ge1}\Delta|_{V^{i-1}}(D^{i})=\Delta(P).
\]
\end{proof}
\begin{cor}
$\delta_{G}(U')\le\delta_{G}(U)-\beta\cdot\delta_{G}(P,V\setminus U)$.
\end{cor}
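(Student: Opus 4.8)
The plan is to derive the statement directly from \Cref{lem:pruned cut} (which gives $\delta_{G\{U\}}(P)\le\Delta(P)$) by translating both sides from the auxiliary graph $G\{U\}$ back to $G$; throughout, $P$ denotes the pruned set, identified with its vertices in $V$ where needed, as the statement does. As a first step I would record the reformulation noted in the statement of \Cref{thm:almost fair}: since $U=P\sqcup U'$ and $V\setminus U'=(V\setminus U)\sqcup P$, splitting the edges counted by $\delta_G(U)$ and by $\delta_G(U')$ according to which of these pieces they touch gives the identity $\delta_G(U')=\delta_G(U)-\delta_G(P,V\setminus U)+\delta_G(P,U')$. Hence it suffices to show $\delta_G(P,U')\le(1-\beta)\,\delta_G(P,V\setminus U)$.

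Next I would expand the right-hand side of \Cref{lem:pruned cut}. By construction $\Delta=(1-\beta)\Delta_U$ for the full $U$-boundary demand function, which puts value $\deg_{G\{U\}}(x_e)=c(e)$ on each boundary vertex $x_e$ and $0$ on $U\setminus\{t\}$; therefore $\Delta(P)=(1-\beta)\sum c(e)$, the sum over edges $e\in E_G(U,V\setminus U)$ whose boundary vertex $x_e$ lies in the pruned set. For the left-hand side I would decompose $\delta_{G\{U\}}(P)$: a crossing edge of $G\{U\}$ is either an internal edge of $G[U]$ from a pruned vertex to $U'$ (these total $\delta_G(P,U')$), or an edge between a boundary vertex $x_e$ and its $U$-side neighbor $u$ with exactly one endpoint pruned. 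Classify the edges $e\in E_G(U,V\setminus U)$ by whether the $U$-endpoint $u$ is pruned and whether $x_e$ is pruned, and let $a$, $b$, $c$ be the total capacities of the three classes where, respectively, $u$ is pruned and $x_e$ is not, both are pruned, and $u$ is not pruned and $x_e$ is. Then $\delta_{G\{U\}}(P)=\delta_G(P,U')+a+c$, while $\Delta_U(P)=b+c$ and $\delta_G(P,V\setminus U)=a+b$.

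Plugging these into \Cref{lem:pruned cut} gives $\delta_G(P,U')+a+c\le(1-\beta)(b+c)$, so $\delta_G(P,U')\le(1-\beta)b-\beta c-a\le(1-\beta)b\le(1-\beta)(a+b)=(1-\beta)\,\delta_G(P,V\setminus U)$, using $a,c\ge0$ and $\beta\ge0$. Substituting into the identity from the first step yields $\delta_G(U')\le\delta_G(U)-\beta\,\delta_G(P,V\setminus U)$, as claimed.

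I expect the one place needing care to be the middle step: the pruned set can contain boundary vertices of $G\{U\}$, which are not vertices of $G$, and a pruned boundary vertex whose $U$-side neighbor survives in $U'$ still contributes to $\delta_{G\{U\}}(P)$ (this is the capacity $c$), as does a surviving boundary vertex whose neighbor is pruned (this is $a$). Keeping the correspondence between these boundary vertices and the crossing edges of $G$ straight is what makes the final inequality chain close — and it is lossy enough that we may simply discard the slack $\beta c+a$.
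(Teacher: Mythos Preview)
Your proof is correct and follows the same route as the paper's: translate the inequality $\delta_{G\{U\}}(P)\le\Delta(P)$ of \Cref{lem:pruned cut} back to $G$, then rearrange into $\delta_G(U')\le\delta_G(U)-\beta\,\delta_G(P,V\setminus U)$.

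The one genuine difference is that you are more careful than the paper about boundary vertices. The paper's proof simply asserts $\delta_{G\{U\}}(P)=\delta_G(P,U')$ and $\Delta(P)=(1-\beta)\,\delta_G(V\setminus U,P)$ and then adds $\delta_G(V\setminus U,U')$ to both sides. Those two identities hold exactly only when each boundary vertex $x_e$ is pruned iff its $U$-side neighbor $u$ is pruned (in your notation, $a=c=0$); the algorithm as written gives no such guarantee, since the deletion sets $D^i$ are level sets of a potential on $V(G\{U\})$ and can separate $x_e$ from $u$. Your decomposition into $a,b,c$ handles the general case and shows the discrepancy only produces slack: from $\delta_G(P,U')+a+c\le(1-\beta)(b+c)$ you drop the nonnegative term $\beta c+a$ and then add $(1-\beta)a$ to reach $(1-\beta)(a+b)=(1-\beta)\,\delta_G(P,V\setminus U)$. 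So your argument is a strict refinement of the paper's: same skeleton, with the boundary-vertex bookkeeping made explicit and the resulting slack absorbed.
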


\begin{proof}
We have $\delta_{G\{U\}}(P)=\delta_{G}(P,U')$ and $\Delta(P)=(1-\beta)\delta_{G}(V\setminus U,P)$.
By adding $\delta_{G}(V\setminus U,U')$ into both sides of the inequality
of \Cref{lem:pruned cut}, we have 
\begin{align*}
\delta_{G}(V\setminus U,U')+\delta_{G}(P,U') & \le\delta_{G}(V\setminus U,U')+\delta_{G}(V\setminus U,P)-\beta\delta_{G}(V\setminus U,P)
\end{align*}
which concludes the proof because $\delta_{G}(U')=\delta_{G}(V\setminus U,U')+\delta_{G}(P,U')$
and $\delta_{G}(V\setminus U,U')+\delta_{G}(V\setminus U,P)=\delta_{G}(V\setminus U)=\delta_{G}(U).$
\end{proof}
This proves the correctness of \Cref{thm:almost fair}.

\subsection{Running Time}
\label{sec:time almostfair}
Here, we explain some implementation details and analyze the total
running time. Computing the congestion approximator ${\cal S}$ takes
$\Otil(m)$ by \Cref{thm:congest}. The step which ensures that no
set in ${\cal S}$ contains $t$ is at most $O(n\log n)$ time because
$t$ was contained in at most $O(\log n)$ sets $S$ and the complement
of $S$ has size at most $n$. 

Next, we explain how to implement the initialization of $V^{0}$ efficiently.
Observe that, for any $S\in{\cal S}$, if $\Delta|_{V^{0}}(S)>\delta_{H}(S)$,
then we set $V^{0}\gets V^{0}\setminus S$ and then we have  $\Delta|_{V^{0}}(S)=0$.
Otherwise, if $\Delta|_{V^{0}}(S)\le\delta_{H}(S)$,
then it remains so forever because $\Delta|_{V^{0}}(S)$ is monotonically
decreasing when $V^{0}$ is a decremental set. In any case, for each $S \in \cal S$, we only
need to compare $\Delta|_{V^{0}}(S)$ with $\delta_{H}(S)$ once,
which takes time at most $O(|S|+|E_{H}(S,V(H))|)$. So the total time
is $O(m\log n)$ because ${\cal S}$ can be partitioned into $O(\log n)$
layers of disjoint subsets by the second property of \Cref{thm:congest}.

In round $i$ of the main algorithm, computing $f^{i}$ takes $O(m)$
time. Using the fact that ${\cal S}$ is a laminar family and ${\cal S}$
contains $O(n)$ sets, we can compute $r_{S}^{i}$ for all $S\in{\cal S}$
in $O(n)$ time, and so we can compute the weights $w_{S,\circ}^{i+1}$
for all $S\in\mathcal{S}$, $\circ\in\{+,-\}$ in $O(n)$. The most
technical step is Step \ref{Di} whose implementation details is shown
at the end of the section. 
\begin{lem}
\label{lem:1} The ``deletion set'' $D^{i}\subseteq V(H)\setminus t$
from Step \ref{Di} can be computed in $O(m+n\log n)$ time. 
\end{lem}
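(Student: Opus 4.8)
The plan is to reformulate Step~\ref{Di} as a sequence of local ``infeasibility peeling'' operations on the current decremental set $V^{i-1}$, guided by the potential vector $\phi^i$. Recall what $D^i$ must achieve: we must produce $D^i \subseteq V(H)\setminus t$ with $V^i = V^{i-1}\setminus D^i$ such that (a) if $D^i\neq\emptyset$ then there is at least one witness set $S\in\mathcal{S}$ with $\Delta|_{V^{i-1}}(D^i) > \delta_H(D^i)$ certifying infeasibility, and (b) $\langle \phi^i, \Delta|_{V^i}\rangle \le \langle \phi^i, Bf^i\rangle$. The natural approach: start with $D^i = \emptyset$, and repeatedly scan $\mathcal{S}$ for a set $S$ with $\Delta|_{V^{i-1}\setminus D^i}(S) > \delta_H(S)$; whenever one is found, move $S$ into $D^i$ (i.e. zero out its boundary demand, as in the $V^0$ initialization). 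The key observations making this both correct and fast are: $\Delta|_{V^{i-1}\setminus D^i}(S)$ is monotonically non-increasing as $D^i$ grows, so once a set $S$ satisfies the feasibility inequality it does so forever, meaning each $S\in\mathcal{S}$ triggers a removal at most once; and since $\mathcal{S}$ is laminar with $O(n)$ sets partitionable into $O(\log n)$ layers of disjoint sets (by \Cref{thm:congest}), one full pass costs $O(n\log n)$ amortized, and with a priority-queue / laminar-tree bottom-up sweep we can find all violated sets in $O(n\log n)$ total.

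Next I would argue property (b). The point is that the sets in $\mathcal{S}$ we remove are precisely the ones whose demand ``overflows'' relative to $\delta_H$, and by construction the potential $\phi^i$ is a nonnegative combination (over $S\ni v$) of $\tfrac{1}{\delta_H(S)}(w_{S,+}^i - w_{S,-}^i)$; the weights are nonnegative, but the sign of $w_{S,+}^i - w_{S,-}^i$ depends on the accumulated excess at $S$. The cleaner way to get (b) is to not rely on a fixed scan order, but to show directly that after the peeling process terminates, $\Delta|_{V^i}$ and $f^i$ satisfy $\langle\phi^i, \Delta|_{V^i} - Bf^i\rangle \le 0$: since $f^i$ pushes every edge from higher $\phi^i$ to lower $\phi^i$ at full capacity, $\langle\phi^i, Bf^i\rangle = \sum_{(u,v)\in E(H)} (\phi^i_u - \phi^i_v) f^i(u,v) \ge 0$ and in fact equals $\sum_{(u,v)} |\phi^i_u - \phi^i_v| c(u,v)$, i.e. $f^i$ is the flow that maximizes $\langle\phi^i, Bf\rangle$ over feasible $f$. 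Then I would invoke \Cref{fact:mfmc}: if $\langle\phi^i,\Delta|_{V^i}\rangle$ exceeded $\langle\phi^i, Bf^i\rangle = \max_{f\text{ feasible}}\langle\phi^i, Bf\rangle$, then $\Delta|_{V^i}$ would be infeasible with respect to the capacities scaled along the level sets of $\phi^i$, and the level-set threshold that witnesses this infeasibility is exactly a cut $S$ with $\Delta|_{V^i}(S) > \delta_H(S)$ — but such an $S$ need not lie in $\mathcal{S}$, so the peeling must instead be driven by the congestion-approximator sets and we accept the $\gamma_{\mathcal{S}}$ loss elsewhere. So the implementation I would commit to is: peel by $\mathcal{S}$-sets until no $S\in\mathcal{S}$ violates $\Delta|_{V^i}(S) \le \delta_H(S)$, which by the laminar structure suffices to certify $\langle\phi^i, \Delta|_{V^i}\rangle \le \langle\phi^i, Bf^i\rangle$ through the congestion-approximator guarantee applied to the $\phi^i$-weighted demand, after routing $f^i$ optimally along $\phi^i$'s level sets.

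For the running time bookkeeping: computing the violated sets is a bottom-up pass over the laminar forest of $\mathcal{S}$, maintaining at each node the current value $\Delta|_{V^{i-1}\setminus D^i}(S)$ (which only changes when a descendant or the node itself is removed), so the work is $O(m)$ to read the edges incident to the relevant sets plus $O(n\log n)$ for the laminar maintenance, giving $O(m + n\log n)$ total as claimed. The main obstacle I anticipate is establishing property (b) cleanly — specifically, showing that peeling only the $\mathcal{S}$-sets (rather than arbitrary level-set cuts of $\phi^i$) is enough to force $\langle\phi^i,\Delta|_{V^i}\rangle \le \langle\phi^i, Bf^i\rangle$; this requires carefully decomposing $\phi^i$ into its laminar level sets and pairing each level with the corresponding $\mathcal{S}$-set feasibility inequality, and handling the sign bookkeeping of $w_{S,+}^i - w_{S,-}^i$ so that the inequality goes the right way for every vertex simultaneously. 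The other, more routine, obstacle is making the amortization argument precise across rounds — since $D^i$ accumulates into $P = \bigcup_i D^i$ and sets leave $\mathcal{S}$'s ``active'' status permanently, the per-round cost is genuinely $O(m+n\log n)$ and not merely amortized over all $T$ rounds, which I would verify by noting that $f^i$ is recomputed from scratch each round so the $O(m)$ term is unavoidable per round anyway.
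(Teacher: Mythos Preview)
Your approach has a genuine gap, and it is precisely the obstacle you flag yourself: peeling by sets in $\mathcal{S}$ does \emph{not} establish property~(b), i.e.\ $\langle\phi^i,\Delta|_{V^i}\rangle \le \langle\phi^i,Bf^i\rangle$. By the choice of $f^i$ one has $\langle\phi^i,Bf^i\rangle = \sum_{(u,v)\in E(H)} c(u,v)\,|\phi^i_u-\phi^i_v| = \int \delta_H(V_{>x})\,dx$, and similarly $\langle\phi^i,\Delta|_{V^i}\rangle = \int \Delta|_{V^i}(V_{>x})\,dx$, where $V_{>x}$ are the super-level sets of $\phi^i$. These level sets are \emph{not} elements of $\mathcal{S}$ in general. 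Hence ensuring $\Delta|_{V^i}(S)\le\delta_H(S)$ for all $S\in\mathcal{S}$ tells you nothing directly about the level-set integrals; invoking the congestion-approximator guarantee would at best give $\Delta|_{V^i}(V_{>x})\le\gamma_{\mathcal{S}}\,\delta_H(V_{>x})$, yielding $\langle\phi^i,\Delta|_{V^i}\rangle \le \gamma_{\mathcal{S}}\langle\phi^i,Bf^i\rangle$. That is fatally too weak: the MWU step requires $\langle g^i,w^i\rangle\le 0$ exactly (not $\le(\gamma_{\mathcal{S}}-1)\langle\phi^i,Bf^i\rangle$), and the $\gamma_{\mathcal{S}}$ slack has already been spent setting $\alpha=\epsilon/\gamma_{\mathcal{S}}$ elsewhere in the analysis. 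Your suggested fix of ``pairing each level with the corresponding $\mathcal{S}$-set feasibility inequality'' has no content, because there is no such correspondence.

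The paper's construction avoids this entirely by taking $D^i$ to be a level set of $\phi^i$ itself. Concretely: if $\langle\phi^i,\Delta|_{V^{i-1}}\rangle>\langle\phi^i,Bf^i\rangle$, the integral identities above force the existence of a threshold $x^*$ (the largest $x$ with $\int_{-M}^{x}\Delta|_{V^{i-1}}(V_{>y})\,dy=\int_{-M}^{x}\delta_H(V_{>y})\,dy$) such that $D^i:=V_{>x^*}$ satisfies $\Delta|_{V^{i-1}}(D^i)>\delta_H(D^i)$, and truncating $\phi^i$ at $x^*$ yields property~(b) directly. Algorithmically this is just: sort the vertices by $\phi^i_v$ in $O(n\log n)$, sweep through the sorted order to compute $\Delta|_{V^{i-1}}(V_{>x})-\delta_H(V_{>x})$ at every breakpoint in $O(m)$, and locate $x^*$. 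The family $\mathcal{S}$ plays no role whatsoever in computing $D^i$; it is used only to define $\phi^i$ and, separately, to initialize $V^0$.
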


In total, the running time is $\Otil(m)+T\cdot O(m+n\log n)$ time.
Recall that $m=|E(H)|=O(|E(G\{U\})|)$ where $T$ is the number of rounds. So we conclude the running
time analysis: 
\begin{lem}
The total running time of the algorithm for \Cref{thm:almost fair}
is at most $\Otil(|E(G\{U\})|/\epsilon^{2})$.
\end{lem}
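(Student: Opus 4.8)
The plan is to add up the per-phase running times of the algorithm of \Cref{sec:alg almostfair} --- the one-time initialization, the $T$ iterations of the main loop, and the final extraction of the partition $(P,U')$ --- and then substitute the chosen values of $T$, $\alpha$, and $\gamma_{\cal S}$. All of the ingredients have in fact already been argued in the running-time discussion above; what remains is to assemble them.

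First I would account for the initialization. Computing the congestion approximator ${\cal S}$ of $H = G\{U\}$ costs $\Otil(m)$ by \Cref{thm:congest}, and applying \Cref{prop:no t} to make every set of ${\cal S}$ avoid $t$ costs only $O(n\log n)$ since $t$ lies in $O(\log n)$ sets and each replaced set has size at most $n$. Setting up $V^{0}$ costs $O(m\log n)$: each set $S\in{\cal S}$ is compared with $\delta_H(S)$ at most once (once $\Delta|_{V^{0}}(S)\le\delta_H(S)$ this persists by monotonicity of $\Delta|_{V^{0}}(S)$ under deletions, and once it is violated we delete $S$, after which $\Delta|_{V^{0}}(S)=0$), and by the second property of \Cref{thm:congest} one can split ${\cal S}$ into $O(\log n)$ layers of pairwise disjoint sets so the comparisons touch each edge $O(\log n)$ times.

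Next I would bound one round $i$. Building $f^{i}$, which routes each edge from higher to lower potential at full capacity, is $O(m)$. Because ${\cal S}$ is laminar with $O(n)$ sets and each vertex lies in $O(\log n)$ of them, a single pass over ${\cal S}$ computes all the quantities $\Delta|_{V^{i}}(S)$, $f^{i}(S)$, $r_S^i$ in $O(m+n)$ time, and then the multiplicative update of Step \ref{enu:Update weights} is $O(n)$. The only nontrivial step is the deletion step (Step \ref{Di}), which is $O(m+n\log n)$ by \Cref{lem:1}. Hence one round costs $O(m+n\log n)$, and the main loop costs $T\cdot O(m+n\log n)$.

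Finally I would substitute parameters: $T=\Theta(\log n/\alpha^{2})$ with $\alpha=\epsilon/\gamma_{\cal S}$ and $\gamma_{\cal S}=O(\log^{4}n)$, so $T=\Theta(\log^{9}n/\epsilon^{2})=\Otil(1/\epsilon^{2})$. Adding everything, the total time is $\Otil(m)+\Otil(1/\epsilon^{2})\cdot O(m+n\log n)=\Otil(m/\epsilon^{2})$; assuming without loss of generality that $H$ has no isolated vertices (so $n=O(m)$), and recalling that $m=|E(H)|=|E(G\{U\})|$, this equals $\Otil(|E(G\{U\})|/\epsilon^{2})$, as claimed. The argument is entirely routine bookkeeping; the one place where real work is hidden --- and which is isolated into \Cref{lem:1} and proved separately at the end of the section --- is implementing the deletion step in near-linear time, i.e., exploiting the laminarity of ${\cal S}$ to either locate a set $D^{i}$ with $\Delta|_{V^{i-1}}(D^{i})>\delta_H(D^{i})$ or certify $\langle\phi^{i},\Delta|_{V^{i}}\rangle\le\langle\phi^{i},Bf^{i}\rangle$ without paying a quadratic cost.
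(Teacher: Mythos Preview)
Your proposal is correct and follows essentially the same bookkeeping as the paper's own argument: it also sums the $\Otil(m)$ initialization (congestion approximator, removing $t$, setting up $V^0$) with $T\cdot O(m+n\log n)$ for the main loop (invoking \Cref{lem:1} for Step~\ref{Di}), and then substitutes $T=\Theta(\log n/\alpha^2)=\Otil(1/\epsilon^2)$. The only cosmetic difference is that you state the per-round aggregation of $r_S^i$ as $O(m+n)$ rather than $O(n)$, which is immaterial.
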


\subsection{Proof of \Cref{lem:1}}

In this section, we show how to construction $D^{i}\subseteq V(H)\setminus t$
where 
\begin{align}
\text{if }D^{i}\neq\emptyset,\text{ then }\Delta|_{V^{i-1}}(D^{i}) & >\delta_{H}(D^{i})\label{i1}\\
\langle\phi^{i},\Delta|_{V^{i}}\rangle=\langle\phi^{i},\Delta|_{V^{i-1}\setminus D^{i}}\rangle & \le\langle\phi^{i},Bf^{i}\rangle.\label{i2}
\end{align}
If $\langle\phi^{i},\Delta|_{V^{i-1}}\rangle\le\langle\phi^{i},Bf^{i}\rangle$,
then we simply set $D^{i}=\emptyset$, which trivially fulfills both
conditions. For the remainder of the proof, we assume that $\langle\phi^{i},\Delta|_{V^{i-1}}\rangle>\langle\phi^{i},Bf^{i}\rangle$. 

For real number $x$, define $V_{>x}=\{v\in V(H):\phi_{v}^{i}>x\}$.
Fix some large number $M>{\displaystyle \max_{v\in N\{U\}}|\phi_{v}^{i}|}$.
We first prove the chain of relations
\begin{equation}
\int_{x=-M}^{M}\Delta|_{V^{i-1}}(V_{>x})dx=\langle\phi^{i},\Delta|_{V^{i-1}}\rangle>\langle\phi^{i},Bf^{i}\rangle=\int_{x=-M}^{M}\delta_{H}(V_{>x})dx.\label{eq:chain}
\end{equation}
We start with 
\begin{align*}
\int_{x=-M}^{M}\Delta|_{V^{i-1}}(V_{>x})dx & =\int_{x=-M}^{M}\left(\sum_{v\in V(H)}\Delta|_{V^{i-1}}(v)\cdot\one\{\phi_{v}^{i}>x\}\right)dx\\
 & =\sum_{v\in V(H)}\Delta|_{V^{i-1}}(v)\int_{x=-M}^{M}\one\{\phi_{v}^{i}>x\}dx\\
 & =\sum_{v\in V(H)}\Delta|_{V^{i-1}}(v)(\phi_{v}^{i}-(-M)).
\end{align*}
Since $\sum_{v\in V(H)}\Delta|_{V^{i-1}}(v)=0$ by construction, this
is equal to 
\[
\sum_{v\in V(H)}\Delta|_{V^{i-1}}(v)\,\phi_{v}^{i}=\langle\phi^{i},\Delta|_{V^{i-1}}\rangle.
\]
By definition of the flow $f^{i}$, 
\begin{align*}
\langle\phi^{i},Bf^{i}\rangle & =\sum_{(u,v)\in E(H)}c_{H}(u,v)|\phi_{u}^{i}-\phi_{v}^{i}|\\
 & =\sum_{(u,v)\in E(H)}c_{H}(u,v)\int_{x=-M}^{M}\one\{(u,v)\in\partial_{H}(V_{>x})\}dx\\
 & =\int_{x=-M}^{M}\sum_{(u,v)\in E(H)}c_{H}(u,v)\one\{(u,v)\in\partial_{H}(V_{>x})\}dx\\
 & =\int_{x=-M}^{M}\delta_{H}(V_{>x})dx.
\end{align*}
Together with the assumption $\langle\phi^{i},\Delta|_{V^{i-1}}\rangle>\langle\phi^{i},Bf^{i}\rangle$,
we obtain (\ref{eq:chain}).

Let $x^{*}$ be the largest value such that 
\[
\int_{x=-M}^{x^{*}}\Delta|_{V^{i-1}}(V_{>x})dx=\int_{x=-M}^{x^{*}}\delta_{H}(V_{>x})dx,
\]
which must exist since $x^{*}=-M$ works. Next, we claim that we must
have
\begin{equation}
\Delta|_{V^{i-1}}(V_{>x^{*}})>\delta_{H}(V_{>x^{*}}).\label{eq:1}
\end{equation}
Otherwise, for small enough $\epsilon>0$ we would have $\int_{x=-M}^{x^{*}+\epsilon}\Delta|_{V^{i-1}}(V_{>x})dx\le\int_{x=-M}^{x^{*}+\epsilon}\delta_{H}(V_{>x})dx$,
and since $\int_{x=-M}^{M}\Delta|_{V^{i-1}}(V_{>x})dx>\int_{x=-M}^{M}\delta_{H}(V_{>x})dx$,
there is another choice of $x^{*}$ between $x^{*}+\epsilon$ and
$M$ that achieves equality, a contradiction.

We now claim that $t\notin V_{>x^{*}}$. Otherwise, since $\Delta|_{V^{i-1}}(V(H))=0$
and $\Delta|_{V^{i-1}}(t)$ is the only negative entry, we would have
$\Delta|_{V^{i-1}}(V_{>x^{*}})\le0$ which would violate (\ref{eq:1}).
Since $t\notin V_{>x^{*}}$ and $\phi_{t}^{i}=0$, we conclude that
$x^{*}\ge0$.

Let $\overline{\phi}^{i}=\min\{\phi^{i},x^{*}\}$ as $\phi^{i}$ truncated
to a maximum of $x^{*}$. Then, similar to (\ref{eq:chain}), we obtain
\begin{equation}
\langle\overline{\phi}^{i},\Delta|_{V^{i-1}}\rangle=\int_{x=-M}^{x^{*}}\Delta|_{V^{i-1}}(V_{>x})dx=\int_{x=-M}^{x^{*}}\delta_{H}(V_{>x})dx=\langle\overline{\phi}^{i},Bf^{i}\rangle.\label{eq:chain2}
\end{equation}
Define our deletion set as $D^{i}\triangleq V_{>x^{*}}$, so $t\notin D^{i}$
and \Cref{i1} follows from (\ref{eq:1}). We now prove the chain of
relations 
\[
\langle\phi^{i},\Delta|_{V^{i-1}\setminus D^{i}}\rangle=\langle\overline{\phi}^{i},\Delta|_{V^{i-1}\setminus D^{i}}\rangle\le\langle\overline{\phi}^{i},\Delta|_{V^{i-1}}\rangle=\langle\overline{\phi}^{i},Bf^{i}\rangle\le\langle\phi^{i},Bf^{i}\rangle,
\]
which would fulfill \Cref{i2}. For the first relation, if $\phi_{v}^{i}\ne\overline{\phi}_{v}^{i}$
then $v\in D^{i}$, which means that $\Delta|_{V^{i-1}\setminus D^{i}}(v)=0$.
For the second relation, we use $\overline{\phi}_{t}^{i}=\phi_{t}^{i}=0$
to obtain 
\begin{align*}
\langle\overline{\phi}^{i},\Delta|_{V^{i-1}\setminus D^{i}}\rangle=\sum_{v\in V(H)\setminus t}\overline{\phi}^{i}(v)\Delta|_{V^{i-1}\setminus D^{i}}(v)= & \sum_{v\in V(H)\setminus t}\overline{\phi}^{i}(v)\Delta|_{V^{i-1}}(v)-x^{*}\Delta|_{V^{i-1}}(D^{i})\\
 & =\langle\overline{\phi}^{i},\Delta|_{V^{i-1}}\rangle-x^{*}\Delta|_{V^{i-1}}(D^{i})
\end{align*}
which is at most $\langle\overline{\phi}^{i},\Delta|_{V^{i-1}}\rangle$
since $x^{*}\ge0$. The third relation follows from (\ref{eq:chain2}).
For the last relation, we have 
\[
\langle\overline{\phi}^{i},Bf^{i}\rangle=\sum_{(u,v)\in E(H)}c_{H}(u,v)|\overline{\phi}_{u}^{i}-\overline{\phi}_{v}^{i}|\le\sum_{(u,v)\in E(H)}c_{H}(u,v)|\phi_{u}^{i}-\phi_{v}^{i}|=\langle\phi^{i},Bf^{i}\rangle.
\]
This concludes \Cref{i2}.

Finally, we claim the running time $O(m+n\log n)$. The only nontrivial
step in the algorithm is computing $x^{*}$. We first sort the values
$\phi_{v}^{i}$ in $O(n\log n)$ time. Then, by sweeping through the
sorted list, we can compute $\Delta|_{V^{i-1}}(V_{>x})-\delta_{H}(V_{>x})$
for all $x\in\{\phi_{v}^{i}:v\in V(H)\}$ in $O(m)$ time. The function
$\Delta|_{V^{i-1}}(V_{>x})-\delta_{H}(V_{>x})$ is linear between
consecutive values of $\phi_{v}^{i}$, so we can locate the largest
value $x^{*}$ for which the function is $0$.\thatchaphol{Need to check if how to make this parallel.}

\section{From Almost Fair Cuts to Fair Cuts}
\label{sec:from_almost_to_fair}
In this section, we prove \Cref{thm:fair} using the $\almostfair$
subroutine. 

\subsection{Algorithm}
\label{sec:alg fair}
Let $(G,s,t,\alpha)$ be the input and we want to compute a $(1+\alpha)$-fair
$(s,t)$-cut in $G$. Let $c_{\min}=\min_{e}c(e)$ and let $C=c(E)/c_{\min}$
be the ratio between total capacity and the minimum capacity. Recall
that we assume $C=\poly(n)$. We also assume $\alpha\ge\frac{1}{\poly(n)}$,
otherwise we could solve the problem using exact max flow algorithms.

Our algorithm runs in iterations where in iteration $j$ we compute
$(S^{j},T^{j},k^{j},\exx^{j})$ where $(S^{j},T^{j})$ is an $(s,t)$-cut
where $s\in S^{j}$ and $t\in T^{j}$, $k^{j}\in\mathbb{Z}_{\ge0}$,
and $\exx^{j}\in\mathbb{R}_{\ge0}$ represents an upper bound of the
\emph{deficit} which will be explained in the analysis. Define $\beta=\Theta(\alpha/\log n)$
and $\epsilon=\beta/16$. Initially, $(S^{0},T^{0})$ is an arbitrary
$(s,t)$-cut,  $\exx^{0}=\delta_{G}(S^{0},T^{0})$, and  $k^0=0$.

While $\exx^{j}>\beta c_{\min}$, do the following starting from $j=0,1,2,\dots$
\begin{enumerate}
\item Compute 
\begin{align*}
(P_{s}^{j},S^{j}\setminus P_{s}^{j}) & =\almostfair(G,S^{j},s,\epsilon,(k^{j}+1)\beta),\text{ and}\\
(P_{t}^{j},S^{j}\setminus P_{t}^{j}) & =\almostfair(G,T^{j},t,\epsilon,(k^{j}+1)\beta)
\end{align*}
by calling \Cref{thm:almost fair}.
\item \label{TWO-SIDED:item:Case 1 new flow} If $\max\{\delta_{G}(P_{s}^{j},T^{j}),\delta_{G}(P_{t}^{j},S^{j})\}\le\exx^{j}/40$,
then we update 
\begin{align*}
k^{j+1} & =k^{j}+1,\text{ and}\\
\exx^{j+1} & =\exx^{j}/2.
\end{align*}
Then, we set $T^{j+1}=T^{j}\setminus P_{t}^{j}$ and $S^{j+1}=V\setminus T^{j+1}$.\footnote{We could also symmetrically set $S^{j+1}=S^{j}\setminus P_{s}^{j}$
and $T^{j+1}=V\setminus S^{j+1}$. This choice is arbitrary.}
\item \label{TWO-SIDED:item:Case 2 old flow} Else, if $\max\{\delta_{G}(P_{s}^{j},T^{j}),\delta_{G}(P_{t}^{j},S^{j})\}>\exx^{j}/40$,
then we update 
\begin{align*}
k^{j+1} & =k^{j},\text{ and}\\
\exx^{j+1} & =(1-\beta/80)\exx^{j}
\end{align*}
 If $\delta_{G}(P_{s}^{j},T^{j})>\exx^{j}/40$, then, we set $S^{j+1}=S^{j}\setminus P_{s}^{j}$
(and $T^{j+1}=V\setminus S^{j+1}$). Otherwise, we set $T^{j+1}=T^{j}\setminus P_{t}^{j}$
(and $S^{j+1}=V\setminus T^{j+1}$). 
\end{enumerate}
After the while loop, we return $(S^{j},T^{j})$ as a $(1+\alpha)$-fair
$(s,t)$-cut. As $\exx^{0}\le c(E)$ we have that $\exx^{j}\le(1-\beta/80)^{j}c(E)$
for all $j$. So there are at most $O(\log(C/\beta)/\beta)$ iterations
before $\exx^{j}<\beta\cdot c_{\min}$. Therefore, the algorithm takes
$O(\log(C/\beta)/\beta)\times\Otil(m/\epsilon^{2})=\Otil(m/\alpha^{3})$
total time by \Cref{thm:almost fair}. It remains to show the correctness
of the algorithm.

\subsection{Analysis}

For convenience, whenever we refer to an edge $(a,b)\in E(A,B)$,
we mean $a\in A$ and $b\in B$. Only for the analysis, we construct
a feasible flow $f^{j}$ in $G$ on each iteration $j$, and ensure
that $f^{j}$ satisfies the following two properties: 
\begin{enumerate}
\item Define the \emph{deficit} of flow $f^{j}$ as $\ex^{j}(f^{j})=\sum_{(u,v)\in E(S^{j},T^{j})}\max\{0,(1-k^{j}\beta)c(u,v)-f^{j}(u,v)\}$.
We maintain an invariant that $\ex^{j}(f^{j})\le\exx^{j}$.\label{TWO-SIDED:eq:1} 
\item For all $R\subseteq V\setminus\{s,t\}$, we require that $|f^{j}(R)|\le\epsilon\delta_{G}(R)$.
Equivalently, $f^{j}$ $\epsilon$-satisfies an $(s,t)$-demand function
in $G$.\label{TWO-SIDED:eq:2} 
\end{enumerate}
In words, each cut edge $(u,v)\in E(S^{j},T^{j})$ contributes to
the deficit of flow $f^{j}$ when the flow in $f^{j}$ \emph{from
$u$ to $v$} is less than $(1-k^{j}\beta)$-fraction of its capacity.
With our definition of deficit in Property \ref{TWO-SIDED:eq:1},
we have that the cut is fair whenever the deficit is very small:
\begin{prop}
If $\exx^{j}<\beta c_{\min}$, then $(S^{j},T^{j})$ is a $(1+\alpha)$-fair
$(s,t)$-cut.
\end{prop}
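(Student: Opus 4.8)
The plan is to convert the analysis flow $f^{j}$ into a genuine feasible $(s,t)$-flow $f$ that saturates every cut edge of $(S^{j},T^{j})$ to at least a $\frac{1}{1+\alpha}$ fraction of its capacity, which is exactly the witness \Cref{def:fair cut intro} asks for. The argument has three steps.

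\emph{Step 1: from the global deficit bound to a per-edge guarantee.} Every summand in the definition of $\ex^{j}(f^{j})$ is nonnegative, so the hypothesis $\ex^{j}(f^{j})\le\exx^{j}<\beta c_{\min}$ forces, for \emph{each} cut edge $(u,v)\in E(S^{j},T^{j})$ with $u\in S^{j}$ and $v\in T^{j}$, the bound $\max\{0,(1-k^{j}\beta)c(u,v)-f^{j}(u,v)\}<\beta c_{\min}\le\beta c(u,v)$, and therefore $f^{j}(u,v)>(1-(k^{j}+1)\beta)\,c(u,v)$. This is the only place the particular stopping threshold "$\exx^{j}<\beta c_{\min}$'' is used: a bound on the \emph{summed} deficit becomes a bound on \emph{each} edge precisely because $c(u,v)\ge c_{\min}$.

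\emph{Step 2: controlling $k^{j}$.} Since $\exx$ is halved exactly in the iterations that increment $k$ and only shrinks otherwise, a trivial induction gives $\exx^{j}\le 2^{-k^{j}}\exx^{0}\le 2^{-k^{j}}\,C\,c_{\min}$. Combining this with the fact that the while-loop was still running in iteration $j-1$ (so $\exx^{j-1}>\beta c_{\min}$), or else $j=0$ and $k^{j}=0$, we obtain $2^{k^{j}}=O(C/\beta)$, i.e.\ $k^{j}=O(\log(C/\beta))=O(\log n)$ using $C=\poly(n)$ and $\beta=\Theta(\alpha/\log n)\ge 1/\poly(n)$. With the hidden constant in $\beta=\Theta(\alpha/\log n)$ chosen small enough this yields $(k^{j}+1)\beta\le\alpha/4$, and $\epsilon=\beta/16$ is smaller still; in particular $(k^{j}+1)\beta+\epsilon$ is small enough that $1-(k^{j}+1)\beta-\epsilon\ge\frac{1+\epsilon}{1+\alpha}$ for every $\alpha\in(0,1]$ (a one-line calculation: the target is equivalent to $\alpha-\epsilon\ge(1+\alpha)\big((k^{j}+1)\beta+\epsilon\big)$, which holds since the right side is at most $2(\alpha/4+\epsilon)$ and $\epsilon\le\alpha/16$).

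\emph{Step 3: repairing $f^{j}$ into a feasible $(s,t)$-flow.} By the invariant that $f^{j}$ $\epsilon$-satisfies an $(s,t)$-demand function $\Delta$, \Cref{fact:mfmc} gives a flow $f_{aug}$ of congestion $\epsilon$ with $f^{j}+f_{aug}$ satisfying $\Delta$; as $\Delta$ is an $(s,t)$-demand function, $f^{j}+f_{aug}$ is an $(s,t)$-flow, and its congestion is at most $1+\epsilon$ because $f^{j}$ is feasible. Put $f=\frac{1}{1+\epsilon}\big(f^{j}+f_{aug}\big)$, a feasible $(s,t)$-flow. On any cut edge $|f_{aug}(u,v)|\le\epsilon c(u,v)$, so combining with Step~1,
\[
f(u,v)\ \ge\ \frac{f^{j}(u,v)-\epsilon c(u,v)}{1+\epsilon}\ >\ \frac{\big(1-(k^{j}+1)\beta-\epsilon\big)\,c(u,v)}{1+\epsilon}\ \ge\ \frac{c(u,v)}{1+\alpha},
\]
the last inequality being Step~2. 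Hence $f$ certifies that $(S^{j},T^{j})$ is a $(1+\alpha)$-fair $(s,t)$-cut.

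The main obstacle is the bookkeeping in Step~2: one must check that the cumulative loss $(k^{j}+1)\beta+\epsilon$ in the guaranteed saturation fraction never eats past the slack $1-\frac{1+\epsilon}{1+\alpha}\approx\frac{\alpha}{1+\alpha}$. This works only because $k^{j}$ counts exactly the $\exx$-halving iterations and is therefore $O(\log n)$, matched against the calibration $\beta=\Theta(\alpha/\log n)$, $\epsilon=\beta/16$ fixed by the algorithm. Steps~1 and~3 are routine given \Cref{fact:mfmc} and the two maintained invariants.
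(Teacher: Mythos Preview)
Your proof is correct and follows essentially the same three-step structure as the paper's: a per-edge deficit bound from the global one, the bound $k^{j}=O(\log n)$ obtained by counting $\exx$-halvings, and the repair of $f^{j}$ into an $(s,t)$-flow via the augmenting flow from \Cref{fact:mfmc}. You are in fact slightly more careful than the paper in explicitly scaling $f^{j}+f_{aug}$ down by $1+\epsilon$ to restore feasibility, whereas the paper leaves this absorbed into its ``constant small enough'' slack.
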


\begin{proof}
First we claim that $k^{j}=O(\log n)$. This is because everytime
$k^{j}$ increments, $\exx$ is halved. So at the end of the algorithm,
we have $\frac{\beta c_{\min}}{2}<\exx^{j}<c(E)/ 2^{k^{j}}$, which
implies $k^{j}=O(\log(C/\beta))=O(\log n)$. Now, by the assumption
and Property \ref{TWO-SIDED:eq:1}, for all $(u,v)\in E(S^{j},T^{j})$,
we have $(1-k^{j}\beta)c(u,v)-f^{j}(u,v)<\beta\cdot c_{\min}$ and
so 
\[
f^{j}(u,v) > (1-(k^{j}+1)\beta)c(u,v)\ge\frac{1}{(1+\alpha/2)}c(u,v)
\]
where the last inequality is because $k^{j}=O(\log n)$ and we can
set the constant in $\beta=\Theta(\alpha/\log n)$ to be small enough.
Since $f^j$  $\epsilon$-satisfies an $(s,t)$-demand function, by the observation below \Cref{fact:mfmc}, there exists $f_{aug}$ with congestion $\epsilon$ such that $f^* = f^j + f_{aug}$ is an $(s,t)$-flow.
Now, we have that for all $(u,v)\in E(S^{j},T^{j})$,
\[
f^*(u,v) \ge f^{j}(u,v) -\epsilon c(u,v) \ge \frac{1}{(1+\alpha)} c(u,v)
\] because $\epsilon = \beta/16 = \Theta(\alpha/\log n)$ and the constant in it is small enough.
Therefore, $f^*$ certifies that $(S^{j},T^{j})$ is a $(1+\alpha)$-fair
$(s,t)$-cut.
\end{proof}
Initially, we set $f^{0}$ as the zero flow, which satisfies both
properties since $\exx^{0}=\delta_{G}(S^{0},T^{0})$. Property \ref{TWO-SIDED:eq:2}
will help us show the following inductive step, which would conclude
the correctness of \Cref{thm:fair}.
\begin{lem}
\label{TWO-SIDED:thm:excess decreases} Suppose there exists a feasible flow
$f^{j}$ satisfying Properties~\ref{TWO-SIDED:eq:1} and~\ref{TWO-SIDED:eq:2}
for $j$. Then, we can construct a feasible flow $f^{j+1}$ satisfying Properties~\ref{TWO-SIDED:eq:1}
and~\ref{TWO-SIDED:eq:2} for $j+1$. 
\end{lem}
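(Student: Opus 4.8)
The plan is to follow the case split of the algorithm, constructing $f^{j+1}$ in each iteration by \emph{gluing} the flow $f^j$ to the flow(s) returned by the two calls to $\almostfair$ in iteration $j$. The basic template, say when we prune on the $t$-side so that $T^{j+1}=T^j\setminus P_t^j$: use (the congestion-$\le 1$ average flow underlying) the output of $\almostfair(G,T^j,t,\epsilon,(k^j+1)\beta)$ inside $T^{j+1}$ and on its boundary edges, keep $f^j$ on the part of $G$ containing $s$, and reconcile the values on the shared boundary edges. By property~\ref{thm:almost fair:flow} of \Cref{thm:almost fair} this $\almostfair$ flow routes a $(1-(k^j+1)\beta)$-fraction of the capacity of every \emph{kept} boundary edge of $T^{j+1}$ toward $t$, while by property~\ref{thm:almost fair:cut} the only other edges it touches, $E(P_t^j,T^{j+1})$, have total capacity at most $(1-(k^j+1)\beta)\,\delta_G(P_t^j,S^j)$. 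Property~\ref{TWO-SIDED:eq:2} for $f^{j+1}$ should follow by a triangle inequality over the glued pieces: each piece $\epsilon$-satisfies its own $(s,t)$- or boundary demand, the pieces agree up to the $\epsilon$-slack on the shared boundary, and the flow $f^j$ used across the now-internal edges $E(S^j,P_t^j)$ is re-absorbed inside $P_t^j$ with only $O(\epsilon)$-scale mismatch.

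The crux is Property~\ref{TWO-SIDED:eq:1}, i.e.\ $\ex^{j+1}(f^{j+1})\le\exx^{j+1}$. In Case~\ref{TWO-SIDED:item:Case 1 new flow}, both pruned sets are tiny and $k^{j+1}=k^j+1$, so the target saturation $(1-k^{j+1}\beta)$ is exactly what the $\almostfair$ flow achieves on the kept boundary edges; these contribute nothing to the deficit, and the only contribution is from the new boundary edges $E(P_t^j,T^{j+1})$, of total capacity $\le(1-k^{j+1}\beta)\delta_G(P_t^j,S^j)\le\delta_G(P_t^j,S^j)\le\exx^j/40$, so $\ex^{j+1}(f^{j+1})\le\exx^j/40<\exx^j/2=\exx^{j+1}$. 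In Case~\ref{TWO-SIDED:item:Case 2 old flow}, say $\delta_G(P_s^j,T^j)>\exx^j/40$, so $P_s^j$ moves to the $t$-side and $k$ is unchanged; here I would keep $f^j$ on the kept boundary edges (whose target is still $(1-k^j\beta)$, met up to $f^j$'s own per-edge deficit) and arrange the glued flow so that the new boundary edges $E(S^j\setminus P_s^j,P_s^j)$ are saturated at the target. The deficit accounting then rests on a lower bound for $f^j$'s deficit on the \emph{removed} edges $E(P_s^j,T^j)$: the flow $f^j$ pushes from $P_s^j$ into $T^j$ is at most $\delta_G(P_s^j,S^j\setminus P_s^j)+\epsilon\delta_G(P_s^j)\le(1-(k^j+1)\beta)\delta_G(P_s^j,T^j)+2\epsilon\delta_G(P_s^j,T^j)$ (using property~\ref{thm:almost fair:cut} and Property~\ref{TWO-SIDED:eq:2}), which is a $(\beta-2\epsilon)$-fraction short of saturating $E(P_s^j,T^j)$ at the target, so $f^j$'s deficit there is at least $(\beta-2\epsilon)\delta_G(P_s^j,T^j)>(\beta-2\epsilon)\exx^j/40$. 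Since $\epsilon=\beta/16$, this strictly exceeds $(\beta/80)\exx^j=\exx^j-\exx^{j+1}$, so deleting those edges from the cut more than offsets the zero deficit we maintain on the target-saturated new boundary edges, giving $\ex^{j+1}(f^{j+1})\le\ex^j(f^j)-(\beta-2\epsilon)\exx^j/40<\exx^{j+1}$.

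The main obstacle I expect is the tension between feasibility and the deficit in the gluing step. The $\almostfair$ flow has congestion $1+\epsilon$, whereas $f^{j+1}$ must be feasible, and naively rescaling it by $1/(1+\epsilon)$ introduces an $\epsilon c(e)$ deficit on \emph{every} kept boundary edge, which summed over the cut can dwarf $\exx^{j+1}$ (the cut value need not be $O(\exx^j/\epsilon)$). So the glued flow must keep the kept edges at (at least) their old $f^j$-values and localize any improvement — especially the full target-saturation of the new boundary edges in Case~\ref{TWO-SIDED:item:Case 2 old flow}, which requires pushing fresh flow from $s$ through $P_s^j$ and across the spare capacity of $E(P_s^j,T^j)$ (possibly after partially re-routing $f^j$ there) — while the congestion-$(1+\epsilon)$ component of the $\almostfair$ flow must be absorbed into the $\epsilon$-satisfaction guarantee rather than into $f^{j+1}$ itself. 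Verifying that this re-routing stays feasible and respects the per-set bounds of Property~\ref{TWO-SIDED:eq:2} is the delicate part; the rest is bookkeeping with the two inequalities of \Cref{thm:almost fair}.
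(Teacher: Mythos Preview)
Your proposal has a genuine gap: the ``gluing'' template is precisely what the paper avoids, and the obstacle you flag in your last paragraph is real and unresolved in your construction. The paper's resolution is structural rather than technical: the two cases use \emph{completely disjoint} flow constructions, and neither involves gluing $f^j$ to an $\almostfair$ flow.

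In Case~\ref{TWO-SIDED:item:Case 1 new flow}, the paper discards $f^j$ entirely. It concatenates the \emph{two} $\almostfair$ flows $f_s$ (on the $S^j$-side, reversed) and $f_t$ (on the $T^j$-side) into a single flow $\hat f$ on $G$; these agree by design on every old boundary edge of $E(S'^j,T'^j)$ at exactly the target value $(1-k^{j+1}\beta)c(e)$, so there is no boundary mismatch to reconcile. One then removes all flow paths touching $P_s^j\cup P_t^j$ (to make it an $(s,t)$-flow) and truncates the remaining congestion-$(1+\epsilon)$ flow to feasibility. The deficit analysis shows all three modifications cost $O(1)\cdot\delta(P_s^j\cup P_t^j,S'^j\cup T'^j)$, which by the case assumption and Property~\ref{thm:almost fair:cut} is at most $\exx^j/2$. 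Your template of keeping $f^j$ on the $s$-side cannot achieve the zero deficit on kept edges that your Case~1 accounting assumes: $f^j$ need not push $(1-k^{j+1}\beta)c(e)$ across each old boundary edge, and forcing that value there creates net flow at interior vertices of $S^j$ that is not bounded by $\epsilon\delta_G(\cdot)$ in general.

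In Case~\ref{TWO-SIDED:item:Case 2 old flow}, the paper discards the $\almostfair$ flows entirely and sets $f^{j+1}=f^j$. Property~\ref{TWO-SIDED:eq:2} is then immediate, and Property~\ref{TWO-SIDED:eq:1} follows from exactly the estimate you derived: the removed edges $E(P_s^j,T^j)$ carried deficit at least $(\beta-2\epsilon)\delta(P_s^j,T^j)$, while the added edges $E(S^{j+1},P_s^j)$ contribute at most $\delta(S^{j+1},P_s^j)-f^j(S^{j+1},P_s^j)$; combining via $f^j(P_s^j,T^j)-f^j(S^{j+1},P_s^j)=f^j(P_s^j)$ and $|f^j(P_s^j)|\le\epsilon\delta_G(P_s^j)$ gives the $(1-\beta/80)$ drop. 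There is no need to ``arrange the glued flow so that the new boundary edges are saturated at the target''; that is exactly the delicate re-routing you could not verify, and it is unnecessary.

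In short, the missing idea is that the case split also dictates the flow source: new flow from both $\almostfair$ calls when the pruned sets are tiny, old flow $f^j$ unchanged when a pruned set is large. Once you see this, the feasibility/deficit tension you identified disappears.
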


We analyze the two cases based on $\max\{\delta_{G}(P_{s}^{j},T^{j}),\delta_{G}(P_{t}^{j},S^{j})\}$
in the subsections below. 

\subsubsection*{Case 1: $\max\{\delta_{G}(P_{s}^{j},T^{j}),\delta_{G}(P_{t}^{j},S^{j})\}\leq\protect\exx^{j}/40$}

Let $S'^{j}=S^{j}\setminus P_{s}^{j}$. By the guarantees of $\almostfair(G,S^{j},s,\epsilon,(k^{j}+1)\beta)$,
let $\Delta_{s}$ be the $S'^{j}$-boundary demand function satisfied
by a flow $f_{s}$ in $G\{S'^{j}\}$ with congestion $(1+\epsilon)$.
As $k^{j+1}=k^{j}+1$ in this case, by \Cref{thm:almost fair}, we
have $f_{s}(v)=\Delta_{s}(v)=(1-k^{j+1}\beta)\deg_{G\{S^{j}\}}(v)$
for all old boundary vertices $v\in N\langle S^{j}\rangle\cap N\langle S'^{j}\rangle.$
Let $T'^{j},\Delta_{t},f_{t}$ be defined symmetrically. From $f_{s}$
and $f_{t}$, we will construct a new flow $f^{j+1}$ in three steps.

\paragraph{Step 1: Concatenate. Get $\hat{f}$.}

Consider the ``concatenation'' of $f_{s}$ and $f_{t}$, denoted by $f_{st}$, where
we reverse the direction of $f_{s}$ so that the flow is sent out
of $s$. The concatenated flow $f_{st}$ is on the graph $G\{S'^{j}\}\cup G\{T'^{j}\}$
where the two graphs share $N\langle S'^{j}\rangle\cap N\langle T'^{j}\rangle$
as common boundary vertices. Now, we want to define a flow $\hat{f}$
on $G$ that corresponds to $f_{st}$ in a natural way. See \Cref{fig:define flow}. 
\begin{enumerate}
    \item For each edge $e\in E(G[S'^{j}])\cup E(G[T'^{j}])$ in the ``interior'' of $S'^{j}$ or $T'^{j}$, we set $\hat{f}(e)=f_{st}(e)$.
    \item For each common boundary vertex $x_{e}\in N\langle S'^{j}\rangle\cap N\langle T'^{j}\rangle$ where $e=(u,v)\in E(S'^{j},T'^{j})$, we have $f_{st}(u,x_{e})=f_{st}(x_{e},v)=(1-k^{j+1}\beta)c(e)$ and so we set $\hat{f}(e)=(1-k^{j+1}\beta)c(e)$.
    \item  For each new boundary vertex $x_{e}\in(N\langle S'^{j}\rangle\setminus N\langle S^{j}\rangle)\cup(N\langle T'^{j}\rangle\setminus N\langle T^{j}\rangle)$ where $e=(u,v)\in E(S'^{j},P_{s}^{j})\cup E(T'^{j},P_{t}^{j})$, we set $\hat{f}(e)=f_{st}(u,x_{e})$.
    \item For each old boundary vertex $x_{e}\in N\langle S{}^{j}\rangle\cap N\langle T{}^{j}\rangle$ incident to the pruned set $P_{s}^{j}$ or $P_{t}^{j}$ on one side, i.e., $e=(u,v)\in E(S'^{j},P_{t}^{j})\cup E(T'^{j},P_{s}^{j})$, we set $\hat{f}(e)=f_{st}(u,x_{e})$. 
    \item For each old boundary vertex $x_{e}\in N\langle S{}^{j}\rangle\cap N\langle T{}^{j}\rangle$ incident to the pruned set $P_{s}^{j}$ or $P_{t}^{j}$ on both sides, i.e., $e=(u,v)\in E(P_{s}^{j},P_{t}^{j})$, we set $\hat{f}(e)=0$. 
    \item For each edge in the ``interior'' of $P_{s}^{j}$ or $P_{t}^{j}$, we set $\hat{f}(e)=0$. 
\end{enumerate}
By construction, $\hat{f}$ satisfies some demand function $\hat{\Delta}$ where $\hat{\Delta}(v)=0$ for $v\notin\{s,t\}\cup V(P_{s}^{j})\cup V(P_{t}^{j})$.

\begin{figure}[h]
\begin{centering}
\includegraphics[width=0.4\textwidth]{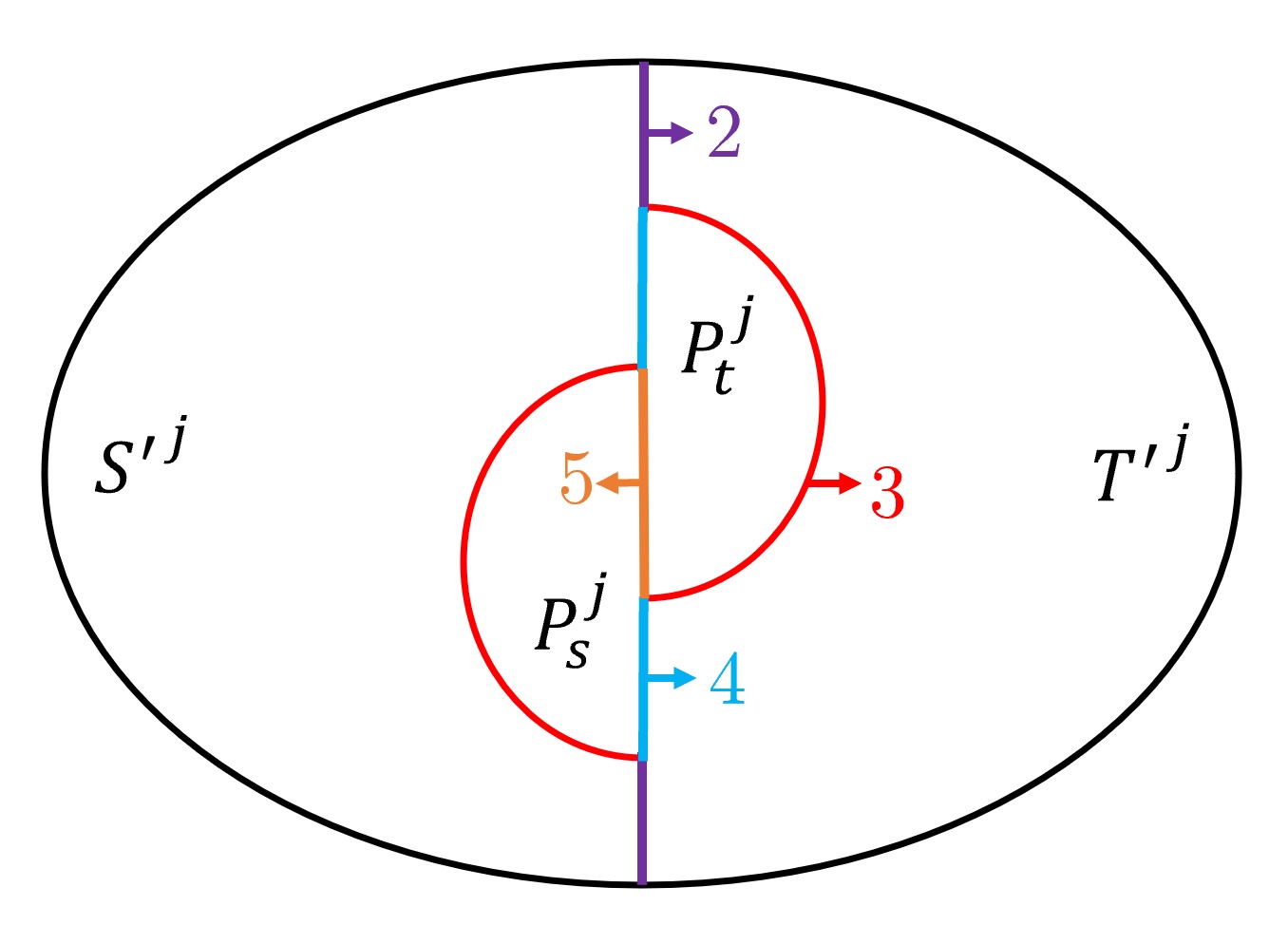}
\par\end{centering}
\caption{A diagram indicating the cases for defining $\hat{f}$ from $f_{st}$.\label{fig:define flow}}
\end{figure}

\paragraph{Step 2: Remove Flow Paths Through New Boundaries. Get $\hat{f}'$.}

Take a path decomposition of $\hat{f}$ in $G$, and then remove all
paths starting or ending at vertices in $V(P_{s}^{j})\cup V(P_{t}^{j})$;
let the resulting flow be $\hat{f}'$, which satisfies some demand
function that is only nonzero at $s,t$. That is, $\hat{f}'$ is an
$(s,t)$-flow. Note that $\hat{f}'$ still has congestion at most $(1+\epsilon)$. 

\paragraph{Step 3: Truncate to a Feasible Flow. Get $f^{j+1}$.}

Finally, for any edges congested by more than $1$ in $\hat{f}'$,
lower the flow along that edge to congestion exactly $1$. We define
$f^{j+1}$ as the resulting flow. 

\paragraph{Proving Properties of $f^{j+1}$.}

Since $f^{j+1}$ is obtained from the $(s,t)$-flow $\hat{f}'$ by removing a flow of
congestion at most $\epsilon$, Property~\ref{TWO-SIDED:eq:2} is
satisfied. Now, we prove Property~\ref{TWO-SIDED:eq:1}. We write
the deficit of $f^{j+1}$ as follows
\begin{align*}
 & \ex^{j+1}(f^{j+1})\\
 & =\sum_{e\in E(S^{j+1},T^{j+1})}\max\{0,(1-k^{j+1}\beta)c(e)-f^{j+1}(e)\}\\
 & \le\sum_{e\in E(S^{j+1},T^{j+1})}\left(\max\{0,(1-k^{j+1}\beta)c(e)-\hat{f}(e)\}+|\hat{f}(e)-\hat{f}'(e)|+|\hat{f}'(e)-f^{j+1}(e)|\right)\\
 & =\sum_{e\in E(S^{j+1},T^{j+1})}\max\{0,(1-k^{j+1}\beta)c(e)-\hat{f}(e)\}+\\&\sum_{e\in E(S^{j+1},T^{j+1})}|\hat{f}(e)-\hat{f}'(e)|+\sum_{e\in E(S^{j+1},T^{j+1})}|\hat{f}'(e)-f^{j+1}(e)|
\end{align*}
Now, we bound each of the three terms above. We use the fact $T^{j+1}=T'^{j}$
and $S^{j+1}=S'^{j}\cup P_{s}^{j}\cup P_{t}^{j}$. 

For the first term, we consider the concatenated flow $\hat{f}$.
We have $\hat{f}(e)=(1-k^{j+1}\beta)c(e)$ for each old boundary edge
$e\in E(S'^{j},T'^{j})$. So, the first term is bounded by 
\begin{align*}
\sum_{e\in E(S'^{j}\cup P_{s}^{j}\cup P_{t}^{j},T'^{j})}\max\{0,(1-k^{j+1}\beta)c(e)-\hat{f}(e)\} & \le\sum_{e\in E(P_{s}^{j}\cup P_{t}^{j},T'^{j})}(1-k^{j+1}\beta)c(e)-\hat{f}(e)\\
 & \le((1-k^{j+1}\beta)+(1+\epsilon))\cdot\delta(P_{s}^{j}\cup P_{t}^{j},T'^{j})\\
 & \le(2+\epsilon)\cdot\delta(P_{s}^{j}\cup P_{t}^{j},T'^{j})
\end{align*}
where the second inequality is because $\hat{f}$ has $(1+\epsilon)$
congestion.

For the second term, consider the flow $\hat{f}'$ obtained by the
flow-path removal. We rewrite the second term as  
\[
\sum_{e\in E(P_{s}^{j}\cup P_{t}^{j},T'^{j})}|\hat{f}(e)-\hat{f}'(e)|+\sum_{e\in E(S'^{j},T'^{j})}|\hat{f}(e)-\hat{f}'(e)|.
\]
Trivially, we have 
\[
\sum_{e\in E(P_{s}^{j}\cup P_{t}^{j},T'^{j})}|\hat{f}(e)-\hat{f}'(e)|\le(1+\epsilon)\delta(P_{s}^{j}\cup P_{t}^{j},T'^{j})
\]
because the flow has congestion $(1+\epsilon)$. Now, we claim that
\[
\sum_{e\in E(S'^{j},T'^{j})}|\hat{f}(e)-\hat{f}'(e)|\le\sum_{e\in E(P_{s}^{j}\cup P_{t}^{j},S'^{j}\cup T'^{j})}|\hat{f}(e)-\hat{f}'(e)|\le(1+\epsilon)\delta(P_{s}^{j}\cup P_{t}^{j},S'^{j}\cup T'^{j}).
\]
To see this, consider each flow-path $P$ removed from $\hat{f}$
to obtain $\hat{f}'$. Observe that $P$ cannot cross directly from
$T'^{j}$ to $S'^{j}$ because, for \emph{every} edge $e\in E(S'^{j},T'^{j})$,
the flow is directed from $S'^{j}$ to $T'^{j}$ as $\hat{f}(e)=(1-k^{j+1}\beta)c(e)$.
Thus, between any two consecutive times that $P$ crosses from $S'^{j}$
to $T'^{j}$, $P$ must have crossed from $T'^{j}$ to $P_{s}^{j}\cup P_{t}^{j}$.
Also, note that the first edge of $P$ is from $E(P_{s}^{j}\cup P_{t}^{j},S'^{j}\cup T'^{j})$.
Therefore, we can charge the flow changes in edges of $E(S'^{j},T'^{j})$
to the changes in edges of $E(P_{s}^{j}\cup P_{t}^{j},S'^{j}\cup T'^{j})$.
So $\sum_{e\in E(S'^{j},T'^{j})}|\hat{f}(e)-\hat{f}'(e)|\le\sum_{e\in E(P_{s}^{j}\cup P_{t}^{j},S'^{j}\cup T'^{j})}|\hat{f}(e)-\hat{f}'(e)|$
as claimed. 

Finally, for the third term, we consider the truncated flow $f^{j+1}$
with congestion at most $1$ on all edges. Again, we have $\hat{f}'(e)-f^{j+1}(e)=0$
for all $e\in E(S'^{j},T'^{j})$ because $0\le\hat{f}'(e)\le(1-k^{j+1}\beta)c(e)$.
In particular, the congestion on $e$ was already less than $1$. Also,
we have $|\hat{f}'(e)-f^{j+1}(e)|\le\epsilon c(e)$ for any edges
$e$ as $\hat{f}'$ has congestion $1+\epsilon$. Hence, we have
\[
\sum_{e\in E(S^{j+1},T^{j+1})}|\hat{f}'(e)-f^{j+1}(e)|\le\sum_{e\in E(P_{s}^{j}\cup P_{t}^{j},T'^{j})}\epsilon c(e)=\epsilon\delta(P_{s}^{j}\cup P_{t}^{j},T'^{j}).
\]

From the above bounds, we obtain
\begin{align*}
\ex^{j+1}(f^{j+1}) & \le((2+\epsilon)+(1+\epsilon)+(1+\epsilon)+\epsilon)\delta(P_{s}^{j}\cup P_{t}^{j},S'^{j}\cup T'^{j}).
\end{align*}
Now, write $\delta(P_{s}^{j}\cup P_{t}^{j},S'^{j}\cup T'^{j})=\delta(P_{s}^{j},S'^{j})+\delta(P_{t}^{j},S'^{j})+\delta(P_{s}^{j},T'^{j})+\delta(P_{t}^{j},T'^{j})$.
Note that $\delta(P_{t}^{j},T'^{j})\le\delta(P_{t}^{j},S^{j})$ and
$\delta(P_{s}^{j},S'^{j})\le\delta(P_{s}^{j},T^{j})$ by the guarantee
of $\almostfair$. Trivially, we also have $\delta(P_{t}^{j},S'^{j})\le\delta(P_{t}^{j},S^{j})$
and $\delta(P_{s}^{j},T'^{j})\le\delta(P_{s}^{j},T{}^{j})$. But we
have $\delta(P_{t}^{j},S{}^{j}),\delta(P_{s}^{j},T{}^{j})\le\exx^{j}/40$
by the assumption of this case. So we have, as $\epsilon\le1/4$,
\[
\ex^{j+1}(f^{j+1})\le(4+4\epsilon)\cdot4\cdot\frac{\exx^{j}}{40}\le\exx^{j}/2=\exx^{j+1}
\]
fulfilling Property~\ref{TWO-SIDED:eq:1}.

\subsubsection*{Case 2: $\max\{\delta_{G}(P_{s}^{j},T^{j}),\delta_{G}(P_{t}^{j},S^{j})\}>\protect\exx^{j}/40$}

In this case, we set $f^{j+1}$ as the same old flow $f^{j}$. So
Property~\ref{TWO-SIDED:eq:2} of $f^{j+1}$ trivially continues to
hold. For Property \ref{TWO-SIDED:eq:1}, assume without loss of generality
the case $\delta_{G}(P_{t}^{j},S^{j})>\ex^{j}/40$, so $T^{j+1}=T^{j}\setminus P_{t}^{j}$.
(The case $\delta_{G}(P_{s}^{j},T^{j})>\ex^{j}/40$ is symmetric,
so we omit it.) As $f^{j+1}=f^{j}$ and $k^{j+1}=k^{j}$, we have
\begin{align*}
 & \ex^{j+1}(f^{j+1})\\
 & =\sum_{e\in E(S^{j+1},T^{j+1})}\max\{0,(1-k^{j}\beta)c(e)-f^{j}(e)\}\\
 & =\ex^{j}(f^{j})-\sum_{e\in E(S^{j},P_{t}^{j})}\max\{0,(1-k^{j}\beta)c(e)-f^{j}(e)\}+\sum_{e\in E(P_{t}^{j},T^{j+1})}\max\{0,(1-k^{j}\beta)c(e)-f^{j}(e)\}.
\end{align*}
For the second term (without the minus sign), we can lower bound it
as

\[
\ge\sum_{e\in E(S^{j},P_{t}^{j})}(1-k^{j}\beta)c(e)-f^{j}(e)=(1-k^{j}\beta)\delta(S^{j},P_{t}^{j})-f^{j}(S^{j},P_{t}^{j}).
\]
For the third term, we can upper bound it as 
\[
\le\sum_{e\in E(P_{t}^{j},T^{j+1})}c(e)-f^{j}(e)=\delta(P_{t}^{j},T^{j+1})-f^{j}(P_{t}^{j},T^{j+1}).
\]
where the first inequality is because $0\le c(e)-f^{j}(e)$ as $f^{j}$
is feasible. Putting these together, we have 
\[
\ex^{j+1}(f^{j+1})\le\ex^{j}(f^{j})-\left((1-k^{j}\beta)\delta(S^{j},P_{t}^{j})-\delta(P_{t}^{j},T^{j+1})\right)+\left(f^{j}(S^{j},P_{t}^{j})-f^{j}(P_{t}^{j},T^{j+1})\right).
\]
That is, the increase in deficit can be upper bounded as follows.
It will decrease proportional to $(1-k^{j}\beta)\delta(S^{j},P_{t}^{j})-\delta(P_{t}^{j},T^{j+1})$ which is
related cut size. It may increase proportional to $f(S^{j},P_{t}^{j})-f(P_{t}^{j},T^{j+1})$
which is related to flow. 

For the decrease caused by cut size, $\almostfair(G,T^{j},t,\epsilon,(k^{j}+1)\beta)$
guarantees that $\delta(P_{t}^{j},T^{j+1})\le(1-(k^{j}+1)\beta)\delta(S^{j},P_{t}^{j})$.
So the deficit must decrease by at least $\left((1-k^{j}\beta)-(1-(k^{j}+1)\beta)\right)\delta(S^{j},P_{t}^{j})\ge\beta\delta(S^{j},P_{t}^{j}).$
For the increase caused by flow, we have that $f^{j}(S^{j},P_{t}^{j})-f^{j}(P_{t}^{j},T^{j+1})=f^{j}(S^{j},P_{t}^{j})+f^{j}(T^{j+1},P_{t}^{j})=-f^{j}(P_{t}^{j})$
is exactly the net flow of $f^{j}$ into $P_{t}^{j}$. As $|f^{j}(P_{t}^{j})|\le\epsilon\delta_{G}(P_{t}^{j})$
by Property~\ref{TWO-SIDED:eq:2} on $P_{t}^{j}$, we now have 
\begin{align*}
\ex^{j+1}(f^{j+1}) & \le\ex^{j}(f^{j})-\beta\delta(S^{j},P_{t}^{j})+\epsilon\delta_{G}(P_{t}^{j}).
\end{align*}
Observe that $\delta_{G}(P_{t}^{j})=\delta_{G}(S^{j},P_{t}^{j})+\delta_{G}(P_{t}^{j},T^{j+1})$
but $\delta(P_{t}^{j},T^{j+1})\le\delta(S^{j},P_{t}^{j})$ by $\almostfair$
again. So $\epsilon\delta_{G}(P_{t}^{j})\le2\epsilon\delta_{G}(S^{j},P_{t}^{j})\le\frac{\beta}{2}\delta_{G}(S^{j},P_{t}^{j})$
because $\epsilon\le\beta/4$. Therefore, 
\[
\ex^{j+1}(f^{j+1})\le\ex^{j}(f^{j})-\frac{\beta}{2}\delta(S^{j},P_{t}^{j})\le(1-\frac{\beta}{80})\ex^{j}(f^{j})=\exx^{j+1}
\]
because $\delta_{G}(S^{j},P_{t}^{j})>\ex^{j}/40$ by our initial assumption.

\section{Approximate Isolating Cuts and Steiner Cut}
\label{sec:isolating}
The focus of this section is to compute approximate isolating cuts and show its application in the Steiner mincut problem. 

\subsection{Approximate Minimum Isolating Cuts}

The approximate minimum isolating cuts problem is defined below.

\begin{definition}
    Given an undirected graph $G = (V, E)$ with non-negative edge weights and a set of terminals $T\subseteq V$, a cut $\emptyset \subset S \subset V$ is said to be an {\em isolating cut} for a terminal $t\in T$ if $T\cap S = \{t\}$. 
    A {\em minimum} isolating cut for $t$ is a minimum value cut among all the isolating cuts for $t$. Similarly, a $(1+\e)$-approximate minimum isolating cut for $t$ is an isolating cut for $t$ whose value is at most $(1+\e)$ times that of a minimum isolating cut for $t$.
\end{definition}

Below is our main theorem. We state our result in general before plugging in the current best runtime from \Cref{thm:fair}.

\begin{theorem}
\label{thm:isolating}
    We can compute $(1+\epsilon)$ approximate minimum isolating cuts in $\tilde O(m)$ time. 
    
    More precisely, fix any $\e < 1$. Given an undirected graph $G = (V, E)$ on $m$ edges and $n$ vertices with non-negative edge weights and a set of terminals $T\subseteq V$, there is an algorithm that outputs a $(1+\e)$-approximate minimum isolating cut $S_t$ for every terminal $t\in T$ in $O(m)$ time plus a set of $(1+\gamma)$-fair $(s,t)$-cut calls on undirected graphs that collectively  contain $O(m \log |T|)$ edges and $O(n\log |T|)$ vertices, where $\gamma =  \frac{\e}{4\lceil\lg |T|\rceil}$.
    Moreover, the sets $S_t$ are disjoint, and for each $t\in T$, the cut $(S_t,V\setminus S_t)$ is a $t$-sided $(1+\gamma)$-fair cut.
    Using \Cref{thm:fair} to compute $(1+\gamma)$-fair $(s,t)$-cuts, our algorithm for $(1+\e)$-approximate minimum isolating cuts runs in $\tO(m/\e^3)$ time. 
\end{theorem}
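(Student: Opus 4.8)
The plan is to follow the recursive framework of Li--Panigrahi for exact isolating cuts, replacing each exact $(s,t)$-mincut call with a $(1+\gamma)$-fair cut call and tracking how the approximation factor compounds. Concretely, I would first recall the key structural fact used in \cite{LiP20}: if one partitions the terminals into two halves $T_0,T_1$ according to one bit of a $\lceil\lg|T|\rceil$-bit labeling, contracts each side, and computes a mincut separating the contracted $T_0$-blob from the contracted $T_1$-blob, then the resulting cut ``uncrosses'' against the true minimum isolating cut of each terminal. The substitute for this uncrossing step is the approximate uncrossing property one gets from fairness: if $Y$ is a $(1+\gamma)$-fair cut separating the two contracted blobs and $X$ is the current (approximate) minimum isolating cut for a terminal $t$, then whichever of $X\cap Y$ or $X\setminus Y$ contains $t$ is a $(1+\gamma)$-times-worse approximate isolating cut for $t$ (this is exactly the flavor of \Cref{lem:uncrossing-property}, applied with the fair flow restricted to $X$). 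Since there are $\lceil\lg|T|\rceil$ bits, after processing all of them the approximation factor has multiplied up to at most $(1+\gamma)^{\lceil\lg|T|\rceil}\le 1+\e$ by our choice $\gamma=\e/(4\lceil\lg|T|\rceil)$, using $\e<1$ and a standard $(1+x/k)^k\le e^x\le 1+2x$ type bound.

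Next I would handle the recursion/contraction bookkeeping that gives the edge and vertex count claims. For each bit $i\in\{1,\dots,\lceil\lg|T|\rceil\}$ we form one graph $G_i$ by contracting all terminals whose $i$-th label bit is $0$ into a single super-source $s_i$ and all whose bit is $1$ into a single super-sink $t_i$; then we call the fair-cut routine on $G_i$. Each $G_i$ has at most $m$ edges and $n$ vertices, so across all bits the total is $O(m\log|T|)$ edges and $O(n\log|T|)$ vertices, as claimed. Then, exactly as in \cite{LiP20}, we take the union of these (at most) $\lceil\lg|T|\rceil$ fair cuts to partition $V$ into pieces, each containing at most one terminal; for each terminal $t$ we contract everything outside its piece into a single vertex and run one more fair-cut computation — but I would instead argue, as the theorem statement asks, that the cuts produced are already $t$-sided $(1+\gamma)$-fair cuts, so we can charge the final per-terminal flows to the one-sided fairness guarantee rather than re-solving. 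The disjointness of the $S_t$ is immediate because each $S_t$ lies inside a distinct piece of the partition. The total edge count of the contracted per-terminal instances is $O(m)$ by the standard argument (each original edge is "uncontracted" in at most one terminal's instance), which is absorbed into the stated bounds.

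The main subtlety — and the step I expect to be the real obstacle — is verifying that the approximation degradation is genuinely multiplicative and one-sided in the right direction: that at each bit the fair cut $Y$ uncrosses \emph{from the correct side} against the current candidate $X$, so that $X\cap Y$ (resp.\ $X\setminus Y$) remains a valid isolating cut (i.e.\ still isolates $t$ from all other terminals, not just from the ones separated at this bit) \emph{and} its value grows by only a $(1+\gamma)$ factor. This requires combining the submodularity inequality $\delta(X\cap Y)+\delta(X\cup Y)\le\delta(X)+\delta(Y)$ with the fairness certificate: the fair flow for $Y$, restricted to the edges leaving $X$, shows that $\delta(X\cap Y)$ cannot be much larger than $\delta(X)$ because the "extra" boundary edges of $X\cap Y$ lying along $Y$ are each $(1/(1+\gamma))$-saturated by a flow that, by the $t$-sided condition, can be rerouted to certify approximate optimality of $X\cap Y$ as an isolating cut. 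I would write this out carefully in one lemma (the analogue of the uncrossing lemma in \cite{LiP20}, but quantitative), and then the rest is the routine induction over the $\lceil\lg|T|\rceil$ bits plus the runtime substitution: each fair cut call costs $\tO(m'/\gamma^3)$ by \Cref{thm:fair} on an instance with $m'$ edges, summing to $\tO(m\log|T|/\gamma^3)=\tO(m/\e^3)$ after plugging in $\gamma=\e/(4\lceil\lg|T|\rceil)$ and absorbing $\polylog$ factors.
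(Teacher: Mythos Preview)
Your proposal is essentially the paper's proof: the same bit-splitting into $\lceil\lg|T|\rceil$ bipartitions, a fair-cut call per bit, the multiplicative uncrossing lemma, and a final per-terminal fair-cut on the contracted component. Two clarifications are worth making.

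First, the uncrossing step does \emph{not} need submodularity. The paper's \Cref{lem:intersect} is a pure flow argument: writing out $E(A,V\setminus A)$ and $E(A\cap B,V\setminus(A\cap B))$ as disjoint unions, the only differing pieces are $E(A\setminus B,V\setminus A)$ versus $E(A\cap B,A\setminus B)$, and the fair flow for $B$ restricted to the latter edge set must exit $A\setminus B$ through the former, giving $\delta(A\cap B,A\setminus B)\le(1+\gamma)\,\delta(A\setminus B,V\setminus A)$ directly. Your description of the flow rerouting is right; just drop the submodularity inequality, which does not obviously yield the needed one-sided bound.

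Second, your ``instead \ldots\ rather than re-solving'' passage is muddled. The Phase~2 per-terminal computation is \emph{not} optional: the connected component $S_t$ obtained from intersecting the Phase~1 cuts is only guaranteed to \emph{contain} an approximate minimum isolating cut, not to be one. You must contract $V\setminus S_t$ and run one more $(1+\gamma)$-fair $(t,\bar s_t)$-cut; this simultaneously gives the final $(1+\e)$ approximation (accounting for one extra $(1+\gamma)$ factor on top of the $(1+\gamma)^{\lceil\lg|T|\rceil}$, still $\le e^{\e/2}\le 1+\e$) and, because it is itself a fair cut with $t$ on the source side, yields the $t$-sided $(1+\gamma)$-fairness of the output for free. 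The paper makes exactly this move (see the remark after Algorithm~\ref{alg:isolating}).
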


\begin{algorithm}
    \caption{$(1+\e)$-approximate Minimum Isolating Cuts Algorithm on terminal set $T$}
    \label{alg:isolating}
    \begin{algorithmic}[1]
        \STATE {Arbitrarily order the terminals in $T = \{t_1, t_2, \ldots, t_{|T|}\}$}
        \STATE{\bf Phase 1:}
        \FOR {$i=1$ to $\lceil\lg |T|\rceil$}
            \STATE {$X_i \gets \{v_j\in T: i^{\rm th} \text{ bit in } j \text{ is } 1\}$} 
            \STATE {$Y_i \gets \{v_j\in T: i^{\rm th} \text{ bit in } j \text{ is } 0\}$}
            \STATE {Use \Cref{thm:fair} to find a $(1+\gamma)$-fair  $(X_i, Y_i)$-cut $S_i$}
        \ENDFOR
        \STATE{\bf Phase 2:}
        \FOR {every terminal $t\in T$}
            \STATE {Let $S_t$ be the connected component containing $t$ in $G\setminus\cup_i\delta S_i$, i.e., the graph where we delete all the edges in cuts $\delta S_i$ for all $i$.}
            \STATE {$G_t$ is obtained from $G$ by contracting all vertices in $V\setminus S_t$ into a single vertex $\bar{s}_t$.}
            \COMMENT{To implement this step efficiently, we construct a new graph that is identical to $G_t$ instead of contracting $G$.}
            \STATE {Find a $(1+\beta)$-approximate minimum $(t,\bar{s}_t)$-cut in graph $G_t$; call this cut $C_t$\label{line:iso:solve mincut}}
        \ENDFOR        
        \STATE {Return the cuts $\{C_t: t\in T\}$}
    \end{algorithmic}
\end{algorithm}
To establish \Cref{thm:isolating}, we describe \Cref{alg:isolating} for finding $(1+\e)$-approximate isolating cuts.
First, we establish correctness of the algorithm by showing that the cut $C_t$ returned by \Cref{alg:isolating} for a terminal $t\in T$ is indeed a $(1+\e)$-approximate minimum isolating cut for $T$. The following claim establishes an approximate version of the standard uncrossing property of minimum cuts, and is crucial for the correctness of our algorithm.

\begin{lemma}
\label{lem:intersect}
    Let $A$ be a $(1+\alpha)$-approximate minimum isolating cut for some terminal $t$ and let $B$ be a $(1+\gamma)$-fair $(X, Y)$-cut where $X\cup Y = T$, $t\in X$, and $X\subseteq B$. Then, $A\cap B$ is a $(1+\alpha)(1+\gamma)$-approximate minimum isolating cut for $t$.
\end{lemma}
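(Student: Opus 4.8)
\textbf{Proof proposal for \Cref{lem:intersect}.}

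The plan is to use submodularity of the cut function together with the defining flow $f$ of the fair cut $B$ to argue that $A \cap B$ cannot be much larger than $A$. Write $A' = A \cap B$. Since $B$ contains all of $X$ and meets $Y$ only outside $B$, and $t \in X \subseteq B$, the set $A'$ still contains $t$ and still excludes every terminal in $T \setminus \{t\}$: indeed $A' \cap T \subseteq A \cap T = \{t\}$, so $A'$ is a valid isolating cut for $t$. Thus it suffices to bound $\delta(A')$ in terms of $\delta(A)$ and the minimum isolating cut value. The natural tool is submodularity, $\delta(A) + \delta(B) \ge \delta(A \cap B) + \delta(A \cup B)$, which rearranges to $\delta(A') \le \delta(A) + \delta(B) - \delta(A \cup B)$. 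So I need a lower bound on $\delta(A \cup B) - \delta(B)$ of the form $\delta(A\cup B) - \delta(B) \ge -(\text{something small relative to } \delta(A))$; combined with $\delta(A) \le (1+\alpha)\lambda$ where $\lambda$ is the minimum isolating-cut value, this should give the claim.

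The key step, and the place where fairness (rather than mere approximate optimality) is essential, is to control $\delta(A \cup B) - \delta(B)$ using the flow $f$ witnessing that $B$ is a $(1+\gamma)$-fair $(X,Y)$-cut. The edges counted in $\delta(A\cup B)$ but not $\delta(B)$ are edges leaving $A \cup B$ that go from $A \setminus B$ to $V \setminus (A\cup B)$; the edges counted in $\delta(B)$ but not $\delta(A\cup B)$ are the boundary edges of $B$ with one endpoint in $A \setminus B$ (these now lie inside $A \cup B$). The fair-cut flow $f$ saturates every edge of $\delta(B)$ to at least a $\tfrac{1}{1+\gamma}$ fraction of capacity, flowing from $X$-side to $Y$-side, i.e.\ into $V \setminus B$. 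I would trace this flow: the flow that $f$ pushes across the $B$-boundary edges incident to $A \setminus B$ must, by conservation, either return across $\delta(B)$ elsewhere or continue to exit $A \cup B$ through $\delta(A \cup B) \setminus \delta(B)$ — but since $f$ flows consistently in the $X \to Y$ direction across $\delta(B)$, it cannot re-enter $B$, so all of it exits through the ``new'' boundary edges of $A \cup B$. Quantitatively this yields $c(\delta(A\cup B) \setminus \delta(B)) \ge \tfrac{1}{1+\gamma} \, c(\delta(B) \cap E(A\setminus B, \cdot))$, hence $\delta(A \cup B) - \delta(B) \ge -\tfrac{\gamma}{1+\gamma} c(\delta(B)\cap E(A\setminus B,\cdot))$, and the last quantity is at most $\tfrac{\gamma}{1+\gamma}\delta(A)$ up to routing bookkeeping. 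Feeding this into the submodularity inequality and using $\delta(B) \le (1+\gamma)\lambda \le (1+\gamma)\delta(A)/1 $ — actually more carefully using that $\delta(B)$ restricted appropriately is charged against $\delta(A)$ — gives $\delta(A') \le (1+\gamma)\delta(A) \le (1+\gamma)(1+\alpha)\lambda$.

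I expect the main obstacle to be the careful flow-routing argument in the previous paragraph: making precise which capacity of which boundary edges the saturating flow $f$ is forced to carry, and showing it must exit through $\delta(A\cup B)$ rather than looping back, so that the surplus boundary $\delta(A\cup B)\setminus\delta(B)$ is genuinely lower-bounded. One has to be careful that $f$ need not be an $(X,Y)$-flow in the single-commodity sense but a flow saturating a boundary demand, and that flow entering $A \setminus B$ from $X$-side-of-$B$ vertices inside $B$ also has to be accounted for. A clean way to organize this is to consider the restriction of $f$ to the vertex set $A \cup B$ (or to $A \setminus B$ together with its incident cut edges), apply flow conservation there, and compare the net flow crossing the two boundary pieces; the ``$X \to Y$ monotonicity'' of $f$ on $\delta(B)$ edges is what prevents cancellation. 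Everything else — verifying $A'$ is an isolating cut, the submodularity step, and collecting the approximation factors — is routine.
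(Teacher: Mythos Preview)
Your flow argument—tracking the fair-cut flow $f$ as it enters $A\setminus B$ across $\partial B$ and must exit through $E(A\setminus B,\,V\setminus(A\cup B))$ because $A\setminus B$ contains no terminals and $f$ cannot re-cross $\partial B$ in the wrong direction—is exactly the key step, and it is the same idea the paper uses. However, the paper packages it more directly: rather than invoking submodularity and comparing $\delta(B)$ to $\delta(A\cup B)$, it simply writes out $\delta(A)$ and $\delta(A\cap B)$ as disjoint unions of three edge sets each, observes that two of the three pieces coincide, and applies the flow argument only to the remaining pair $E(A\cap B,\,A\setminus B)$ versus $E(A\setminus B,\,V\setminus A)$. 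This gives $\delta(A\cap B)\le(1+\gamma)\delta(A)$ in one line, without $\delta(B)$ or $\delta(A\cup B)$ ever appearing.

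Your submodularity route does work, but one step is stated incorrectly: you claim $c\big(\delta(B)\cap E(A\setminus B,\cdot)\big)\le\delta(A)$, i.e.\ $c(E(B,\,A\setminus B))\le\delta(A)$. This is false in general, because $E(B,\,A\setminus B)$ contains the edges $E(A\cap B,\,A\setminus B)$, which lie entirely inside $A$ and are not part of $\delta(A)$. The fix is to reuse your own flow inequality: you already showed $c(E(B,\,A\setminus B))\le(1+\gamma)\,c(E(A\setminus B,\,V\setminus(A\cup B)))$, and the right-hand side is at most $(1+\gamma)\delta(A)$. Plugging this into your bound $\delta(A\cap B)\le\delta(A)+\frac{\gamma}{1+\gamma}c(E(B,\,A\setminus B))$ gives $\delta(A\cap B)\le(1+\gamma)\delta(A)$ as desired. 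So the argument closes, but the detour through submodularity buys nothing; the direct decomposition is both shorter and avoids the bookkeeping you flagged as the main obstacle.
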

\begin{proof}
    First, note that since $A$ is an isolating cut for $t$ and $t\in X, X\subseteq B$, it follows that $A\setminus B$ does not contain any terminal and $A\cap B$ contains a single terminal $t$. Now, consider the two cuts $A$ and $A\cap B$. Using the notation $\uplus$ for disjoint union, we can write 
    \begin{align*}
        E(A, V\setminus A) &= E(A\cap B, V\setminus (A\cup B)) \uplus E(A\cap B, B\setminus A) \uplus E(A\setminus B, V\setminus A) \\
        E(A\cap B, V\setminus (A\cap B)) &= E(A\cap B, V\setminus (A\cup B)) \uplus E(A\cap B, B\setminus A) \uplus E(A\cap B, A\setminus B).
    \end{align*}
    Since the first two sets are identical, we only need to compare the third sets $E(A\setminus B, V\setminus A)$ and $E(A\cap B, A\setminus B)$. Since $B$ is a $(1+\gamma)$-fair  $(X, Y)$-cut, there is a feasible flow from $X$ to $Y$ that, for each edge in $E(B, V\setminus B)$, sends at least $\frac1{1+\gamma}$ times capacity in the direction from $B$ to $V\setminus B$. Now, consider the flow on the subset of edges $E(A\cap B, A\setminus B) \subseteq E(B, V\setminus B)$. Since the flow must end at $Y$ and since $Y\cap (A\setminus B) = \emptyset$, it follows that this flow must exit the set $A\setminus B$ on the edges in $E(A\setminus B, V\setminus (A\cup B))$. Thus, 
    $$\delta(A\cap B, A\setminus B) \le (1+\gamma) \cdot \delta(A\setminus B, V\setminus (A\cup B)) \le (1+\gamma) \cdot \delta(A\setminus B, V\setminus A).$$
    It follows that $\delta(A\cap B) \le (1+\gamma)\cdot \delta(A)$, which proves the lemma.
\end{proof}

\begin{lemma}
    For $\gamma = \frac{\e}{4\lceil\lg |T|\rceil}$ and $\beta = \frac{\e}{4}$, the cut $C_t$ returned by \Cref{alg:isolating} is a $(1+\e)$-approximate minimum isolating cut for every $t\in T$.
\end{lemma}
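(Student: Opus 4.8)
The plan is to use the approximate uncrossing property of \Cref{lem:intersect} iteratively to "peel off" the fair cuts $S_i$ from $\text{Phase 1}$, tracking the accumulation of approximation error, and then to combine this with a final $(1+\beta)$-approximate mincut call in $\text{Phase 2}$.

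First I would establish notation: let $t = t_j \in T$, and let $A^*_t$ denote a true minimum isolating cut for $t$ (so $t \in A^*_t$ and $A^*_t \cap T = \{t\}$). The key observation is that for each $i$, the fair $(X_i, Y_i)$-cut $S_i$ found in Phase 1 separates $t$ from some subset of the other terminals, depending on the $i^{\text{th}}$ bit of $j$. Specifically, let $B_i = S_i$ if the $i^{\text{th}}$ bit of $j$ is $1$ (so $t \in X_i \subseteq S_i$), and let $B_i = V \setminus S_i$ if the $i^{\text{th}}$ bit of $j$ is $0$ (so $t \in Y_i = V \setminus X_i$, and by the symmetry of fair cuts noted after \Cref{def:fair cut intro}, $(Y_i, X_i)$ is also a $(1+\gamma)$-fair cut with $t$ on the $Y_i$ side). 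In either case, $B_i$ is a "$(1+\gamma)$-fair cut containing $t$ and separating $t$ from the terminals that disagree with $t$ in bit $i$." Because $j \ne j'$ for distinct terminals, every other terminal disagrees with $t$ in at least one bit, so the intersection $\bigcap_{i=1}^{\lceil\lg|T|\rceil} B_i$ contains $t$ but no other terminal; moreover this intersection is exactly the connected component $S_t$ of $t$ in $G \setminus \bigcup_i \delta S_i$ (or is contained in it — it suffices that $S_t \subseteq \bigcap_i B_i$, which holds since deleting $\delta S_i$ disconnects $t$ from the complement of $B_i$).

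Next I would apply \Cref{lem:intersect} inductively. Start with $A_0 = A^*_t$, a $(1+0)$-approximate (in fact exact) minimum isolating cut. Having obtained $A_{i-1}$, a $(1+\gamma)^{i-1}$-approximate minimum isolating cut for $t$ with $t \in A_{i-1}$, apply \Cref{lem:intersect} with $A = A_{i-1}$, $B = B_i$, $X$ = (the terminals agreeing with $t$ in bit $i$, which contains $t$), $Y$ = (the rest of $T$): since $B_i$ is a $(1+\gamma)$-fair cut with $X \subseteq B_i$ and $t \in X$, the lemma gives that $A_i := A_{i-1} \cap B_i$ is a $(1+\gamma)^{i-1}(1+\gamma) = (1+\gamma)^i$-approximate minimum isolating cut for $t$. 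After all $k := \lceil\lg|T|\rceil$ steps, $A_k \subseteq \bigcap_i B_i \cap A^*_t \subseteq S_t$ is a $(1+\gamma)^k$-approximate minimum isolating cut for $t$ contained in $S_t$. Since $A_k \subseteq S_t$ and $A_k$ contains no terminal other than $t$, in the contracted graph $G_t$ (where $V \setminus S_t$ is contracted to $\bar s_t$) the cut $A_k$ is a valid $(t, \bar s_t)$-cut of the same value $\delta_G(A_k) = \delta_{G_t}(A_k)$. Hence the true minimum $(t, \bar s_t)$-cut in $G_t$ has value at most $(1+\gamma)^k \cdot \lambda$, where $\lambda$ is the minimum isolating cut value for $t$. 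The $(1+\beta)$-approximate mincut $C_t$ returned therefore has $\delta_{G_t}(C_t) \le (1+\beta)(1+\gamma)^k \lambda$. Finally, any $(t, \bar s_t)$-cut in $G_t$ pulls back to an isolating cut for $t$ in $G$ of the same value (the side containing $t$ is a subset of $S_t$, so contains no other terminal), so $C_t$ is a $(1+\beta)(1+\gamma)^k$-approximate minimum isolating cut for $t$ in $G$.

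It remains to check the parameters. With $\gamma = \frac{\e}{4k}$ we have $(1+\gamma)^k \le e^{\gamma k} = e^{\e/4} \le 1 + \e/2$ for $\e \le 1$ (using $e^x \le 1 + 2x$ on $[0,1]$, and $\e/4 \le 1/4 \le 1$), and with $\beta = \e/4$ we get $(1+\beta)(1+\gamma)^k \le (1+\e/4)(1+\e/2) \le 1 + \e$, again for $\e \le 1$. The main obstacle, and the step deserving the most care, is the bit-indexed case analysis: verifying that the fair cut $S_i$ (or its complement) always has $t$ on the correct side and separates $t$ from all terminals disagreeing in bit $i$, and confirming that the symmetry of \Cref{def:fair cut intro} indeed lets us treat $V \setminus S_i$ as a $(1+\gamma)$-fair cut when needed (here note that $\delta(S_i) = \delta(V\setminus S_i)$ and the same witnessing flow, reversed, works). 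Everything else — the induction via \Cref{lem:intersect} and the arithmetic on $(1+\gamma)^k$ — is routine. One should also double-check the edge case $S_t \supsetneq \bigcap_i B_i$ does not cause trouble: we only used $A_k \subseteq \bigcap_i B_i$, and $\bigcap_i B_i \subseteq S_t$ is false in general, but what we actually need is $A_k \subseteq S_t$; this follows because $A_k \subseteq A^*_t$ and $A_k$ is connected to $t$ through edges of $G \setminus \bigcup_i \delta S_i$ — more carefully, one argues $A^*_t \cap S_t$ (rather than $A^*_t \cap \bigcap_i B_i$) is the relevant object and re-runs the uncrossing with $B = $ the component structure; I would present the cleaner version where each $\delta S_i$-deletion is handled one at a time so that connectivity is preserved throughout.
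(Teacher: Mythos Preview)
Your approach is the same as the paper's: iterate \Cref{lem:intersect} across the $\lceil\lg|T|\rceil$ fair cuts to show the minimum $(t,\bar s_t)$-cut in $G_t$ has value at most $(1+\gamma)^{\lceil\lg|T|\rceil}\lambda$, multiply by $(1+\beta)$, and verify $(1+\gamma)^{\lceil\lg|T|\rceil}(1+\beta)\le e^{\e/4}\cdot e^{\e/4}=e^{\e/2}\le 1+\e$. The paper states this in two sentences without spelling out the bit-indexed induction you give.

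The loose end you worry about at the end---whether $A_k\subseteq S_t$---has a cleaner fix than either of the ones you sketch. Since every edge with both endpoints in $\bigcap_i B_i$ survives the deletion of $\bigcup_i\delta S_i$, the component $S_t$ is a union of connected components of $G[\bigcap_i B_i]$; in particular there are no edges between $A_k\cap S_t$ and $A_k\setminus S_t$ (both sets lie inside $\bigcap_i B_i$). Hence
\[
\delta_G(A_k\cap S_t)=\delta_G(A_k\cap S_t,\,V\setminus A_k)\le\delta_G(A_k),
\]
and $A_k\cap S_t\subseteq S_t$ is the $(t,\bar s_t)$-cut in $G_t$ you need. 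Your claim that ``$A_k$ is connected to $t$ through edges of $G\setminus\bigcup_i\delta S_i$'' is not justified (intersecting a connected $A^*_t$ with the $B_i$'s can disconnect it), and re-running the uncrossing is unnecessary.
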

\begin{proof}
    \Cref{lem:intersect} implies that in \Cref{alg:isolating}, the minimum isolating cut of $t$ in graph $G_t$, i.e., the minimum $t-\bar{s}_t$ cut, is a $(1+\gamma)^{\lceil\lg |T|\rceil}$-approximate minimum isolating cut of $t$ in the input graph $G$. Since $C_t$ is a $(1+\beta)$-approximate minimum $t-\bar{s}_t$ cut, it follows that $C_t$ is a $(1+\gamma)^{\lceil\lg |T|\rceil}\cdot (1+\beta)$-approximate minimum isolating cut of $t$ in the input graph $G$. Using the values of $\gamma$ and $\beta$, we have
    $$\left(1+\frac{\e}{4\lceil\lg |T|\rceil}\right)^{\lceil\lg |T|\rceil}\cdot \left(1+\frac{\e}{4}\right) \le e^{\e/4}\cdot e^{\e/4} = e^{\e/2} \le 1 + \e \text{ since } \e < 1.$$
\end{proof}

For the $(1+\beta)$-approximate mincut in Step~\ref{line:iso:solve mincut}, we can use \Cref{thm:fair} to compute a $(1+\gamma)$-fair cut, which is also a $(1+\beta)$-approximate mincut since $\gamma\le\beta$. This also guarantees that the cut $C_t$ is a $t$-sided $(1+\gamma)$-fair cut. Finally, it is clear from the algorithm that all cuts $C_t$ are disjoint. 

The runtime analysis is identical to that in \cite{LiP20}, so we omit it for brevity.  

\subsection{$(1+\e)$-approximate Minimum Steiner Cut}
\label{sec:steiner}

As an immediate application of our isolating cut result, we can solve the Steiner cut problem below efficiently. 

\begin{definition}
    Given an undirected graph $G = (V, E)$ with non-negative edge weights and a set of terminals $T\subseteq V$, a minimum Steiner cut is a cut of minimum value among all cuts $\emptyset \subset S \subset V$ that satisfy $\emptyset\subset S\cap T \subset T$.
\end{definition}

Using \Cref{thm:isolating}, we give the following algorithm for finding a $(1+\e)$-approximate minimum Steiner cut.

\begin{algorithm}
    \caption{$(1+\e)$-approximate minimum Steiner cut Algorithm on terminal set $T$}
  \label{alg:steiner}
    \begin{algorithmic}
        \FOR {$i=1$ to $\lceil\lg |T|\rceil$}
            \FOR {$j=1$ to $\lceil\log_{8/7} n\rceil$}        
                \STATE {$T_{ij}$ is drawn i.i.d. from $T$ where every vertex $t\in T$ appears in $T_{ij}$ with probability $1/2^i$}
                \STATE {Use \Cref{thm:isolating} to find isolating cuts $\cS_{ij} = \{S_t: t\in T_{ij}\}$ for the terminal set $T_{ij}$}
            \ENDFOR
        \ENDFOR
        \STATE {Return $\arg\min \{\delta(S): S\in \cS_{ij}, i\in[\lceil\lg |T|\rceil], j\in [\lceil\log_{8/7} n\rceil]\}$}
    \end{algorithmic}
\end{algorithm}

\begin{theorem}
    Given an undirected graph $G = (V, E)$ on $m$ edges and $n$ vertices and with non-negative edge weights and a set of terminals $T\subseteq V$, \Cref{alg:steiner} computes a $(1+\e)$-minimum Steiner cut for $T$ wuth probability at least $1-1/n$ in $\tO(m)$ time.
\end{theorem}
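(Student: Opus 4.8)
The plan is to show that \Cref{alg:steiner} succeeds whenever some sampled terminal set $T_{ij}$ ends up with exactly one terminal on the small side of a fixed (approximate) minimum Steiner cut, and then argue that such a good event occurs with high probability for an appropriate choice of $i,j$. First I would fix an (unknown) minimum Steiner cut $(S^*,V\setminus S^*)$, and let $k = |S^*\cap T|$, $\ell = |(V\setminus S^*)\cap T| = |T|-k$; without loss of generality assume $k\le \ell$, so $1\le k\le |T|/2$. Pick the index $i$ with $2^{i-1}< k \le 2^i$ — this $i$ is in the range $[1,\lceil\lg|T|\rceil]$ since $k\le|T|/2$. For this $i$, when we form $T_{ij}$ by including each terminal independently with probability $1/2^i$, I would compute the probability of the good event $E_{ij}$: exactly one terminal of $S^*\cap T$ is sampled and at least one terminal of $(V\setminus S^*)\cap T$ is sampled. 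A standard calculation gives $\Pr[\text{exactly one of the } k] = k\cdot p(1-p)^{k-1}$ with $p=1/2^i\in[1/(2k),1/k)$, which is $\Omega(1)$ (bounded below by a constant like $1/(2e)$), and conditioned on that, the probability that at least one of the $\ell\ge k$ terminals on the other side is sampled is also $\Omega(1)$ (since $\ell p \ge k p > 1/2$). Hence $\Pr[E_{ij}] \ge c$ for an absolute constant $c\ge 1/8$ or so; I would nail down the exact constant, ensuring $c \ge 7/8$-th-root-friendly, i.e. just needing $c$ bounded below by a constant so that $\lceil\log_{8/7} n\rceil$ independent trials boost the failure probability to $\le 1/n^2$, and then a union bound over the $O(\log^2 n)$ pairs $(i,j)$ (or just over the relevant $i$) gives overall success probability $\ge 1-1/n$.

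Next I would verify that conditioned on $E_{ij}$, the isolating-cut call on $T_{ij}$ actually produces a $(1+\e)$-approximate minimum Steiner cut. Let $t$ be the unique terminal of $T_{ij}$ inside $S^*$. Then $(S^*, V\setminus S^*)$ is an isolating cut for $t$ with respect to the terminal set $T_{ij}$ (it separates $t$ from all other sampled terminals, which lie in $V\setminus S^*$), so the \emph{minimum} isolating cut for $t$ has value at most $\delta(S^*)$, the minimum Steiner cut value. By \Cref{thm:isolating}, the cut $S_t\in\cS_{ij}$ returned for $t$ is a $(1+\e)$-approximate minimum isolating cut for $t$, so $\delta(S_t)\le(1+\e)\delta(S^*)$. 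Conversely, every isolating cut $S_t$ returned by \Cref{thm:isolating} satisfies $T_{ij}\cap S_t = \{t\}$ with $|T_{ij}|\ge 2$ on the good event (since at least one other-side terminal is sampled), hence $\emptyset\subsetneq S_t\cap T\subsetneq T$, so $S_t$ is a valid Steiner cut and $\delta(S_t)\ge \delta(S^*)$. Combining, the best cut over all $\cS_{ij}$ has value in $[\delta(S^*),(1+\e)\delta(S^*)]$, so the algorithm's output is a $(1+\e)$-approximate minimum Steiner cut. (A minor edge case: if $k=\ell$ and the chosen $t$'s cut $S_t$ happens to contain more sampled terminals than expected, the disjointness and $T_{ij}\cap S_t=\{t\}$ guarantees from \Cref{thm:isolating} still force $S_t$ to be a genuine Steiner cut; I would state this explicitly.)

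Finally I would do the runtime accounting: there are $\lceil\lg|T|\rceil\cdot\lceil\log_{8/7}n\rceil = O(\log^2 n)$ iterations, each invoking \Cref{thm:isolating} once, which costs $\tO(m)$ by the $\tO(m/\e^3)$ bound there (treating $\e$ as fixed, or carrying $\poly(1/\e)$ through); summing over all iterations keeps the total at $\tO(m)$ (absorbing the $\log^2 n$ into the $\tO$). I expect the main obstacle to be the probabilistic argument — specifically pinning down that $\Pr[E_{ij}]$ is bounded below by a constant uniformly in $k$, and checking that the particular constant is large enough that $\lceil\log_{8/7}n\rceil$ repetitions suffice for a clean union bound to $1-1/n$; the ratio $8/7$ in the algorithm suggests the intended constant lower bound on $\Pr[E_{ij}]$ is around $1/8$ (so that the per-$i$ failure probability is $(7/8)^{\lceil\log_{8/7}n\rceil}\le 1/n$), and I would make the sampling-probability estimates precise enough to hit exactly this bound. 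Everything else (the approximation bookkeeping and the reduction to isolating cuts) is routine given \Cref{thm:isolating}.
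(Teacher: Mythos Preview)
Your proposal is correct and follows essentially the same approach as the paper: fix an optimal Steiner cut, choose $i$ so that the sampling rate $1/2^i$ matches the number of terminals on the small side, lower-bound the probability that exactly one small-side terminal is sampled by a constant, and amplify over the $j$-loop. You are actually more careful than the paper's own proof in two respects: you explicitly require that at least one terminal on the \emph{large} side is also sampled (so that $|T_{ij}|\ge 2$ and the returned isolating cut is a genuine Steiner cut for $T$), and you observe that one must check that \emph{every} cut in every $\cS_{ij}$ over which the minimum is taken is a valid Steiner cut. The paper's proof simply bounds $\Pr[\text{exactly one small-side terminal sampled}]\ge 1/8$ and stops, which technically leaves both points implicit. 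One small thing to watch: with the extra ``other-side'' condition the success constant drops below $1/8$, so $\lceil\log_{8/7}n\rceil$ repetitions no longer give failure $\le 1/n$ on the nose---you would either tighten the per-trial bound or just note that $O(\log n)$ repetitions suffice and absorb the constant into the $\tO$; also your index convention $2^{i-1}<k\le 2^i$ gives $i=0$ when $k=1$, whereas the paper uses $2^{i-1}\le k<2^i$ to keep $i\ge 1$.
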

\begin{proof}
    Fix a minimum Steiner cut for the terminal set $T$ and let $S$ denote the side of this cut such that $|T\cap S| \le |T\setminus S|$. Let $i\in[\lceil\lg |T|\rceil]$ such that $2^{i-1} \le |S\cap T| < 2^i$. Then, $T_{ij}$ contains exactly one vertex in $T\cap S$ with probability
    $$|T\cap S|\cdot \frac{1}{2^i}\cdot \left(1 - \frac{1}{2^i}\right)^{|T\cap S|-1}
    \ge 2^{i-1}\cdot \frac{1}{2^i}\cdot \left(1 - \frac{1}{2^i}\right)^{2^i}
    \ge \frac{1}{2}\cdot \frac{1}{4}
    = \frac{1}{8}.$$
    This implies that the probability that there is no index $j \in [\lceil\log_{8/7} n\rceil]$ such that $T_{ij}$ contains exactly one terminal in $T\cap S$ is at most $1/n$, thereby establishing the correctness of the algorithm.
    
    The running time bound follows from \Cref{thm:isolating}.
\end{proof}

\newcommand{\lar}{\textup{large}}

\section{Approximate Gomory-Hu Tree Algorithm}
\label{sec:ghtree}
The main result in this section is the near-linear time algorithm for computing an approximate Gomory-Hu tree. In fact, our algorithm can solve a more general problem called approximate Gomory-Hu Steiner tree defined below. (The definition is copied verbatim from \cite{LiP21}.)

\begin{definition}[Approximate Gomory-Hu Steiner tree]
Given a graph $G=(V,E)$ and a set of terminals $U\subseteq V$, the $(1+\epsilon)$-approximate Gomory-Hu Steiner tree is a weighted tree $T$ on the vertices $U$, together with a function $f:V\to U$, such that
 \begin{itemize}
 \item For all $s,t\in U$, consider the minimum-weight edge $(u,v)$ on the unique $s-t$ path in $T$. Let $U'$ be the vertices of the connected component of $T-(u,v)$ containing $s$.
Then, the set $f^{-1}(U')\subseteq V$ is a $(1+\epsilon)$-approximate $(s,t)$-mincut, and its value is $w_T(u,v)$.
 \end{itemize}
\end{definition}

Our main result is stated below. Recall that we assume that the ratio between the largest and lowest edge weights are $\poly(n)$.

\begin{restatable}{theorem}{ApproxW}\label{thm:approx-w}
Let $G$ be a weighted, undirected graph, and let $U$ be a subset of vertices. There is a randomized algorithm that w.h.p., outputs a $(1+\epsilon)$-approximate Gomory-Hu Steiner tree in $\tilde{O}(m \cdot \poly(1/\epsilon))$ time.
\end{restatable}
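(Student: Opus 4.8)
The plan is to follow the recursive approximate Gomory--Hu Steiner tree algorithm of Li and Panigrahi~\cite{LiP21} essentially verbatim, replacing its \emph{exact} minimum isolating cuts subroutine by our approximate isolating cuts from \Cref{thm:isolating}. Recall that one level of the \cite{LiP21} recursion fixes a (random) pivot terminal $p$, samples the remaining terminals of $U$ at geometrically decreasing rates, runs isolating cuts in each sampled instance, and uses the cut sides returned to (i) read off the Gomory--Hu tree edges incident to $p$ and to the ``heavy'' pieces, and (ii) recurse on each piece with the complement of that piece contracted to a single vertex. Everything except the isolating-cut computations is purely combinatorial and is unaffected by the substitution: the recursion structure, the $\polylog(n)$ bound on its depth, and the fact that the total size of all subproblems on a single level of the recursion is $\tilde{O}(m)$ (since contracting disjoint pieces does not blow up the total edge count) all carry over unchanged.

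Two things must be re-examined. First, \cite{LiP21} uses that each cut it computes is an exact $(p,v)$-mincut, so that the minimal such sides form a laminar family and intersections with other mincuts remain minimum; the approximate substitute is exactly in the spirit of \Cref{lem:intersect}, that intersecting an approximate minimum isolating cut with a fair cut loses only a $(1+\gamma)$ factor. Second, and more delicately, \cite{LiP21} contracts the complement of each piece into a single vertex before recursing, and its correctness relies on this contraction preserving mincut values inside the recursive subproblem; with a black-box $(1+\gamma)$-approximate isolating cut this preservation can fail by an uncontrolled amount. The fix exploits the extra guarantee of \Cref{thm:isolating} that every returned side $S_t$ is a \emph{$t$-sided $(1+\gamma)$-fair cut}. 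The key lemma to prove is: if $S$ is a $t$-sided $(1+\gamma)$-fair cut in $G$ and $G'$ is obtained from $G$ by contracting $V\setminus S$ to a single vertex, then for every pair $a,b\in S$ we have $\lambda_G(a,b)\le\lambda_{G'}(a,b)\le(1+\gamma)\,\lambda_G(a,b)$. The lower bound is contraction monotonicity; for the upper bound one takes a minimum $(a,b)$-cut $X$ of $G$ and rounds it so that it no longer splits $V\setminus S$, pushing $V\setminus S$ entirely to whichever side of $X$ is cheaper and charging the newly cut edges of $\delta(S)$ against the flow witnessing fairness (which saturates $\delta(S)$ to a $\tfrac{1}{1+\gamma}$ fraction directed out of $S$), exactly in the style of the proof of \Cref{lem:intersect}. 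An analogous statement bounds the distortion when a second fair-cut side is contracted deeper in the recursion.

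Granting this lemma, the overall approximation factor is a multiplicative accumulation of $(1+\gamma)$ factors along a root-to-leaf path of the recursion, together with the $(1+\beta)$ loss from the approximate mincut in \Cref{line:iso:solve mincut} of \Cref{alg:isolating}. Since the recursion depth is $\polylog(n)$, taking the internal fairness parameter $\gamma=\Theta(\e/\polylog(n))$ yields $(1+\gamma)^{\polylog(n)}\le 1+\e$, hence a $(1+\e)$-approximate Gomory--Hu Steiner tree. For the running time, each isolating-cut call at a recursion node costs $\tilde{O}(m'\cdot\poly(1/\gamma))$ work by \Cref{thm:isolating} and \Cref{thm:fair}, where $m'$ is its number of edges; summing over a level (total $\tilde{O}(m)$ edges) and over the $\polylog(n)$ levels, and using $1/\gamma=\polylog(n)/\e$, gives $\tilde{O}(m\cdot\poly(1/\e))$ total. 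The assumption that edge weights have $\poly(n)$ ratio is precisely what makes the $\tilde{O}(\cdot/\e^{3})$ bound of \Cref{thm:fair} applicable.

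I expect the main obstacle to be twofold. First, pinning down the fair-cut contraction lemma with the correct bookkeeping: one must choose, per contracted piece and per pair $(a,b)$, whether to push $V\setminus S$ to the $a$-side or the $b$-side of the min cut, and then verify that the charging to the fairness flow goes through given that the witnessing flow for a \emph{$t$-sided} fair cut is only controlled inside $S$ (it may behave arbitrarily outside $S$), unlike the two-sided flow used in \Cref{lem:intersect}. Second, checking that \cite{LiP21}'s probabilistic correctness --- which, for each terminal, identifies a sampling trial that ``isolates'' its mincut --- remains valid when the isolating cuts are only approximate, i.e., that the approximate minimality and uncrossing properties supplied via \Cref{thm:isolating} and \Cref{lem:intersect} are strong enough to maintain the recursion's invariants at every level.
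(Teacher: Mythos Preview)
Your high-level plan is correct, but the ``key lemma'' you isolate is false as stated, and this is exactly the step where the paper's argument diverges from yours. Consider $S=\{v,a,b\}$ with edges $(v,a)=(v,b)=(a,w_1)=(b,w_2)=1$ and no edge between $w_1,w_2$. Routing one unit $v\to a\to w_1$ and one unit $v\to b\to w_2$ makes $(S,\{w_1,w_2\})$ a $v$-sided $1$-fair cut. Yet $\lambda_G(a,b)=1$ (cut $\{a,w_1\}$), while after contracting $\{w_1,w_2\}$ to a single vertex the $(a,b)$-mincut becomes $2$. So a $t$-sided $(1+\gamma)$-fair cut does \emph{not} guarantee that contracting $V\setminus S$ preserves mincuts inside $S$ up to $(1+\gamma)$; the one-sided fairness flow may dump everything on the boundary with no control over where it goes afterward, which is precisely what your own ``main obstacle'' paragraph worries about.

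The paper handles the two recursive branches with two different arguments. For the ``large'' subproblem $G_{\lar}$ (contracting each small piece $S_v$), the fair-cut property is indeed the workhorse: one uncrosses a $(p,q)$-mincut with every $S_v$, charging newly-cut edges to edges deleted via the fairness flow, and this yields a clean $(1+\gamma)$ factor (\Cref{lem:large}). For each small subproblem $G_v$ (contracting $V\setminus S_v$), fairness is \emph{not} used at all; instead the paper exploits that by design $\delta(S_v)$ is within a $(1+O(\epsilon))$ factor of the global Steiner mincut $\lambda$, so submodularity gives $\delta(S_v\cap S)\le\delta(S)+O(\epsilon)\lambda\le(1+O(\epsilon))\delta(S)$ for any $(p,q)$-mincut $S$ (\Cref{lem:small}). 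Ensuring this proximity to the Steiner mincut is why the algorithm first computes an approximate Steiner mincut, thresholds at $W\approx(1+O(\epsilon))\lambda$, and adds the small auxiliary edges from $s$---ingredients your sketch omits. A secondary consequence is the recursion depth: there are only $O(\log n)$ recursions of the $G_v$ type (each shrinks $|U|$ by a constant factor), but between consecutive ones there can be $O(\epsilon^{-1}\log^6 n)$ recursions of the $G_{\lar}$ type, so one needs $\gamma=\Theta(\epsilon^2/\polylog n)$ rather than $\Theta(\epsilon/\polylog n)$, and a final rescaling $\epsilon\gets\Theta(\epsilon/\log n)$ to make $(1+O(\epsilon))^{O(\log n)}\le 1+\epsilon$.
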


The algorithm and analysis are similar to those in~\cite{LiP21}, except we replace (exact) minimum isolating cuts with an approximate version, which requires overcoming a few more technical issues. For completeness, we redo all the proofs. We also restate \Cref{thm:isolating} below in the form we precisely need.

\begin{theorem}
\label{thm:isolating2}
    Fix any $\epsilon < 1$. Given an undirected graph $G = (V, E)$ on $m$ edges and $n$ vertices with non-negative edge weights and a set of terminals $T\subseteq V$, there is an algorithm that outputs a $(1+\epsilon)$-approximate minimum isolating cut $S_t\subseteq V$ for every terminal $t\in T$ in $\tilde{O}(m/\epsilon^{O(1)})$ time. Moreover, the sets $S_t$ are disjoint, and for each $t\in T$, the set $S_t$ is a $t$-sided $(1+\gamma)$-fair $(\{t\},T\setminus\{t\})$-cut.
\end{theorem}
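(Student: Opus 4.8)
The plan is to obtain \Cref{thm:isolating2} as essentially an immediate consequence of \Cref{thm:isolating}: the output of \Cref{alg:isolating} already satisfies every required guarantee once we implement all its cut subroutines via \Cref{thm:fair}, and the only thing left to observe is that the $t$-sided fair cut it returns is in fact a $t$-sided fair $(\{t\},T\setminus\{t\})$-cut.

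First I would instantiate \Cref{alg:isolating} with the parameters $\gamma=\frac{\epsilon}{4\lceil\lg|T|\rceil}$ and $\beta=\frac{\epsilon}{4}$ from \Cref{thm:isolating}, using \Cref{thm:fair} for every fair-cut call in Phase~1 and, for the $(1+\beta)$-approximate mincut in Step~\ref{line:iso:solve mincut}, using \Cref{thm:fair} to compute a $(1+\gamma)$-fair cut in $G_t$ (this is legitimate since $\gamma\le\beta$, so a $(1+\gamma)$-fair cut is in particular a $(1+\beta)$-approximate mincut). \Cref{thm:isolating} then already gives that the returned cuts $S_t$ are $(1+\epsilon)$-approximate minimum isolating cuts, that they are pairwise disjoint, and that each $(S_t,V\setminus S_t)$ is a $t$-sided $(1+\gamma)$-fair cut. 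The runtime is $O(m)$ plus the cost of the fair-cut calls, which by \Cref{thm:fair} is $\tilde O(m/\gamma^{3})=\tilde O(m\cdot\poly(1/\epsilon))$ per call, over $O(\log|T|)$ calls in Phase~1 together with the Phase~2 calls on contracted graphs of total size $O(m)$ edges; this yields the claimed $\tilde O(m/\epsilon^{O(1)})$ bound.

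The remaining step is to recognize the $t$-sided fair cut $S_t$ as a $t$-sided $(1+\gamma)$-fair $(\{t\},T\setminus\{t\})$-cut, which is a matter of unpacking definitions. Since $S_t$ is an isolating cut for $t$ we have $T\cap S_t=\{t\}$, so $t\in S_t$ and $T\setminus\{t\}\subseteq V\setminus S_t$; together with the one-sided fairness witness already provided (a feasible flow whose net flow vanishes on $S_t\setminus\{t\}$ and that saturates every edge of $E(S_t,V\setminus S_t)$ to a $\frac{1}{1+\gamma}$ fraction in the direction out of $S_t$), this is precisely the definition of a $t$-sided $(1+\gamma)$-fair cut that separates $\{t\}$ from $T\setminus\{t\}$. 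I do not anticipate any genuine obstacle: all the work is already done in \Cref{thm:isolating} and \Cref{thm:fair}, and this restatement only requires relabelling the output cuts and checking that the isolating property supplies the terminal-separation half of the $(\{t\},T\setminus\{t\})$ annotation.
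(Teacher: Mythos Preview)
Your proposal is correct and mirrors the paper's approach: \Cref{thm:isolating2} is explicitly introduced as a restatement of \Cref{thm:isolating} ``in the form we precisely need,'' so there is no separate proof in the paper beyond what is already established in \Cref{sec:isolating}. Your observation that the isolating property $T\cap S_t=\{t\}$ supplies the $(\{t\},T\setminus\{t\})$ separation, together with the one-sided fairness witness coming from the $(1+\gamma)$-fair cut computed in $G_t$, is exactly the content of the restatement.
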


\subsection{Cut Threshold Step Algorithm}

We begin with the following ``Cut Threshold Step'' subroutine from \cite{LiP21}, described in Algorithm~\ref{alg:step} below. Loosely speaking, the algorithm inputs a source vertex $s$ and a threshold $W$, and aims to find a large fraction of vertices whose mincut from $s$ is approximately at most $W$.

\begin{algorithm}
    \caption{$(1+\gamma)$-approximate ``Cut Threshold Step'' on inputs $(G,U,W,s)$}
    \label{alg:step}
    \begin{algorithmic}[1]
        \STATE {Initialize $D\gets\emptyset$}
        \FOR {independent iteration $i\in\{0,1,2,\ldots,\lfloor\lg|U|\rfloor\}$}
            \STATE {$R^{i}\gets$ sample of $U$ where each vertex in $U\setminus \{s\}$ is sampled independently with probability $1/2^i$, and $s$ is sampled with probability $1$}            \STATE {Compute $(1+\frac\gamma{2\lceil\lg|U|\rceil})$-approximate minimum isolating cuts $\{S^i_v:v\in R^i\}$ on inputs $G$ and $R^i$ with the additional guarantees of \Cref{thm:isolating2}} (for large enough constant $c>0$)
                        \STATE {Let $\mathcal F^i$ be the family of sets $S^i_v$ satisfying $\delta S^i_v\le(1+\gamma)W$, and let $D^i\gets\bigcup_{S_v^i\in\mathcal F^i}S_v^i\cap U$}\label{line:add-to-Di}
                        \STATE {Let $\widetilde R^i\subseteq R^i$ be the set of all $v\in R^i$ satisfying $\delta S^i_v\le(1+\gamma)W$}
        \ENDFOR
        \STATE{Let $i_{\max}$ be the index $i$ maximizing $|D^i|$}
        \STATE{Return $D\gets D^{i_{\max}}$, $R\gets\widetilde R^{i_{\max}}$, and $\mathcal F\gets\mathcal F^{i_{\max}}$}
    \end{algorithmic}
\end{algorithm}

\begin{lemma}\label{lem:add}
For any $i$, each set $S^i_v$ added to $D^i$ satisfies $\lambda(s,v)\le(1+\gamma)W$.
\end{lemma}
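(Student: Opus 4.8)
The statement is an essentially immediate consequence of the defining property of isolating cuts together with the thresholding rule on line~\ref{line:add-to-Di}, so the proof will be very short; the only point requiring care is that the source $s$ is always sampled into $R^i$, which is exactly what places $s$ on the far side of every isolating cut $S^i_v$ with $v\neq s$.

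\textbf{Step 1 (an isolating cut for $v$ is an $(s,v)$-cut).} First I would recall that $S^i_v$ is produced by the approximate isolating cuts routine of \Cref{thm:isolating2} run on the terminal set $R^i$, so it satisfies $R^i\cap S^i_v=\{v\}$. By the construction of $R^i$ in Algorithm~\ref{alg:step}, the source $s$ is included in $R^i$ with probability $1$. Hence, assuming $v\neq s$ (the case $v=s$ being vacuous, as $\lambda(s,s)$ is not defined), we have $s\in R^i\setminus\{v\}$, and therefore $s\notin S^i_v$. Thus $(S^i_v,V\setminus S^i_v)$ is an $(s,v)$-cut with $v\in S^i_v$ and $s\notin S^i_v$, which gives $\lambda(s,v)\le\delta(S^i_v)$.

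\textbf{Step 2 (the thresholding rule).} Since $S^i_v$ is added to $D^i$, it lies in $\mathcal F^i$, which by line~\ref{line:add-to-Di} consists precisely of the sets $S^i_v$ with $\delta S^i_v\le(1+\gamma)W$. Chaining this with the bound from Step 1 yields $\lambda(s,v)\le\delta(S^i_v)\le(1+\gamma)W$, as claimed. I do not anticipate any real obstacle: the argument uses only that the isolating-cut routine guarantees $R^i\cap S^i_v=\{v\}$ and that $s$ is always sampled, so no $(1+\epsilon)$-type approximation slack from the isolating cuts subroutine even enters here — the bound is a direct cut-value certificate.
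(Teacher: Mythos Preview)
Your proposal is correct and follows essentially the same approach as the paper: observe that $S^i_v$ contains $v$ but not $s$ (since $s\in R^i$ is a terminal distinct from $v$), so $S^i_v$ is an $(s,v)$-cut whose value, by the threshold on line~\ref{line:add-to-Di}, is at most $(1+\gamma)W$. Your version is a bit more explicit about why $s\notin S^i_v$ and about the degenerate case $v=s$, but the argument is identical to the paper's.
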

\begin{proof}
For each $v\in D^i$, the corresponding set $S^i_v$ on line~\ref{line:add-to-Di} contains $v$ and not $s$, so $\lambda(s,v)\le\delta S^i_v\le(1+\gamma)W$.
\end{proof}

\begin{lemma}\label{lem:step}
Let $D^*$ be all vertices $v\in U\setminus s$ for which
there exists an $(s,v)$-cut in $G$ of weight at most $W$ whose side containing $v$ has at most $|U|/2$ vertices in $U$. Then,
$\mathbb E[|D|]=\Omega(|D^*|/\log|U|)$.
\end{lemma}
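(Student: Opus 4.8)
\textbf{Proof plan for Lemma~\ref{lem:step}.} The plan is to show that for the ``right'' sampling rate $i$, a constant fraction of the vertices in $D^*$ end up in $D^i$, and hence in $D$ (up to the $O(\log|U|)$ loss from taking the best of the $\lfloor\lg|U|\rfloor+1$ iterations). Fix $v\in D^*$ and let $W_v$ be the side of its guaranteed cut containing $v$, so $|W_v\cap U|\le|U|/2$ and $\delta(W_v)\le W$. Let $i$ be the scale with $2^{i-1}\le |W_v\cap U| < 2^i$ (or $i=0$ if $|W_v\cap U|=1$); I would argue that for this $i$, with constant probability over the draw of $R^i$, the sample $R^i$ contains $v$ and contains no other vertex of $W_v\cap U$. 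The first event has probability $1/2^i\ge 1/(2|W_v\cap U|)$ conditioned on nothing (and $v\ne s$ since $v\in D^*\subseteq U\setminus s$), and the second, conditioned on $v\in R^i$, has probability $(1-1/2^i)^{|W_v\cap U|-1}\ge (1-1/2^i)^{2^i}\ge 1/4$ using the standard bound; also $s\notin W_v$ since $\delta(W_v)\le W$ separates $s$ from $v$, so $s$ being sampled with probability $1$ is harmless. Hence $v$ is ``isolated'' in $R^i$ with probability $\Omega(1/|W_v\cap U|)\cdot\Omega(1)$, but I actually want constant probability, so I should instead condition more carefully: the probability that $R^i\cap W_v\cap U=\{v\}$ is exactly $\frac1{2^i}(1-\frac1{2^i})^{|W_v\cap U|-1}\ge \frac1{2\cdot 2^{i-1}}\cdot\frac14\ge\Omega(1/2^i)$ — wait, that still carries a $1/2^i$. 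Let me reconsider: with $2^{i-1}\le |W_v\cap U|$, we get $\frac1{2^i}\ge\frac1{2|W_v\cap U|}$, so this is $\Omega(1/|W_v\cap U|)$, not constant.

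So the correct accounting is a two-level expectation / linearity-of-expectation argument rather than a pointwise constant-probability claim. I would proceed as follows. For a fixed scale $i$, define $D^*_i=\{v\in D^*: 2^{i-1}\le|W_v\cap U|<2^i\}$ (with the $i=0$ bucket absorbing singletons); the $D^*_i$ partition $D^*$ over $i\in\{0,\dots,\lfloor\lg|U|\rfloor\}$. For $v\in D^*_i$, conditioned on the event $A_v$ that $R^i\cap (W_v\cap U)=\{v\}$, the ball $W_v$ is a valid isolating side for $v$ with respect to the terminal set $R^i$ — no other terminal lies in $W_v$ — so the true minimum isolating cut for $v$ in $G$ with terminals $R^i$ has value at most $\delta(W_v)\le W$, and therefore the $(1+\frac\gamma{2\lceil\lg|U|\rceil})$-approximate isolating cut $S^i_v$ returned by Algorithm~\ref{alg:isolating} (invoked through Theorem~\ref{thm:isolating2}) has $\delta(S^i_v)\le (1+\frac\gamma{2\lceil\lg|U|\rceil})W\le(1+\gamma)W$, so $S^i_v\in\mathcal F^i$ and $v\in S^i_v\cap U\subseteq D^i$. (Here I use that $v\in R^i$ and $v\ne s$ imply $v\in S^i_v$ since $S^i_v$ isolates $v$ from the remaining terminals, which include $s$.) Thus $\Pr[v\in D^i]\ge\Pr[A_v]=\frac1{2^i}\bigl(1-\frac1{2^i}\bigr)^{|W_v\cap U|-1}\ge\frac1{2^i}\cdot\frac14\ge\frac1{4\cdot 2^i}$, and since $|W_v\cap U|\ge 2^{i-1}$... hmm, that lower bound on $|W_v\cap U|$ doesn't directly help bound $\Pr[A_v]$ from below in terms of $|D^*_i|$. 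Let me instead just write $\Pr[v\in D^i]\ge \frac1{4\cdot 2^i}$ and take $\mathbb E[|D^i|]\ge\sum_{v\in D^*_i}\frac1{4\cdot 2^i}=\frac{|D^*_i|}{4\cdot 2^i}$; this is weak unless $|D^*_i|$ is itself large compared to $2^i$. The resolution is that I don't need it per bucket — I should pick the single bucket $i^\star$ maximizing $|D^*_i|$, so $|D^*_{i^\star}|\ge |D^*|/(\lfloor\lg|U|\rfloor+1)$, and for vertices in $W_v\cap U\le 2^{i^\star}\le 2|D^*_{i^\star}|$... still stuck with the same $1/2^i$ factor.

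The actual fix, which I now recall is the standard one, is that $2^{i}\le 2|W_v\cap U|\le 2|U|$ is too crude; instead note $|W_v\cap U|< 2^i$ gives $2^i> |W_v\cap U|\ge |D^*_i\cap\{v\}|$-nonsense. OK — the genuinely correct statement is simpler than I am making it: we want $\mathbb E[|D|]=\Omega(|D^*|/\log|U|)$, and $|D|=\max_i|D^i|\ge |D^{i^\star}|$, and it suffices to show $\mathbb E[|D^{i^\star}|]=\Omega(|D^*_{i^\star}|)$ where $i^\star$ maximizes $|D^*_i|$, because $|D^*_{i^\star}|\ge|D^*|/(\lfloor\lg|U|\rfloor+1)$. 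And $\mathbb E[|D^{i^\star}|]\ge\sum_{v\in D^*_{i^\star}}\Pr[A_v]$. For $v\in D^*_{i^\star}$ we have $2^{i^\star-1}\le|W_v\cap U|<2^{i^\star}$, so $\frac1{2^{i^\star}}>\frac1{2|W_v\cap U|}\ge\frac1{2\cdot 2^{i^\star}}$, i.e. $\Pr[A_v]\ge\frac1{2^{i^\star}}\cdot\frac14=\Theta(1/2^{i^\star})$ and also $\frac1{2^{i^\star}}\ge\frac1{2|W_v\cap U|}$, hence $\Pr[A_v]=\Theta(1/|W_v\cap U|)$ — but in the bucket all $|W_v\cap U|$ are within a factor $2$ of $2^{i^\star}$, so $\Pr[A_v]=\Theta(1/2^{i^\star})$ uniformly, giving $\mathbb E[|D^{i^\star}|]=\Omega(|D^*_{i^\star}|/2^{i^\star})$ — \emph{still} off by $2^{i^\star}$. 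I think the honest conclusion is that the lemma as I'm reading it must be using a different/more global argument (perhaps the sampling in line~3 with \emph{all} scales $i$ simultaneously considered, plus the fact that each $D^i$ is a union over \emph{many} isolated terminals, each contributing its whole ball intersected with $U$, not just itself) — in particular $D^i$ includes $S^i_v\cap U\supseteq$ possibly many vertices of $D^*$, not just $v$. So the right plan is: charge each $w\in D^*$ to the event that \emph{some} sampled $v$ has $w\in S^i_v$; more cleanly, fix $i=i^\star$ and observe that for every $v\in D^*_{i^\star}$, conditioned on $A_v$, the entire set $W_v\cap U$ is swallowed into $D^{i^\star}$ (since $S^i_v$ is a \emph{minimum}-ish isolating cut but we only control $\delta(S^i_v)$, not $S^i_v\supseteq W_v$, so this needs care) — the clean version instead charges $v$ itself and uses that the events $A_v$ for distinct $v$ in the same bucket are on \emph{disjoint} vertex sets hence negatively correlated / handled by linearity, giving $\mathbb E[|D^{i^\star}|]\ge\sum_{v\in D^*_{i^\star}}\Pr[A_v]$, and then the missing $1/2^{i^\star}$ is absorbed because — and this is the key point I was missing — $|D^*_{i^\star}|\le |U|$ and $2^{i^\star}$ could be as small as $O(1)$; but when $2^{i^\star}$ is large, $|W_v\cap U|$ is large, meaning each $v\in D^*_{i^\star}$ has many vertices near it, and those ball-vertices also lie in $D^{i^\star}$. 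So the \textbf{main obstacle} is exactly this: proving that $D^{i^\star}$ contains $\Omega(|D^*_{i^\star}|)$ vertices in expectation requires exploiting that each successful isolated terminal $v$ contributes $\Omega(2^{i^\star})$ vertices to $D^{i^\star}$ (roughly the size of $W_v\cap U$), which in turn needs an argument that $S^i_v$, despite only being cut-value-approximate, still captures a constant fraction of $W_v\cap U$ — and the $t$-sided fairness guarantee of Theorem~\ref{thm:isolating2} is presumably what makes this work, via a submodularity/uncrossing argument between $S^i_v$ and $W_v$ analogous to Lemma~\ref{lem:intersect}. I would structure the proof around: (1) bucket $D^*$ by scale, pick the heaviest bucket; (2) for each $v$ in it, lower-bound $\Pr[A_v]$ by a constant times $1/2^{i^\star}$; (3) conditioned on $A_v$, use the fair-cut uncrossing (Lemma~\ref{lem:intersect}-style) to show $|S^i_v\cap U|\ge \Omega(|W_v\cap U|)=\Omega(2^{i^\star})$ and that these are in $D^{i^\star}$; (4) combine via linearity of expectation over $v\in D^*_{i^\star}$, noting the contributed vertex sets for distinct $v$ (being subsets of disjoint $W_v$'s... or handled by the disjointness of the returned isolating cuts) don't overcount by more than a constant, to get $\mathbb E[|D^{i^\star}|]\ge\Omega(|D^*_{i^\star}|)\ge\Omega(|D^*|/\log|U|)$.
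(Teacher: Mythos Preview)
You have correctly identified both the overall framework and the crux: the isolation event at the right scale has probability $\Theta(1/2^i)$, so to get $\Omega(1)$ expected contribution per $v\in D^*$ you must argue that, conditioned on isolation, $|S^i_v\cap U|=\Omega(2^i)$. Your proposed resolution of this step, however, does not work. Fair-cut uncrossing (in the style of \Cref{lem:intersect}) controls cut \emph{values}, not the number of $U$-vertices on a side; there is no reason an approximate minimum isolating cut $S^i_v$ should contain a constant fraction of $W_v\cap U$, and the $v$-sided fairness of $S^i_v$ says nothing about this either. So step~(3) of your plan has a genuine gap.

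The paper closes this gap with a different, purely combinatorial device that does not use fairness at all. For each $v\in D^*$ it considers a \emph{sequence} of cuts $C_v^0,C_v^1,\ldots,C_v^{\lceil\lg|U|\rceil}$, where $C_v^j$ is defined as the $(s,v)$-cut of weight at most $(1+\tfrac{\gamma}{2\lceil\lg|U|\rceil})^jW$ that \emph{minimizes} $|C_v^j\cap U|$. Since $|C_v^j\cap U|$ is nonincreasing in $j$ and ranges over integers in $[1,|U|]$, by pigeonhole there is some $j^*$ with $|C_v^{j^*+1}\cap U|\ge\frac12|C_v^{j^*}\cap U|$. One now runs the isolation argument at scale $i=\lfloor\lg|C_v^{j^*}\cap U|\rfloor$ using $C_v^{j^*}$ (not your fixed $W_v$): with probability $\Omega(1/|C_v^{j^*}\cap U|)$ we get $C_v^{j^*}\cap R^i=\{v\}$, and then $\delta S^i_v\le(1+\tfrac{\gamma}{2\lceil\lg|U|\rceil})\delta C_v^{j^*}\le(1+\tfrac{\gamma}{2\lceil\lg|U|\rceil})^{j^*+1}W$. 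The crucial point is that $S^i_v$ is itself an $(s,v)$-cut meeting the weight threshold that defines $C_v^{j^*+1}$, so by \emph{minimality} of $C_v^{j^*+1}$ we get $|S^i_v\cap U|\ge|C_v^{j^*+1}\cap U|\ge\frac12|C_v^{j^*}\cap U|$. Thus the expected contribution of $v$ to $\sum_i|D^i|$ is $\Omega(1)$ (the disjointness of the returned isolating cuts within a fixed $i$ handles the accounting), and linearity over $v\in D^*$ gives $\mathbb E[\sum_i|D^i|]=\Omega(|D^*|)$, hence $\mathbb E[|D|]=\Omega(|D^*|/\log|U|)$. The missing idea in your plan is precisely this ``minimizer sequence plus pigeonhole'' trick that converts the approximation slack in cut value into a lower bound on vertex count.
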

\begin{proof}
We will show that
\BALN \mathbb E\left[\sum_{i=0}^{\lfloor\lg|U|\rfloor}|D^i|\right]\ge\Omega(|D^*|) \label{eq:expected} ,\EALN
 which is sufficient, since the largest $D^i$ will have at least $1/(\lfloor\lg|U|\rfloor+1)$ fraction of the total size. Fix a vertex $v\in D^*$. 
For each $0\le j\le\lceil\lg|U|\rceil$, define $C_v^j\subseteq V$ as the $(s,v)$-cut of weight at most $(1+\frac\gamma{2\lceil\lg|U|\rceil})^jW$ that minimizes $|C_v^j\cap U|$, which must exist since $v\in D^*$. By construction, $|C^j_v\cap U|$ is decreasing in $j$.

We focus on a value $j^*$ ($0\le j^*<\lceil\lg|U|\rceil$) satisfying $|C_v^{j^*+1}\cap U|\ge|C_v^{j^*}\cap U|/2$, which is guaranteed to exist. Consider sampling iteration $i=\lfloor\lg|C^{j^*}_v\cap U|\rfloor$, where each vertex in $U\setminus\{s\}$ is sampled with probability $1/2^i$. With probability $\Omega(1/|C_v^{j^*}\cap U|)$, we have $C_v^{j^*}\cap R^i=\{v\}$, i.e., we sampled $v$ and nothing else in $C_v^{j^*}\cap U$. If this occurs, then $C_v^{j^*}$ is a valid isolating cut separating $v$ from $R^i\setminus\{v\}$. Since $S^i_v$ is a $(1+\frac\gamma{2\lceil\lg|U|\rceil})$-approximate minimum isolating cut, we have
\[ \delta S^i_v \le \left(1+\frac\gamma{2\lceil\lg|U|\rceil}\right)\delta C_v^{j^*}\le\left(1+\frac\gamma{2\lceil\lg|U|\rceil}\right)^{j^*+1}W \le\left(1+\frac\gamma{2\lceil\lg|U|\rceil}\right)^{\lceil\lg|U|\rceil}W \le e^{\gamma/2}W \le (1+\gamma)W ,\]
so $S^i_v\cap U$ is added to $D^i$ on line~\ref{line:add-to-Di}.
By definition of $C_v^{j^*+1}$, we have $|S^i_v\cap U|\ge|C_v^{j^*+1}\cap U|$, which is at least $|C_v^{j^*}\cap U|/2$ by our choice of $j^*$. In other words, if $C_v^{j^*}\cap R^i=\{v\}$, which occurs with probability $\Omega(1/|C_v^{j^*}\cap U|)$, then $v$ is ``responsible'' for adding at least $|C_v^{j^*}\cap U|/2$ vertices to $D^i$.

Thus, each vertex $v\in D^*$ is responsible for adding $\Omega(1)$ vertices in expectation to some $D^i$, which increases $\mathbb E\left[\sum_{i=0}^{\lfloor\lg|U|\rfloor}|D^i|\right]$ by $\Omega(1)$ in expectation. Finally, (\ref{eq:expected}) follows by linearity of expectation over all $v\in D^*$.
\end{proof}

For our approximate Gomory-Hu tree algorithm, we actually need a bound on $\mathbb E[|D\cap D^*|]$, not $\mathbb E[|D|]$, since we want to remove $D$ from $U$ and claim that the size of the new $D^*$ drops by a large enough factor. Unfortunately, it is possible that $D$ is largely disjoint from $D^*$, so a bound on $\mathbb E[|D|]$ does not directly translate to a bound on $\mathbb E[|D\cap D^*|]$. Therefore, we wrap Algorithm~\ref{alg:step} into another routine that achieves a good bound on $\mathbb E[|D\cap D^*|]$. We actually prove the stronger guarantee that $D^*$ can be any \emph{subset} of all vertices $v\in U\setminus s$ for which $\lambda(s,v)\le W$, which is needed in our Gomory-Hu tree algorithm.

\begin{algorithm}
    \caption{$(1+\gamma)$-approximate Gomory-Hu Steiner tree ``step'' on inputs $(G,U_0,W_0,s)$}
    \label{alg:step2}
    \begin{algorithmic}
      \STATE {Initialize $U\gets U_0$}
      \FOR {$O(\log^3n)$ sequential iterations}
        \FOR {independent iteration $j\in\{0,1,2,\ldots,\lceil\lg|U|\rceil-1\}$}
            \STATE Call Algorithm~\ref{alg:step} on parameter $\frac\gamma{2\lceil\lg|U|\rceil}$ and inputs $(G,U,(1+\frac\gamma{2\lceil\lg|U|\rceil})^jW_0,s)$ and let $(D_j,R_j,\mathcal F_j)$ be the output
        \ENDFOR
        \STATE Update $U\gets U\setminus \bigcup_jD_j$ for the values $D_j$ computed on this sequential iteration
      \ENDFOR
      \STATE{Return an output $(D,R,\mathcal F)$ selected uniformly at random out of the  $O(\log^3n\log|U|)$ calls to Algorithm~\ref{alg:step}.}
    \end{algorithmic}
\end{algorithm}

\begin{lemma}\label{lem:add2}
Each set $S\in\mathcal F$ in the output $(D,R,\mathcal F)$ of Algorithm~\ref{alg:step2} satisfies $\delta S\le(1+\gamma)W_0$.
\end{lemma}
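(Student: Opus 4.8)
The plan is to unwind the definitions in Algorithms~\ref{alg:step} and~\ref{alg:step2} and simply track the multiplicative blow-up in the cut threshold. First I would observe that the output $(D,R,\mathcal F)$ of Algorithm~\ref{alg:step2} coincides, by construction, with the output $(D_j,R_j,\mathcal F_j)$ of one particular call to Algorithm~\ref{alg:step}. That call happens during some sequential iteration of Algorithm~\ref{alg:step2} at which the current terminal set is some $U\subseteq U_0$, for some inner index $j\in\{0,1,\ldots,\lceil\lg|U|\rceil-1\}$, and it is made on parameter $\gamma':=\frac{\gamma}{2\lceil\lg|U|\rceil}$ and on threshold $W:=(1+\gamma')^jW_0$.

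Next I would apply the definition of $\mathcal F$ inside Algorithm~\ref{alg:step}: the family returned is one of the $\mathcal F^i$, and by line~\ref{line:add-to-Di} (read with Algorithm~\ref{alg:step}'s internal parameters, i.e.\ with its ``$\gamma$'' equal to $\gamma'$ and its ``$W$'' equal to $(1+\gamma')^jW_0$) every $S\in\mathcal F^i$ satisfies $\delta S\le(1+\gamma')W=(1+\gamma')^{j+1}W_0$. Since $j+1\le\lceil\lg|U|\rceil$ and every factor $(1+\gamma')$ is at least $1$, this yields $\delta S\le(1+\gamma')^{\lceil\lg|U|\rceil}W_0$. It then remains only to bound the geometric factor, using $(1+x/k)^k\le e^{x}$ with $x=\gamma/2$ and $k=\lceil\lg|U|\rceil$ together with $e^{\gamma/2}\le1+\gamma$ for $\gamma\le1$:
\[
\left(1+\tfrac{\gamma}{2\lceil\lg|U|\rceil}\right)^{\lceil\lg|U|\rceil}\le e^{\gamma/2}\le1+\gamma ,
\]
so $\delta S\le(1+\gamma)W_0$, which is the claim. (The inequality $e^{\gamma/2}\le1+\gamma$ holds because $1+\gamma-e^{\gamma/2}$ vanishes at $\gamma=0$ and is increasing on $[0,2\ln2]$, hence nonnegative for $\gamma\le1$.)

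I do not expect any genuine obstacle here; this is a bookkeeping argument. The single point that needs care is that $|U|$ decreases across the sequential iterations of Algorithm~\ref{alg:step2}, so the parameter $\gamma'=\frac{\gamma}{2\lceil\lg|U|\rceil}$ fed to Algorithm~\ref{alg:step} and the exponent $\lceil\lg|U|\rceil$ appearing in the bound must refer to the \emph{same} (current) value of $U$. Since $U\subseteq U_0$ throughout and the identity $\gamma'\cdot\lceil\lg|U|\rceil=\gamma/2$ is exactly what makes the final bound collapse to $1+\gamma$ regardless of which value of $|U|$ was in effect, the argument goes through uniformly over all the calls to Algorithm~\ref{alg:step} that Algorithm~\ref{alg:step2} may output.
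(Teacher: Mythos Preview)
Your proposal is correct and follows essentially the same route as the paper: read off $\delta S\le(1+\gamma')W$ from the definition of $\mathcal F^i$ in line~\ref{line:add-to-Di}, plug in $W=(1+\gamma')^jW_0$, and collapse $(1+\gamma')^{j+1}\le(1+\gamma')^{\lceil\lg|U|\rceil}\le e^{\gamma/2}\le1+\gamma$. Your extra remark that $|U|$ may shrink across sequential iterations but the identity $\gamma'\cdot\lceil\lg|U|\rceil=\gamma/2$ makes the bound uniform is a valid clarification that the paper leaves implicit.
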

\begin{proof}
By \Cref{lem:add} applied to any $j\in\{0,1,2,\ldots,\lceil\lg|U|\rceil-1\}$, each set $S\in\mathcal F_j$ satisfies 
\[ \delta S\le\left(1+\frac\gamma{2\lceil\lg|U|\rceil}\right)\cdot\left(1+\frac\gamma{2\lceil\lg|U|\rceil}\right)^jW_0\le\left(1+\frac\gamma{2\lceil\lg|U|\rceil}\right)^{\lceil\lg|U|\rceil}W_0 \le e^{\gamma/2}W_0\le(1+\gamma)W_0.\] So the same holds for the randomly chosen output $(D,R,\mathcal F)$.
\end{proof}

\begin{lemma}\label{lem:step2}
Let $D^*$ be an arbitrary set of vertices $v\in U\setminus s$ satisfying $\lambda(s,v)\le W_0$. The output $(D,R,\mathcal F)$ satisfies $\mathbb E[D\cap D^*]\ge\Omega(|D^*|/\log^4n)$.
\end{lemma}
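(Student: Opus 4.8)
The plan is to reduce the statement to a claim about how much of $D^*$ gets deleted from $U$ over the whole execution of \Cref{alg:step2}, and then establish that claim by a potential argument that extends the ``responsibility'' reasoning of \Cref{lem:step} using the laminar structure of minimal mincuts.

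\emph{Reduction.} The returned triple $(D,R,\mathcal F)$ is uniform among the $N = O(\log^3 n)\cdot\lceil\lg|U_0|\rceil = O(\log^4 n)$ calls to \Cref{alg:step} made during \Cref{alg:step2}, where $U_0$ is the initial terminal set, so $\mathbb E[|D\cap D^*|] = \frac1N\sum_{c}\mathbb E[|D_c\cap D^*|]$, the sum ranging over these calls. Since every vertex that enters some $D_c$ is deleted from $U$ and never re-added, each deleted vertex of $D^*$ is counted at least once, hence $\sum_c|D_c\cap D^*|\ge|D^*| - |D^*\cap U_{\mathrm{fin}}|$, where $U_{\mathrm{fin}}$ is the final value of $U$. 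As $N = \Theta(\log^4 n)$, it therefore suffices to prove that in expectation a constant fraction of $D^*$ is deleted, i.e.\ $\mathbb E[|D^*\cap U_{\mathrm{fin}}|]\le(1-\Omega(1))\,|D^*|$.

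\emph{Structure.} For $v\in D^*$ let $C_v$ be the \emph{minimal} $(s,v)$-mincut, i.e.\ the intersection of all $(s,v)$-mincuts (itself an $(s,v)$-mincut by submodularity). Since $\lambda(s,v)\le W_0$, the weight of $C_v$ is at most $W_0$, and a routine uncrossing argument shows $w\in C_v\Rightarrow C_w\subseteq C_v$, so $\{C_v:v\in D^*\}$ is laminar and every $|C_v\cap U|$ is non-increasing under deletions from $U$. I would track a potential of the form $\Phi(U)=\sum_{v\in D^*\cap U}\bigl(1+\lceil\lg(|C_v\cap U|+1)\rceil\bigr)$, which obeys $|D^*\cap U|\le\Phi(U)\le O(\log n)\cdot|D^*\cap U|$, and show that one sequential iteration of \Cref{alg:step2} shrinks $\mathbb E[\Phi]$ by a $\bigl(1-\Omega(1/\log^{2}n)\bigr)$ factor; iterating this over the $O(\log^3 n)$ sequential rounds gives $\mathbb E[\Phi(U_{\mathrm{fin}})]\le|D^*|/\mathrm{poly}(n)$, hence $\mathbb E[|D^*\cap U_{\mathrm{fin}}|]\le|D^*|/\mathrm{poly}(n)$, which suffices for the reduction.

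\emph{Per-iteration drop and the main obstacle.} Fix the iteration with current set $U$ and rerun the analysis in the proof of \Cref{lem:step}, but with the \emph{exact} cut $C_v$ playing the role of the approximate $C_v^{j^\ast}$ there --- legitimate precisely because $\mathrm{weight}(C_v)\le W_0$, so $C_v$ is a valid candidate. Concretely, in the threshold-$W_0$ call ($j=0$ in \Cref{alg:step2}) and sampling iteration $i_v:=\lfloor\lg|C_v\cap U|\rfloor$, the event $C_v\cap R^{i_v}=\{v\}$ has probability $\Omega(1/|C_v\cap U|)$, and when it occurs $C_v$ is a valid isolating cut for $v$ with respect to $R^{i_v}$, so the computed approximate isolating cut $S^{i_v}_v$ has weight $(1+o(1))\lambda(s,v)\le(1+\gamma)W_0$, is placed in $\mathcal F^{i_v}$, and contributes $v$ to $D^{i_v}$ together with $|S^{i_v}_v\cap U|$ vertices of $U$. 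Two issues must be overcome, and I expect their combination to be the technical heart: (i) \Cref{alg:step} keeps only the \emph{largest} $D^i$ of a call, and (ii) the per-$v$ capture probabilities $1/|C_v\cap U|$ can individually be tiny, so capturing $v$ directly cannot by itself yield a multiplicative decrease of $\Phi$. I would handle (i) as in \Cref{lem:step}: for distinct responsible vertices $w$ with a common sampling index, the events ``$w$ is responsible'' are mutually exclusive whenever the cuts $C_w$ are nested (the outer event excludes the inner center from the sample, and conversely), so the vertex counts added to a common $D^i$ add up and survive the $i_{\max}$-selection up to a $1/\lg|U|$ factor. For (ii) I would combine this with laminar monotonicity: when $v$ is responsible, either $v$ itself is deleted --- draining its contribution to $\Phi$ and shrinking $|C_w\cap U|$ for all $D^*$-ancestors $w$ of $v$ --- or the bulk set actually removed from $U$ has size $\Omega(|C_v\cap U|)$, which, by accounting how this bulk meets the laminar family, must drive down the logarithmic terms of the ancestors of $v$; summing the $\Omega(1/|C_v\cap U|)$ probabilities against these guaranteed decreases via a charging argument over the laminar forest of $\{C_v\}$ yields the claimed $\Omega(\Phi(U)/\log^2 n)$ expected decrease. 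The main obstacle is making this last step precise: quantifying how the bulk deletions sit inside the laminar family $\{C_v\}$ and interact with the ``keep largest $D^i$'' rule, so that tiny individual capture probabilities aggregate into a genuine polylog-factor decrease of $\Phi$.
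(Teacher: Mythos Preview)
Your reduction step is correct and matches the paper: since the returned $(D,R,\mathcal F)$ is uniform over the $N=\Theta(\log^4n)$ calls and every vertex removed from $U$ lies in some $D_c$, it suffices to show that all (or a constant fraction) of $D^*$ is eventually removed from $U$. The paper takes exactly this route.

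The per-iteration argument, however, has a real gap, and it misses the feature of \Cref{alg:step2} that actually makes the lemma go through: the inner loop over thresholds $j$. Your dichotomy ``either $v$ itself is deleted, or the bulk set removed has size $\Omega(|C_v\cap U|)$'' is not justified. When $v$ is sampled alone in $C_v$ at level $i_v$, you only learn $v\in D^{i_v}$; but \Cref{alg:step} returns $D^{i_{\max}}$, which need not contain $v$, and there is no lower bound on $|D^{i_{\max}}|$ in terms of $|C_v\cap U|$. In the proof of \Cref{lem:step}, the inequality $|S^{i_v}_v\cap U|\ge|C_v^{j^*+1}\cap U|\ge\tfrac12|C_v^{j^*}\cap U|$ is what powers the argument, and it holds precisely because $C_v^{j^*+1}$ is \emph{defined} as the cut minimizing $|{\cdot}\cap U|$ at the next threshold up. Your exact minimal mincut $C_v$ has no such minimality against cuts of weight slightly above $\lambda(s,v)$, so $|S^{i_v}_v\cap U|$ can be as small as $1$. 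A concrete obstruction to your $j=0$-only analysis: take $D^*=\{v\}$ with $|C_v\cap U|$ large, and plant many vertices $u\notin C_v$ with $\lambda(s,u)\in(W_0,(1+\gamma')W_0]$ whose tiny isolating cuts pass the acceptance threshold; the $j=0$ call then returns large sets disjoint from $C_v$, and your potential $\Phi$ does not move. The laminar charging you sketch cannot recover the missing $|C_v\cap U|$ factor.

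The paper's proof is different and uses the threshold sweep essentially. It sets $D^*_j=\{v\in U\setminus s:\lambda(s,v)\le(1+\tfrac{\gamma}{2\lceil\lg|U|\rceil})^jW_0\}$, so $D^*\subseteq D^*_0\subseteq D^*_1\subseteq\cdots$, and tracks the product potential $\prod_j|D^*_j\cap U|$. In each sequential iteration, pick $j^*$ with $|D^*_{j^*}\cap U|\ge\tfrac12|D^*_{j^*+1}\cap U|$. By \Cref{lem:add} the output $D_{j^*}$ of the threshold-$j^*$ call satisfies $D_{j^*}\subseteq D^*_{j^*+1}$, and by \Cref{lem:step} it has $\mathbb E[|D_{j^*}|]=\Omega(|D^*_{j^*}\cap U|/\log|U|)=\Omega(|D^*_{j^*+1}\cap U|/\log|U|)$; hence $|D^*_{j^*+1}\cap U|$ drops by a $(1-\Omega(1/\log|U|))$ factor in expectation, and so does the product. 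Since the product starts at $\le 2^{O(\log^2n)}$, after $O(\log^3n)$ sequential iterations it is zero w.h.p., forcing $D^*\cap U=\emptyset$. The multi-threshold sweep is exactly what lets one compare ``what gets removed'' (contained in $D^*_{j^*+1}$) with ``how large it is'' (comparable to $|D^*_{j^*}|$); this is the comparison your single-threshold analysis lacks.
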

\begin{proof}
We claim that after $O(\log^3n)$ iterations of the main for loop, the set $D^*\cap U$ becomes empty. This would mean that $D^*$ is contained in the union of all $O(\log^4n)$ sets $D_j$ computed over all iterations, so a random set $D_j$ must contain a $\Omega(1/\log^4n)$ fraction of $D^*$ in expectation.
For the rest of the proof, we prove this claim. 

For each $0\le j\le\lceil\lg|U|\rceil$, let $D^*_j$ be all vertices $v\in U\setminus s$ for which $\lambda(s,v)\le(1+\frac\gamma{2\lceil\lg|U|\rceil})^jW_0$. By construction, $D^*\subseteq D^*_0\subseteq D^*_1\subseteq\cdots\subseteq D^*_{\lceil\lg|U|\rceil}$. We track the sets $D^*_j\cap U$ throughout the algorithm. Consider the set $U$ at the beginning of one of the $O(\log^3|U|)$ sequential iterations. We focus on a value $j^*$ ($0\le j^*<\lceil\lg|U|\rceil$) satisfying $|D^*_{j^*}\cap U|\ge|D^*_{j^*+1}|/2$. Consider iteration $j^*$ of the inner for loop. By \Cref{lem:add}, we have $\lambda(s,v)\le(1+\frac\gamma{2\lceil\lg|U|\rceil})\cdot(1+\frac\gamma{2\lceil\lg|U|\rceil})^{j^*}W_0=(1+\frac\gamma{2\lceil\lg|U|\rceil})^{j^*+1}W_0$, so in particular, $D_{j^*}\subseteq D^*_{j^*+1}$. By \Cref{lem:step}, we have $\mathbb E[|D_{j^*}|]\ge\Omega(|D^*_{j^*}|/\log|U|) \ge \Omega(|D^*_{j^*+1}|/\log|U|)$. Therefore, once we delete $\bigcup_jD_j$ at the end of this sequential iteration, the size of $D^*_{j^*+1}$ drops by factor $(1-\Omega(1/\log|U|))$ in expectation.

In other words, on each sequential iteration, there exists $j^*$ ($1\le j^*\le\lceil\lg|U|\rceil$) for which the size of $D^*_j\cap U$ drops by factor $(1-\Omega(1/\log|U|))$ in expectation. Since the other sets $D^*_{j'}\cap U$ can never increase in size, the product $\prod_{j=1}^{\lceil\lg|U|\rceil}|D^*_j\cap U|$ decreases by factor $(1-\Omega(1/\log|U|))$ in expectation. Since the product is at most $|U|^{\lceil\lg|U|\rceil}\le2^{O(\log^2n)}$ initially, it follows that after $O(\log^3n)$ sequential iterations, the product becomes zero w.h.p. Therefore, at the end of the algorithm, there exists $j$ ($1\le j^*\le\lceil\lg|U|\rceil$) with $D^*_j\cap U=\emptyset$. Since $D^*\subseteq D^*_j$, we also get $D^*\cap U=\emptyset$, which proves the claim.
\end{proof}

\subsection{The Algorithm for Approximating Gomory-Hu Steiner Tree}

We present our approximate Gomory-Hu tree algorithm in Algorithm~\ref{alg:approxGH}. It uses Algorithm~\ref{alg:step2} as a subroutine. See Figure~\ref{fig:ghtree-alg} for a visual guide to the algorithm. Once again, the algorithm and analysis closely follow those in \cite{LiP21}. 

We require the lemma below for both running time and approximation guarantee analysis.

\begin{algorithm}
\caption{$(1+\epsilon)$-approximate Gomory-Hu Steiner tree on inputs $(G_0,U)$. Assume $\epsilon<1/100$.}\label{alg:approxGH}
\begin{algorithmic}[1]
\STATE If $|U|=1$, then return the trivial Gomory-Hu Steiner tree $(T,f)$ where $T$ is the empty tree on the single vertex $u\in U$, and $f(v)=u$ for all vertices $v$. Otherwise, if $|U|>1$, then do the steps below.
\STATE $\gamma\gets \epsilon^2/\log^6n$ 
\STATE {$\lambda\gets(1+\epsilon)$-approximate global Steiner mincut of $G$ with terminals $U$, so that the Steiner mincut is in the range $[(1-\epsilon)\lambda,\lambda]$}
\STATE {$s\gets$ uniformly random vertex in $U$}
\STATE {Construct graph $G'$ by starting with $G$ and adding an edge $(s,u)$ of weight $18\epsilon\lambda/|U|$ for each $u\in U$}
\STATE Call Algorithm~\ref{alg:step2} on parameter $\gamma$ and inputs $(G',U,(1+10\epsilon)\lambda,s)$, and let $(D,R,\mathcal F)$ be the output. Write $\mathcal F=\{S_v:v\in R\}$ where $v\in S_v$ for all $v\in R$.\label{line:F}
\STATE{\bf Phase 1: Construct recursive graphs and apply recursion}
\FOR{each $v\in  R$} 
 \STATE Let $G_v$ be the graph $G$ with vertices $V\setminus S_v$ contracted to a single vertex $x_v$ 
 \STATE Let $U_v\gets S_v\cap U$
 \STATE Recursively call $(G_v,U_v)$ to obtain output $(T_v,f_v)$ \label{line:recursive}
\ENDFOR
\STATE Let $G_\lar$ be the graph $G$ with (disjoint) vertex sets $S_v$ contracted to single vertices $y_v$ for all $v\in R$
\STATE Let $U_\lar\gets U\setminus\bigcup_{v\in R}(S_v\cap U)$
\STATE Recursively call $(G_\lar,U_\lar)$ to obtain $(T_\lar,f_\lar)$
\STATE {\bf Phase 2: Merge the recursive Gomory-Hu Steiner trees}
\STATE Construct $T$ by starting with the disjoint union $T_\lar\cup\bigcup_{v\in R}T_v$ and, for each $v\in R$, adding an edge between $f_v(x_v)\in U_v$ and $f_\lar(y_v)\in U_\lar$ of weight $w(\partial_GS_v)$\label{line:combine-T}
\STATE Construct $f:V\to U$ by $f(v')=f_\lar(v')$ if $v'\in U_\lar$ and $f(v')=f_v(v')$ if $v'\in U_v$ for some $v\in R$\label{line:combine-f}
\STATE Return $(T,f)$
\end{algorithmic}
\end{algorithm}

\begin{lemma}\label{lem:Sv}
Each set $S\in\mathcal F$ satisfies $\delta_GS\le(1+\gamma)(1+10\epsilon)\lambda$ and $|S\cap U|\le2|U|/3$.
\end{lemma}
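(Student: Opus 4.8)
I would prove the two assertions separately; both are short given the structure of $G'$ and the output guarantee of Algorithm~\ref{alg:step2}.

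\emph{The cut bound.} This is immediate. Algorithm~\ref{alg:approxGH} calls Algorithm~\ref{alg:step2} with $W_0=(1+10\epsilon)\lambda$, so by \Cref{lem:add2} every $S\in\mathcal F$ satisfies $\delta_{G'}S\le(1+\gamma)W_0=(1+\gamma)(1+10\epsilon)\lambda$. Since $G'$ is obtained from $G$ only by adding edges, $\delta_G S\le\delta_{G'}S$, which gives the first claim.

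\emph{The size bound.} Fix $S=S_v$ for the corresponding $v\in R$, so $v\in S_v$, and since $S_v$ isolates $v$ from all other sampled terminals, in particular $s\notin S_v$. I would use two ingredients. (i) $S_v$ is a Steiner cut of $G$ with terminal set $U$: it meets $U$ on both sides (it contains $v\in U$ and excludes $s\in U$), so $\delta_G S_v\ge(\text{Steiner mincut of }G)\ge(1-\epsilon)\lambda$ by the choice of $\lambda$. (ii) The edges of $G'\setminus G$ are exactly $\{(s,u):u\in U\}$, each of weight $18\epsilon\lambda/|U|$; since $s\notin S_v$, such an edge crosses $(S_v,V\setminus S_v)$ iff $u\in S_v$, i.e.\ iff $u\in S_v\cap U$, so exactly $|S_v\cap U|$ of them cross and
\[
\delta_{G'}S_v \;=\; \delta_G S_v \;+\; |S_v\cap U|\cdot\frac{18\epsilon\lambda}{|U|}.
\]
Combining (i), (ii) and the cut bound yields $|S_v\cap U|\cdot\frac{18\epsilon}{|U|}\le (1+\gamma)(1+10\epsilon)-(1-\epsilon)=11\epsilon+\gamma(1+10\epsilon)\le 12\epsilon$, where the last step uses $\gamma=\epsilon^2/\log^6 n\le\epsilon/2$. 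Dividing by $18\epsilon/|U|$ gives $|S_v\cap U|\le\tfrac{12}{18}|U|=\tfrac23|U|$. Note the constant $18$ in the weight of the added edges is calibrated precisely so that $12/18=2/3$.

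\emph{The delicate point.} The above silently used $v\ne s$ (equivalently $s\notin S_v$) in ingredient (ii); this is the one place the argument needs care, since $s$ is sampled with probability $1$ in Algorithm~\ref{alg:step} and hence is a terminal of every $R^i$. I would handle this by showing the isolating cut for the source never enters $\mathcal F$ (equivalently $s\notin R$): the isolating cut for $s$ must exclude all $|R^i|-1$ other sampled terminals, so it cuts at least $(|R^i|-1)\cdot 18\epsilon\lambda/|U|$ of the added edges on top of $\delta_G\ge(1-\epsilon)\lambda$, and in the regime where $|R^i|$ is large this already exceeds the threshold $(1+\gamma)W_0$; the complementary small-$|R^i|$ regime needs a matching counting bound on $|U\setminus S_s|$. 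I expect this source-vertex bookkeeping — not the main counting argument, which is routine — to be the only obstacle.
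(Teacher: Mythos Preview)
Your main argument for both parts is correct and essentially identical to the paper's proof: the paper derives the same inequality $\delta_{G'}S\ge(1-\epsilon)\lambda+|S\cap U|\cdot 18\epsilon\lambda/|U|$ (implicitly using $s\notin S$) and compares it to $\delta_{G'}S\le(1+\gamma)(1+10\epsilon)\lambda$; the paper phrases it as a contradiction while you solve directly for $|S\cap U|$, but the content is the same.

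You are right to flag the case $v=s$; the paper's own proof silently assumes $s\notin S$ when it writes ``the cut $\partial_{G'}S$ has $|S\cap U|$ edges of weight $18\epsilon\lambda/|U|$,'' so this is a genuine (if minor) gap there too. However, your proposed resolution---arguing that $S^i_s$ can never meet the threshold $(1+\gamma)W_0$---does not work. In the small-$|R^i|$ regime (take $R^i=\{s,u\}$ for a single other terminal $u$), the approximate minimum isolating cut for $s$ can be the complement of a near-minimum $(u,s)$-cut around $u$; then $\delta_{G'}S^i_s$ is roughly $\lambda$ plus only $|U\setminus S^i_s|\cdot 18\epsilon\lambda/|U|$ with $|U\setminus S^i_s|$ as small as~$1$, so $S^i_s$ lands in $\mathcal F$ while $|S^i_s\cap U|$ is nearly $|U|$. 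There is no counting bound that saves you here.

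The clean fix is algorithmic rather than analytic: simply exclude $s$ from $\widetilde R^i$ (and $S^i_s$ from $\mathcal F^i$) in Algorithm~\ref{alg:step}. This is harmless for the rest of the analysis because the progress measures in \Cref{lem:step} and \Cref{lem:step2} are stated for $D^*\subseteq U\setminus\{s\}$, and the proof of \Cref{lem:add} already asserts ``$S^i_v$ contains $v$ and not $s$,'' so the paper is effectively treating $v\ne s$ throughout. With that convention in place, your argument (and the paper's) goes through verbatim.
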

\begin{proof}
By \Cref{lem:add2} on the call to Algorithm~\ref{alg:step2} (line~\ref{line:F}), each set $S\in\mathcal F$ satisfies $\delta_{G'}S\le(1+\gamma)\cdot(1+10\epsilon)\lambda$, so $\delta_GS\le\delta_{G'}S\le(1+\gamma)(1+10\epsilon)\lambda$. We now prove the second statement. By construction, the cut $\partial _{G'}S$ has $|S\cap U|$ edges of weight $18\epsilon\lambda/|U|$ that were added to $G'$. Since $\partial_GS$ is a valid Steiner cut in $G$ and the Steiner mincut is at least $(1-\epsilon)\lambda$, the cut $\partial _{G'}S$ has at least $(1-\epsilon)\lambda$ weight of edges from $G$. So $\delta _{G'}S\ge (1-\epsilon)\lambda+|S\cap U|\cdot18\epsilon\lambda/|U|$. Suppose for contradiction that $|S\cap U|>2|U|/3$; then, this becomes $\delta_{G'}S>(1-\epsilon)\lambda+12\epsilon\lambda=(1+11\epsilon)\lambda$, which contradicts the earlier statement $\delta_{G'}S\le(1+\gamma)(1+10\epsilon)\lambda$.
\end{proof}

\begin{figure}\label{fig:ghtree-alg}\centering
\includegraphics[scale=.5]{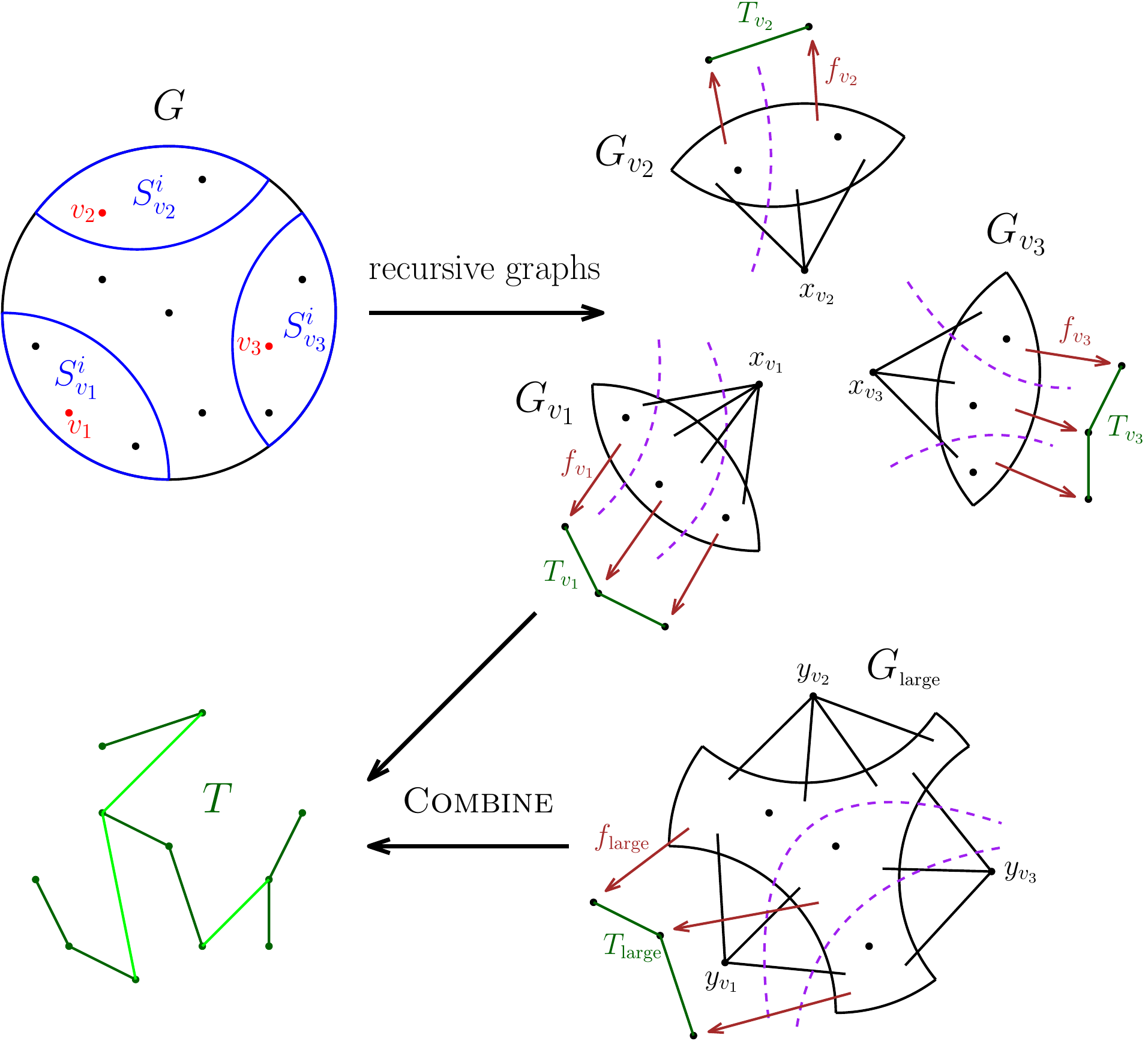}
\caption{
Recursive construction of $G_\lar$ and $G_v$ for $v\in R$. Here, $R=\{v_1,v_2,v_3\}$, denoted by red vertices on the top left. The dotted blue curves on the right mark the boundaries of the regions $f_{v_i}^{-1}(u):u\in U_{v_i}$ and $f_{v_\lar}^{-1}(u):u\in U_\lar$. The light green edges on the bottom left are the edges $(f_{v_i}(x_{v_i}),f_\lar(y_{v_i}))$ added on line~\ref{line:combine-T}.
}
\end{figure}

\subsection{Running Time Bound}

Let $P(G,U,W)$ be the set of unordered pairs of distinct vertices whose mincut is at most $W$:
\[ P(G,U,W) = \bigg\{ \{u,v\}\in\binom U2:\lambda_G(u,v)\le W \bigg\} .\]
In particular, we will consider its size $|P(G,U,W)|$, and show the following expected reduction:

\begin{restatable}{lemma}{LemD}\label{lem:P}
For any $W$ that is at most $(1+\epsilon)$ times the Steiner mincut of $G$, we have
\[ \mathbb E[|P(G_\lar,U_\lar,W)|] \le \left(1-\Omega\left(\frac1{\log^4 n}\right)\right)|P(G,U,W)| ,\]
where the expectation is taken over the random selection of $s$ and the randomness in Algorithm~\ref{alg:step2}.
\end{restatable}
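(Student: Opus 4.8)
The plan is to relate the pairs killed by the recursion to the set $D^*$ of vertices that Algorithm~\ref{alg:step2} is guaranteed to make progress on, and then invoke \Cref{lem:step2}. First I would fix $W$ at most $(1+\epsilon)$ times the Steiner mincut and define $D^* \subseteq U \setminus s$ to be the set of vertices $v$ with $\lambda_{G'}(s,v) \le W_0 := (1+10\epsilon)\lambda$, intersected with an appropriate ``useful'' subset. The key observation is a counting identity: for a pair $\{u,v\} \in P(G,U,W)$, the pair survives into $P(G_\lar, U_\lar, W)$ only if \emph{both} $u$ and $v$ end up in $U_\lar = U \setminus \bigcup_{w \in R}(S_w \cap U)$. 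So the number of pairs destroyed is at least the number of pairs $\{u,v\}$ with $\lambda_G(u,v) \le W$ such that at least one of $u,v$ lands in some $S_w$, i.e. in $D = \bigcup_w (S_w \cap U)$. I would lower-bound this by $\frac12 |D^* \cap (\text{endpoints of pairs in } P)| \cdot (\text{something})$ — more carefully, by a degree argument: a vertex $v \in D^* \cap D$ that is an endpoint of $P(G,U,W)$-pairs kills all the pairs incident to it that have the other endpoint outside its own contracted set.

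The main bridge I need is that a vertex $v$ with $\lambda_G(s,v)$ small (in the \emph{augmented} graph $G'$) is exactly the kind of vertex \Cref{lem:step2} handles, and that having many $P(G,U,W)$-pairs incident to $v$ forces $\lambda_{G'}(s,v)$ to be small. This is the standard trick from \cite{LiP21}: if $v$ has $\mathrm{deg}_P(v) := |\{u : \lambda_G(u,v) \le W\}|$ large, then by submodularity/uncrossing of the mincuts $\{u,v\}$ (each of weight $\le W \le (1+\epsilon)\lambda$), there is an $(s,v)$-cut in $G'$ of weight at most $W$ plus the weight of the $\le |U|$ added star edges of weight $18\epsilon\lambda/|U|$ crossing it; one shows this is at most $(1+10\epsilon)\lambda = W_0$ whenever $\mathrm{deg}_P(v) \ge |U|/2$ or so. So I set $D^*$ to be the set of $v$ with $\mathrm{deg}_P(v) \ge |U|/2$ (say), verify $D^* \subseteq \{v : \lambda_{G'}(s,v) \le W_0\}$, and separately handle vertices of small $P$-degree: those contribute at most $|U| \cdot (|U|/2) = |U|^2/4 \le \frac12 |P(G,U,W)|$... no, that's not automatically a fraction of $|P|$, so instead I use the cleaner accounting: $\sum_v \mathrm{deg}_P(v) = 2|P(G,U,W)|$, and the vertices with $\mathrm{deg}_P(v) \ge |U|/2$ account for at least half of this sum \emph{or} $|P|$ is already small; either way a constant fraction of the ``$P$-mass'' sits on high-degree vertices, which are exactly $D^*$.

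Then I apply \Cref{lem:step2}: the output $(D,R,\mathcal F)$ satisfies $\mathbb{E}[|D \cap D^*|] \ge \Omega(|D^*|/\log^4 n)$, and moreover \Cref{lem:Sv} gives $|S \cap U| \le 2|U|/3$ for each $S \in \mathcal F$, so each killed vertex $v \in D \cap D^*$ has at least $\mathrm{deg}_P(v) - |S_v \cap U| \ge |U|/2 - 2|U|/3$... which is negative — so I cannot argue pairwise loss that way and must instead argue at the level of $P$-mass reduction: contracting $S_v$ removes from $P$ all pairs with exactly one endpoint in $S_v$, and since $v \in D^* \cap D$ has $\mathrm{deg}_P(v) \ge |U|/2$ while $|S_v \cap U| \le 2|U|/3$, at least $\mathrm{deg}_P(v) - |S_v \cap U|$... still negative. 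The correct statement, which I would track down from \cite{LiP21}, is that we should count pairs \emph{within} $D^*$: each $v \in D^* \cap D$ that is placed in $S_v$ destroys the pairs between $v$ and $U_\lar$, and one argues the $D^*_j \cap U$ sets (rather than $P$ directly) shrink — in fact the honest route is to mimic \Cref{lem:step2}'s proof directly on the pair-count $|P(G_\lar,U_\lar,W)|$, using that for the critical index $j^*$ the set $D_{j^*}$ removes an expected $\Omega(1/\log n)$ fraction of the vertices of high $P$-degree, hence an $\Omega(1/\log n)$ fraction of $P$-mass, and repeating over $O(\log^3 n)$ iterations kills the product as in \Cref{lem:step2}.

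\textbf{Expected main obstacle.} The delicate point is not the application of \Cref{lem:step2} as a black box but setting up $D^*$ and the pair-counting so that ``$D$ intersects $D^*$'' genuinely translates into ``$|P|$ shrinks by a $(1-\Omega(1/\log^4 n))$ factor'' — in particular handling the asymmetry that contracting $S_v$ can, in principle, merge pairs that were previously separated (it cannot \emph{create} small-mincut pairs, since contraction only increases mincuts, so $P(G_\lar,U_\lar,W) \subseteq P(G,U,W)$ restricted to $U_\lar$; this monotonicity is what makes the argument go through and should be stated explicitly first). I expect I will essentially re-run the potential-function argument of \Cref{lem:step2} with the pair-count (or the product of $|D^*_j \cap U|$'s) as the potential, which is where the bulk of the work lies.
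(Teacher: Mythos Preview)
Your proposal has the right high-level skeleton (use \Cref{lem:step2}, exploit monotonicity $P(G_\lar,U_\lar,W)\subseteq P(G,U,W)|_{U_\lar}$), but the definition of $D^*$ and the counting are both off, and this is not a detail you can patch locally.

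First, your $D^*=\{v:\deg_P(v)\ge |U|/2\}$ can simply be empty (take $P$ to be a matching), so it cannot drive a multiplicative drop in $|P|$. More importantly, even when it is nonempty, the implication ``$\deg_P(v)$ large $\Rightarrow$ $\lambda_{G'}(s,v)\le W_0$'' that you need in order to invoke \Cref{lem:step2} is not true: knowing $\lambda_G(u,v)\le W$ for many $u$ gives you no control on the number of added star edges crossing an $(s,v)$-mincut, because you have no bound on the $v$-side of that cut. The added-edge budget $|U|/2\cdot 18\epsilon\lambda/|U|$ only materializes once you know the $v$-side is small. This is exactly the structural fact you are missing, and it is why your pair-counting $\deg_P(v)-|S_v\cap U|$ keeps coming out negative.

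The paper's proof avoids both problems by changing what $D^*$ measures. It passes to \emph{ordered} pairs $P_{\text{ordered}}$, where $(u,v)$ records that the $(u,v)$-mincut has at most $|U|/2$ terminals on the $u$-side, and defines $D^*$ (for the already-sampled $s$) as those $u$ admitting an $(s,u)$-cut of weight $\le W$ with the $u$-side small. Two things then come for free: (i) each $u\in D^*$ automatically has at least $|U|/2$ ordered pairs $(u,v)$, because every terminal on the large side of that cut is such a $v$ --- this replaces your failed degree bound; and (ii) the small-side condition is precisely what bounds the added star edges, so $D^*\subseteq\{u:\lambda_{G'}(s,u)\le W_0\}$ and \Cref{lem:step2} applies verbatim. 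The randomness of $s$ is then used once, globally: for any fixed $(u,v)\in P_{\text{ordered}}$, the event $s\notin S(u,v)$ has probability $\ge 1/2$, which gives $\mathbb E_s[|D^*|]\ge |P_{\text{ordered}}|/(2|U|)$. Chaining $\mathbb E[|D\cap D^*|]\ge\Omega(|D^*|/\log^4 n)$ with ``each $u\in D\cap D^*$ kills $\ge |U|/2$ ordered pairs'' yields the $\Omega(1/\log^4 n)$ drop directly, with no need to rerun the potential argument inside \Cref{lem:step2}. In short: define $D^*$ by cut structure (small side relative to $s$), not by $P$-degree, and do the accounting in ordered pairs.
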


Before we prove \Cref{lem:P}, we show how it implies progress on the recursive call for $G_\lar$.
\begin{corollary}\label{cor:mincut-increase}
Let $\lambda_0$ be the global Steiner mincut of $G$.
W.h.p., after $\Omega(\log^5n)$ recursive calls along $G_\lar$ (replacing $G\gets G_\lar$ each time), the global Steiner mincut of $G$ is at least $(1+\epsilon)\lambda_0$ (where $\lambda_0$ is still the global Steiner mincut of the initial graph).
\end{corollary}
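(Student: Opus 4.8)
The plan is to iterate Lemma~\ref{lem:P} along the $G_\lar$-branch of the recursion, using a fixed threshold $W$, and argue that once the pair-count $|P(G,U,W)|$ hits zero, no pair of terminals has mincut $\le W$, i.e.\ the global Steiner mincut exceeds $W$. Concretely, fix $W = (1+\epsilon)\lambda_0$, where $\lambda_0$ is the global Steiner mincut of the initial graph. Note that throughout the recursion along $G_\lar$, the graph $G$ only loses edges and contracts vertices, so its Steiner mincut never decreases; in particular $W$ remains at most $(1+\epsilon)$ times the current Steiner mincut as long as the current Steiner mincut is still $\le(1+\epsilon)\lambda_0$ (and if it ever exceeds $(1+\epsilon)\lambda_0$ we are already done), so the hypothesis of Lemma~\ref{lem:P} is satisfied at each step.

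First I would set up the potential $\Phi_i = |P(G^{(i)}_\lar, U^{(i)}_\lar, W)|$, where $G^{(i)}_\lar, U^{(i)}_\lar$ denote the graph and terminal set after $i$ recursive steps along the large branch. Lemma~\ref{lem:P} gives $\mathbb E[\Phi_{i+1}\mid \Phi_i] \le (1 - c/\log^4 n)\,\Phi_i$ for some constant $c>0$. Since $\Phi_0 \le \binom{|U|}{2} \le n^2$, iterating the expectation bound for $k$ steps yields $\mathbb E[\Phi_k] \le (1 - c/\log^4 n)^k n^2 \le n^2 \exp(-ck/\log^4 n)$. Choosing $k = \Theta(\log^5 n)$ with a large enough constant makes $\mathbb E[\Phi_k] \le n^{-10}$ (say), so by Markov's inequality $\Phi_k = 0$ with probability $\ge 1 - n^{-10}$, i.e.\ w.h.p. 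A small subtlety: $\Phi_i$ is an integer, so $\Phi_k = 0$ exactly once the expectation drops below $1$; Markov handles this cleanly since $\Phi_k$ is a nonnegative integer.

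Next I would translate $\Phi_k = 0$ into the mincut statement. If $P(G^{(k)}_\lar, U^{(k)}_\lar, W) = \emptyset$, then every pair of distinct terminals $u,v \in U^{(k)}_\lar$ has $\lambda_{G^{(k)}_\lar}(u,v) > W = (1+\epsilon)\lambda_0$. But the global Steiner mincut of $G^{(k)}_\lar$ with terminal set $U^{(k)}_\lar$ equals $\min_{u,v\in U^{(k)}_\lar}\lambda(u,v)$ (any Steiner cut separates some terminal pair, and conversely), hence it exceeds $(1+\epsilon)\lambda_0$, which is exactly the claim. One should also take a union bound over the $O(\log^5 n)$ recursion levels to ensure the "Steiner mincut only decreases" monotonicity argument and the high-probability event hold simultaneously along the whole branch; this costs only another $\polylog$ factor in the failure probability, which is absorbed.

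The main obstacle I anticipate is not in this corollary itself — it is essentially a clean geometric-decay-of-a-potential argument — but rather lies upstream in establishing Lemma~\ref{lem:P}, i.e.\ showing that a $(1-\Omega(1/\log^4 n))$ fraction of the close pairs is eliminated in expectation by one application of Algorithm~\ref{alg:step2}. That requires combining Lemma~\ref{lem:step2} (which bounds $\mathbb E[|D\cap D^*|]$ for the source $s$) with an averaging argument over the uniformly random choice of $s\in U$, and checking that contracting the sets $S_v$ into $G_\lar$ does not destroy mincuts among the remaining terminals — which in turn is where the \emph{fairness} (rather than mere approximate-optimality) of the isolating cuts $S_v$, guaranteed by Theorem~\ref{thm:isolating2}, is used. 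Within the proof of the corollary as stated, the only thing to be careful about is making the "$W$ stays within a $(1+\epsilon)$ factor of the current Steiner mincut" bookkeeping precise, so that Lemma~\ref{lem:P} is legitimately applicable at every one of the $\Theta(\log^5 n)$ steps.
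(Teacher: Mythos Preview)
Your proposal is correct and follows essentially the same route as the paper: fix $W=(1+\epsilon)\lambda_0$, observe that $G_\lar$ is a contraction of $G$ so the Steiner mincut is nondecreasing (hence the hypothesis of \Cref{lem:P} holds at every step), iterate the $(1-\Omega(1/\log^4 n))$ expected decay of $|P(\cdot,\cdot,W)|$ for $\Theta(\log^5 n)$ rounds, and apply Markov to conclude $|P|=0$ w.h.p. One small wording quibble: your condition ``as long as the current Steiner mincut is still $\le(1+\epsilon)\lambda_0$'' is not the relevant one---what you actually need (and what holds deterministically) is that the current Steiner mincut is $\ge\lambda_0$, which makes $W=(1+\epsilon)\lambda_0\le(1+\epsilon)\cdot(\text{current Steiner mincut})$; your parenthetical catches the complementary case anyway, so the argument is fine.
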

\begin{proof}
Let $W=(1+\epsilon)\lambda_0$.
Initially, we trivially have $|P(G,U,W)|\le\binom{|U|}2$. The global Steiner mincut can only increase in the recursive calls, since $G_\lar$ is always a contraction of $G$, so $W$ is always at most $(1+\epsilon)$ times the current Steiner mincut of $G$. By \Cref{lem:P}, the value $|P(G,U,W)|$ drops by factor $1-\Omega(\frac1{\log^4 n})$ in expectation on each recursive call, so after $\Omega(\log^5n)$ calls, we have
\[ \mathbb E[|P(G,U,W)|]\le\binom{|U|}2\cdot\left(1-\Omega\left(\frac1{\log^4n}\right)\right)^{\Omega(\log^5n)}\le\frac1{\textup{poly}(n)} .\]
In other words, w.h.p., we have $|P(G,U,W)|=0$ at the end, or equivalently, the Steiner mincut of $G$ is at least $(1+\epsilon)\lambda_0$.
\end{proof}

Combining both recursive measures of progress together, we obtain the following bound on the recursion depth:
\begin{lemma}\label{lem:depth}
W.h.p., each path down the recursion tree of Algorithm~\ref{alg:approxGH} has $O(\log n)$ calls on a graph $G_v$, and between two consecutive such calls, there are  $O(\epsilon^{-1}\log^6n)$ calls on the graph $G_\lar$.
\end{lemma}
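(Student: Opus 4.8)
The plan is to bound the two kinds of recursive calls separately: the calls on a graph $G_v$ deterministically via Lemma~\ref{lem:Sv}, and the runs of calls on $G_\lar$ via Corollary~\ref{cor:mincut-increase}.

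\emph{Calls on $G_v$.} Whenever Algorithm~\ref{alg:approxGH} recurses into $G_v$ for some $v\in R$, the new terminal set is $U_v=S_v\cap U$ with $S_v\in\mathcal F$, so Lemma~\ref{lem:Sv} gives $|U_v|\le 2|U|/3$. Hence every such call shrinks the terminal set by a constant factor, and since the recursion halts once $|U|=1$ (line~1 of Algorithm~\ref{alg:approxGH}), any root-to-leaf path contains at most $\log_{3/2}|U|=O(\log n)$ calls on a graph of the form $G_v$. This part needs no randomness.

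\emph{Calls on $G_\lar$.} Consider a maximal run of consecutive recursive calls along $G_\lar$; this is precisely what lies between two consecutive $G_v$-calls (or before the first, or after the last). Along such a run the working graph is repeatedly replaced by a contraction of itself while terminals are only deleted, so the global Steiner mincut is non-decreasing: an optimal Steiner cut of a contracted $G_\lar$ with terminal set $U_\lar\subseteq U$ uncontracts to a cut of $G$ of no larger weight that still splits $U$ into two nonempty parts (because it splits $U_\lar\subseteq U$), hence is a valid Steiner cut of $(G,U)$. Normalizing edge weights to lie in $[1,\poly(n)]$ using the bounded-weight-ratio assumption, the Steiner mincut always lies in $[1,\poly(n)]$ (at least one edge, at most the total weight). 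Now invoke Corollary~\ref{cor:mincut-increase} in blocks: w.h.p., every $\Theta(\log^5 n)$ consecutive $G_\lar$-calls multiply the Steiner mincut by at least $(1+\epsilon)$, so after $k$ blocks it is at least $(1+\epsilon)^k$ times its value at the start of the run. Since it never exceeds $\poly(n)$, we must have $k=O(\epsilon^{-1}\log n)$, and the run has at most $\Theta(\log^5 n)\cdot O(\epsilon^{-1}\log n)=O(\epsilon^{-1}\log^6 n)$ calls on $G_\lar$.

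\emph{Wrap-up.} At every node the sets $\{U_v\}_{v\in R}$ together with $U_\lar$ partition the parent's terminal set, so at each depth of the recursion tree the terminal sets partition $U_0$ and there are at most $n$ nodes; combined with the two depth bounds above this bounds the tree size by $\poly(n)$, and then a single union bound makes the w.h.p.\ guarantee of Corollary~\ref{cor:mincut-increase} hold at all nodes at once (one can also make this non-circular by first truncating the recursion at a fixed polynomial depth). I expect the only mildly subtle point to be the monotonicity of the Steiner mincut under simultaneous contraction and terminal deletion, which is the short argument given above; the rest is bookkeeping layered on Lemma~\ref{lem:Sv} and Corollary~\ref{cor:mincut-increase}, so no real obstacle is anticipated.
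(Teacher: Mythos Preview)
Your proposal is correct and follows essentially the same approach as the paper: bound $G_v$-calls by the constant-factor shrinkage of $|U|$ from Lemma~\ref{lem:Sv}, and bound each maximal run of $G_\lar$-calls by applying Corollary~\ref{cor:mincut-increase} in blocks of $\Theta(\log^5 n)$ together with the $\poly(n)$ range of the Steiner mincut. You are slightly more explicit than the paper about the union bound over the recursion tree and the monotonicity of the Steiner mincut under contraction, but the core argument is the same.
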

\begin{proof}
For any $\Theta(\log^5n)$ successive recursive calls down the recursion tree, either one call was on a graph $G_v$, or all $\Theta(\log^5n)$ of them were on the graph $G_\lar$. In the former case, $|U|$ drops by a constant factor by \Cref{lem:Sv}, so it can happen $O(\log n)$ times total. In the latter case, by \Cref{cor:mincut-increase}, the global Steiner mincut increases by factor $(1+\epsilon)$. Let $w_{\min}$ and $w_{\max}$ be the minimum and maximum weights in $G$, so that $\Delta=w_{\max}/w_{\min}$, which we assume to be $\poly(n)$. Note that for any recursive instance $(G',U')$ and any $s,t\in U'$, we have $w_{\min}\le\lambda_{G'}(s,t)\le w(\partial(\{s\}))\le nw_{\max}$, so the global Steiner mincut of $(G',U')$ is always in the range $[w_{\min},nw_{\max}]$. It follows that the global Steiner mincut can increase by factor $(1+\epsilon)$ at most $O(\epsilon^{-1}\log(nw_{\max}/w_{\min}))=O(\epsilon^{-1}\log n)$ times. Therefore, there are at most $O(\epsilon^{-1}\log^6n)$ consecutive calls on $G_\lar$ before a call on some $G_v$ must occur.
\end{proof}

\begin{lemma}\label{lem:runtime}
For an unweighted/weighted graph $G=(V,E)$, and terminals $U\subseteq V$, Algorithm~\ref{alg:approxGH} takes time $\tilde{O}(m\epsilon^{-1})$ plus calls to \Cref{thm:isolating2} with parameter $\gamma=\epsilon^2/\log^6n$ on unweighted/weighted instances with  a total of $\tilde{O}(n\epsilon^{-1})$ vertices and $\tilde{O}(m\epsilon^{-1})$ edges.
\end{lemma}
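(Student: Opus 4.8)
The plan is to amortize the running time of \Cref{alg:approxGH} over the nodes of its recursion tree, charging to each node $x$ (with instance $(G^x,U^x)$, and $m^x:=|E(G^x)|$, $n^x:=|V(G^x)|$) two quantities: the ``local'' work it does \emph{other than} the calls it issues (directly, or through \Cref{alg:step2}) to \Cref{thm:isolating2}, and the total size of the instances it passes to \Cref{thm:isolating2}. I will argue that both are $\tilde O(m^x)$ (in time, resp.\ in edges) and $\tilde O(n^x)$ in vertices, and then bound $\sum_x m^x$ and $\sum_x n^x$ by combining \Cref{lem:depth} with the charging argument of \cite{LiP21}.

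\emph{Per-node accounting.} The non-isolating-cut work at $x$ is: (i) computing a $(1+\epsilon)$-approximate global Steiner mincut of $(G^x,U^x)$ in $\tilde O(m^x)$ time via \Cref{sec:steiner} (its own calls to \Cref{thm:isolating2} are counted in the second quantity); (ii) building $G'$, which adds only $|U^x|\le n^x$ edges; (iii) the bookkeeping inside \Cref{alg:step2}/\Cref{alg:step} --- the $O(\log^3 n)$ outer iterations, the $O(\log n)$ scales, and, within each call to \Cref{alg:step}, the $O(\log n)$ samples $R^i$, the formation of $\mathcal F^i$ and $D^i$, and the choice of $i_{\max}$ --- all $\tilde O(m^x)$ given the isolating-cut outputs; (iv) in Phase~1, constructing the contracted graphs $G_v$ ($v\in R$) and $G_\lar$, which is $\tilde O(m^x)$ in total because the $S_v$ are disjoint subsets of $V(G^x)$ and each such graph is a parallel-edge-merged contraction of $G^x$; and (v) in Phase~2, the merge along \Cref{line:combine-T}, which is $O(n^x)$. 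Hence the local work is $\tilde O(m^x)$. For the isolating-cut instances: \Cref{alg:step2} calls \Cref{alg:step} $O(\log^4 n)$ times, each invoking \Cref{thm:isolating2} $O(\log n)$ times on (a terminal-restricted copy of) the fixed graph $G'$, which has $m^x+|U^x|=O(m^x)$ edges and $n^x$ vertices; together with the $O(\log^2 n)$ invocations made by the Steiner-mincut routine of (i), node $x$ hands \Cref{thm:isolating2} a total of $\tilde O(m^x)$ edges and $\tilde O(n^x)$ vertices, with approximation parameter $\tfrac{\gamma}{2\lceil\lg|U|\rceil}=\Theta(\epsilon^2/\polylog n)\le\gamma=\epsilon^2/\log^6 n$ as claimed.

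\emph{Summing over the recursion tree.} By \Cref{lem:depth}, every root-to-leaf path has length $D:=O(\log n)\cdot O(\epsilon^{-1}\log^6 n)=O(\epsilon^{-1}\log^7 n)=\tilde O(\epsilon^{-1})$. I will then claim $\sum_x n^x=\tilde O(nD)=\tilde O(n\epsilon^{-1})$ and $\sum_x m^x=\tilde O(mD)=\tilde O(m\epsilon^{-1})$. The vertex bound rests on two structural facts: at any node, the uncontracted (``real'') vertex sets of the children $x_\lar,\{x_v\}_{v\in R}$ partition that of $x$ (the $S_v$ are disjoint and $V(G_\lar)$ keeps exactly their complement), so at a fixed recursion depth the real parts of all instances partition $V$ and each original vertex is ``real'' in exactly one instance per depth, hence in $O(D)$ instances overall; and, once contractions are implemented to merge parallel edges and discard degree-$0$ super-vertices, the super-vertices of the instances can be charged against this same partition structure. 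The edge bound is analogous ($G^x$ is always a parallel-merged minor of $G$, so $m^x\le m$, and each original edge descends to exactly one child at each level, its two boundary edges being charged to its endpoints). This is exactly the accounting of \cite{LiP21}, which I would import essentially verbatim. Plugging the totals into the per-node bounds yields total local work $\tilde O(m\epsilon^{-1})$ and \Cref{thm:isolating2}-instances of total size $\tilde O(m\epsilon^{-1})$ edges and $\tilde O(n\epsilon^{-1})$ vertices, which is the lemma.

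The step I expect to be the main obstacle is this last one --- bounding the \emph{total} size of all recursive instances. A careless implementation would roughly double the edge count at each recursion level, which over $D=\tilde O(\epsilon^{-1})$ levels would be fatal; the fix is to merge parallel edges in every contraction, resolve trivial ($|U_v|=1$) branches without recursing, and then exploit that each uncontracted vertex and each original edge descends into exactly one child per level to charge it to the $O(D)$ instances along its downward path. Since this bookkeeping coincides with that of \cite{LiP21}, my intention is to cite it rather than reprove it.
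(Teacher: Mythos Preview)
Your proposal is correct and follows the same high-level approach as the paper: bound the local work at each recursion node by $\tilde O(m^x)$, invoke \Cref{lem:depth} for the $\tilde O(\epsilon^{-1})$ recursion depth, and then argue that the total instance size per recursion level is $O(n)$ vertices and $\tilde O(m)$ edges.

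One point of divergence worth flagging: your informal sketch of the edge bound (``each original edge descends to exactly one child; charge the two boundary copies to the endpoints'') is not how the paper (or \cite{LiP21}) actually carries out the accounting, and taken literally it does not obviously terminate, since a single real vertex can be adjacent to boundary cuts at many levels and would accumulate unbounded charges. The paper instead treats the unweighted and weighted cases separately. In the unweighted case it bounds the total weight of all newly created edges by the total edge weight of the final approximate Gomory--Hu tree, which is $O(m)$ by the standard rooting argument. In the weighted case it introduces the notion of a \emph{parent vertex} (a contracted node $x_v$ from some ancestor $G_v$ call, or anything that later absorbs one), observes that each instance has at most $O(\log n)$ parent vertices because each $G_v$ call shrinks $|U|$ by a constant factor (\Cref{lem:Sv}), and then bounds (i) edges incident to parent vertices by $O(\log n)$ times the vertex count, and (ii) edges not incident to any parent vertex by $m$, via the partition property you mention. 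Since you explicitly plan to import the full accounting from \cite{LiP21}, this does not affect the correctness of your proposal, but your parenthetical sketch should be replaced by the actual argument rather than presented as equivalent to it.
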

\begin{proof}
For a given recursion level, consider the instances $\{ (G_i,U_i,W_i)\}$ across that level. By construction, the terminals $U_i$ partition $U$. Moreover, the total number of vertices over all $G_i$ is at most $n+2(|U|-1)=O(n)$ since each branch creates $2$ new vertices and there are at most $|U|-1$ branches. 

To bound the total number of edges, we consider the unweighted and weighted cases separately, starting with the unweighted case. The total number of new edges created is at most the sum of weights of the edges in the final $(1+\epsilon)$-approximate Gomory-Hu Steiner tree. For an unweighted graph, this is $O(m)$ by the following well-known argument. Root the Gomory-Hu Steiner tree $T$ at any vertex $r\in U$; for any $v\in U\setminus r$ with parent $u$, the cut $\partial\{v\}$ in $G$ is a $(u,v)$-cut of value $\deg(v)$, so $w_T(u,v)\le(1+\epsilon)\lambda_G(u,v)\le(1+\epsilon)\deg(v)$. Overall, the sum of the edge weights in $T$ is at most $(1+\epsilon)\sum_{v\in U}\deg(v)\le(1+\epsilon)\cdot2m$.

For the weighted case, define a \emph{parent} vertex in an instance as a vertex resulting from either (1)~contracting $V\setminus S_v$ in some previous recursive $G_v$ call, or (2)~contracting a component containing a parent vertex in some previous recursive call. There are at most $O(\log n)$ parent vertices: at most $O(\log n)$ can be created by~(1) since each $G_v$ call decreases $|U|$ by a constant factor (\Cref{lem:Sv}), and (2)~cannot increase the number of parent vertices. Therefore, the total number of edges adjacent to parent vertices is at most $O(\log n)$ times the number of vertices. Since there are $O(n)$ vertices in a given recursion level, the total number of edges adjacent to parent vertices is $O(n\log n)$ in this level. Next, we bound the number of edges not adjacent to a parent vertex by $m$. To do so, we first show that on each instance, the total number of these edges over all recursive calls produced by this instance is at most the total number of such edges in this instance. Let $P\subseteq V$ be the parent vertices; then, each $G_v$ call has exactly $|E(G[S_v\setminus P])|$ edges not adjacent to parent vertices (in the recursive instance), and the $G_\lar$ call has at most $|E(G[V\setminus P]) \setminus \bigcup_{v\in R}E(G[S_v\setminus P])|$, and these sum to $|E(G[V\setminus P])|$, as promised. This implies that the total number of edges not adjacent to a parent vertex at the next level is at most the total number at the previous level. Since the total number at the first level is $m$, the bound follows.

Therefore, there are $O(n)$ vertices and $\tilde{O}(m)$ edges in each recursion level. By \Cref{lem:depth}, there are $O(\epsilon^{-1}\log^6n)$ levels, for a total of $\tilde{O}(n\epsilon^{-1})$ vertices and $\tilde{O}(m\epsilon^{-1})$ edges. In particular, the instances to the max-flow calls have $\tilde{O}(n\epsilon^{-1})$ vertices and $\tilde{O}(m\epsilon^{-1})$ edges in total.
\end{proof}

Finally, we prove \Cref{lem:P}, restated below.
\LemD*
\begin{proof}
Define $D^*$ as the set of vertices $v\in U\setminus s$ for which there exists an $(s,v)$-cut in $G$ of weight at most $W$ whose side containing $v$ has at most $|U|/2$ vertices in $U$. Let $P_{\text{ordered}}(G,U,W)$ be the set of ordered pairs $(u,v):u,v\in V$ for which there exists a $(u,v)$-mincut of weight at most $W$ with at most $|U|/2$ vertices in $U$ on the side $S(u,v)\subseteq V$ containing $u$. 
We now state and prove the following four properties:

\begin{enumerate}
\item[(a)] For all $u,v\in U$, $\{u,v\}\in P(G,U,W)$ if and only if either $(u,v)\in P_{\text{ordered}}(G,U,W)$ or $(v,u)\in P_{\text{ordered}}(G,U,W)$ (or both).
\item[(b)] For each pair $(u,v)\in P_{\text{ordered}}(G,U,W)$, we have $u\in D^*$ with probability at least $1/2$,
\item[(c)] For each $u\in D^*$, there are at least $|U|/2$ vertices $v\in U$ for which $(u,v)\in P_{\text{ordered}}(G,U,W)$.
\item[(d)] Over the randomness in Algorithm~\ref{alg:step} on $(G,U,(1+\epsilon)\lambda)$, $\mathbb E[|D\cap D^*|]\ge\Omega(|D^*|/\log^4|U|)$.
\end{enumerate}

Property (a) follows by definition.
Property~(b) follows from the fact that $u\in D^*$ whenever $s\notin S(u,v)$, which happens with probability at least $1/2$. 
Property~(c) follows because any vertex $v\in U\setminus S(u,v)$ satisfies $(u,v)\in P_{\text{ordered}}(G,U,W)$, of which there are at least $|U|/2$. For property~(d), observe by construction of $G'$ that for each vertex $v\in D^*$, the $(s,v)$-mincut has weight at most $W+|U|/2\cdot18\epsilon\lambda/|U|$. This is at most $(1+\epsilon)\lambda + 9\epsilon\lambda = (1+10\epsilon)\lambda$ since $W$ is at most $(1+\epsilon)$ times the Steiner mincut of $G$ (which is at most $\lambda$). It follows that each $v\in D^*$ satisfies $\lambda_{G'}(s,v)\le(1+10\epsilon)\lambda$. Property~(d) follows from \Cref{lem:step2} applied to input $(G,U,(1+10\epsilon)\lambda,s)$ and set $D^*$.

With properties (a) to (d) in hand, we now finish the proof of \Cref{lem:P}. For any vertex $u\in D$, all pairs $(u,v)\in P_{\text{ordered}}(G,U,W)$ (over all $v\in U$) disappear from $P_{\text{ordered}}(G,U,W)$, which is at least $|U|/2$ many by (c). In other words, 
\BAL & |P_{\text{ordered}}(G,U,W)\setminus P_{\text{ordered}}(G_\lar,U_\lar,W)| 
\ge \frac{|U|\cdot|D|}2 
.\EAL
Taking expectations and applying (d), 
\BAL & 
\mathbb E[|P_{\text{ordered}}(G,U,W)\setminus P_{\text{ordered}}(G_\lar,U_\lar,W)|] 
\ge\frac{|U|\cdot\mathbb E[|D|]}{2}    
 \ge \Omega\left(\frac{|U|\cdot|D^*|}{\log^4|U|}\right)  .\EAL
Moreover,
\[ |U|\cdot|D^*| \ge \mathbb E\big[\big| \{(u,v): u\in D^*\} \big|\big] \ge \frac12|P_{\text{ordered}}(G,U,W)|,  \]
where the second inequality follows by (b). Putting everything together, we obtain
\BAL &\mathbb E[|P_{\text{ordered}}(G,U,W)\setminus P_{\text{ordered}}(G_\lar,U_\lar,W)|] 
\ge\Omega\left(\frac{|P_{\text{ordered}}(G,U,W)|}{\log^4|U|}\right)   .\EAL
Finally, applying (a) gives
\BG \mathbb E[|P(G,U,W) \setminus P(G_\lar,U_\lar,W)|] \ge \Omega\left(\frac{|P(G,U,W)|}{\log^4|U|}\right) .\nonumber
\EG
Finally, we have $P(G_\lar,U_\lar,W) \subseteq P(G,U,W)$ since the $(u,v)$-mincut for $u,v\in U_\lar$ can only increase in $G_\lar$ due to $G_\lar$ being a contraction of $G$. Therefore,
\BAL &|P(G,U,W)| - |P(G_\lar,U_\lar,W)| 
= |P(G,U,W) \setminus P(G_\lar,U_\lar,W)| ,\EAL
and combining with the bound on  $\mathbb E[|P(G,U,W) \setminus P(G_\lar,U_\lar,W)|]$ concludes the proof.
\end{proof}

\subsection{Approximation}

We first prove the two lemmas below before concluding the approximation guarantee.

\begin{lemma}\label{lem:large}
For any distinct vertices $p,q\in U_\lar$, we have $\lambda_G(p,q)\le\lambda_{G_\lar}(p,q)\le(1+\gamma)\lambda_G(p,q)$.
\end{lemma}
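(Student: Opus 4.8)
The inequality $\lambda_G(p,q)\le\lambda_{G_\lar}(p,q)$ is immediate: $G_\lar$ is $G$ with the pairwise disjoint sets $S_v$ ($v\in R$) contracted, so every $(p,q)$-cut of $G_\lar$ pulls back to a $(p,q)$-cut of $G$ of the same weight. For the reverse direction the plan is to take a $(p,q)$-mincut $(C_0,V\setminus C_0)$ of $G$ with $p\in C_0$ and $\delta_G(C_0)=\lambda_G(p,q)$, and ``uncross'' it against $S_{v_1},\dots,S_{v_k}$ one set at a time: from $C_{i-1}$, set $C_i\gets C_{i-1}\cup S_{v_i}$ if $v_i\in C_{i-1}$, and $C_i\gets C_{i-1}\setminus S_{v_i}$ otherwise. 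Since $p,q\in U_\lar$ lie outside every $S_v$, each $C_i$ is still a $(p,q)$-cut; since the $S_v$ are disjoint, moving $S_{v_i}$ wholesale never re-splits a previously fixed $S_{v_j}$, so $C_k$ splits no $S_v$ and induces a $(p,q)$-cut of $G_\lar$ of weight exactly $\delta_G(C_k)$ (contraction removes precisely the edges internal to some $S_v$, none of which crosses $C_k$). Hence it suffices to prove $\delta_G(C_k)\le(1+\gamma)\lambda_G(p,q)$, which I would extract from the per-step bound
\[
\delta_G(C_i)\ \le\ \delta_G(C_{i-1})+\gamma\cdot c_G(E_i),\qquad E_i:=\big\{\text{edges of }G\text{ with both endpoints in }S_{v_i}\text{ that cross }C_{i-1}\big\},
\]
together with the key observation that $C_{i-1}$ and $C_0$ agree on $S_{v_i}$ (the intermediate moves touched only other, disjoint, sets), so that $E_i$ is exactly the set of edges internal to $S_{v_i}$ crossing the \emph{original} cut $C_0$. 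These sets are pairwise disjoint and all contained in $\partial_G C_0$, so $\sum_i c_G(E_i)\le\delta_G(C_0)=\lambda_G(p,q)$ and the per-step bounds telescope to the claim.

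To establish the per-step bound I would argue as in the proof of \Cref{lem:intersect}. Let $f_{v_i}$ be a flow witnessing that $(S_{v_i},V\setminus S_{v_i})$ is a $v_i$-sided $(1+\gamma)$-fair cut: it is feasible, has zero net flow at every vertex of $S_{v_i}\setminus\{v_i\}$, and routes at least a $\tfrac1{1+\gamma}$ fraction of the capacity of every edge of $\partial S_{v_i}$ outward. Let $Z_i$ be the part of $S_{v_i}$ that moves across the cut when forming $C_i$ (namely $S_{v_i}\setminus C_{i-1}$ in the ``$\cup$'' case and $C_{i-1}\cap S_{v_i}$ in the ``$\setminus$'' case); note $v_i\notin Z_i$, so $f_{v_i}$ has net flow $\le 0$ out of $Z_i$. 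The boundary of $Z_i$ consists of the $\partial S_{v_i}$-edges incident to $Z_i$ together with $E_i$; since every $\partial S_{v_i}$-edge incident to $Z_i$ carries flow outward at rate $\ge\tfrac1{1+\gamma}$ of its capacity while $Z_i$ supplies no net outflow, flow conservation forces this outflow to be fed in across $E_i$, so the total capacity of the $\partial S_{v_i}$-edges incident to $Z_i$ is at most $(1+\gamma)\,c_G(E_i)$. Finally, forming $C_i$ adds to the cut only edges that are $\partial S_{v_i}$-edges incident to $Z_i$, and removes at least the edges of $E_i$, whence $\delta_G(C_i)-\delta_G(C_{i-1})\le(1+\gamma)c_G(E_i)-c_G(E_i)=\gamma\,c_G(E_i)$. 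The two cases are symmetric.

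The main obstacle is exactly this per-step uncrossing estimate, and more precisely the requirement that its slack be charged \emph{only} to edges internal to $S_{v_i}$, so that the charges across the $S_{v_i}$'s are disjoint and sum to at most $\lambda_G(p,q)$; this is where fairness of $S_{v_i}$ is essential, since a plain submodularity argument using only $\delta_G S_{v_i}\le(1+\gamma)\cdot(\text{min isolating cut of }v_i)$ would leak an additive error of order $\delta_G S_{v_i}\approx\lambda$, which is useless when $\lambda_G(p,q)\ll\lambda$. A secondary point to verify is that the sets $S_v$ and their witnessing flows come from running the isolating-cut routine on the auxiliary graph $G'$ ($=G$ plus a weighted star at $s$), not on $G$ itself: restricting $f_{v_i}$ to $G$ turns the terminals inside $S_{v_i}$ into sinks (of total capacity at most the star weight), but these are sinks in the direction that only increases the inflow forced across $E_i$, so the estimate above persists; the few remaining boundary cases (such as $v_i=s$) are routine.
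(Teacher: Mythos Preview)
Your proposal is correct and follows essentially the same approach as the paper: start from a $(p,q)$-mincut in $G$, iteratively uncross it with each $S_v$ (moving $S_v$ entirely to the side containing $v$), and use the one-sided $(1+\gamma)$-fairness of $S_v$ to bound the increase in cut weight by charging it to edges internal to $S_v$ that lie on the original cut. Your telescoping formulation $\delta_G(C_i)\le\delta_G(C_{i-1})+\gamma\,c_G(E_i)$ with $\sum_i c_G(E_i)\le\lambda_G(p,q)$ is a clean repackaging of the paper's global charging (``each edge is either charged to or charged from, but not both''), and your flow-conservation argument on $Z_i$ is the same content as the paper's path-decomposition argument. The handling of the $G'$-versus-$G$ discrepancy is also along the same lines as the paper's (the paper drops flow paths ending on star edges; you equivalently observe that restricting to $G$ only makes the inequality easier).
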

\begin{proof}
Since $G_\lar$ is a contraction of $G$, we have $\lambda_G(p,q)\le\lambda_{G_\lar}(p,q)$. To show the other inequality, fix any $(p,q)$-mincut $(A,B)$ in $G$. We iteratively ``uncross'' the cut $(A,B)$ with each set $S_v\in\mathcal F$ ($v\in R$) as follows: if $v\in A$, then replace $(A,B)$ with $(A\cup S_v,B\setminus S_v)$, and if $v\in B$, then replace $(A,B)$ with $(A\setminus S_v,B\cup S_v)$. By construction, the final cut is a $(p,q)$-cut that contains each $S_v$ on one side of the cut, so it survives upon contraction into $G_\lar$ and is a valid $(p,q)$-cut in $G_\lar$. We claim that the final cut has weight at most $(1+\gamma)\lambda_G(p,q)$, which would prove $\lambda_{G_\lar}(p,q)\le(1+\gamma)\lambda_G(p,q)$.

Let $(A,B)$ be the current cut in the iterative process, and let $S_v$ be the next cut we wish to uncross. Since $S_v$ is a $v$-sided $(1+\gamma)$-fair cut on $G'$, there is a feasible flow with no source/sink in $S_v\setminus\{v\}$ and which saturates $\partial _{G'}S_v$ up to factor $\frac1{1+\gamma}$ (in the direction from $S_v$ to $V\setminus S_v$). By ignoring the flow outside $G'[S_v]\cup\partial_{G'}S_v$, we can view it as a flow from $v$ to the boundary $\partial_{G'}S_v$ that saturates the boundary up to $\frac1{1+\gamma}$ factor. Decompose the flow into paths and ignore the paths ending at edges in $G'-G$ (which are all in $\partial_{G'}S_v$), obtaining a feasible flow from $v$ to $\delta_GS_v$ that saturates $\partial_GS_v$ to factor $\frac1{1+\gamma}$. 

Suppose first that $v\in B$. Further restrict the flow paths to only those ending at the edges in the subset $E_G(A\setminus S_v,A\cap S_v)$ of $\partial_GS_v$. Each of these paths must cross $E_G(A\cap S_v,B\cap S_v)$. There is at least $\frac1{1+\gamma}w(E_G(A\setminus S_v,A\cap S_v))$ flow along these paths, and they must cross a total capacity of $w(E_G(A\cap S_v,B\cap S_v))$. Since the flow is feasible, we conclude that $\frac1{1+\gamma}w(E_G(A\setminus S_v,A\cap S_v))\le w(E_G(A\cap S_v,B\cap S_v)$. In the operation that uncrosses $S_v$, the newly cut edges are precisely $E_G(A\setminus S_v,A\cap S_v)$, and all edges in $E_G(A\cap S_v,B\cap S_v)$ disappear. We \emph{charge} the newly cut edges $E_G(A\setminus S_v,A\cap S_v)$ to the deleted edges $E_G(A\cap S_v,B\cap S_v)$ at a $1+\gamma$ to $1$ ratio. Finally, if $v\in A$, then the argument is symmetric by replacing $A$ and $B$, and the charging is identical.

Since the sets $S_v:v\in R$ are disjoint, 
each edge is either charged to or charged from, but not both. If the total weight of charged-to edges is $W$, then the total weight of newly cut edges is at most $(1+\gamma)W$, so the final cut has weight at most $\lambda_G(p,q)-W+(1+\gamma)W\le(1+\gamma)\lambda_G(p,q)$, as promised.
\end{proof}

\begin{lemma}\label{lem:small}
For any $v\in  R$ and any distinct vertices $p,q\in U_v$, we have $\lambda_G(p,q)\le\lambda_{G_v}(p,q)\le(1+13\epsilon)\lambda_G(p,q)$.
\end{lemma}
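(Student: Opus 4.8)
The plan is to mirror the argument of Lemma~\ref{lem:large}, but now we need to understand precisely what $G_v$ looks like. Recall that $G_v$ is obtained from $G$ (the current recursive graph, \emph{not} $G'$) by contracting $V\setminus S_v$ to a single vertex $x_v$. The left inequality $\lambda_G(p,q)\le\lambda_{G_v}(p,q)$ is immediate since $G_v$ is a contraction of $G$. For the right inequality, fix a $(p,q)$-mincut $(A,B)$ in $G$ with $p\in A$, $q\in B$; note both $p,q\in S_v\cap U = U_v$. We want to uncross $(A,B)$ with the set $S_v$ so that the resulting cut survives the contraction of $V\setminus S_v$, i.e., has all of $V\setminus S_v$ on one side. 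Since $p,q\in S_v$, uncrossing means replacing $(A,B)$ with $(A\cap S_v,\,B\cup(V\setminus S_v))$ — we push everything outside $S_v$ to the $q$-side. The newly cut edges are those in $E_G(A\cap S_v,\,A\setminus S_v)$, and the edges that disappear are those in $E_G(B\cap S_v,\,A\setminus S_v)$... wait, more carefully: after uncrossing, the boundary is $E_G(A\cap S_v, B\cap S_v)\uplus E_G(A\cap S_v, V\setminus S_v)$, while the original boundary was $E_G(A\cap S_v, B\cap S_v)\uplus E_G(A\cap S_v, V\setminus S_v)\uplus E_G(A\setminus S_v, B\cup(V\setminus S_v))$. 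So actually uncrossing only \emph{removes} edges and $\lambda_{G_v}(p,q)\le\lambda_G(p,q)$?? That cannot be right in general, so the subtlety is that the contraction of $V\setminus S_v$ into $x_v$ is a contraction of $G$ itself, and any $(p,q)$-cut in $G$ with $V\setminus S_v$ entirely on one side is a $(p,q)$-cut in $G_v$ of the same weight; the uncrossed cut $(A\cap S_v, V\setminus (A\cap S_v))$ is such a cut. Hence $\lambda_{G_v}(p,q)\le w_G(\partial(A\cap S_v))\le w_G(\partial A)=\lambda_G(p,q)$, which would give equality, not a $(1+13\epsilon)$ factor. The resolution must be that the recursive call on line~\ref{line:recursive} passes $G_v$ but the approximation slack comes from somewhere else.

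Rereading line~\ref{line:recursive}: the recursive call is on $(G_v,U_v)$ where $G_v$ is $G$ with $V\setminus S_v$ contracted — so in fact $\lambda_{G_v}(p,q)$ \emph{can} be strictly larger than $\lambda_G(p,q)$ only when the $(p,q)$-mincut in $G$ genuinely needs to separate $V\setminus S_v$, i.e. crosses $S_v$; but the uncrossing argument above shows it never does once we replace $A$ by $A\cap S_v$ — unless doing so is not a valid $(p,q)$-cut, which it always is since $p\in A\cap S_v$ and $q\notin A\cap S_v$. So I now believe $\lambda_{G_v}(p,q)=\lambda_G(p,q)$ exactly, and the stated bound $(1+13\epsilon)$ is simply a (loose) upper bound that the authors prove because it is all they need — \emph{or} the graph $G_v$ in the lemma is actually built from $G'$ rather than $G$, introducing the $18\epsilon\lambda/|U|$ edges incident to $s$; if $s\in V\setminus S_v$ then those edges get absorbed into $x_v$ and $\partial_{G_v}(A\cap S_v)$ picks up at most $|A\cap S_v\cap U|\cdot 18\epsilon\lambda/|U|\le 18\epsilon\lambda$ extra, and since $\lambda_G(p,q)\ge(1-\epsilon)\lambda\cdot$(something)$\ge \lambda/2$ roughly, this is $\le$ a constant times $\epsilon$ times $\lambda_G(p,q)$, yielding the $(1+13\epsilon)$ factor after the arithmetic $18\epsilon\lambda \le 13\epsilon\cdot\lambda_G(p,q)$ using $\lambda_G(p,q)$ bounded below in terms of $\lambda$.

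\medskip

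\noindent Concretely, the steps I would carry out are: \textbf{(1)} Clarify that the recursive graph is a contraction of $G'$ (with the added $s$-star of weight $18\epsilon\lambda/|U|$), so an analogue of Lemma~\ref{lem:large}'s uncrossing applies with $G'$ in place of $G$. \textbf{(2)} Take a $(p,q)$-mincut in $G$, uncross it with $S_v$ (pushing $V\setminus S_v$ to one side), obtaining a valid $(p,q)$-cut in $G_v$; by the contraction relationship $\lambda_G(p,q)\le\lambda_{G_v}(p,q)$. \textbf{(3)} For the upper bound, take a $(p,q)$-mincut $(A,B)$ in $G_v$, ``un-contract'' it to a cut in $G'$ with $V\setminus S_v$ all on the $x_v$-side; its $G$-weight is at most its $G'$-weight, and the $G'$-weight differs from the $G_v$-weight by at most the weight of $s$-star edges absorbed into $x_v$, i.e. $\le |U|\cdot 18\epsilon\lambda/|U| = 18\epsilon\lambda$ if $s\in V\setminus S_v$, and $0$ otherwise (if $s\in S_v$ the $s$-star edges are not all absorbed but a symmetric/simpler bound holds). \textbf{(4)} Bound $\lambda\le\lambda_G(p,q)\cdot(\text{const})$: since $p,q$ are terminals in $U_v\subseteq U$ and $\lambda$ is a $(1+\epsilon)$-approximation of the Steiner mincut, and $(p,q)$-mincut $\ge$ Steiner mincut $\ge(1-\epsilon)\lambda/(1+\epsilon)$, so $18\epsilon\lambda\le 18\epsilon\cdot\frac{1+\epsilon}{1-\epsilon}\lambda_G(p,q)\le 13\epsilon\cdot\lambda_G(p,q)$ once the constants are tightened... this requires $\epsilon<1/100$, which is assumed. \textbf{Main obstacle:} getting the bookkeeping on the $s$-star edges exactly right — in particular the case $s\in S_v$ versus $s\notin S_v$, and ensuring the uncrossing with the fair cut $S_v$ (which is a cut on $G'$, not $G$) is legitimate when we restrict flow paths to $G$-edges, exactly as in the proof of Lemma~\ref{lem:large}. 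The flow/fair-cut part is essentially a verbatim repeat of Lemma~\ref{lem:large}; the genuinely new content is the $\epsilon$-accounting for the added star and translating the loss into a multiplicative $(1+13\epsilon)$ factor via the lower bound on $\lambda_G(p,q)$.
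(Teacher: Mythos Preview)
Your proposal contains a genuine gap stemming from an arithmetic slip in the edge decomposition. You write that the original boundary $\partial A$ contains $E_G(A\cap S_v,\,V\setminus S_v)$, but it only contains $E_G(A\cap S_v,\,B\setminus S_v)$; the edges $E_G(A\cap S_v,\,A\setminus S_v)$ lie entirely inside $A$ and are \emph{not} in $\partial A$. Consequently, passing from $A$ to $A\cap S_v$ does \emph{not} only remove edges: it removes $E_G(A\setminus S_v,B)$ but \emph{adds} $E_G(A\cap S_v,\,A\setminus S_v)$. So the conclusion $\lambda_{G_v}(p,q)=\lambda_G(p,q)$ is false in general. (Quick counterexample: a path $p$--$a$--$b$--$q$ with weights $10,1,10$ and $S_v=\{p,q\}$; then $\lambda_G(p,q)=1$ via the cut $\{p,a\}$, but after contracting $\{a,b\}$ we get $\lambda_{G_v}(p,q)=10$.) Once you realize something is off, you try to attribute the $13\epsilon$ slack to the added $s$-star edges by speculating that $G_v$ is built from $G'$; but line~9 of Algorithm~\ref{alg:approxGH} is explicit that $G_v$ is a contraction of $G$, not $G'$, so those edges are simply not present in $G_v$ and cannot be the source of the loss.

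The paper's proof takes a completely different route from the fair-flow uncrossing of \Cref{lem:large}. It uses plain submodularity of cuts: with $S$ the side of a $(p,q)$-mincut not containing $s$, one has $\delta_G S_v+\delta_G S\ge\delta_G(S_v\cup S)+\delta_G(S_v\cap S)$. The two crucial inputs are (i)~\Cref{lem:Sv}, which gives $\delta_G S_v\le(1+11\epsilon)\lambda$, and (ii)~the fact that $S_v\cup S$ is still a Steiner cut, hence $\delta_G(S_v\cup S)\ge(1-\epsilon)\lambda$. Plugging these in yields $\delta_G(S_v\cap S)\le\delta_G S+12\epsilon\lambda$, and since $S_v\cap S\subseteq S_v$ survives the contraction, this bounds $\lambda_{G_v}(p,q)$; the final step converts the additive $12\epsilon\lambda$ to a multiplicative $(1+13\epsilon)$ using $\lambda_G(p,q)\ge(1-\epsilon)\lambda$. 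The fairness of $S_v$ plays no role here---only the numerical bound on $\delta_G S_v$ does---which is why the argument looks nothing like \Cref{lem:large}.
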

\begin{proof}
The lower bound $\lambda_G(p,q)\le\lambda_{G_v}(p,q)$ holds because $G_v$ is a contraction of $G$, so we focus on the upper bound.
Fix any $(p,q)$-mincut in $G$, and let $S$ be the side of the mincut not containing $s$ (recall that $s\in U$ and $s\notin S_v$). Since $S_v\cup S$ is a $(p, s)$-cut (and also a $(q, s)$-cut), it is in particular a Steiner cut for terminals $U$, so $\delta_G(S_v\cup S)\ge(1-\epsilon)\lambda$. Also, 
$\delta_GS_v\le(1+\gamma)(1+10\epsilon)\lambda\le(1+11\epsilon)\lambda$
 by \Cref{lem:Sv}. Together with the submodularity of cuts, we obtain
\BAL (1+11\epsilon)\lambda+\delta_GS \ge \delta_GS_v + \delta_GS \ge \delta_G(S_v\cup S) + \delta_G(S_v\cap S) \ge (1-\epsilon)\lambda+ \delta_G(S_v\cap S) ,\EAL
The set $S_v\cap S$ stays intact under the contraction from $G$ to $G_v$, so $\delta_{G_v}(S_v\cap S)=\delta_G(S_v\cap S)$. Therefore,
\BAL \lambda_{G_v}(p,q)\le \delta_{G_v}(S_v\cap S)=\delta_G(S_v\cap S) \le \delta_GS+12\epsilon\lambda = \lambda_G(p,q) +12\epsilon\lambda .\EAL
Finally, $\lambda_G(p,q)$ is at least the Steiner mincut of $G$, which is at least $(1-\epsilon)\lambda$, so the above is at most $\lambda_G(p,q)+12\epsilon\cdot\lambda_G(p,q)/(1-\epsilon) \le (1+13\epsilon)\lambda_G(p,q)$, as promised.
\end{proof}

Combining the lemmas above, we can conclude the following. 

\begin{lemma}\label{lem:approx}
Algorithm~\ref{alg:approxGH} outputs a $\big((1+13\epsilon)(1+\gamma)^{O(\epsilon^{-1}\log^6n)}\big)^{\log_{1.5}|U|}$-approximate Gomory-Hu Steiner tree. With $\gamma=\epsilon^2/\log^6n$, the approximation factor is $(1+\epsilon)^{O(\log|U|)}$.
\end{lemma}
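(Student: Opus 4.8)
The plan is to prove the approximation guarantee by induction on $|U|$, tracking how the approximation factor degrades across the two types of recursive calls (the ``small'' calls on $G_v$ and the ``large'' calls on $G_\lar$). Let $\mathrm{apx}(U)$ denote the worst-case approximation factor of Algorithm~\ref{alg:approxGH} on an instance with terminal set of size $|U|$. The base case $|U|=1$ is trivial since the tree is empty and $f$ is constant. For the inductive step, I would fix two terminals $s',t'\in U$ and analyze the edge of minimum weight on the $s'$–$t'$ path in the output tree $T$, distinguishing whether $s'$ and $t'$ land in the same recursive subinstance or in different ones.

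First I would set up the recursion-tree accounting. By \Cref{lem:depth}, every root-to-leaf path in the recursion tree has $O(\log|U|)$ calls along some $G_v$ (each shrinking $|U|$ by a constant factor, by \Cref{lem:Sv}), and between two consecutive $G_v$-calls there are $O(\epsilon^{-1}\log^6 n)$ calls along $G_\lar$. The key multiplicative facts are: by \Cref{lem:large}, passing to $G_\lar$ distorts pairwise mincut values among surviving terminals by a factor at most $(1+\gamma)$; by \Cref{lem:small}, passing to $G_v$ distorts them by a factor at most $(1+13\epsilon)$. So along any path, between two $G_v$-calls the accumulated distortion from $G_\lar$-calls is at most $(1+\gamma)^{O(\epsilon^{-1}\log^6 n)}$, and each $G_v$-call contributes an extra $(1+13\epsilon)$; since there are $O(\log|U|)$ of the latter (more precisely $\log_{1.5}|U|$ using the $2|U|/3$ bound), the total distortion is at most $\big((1+13\epsilon)(1+\gamma)^{O(\epsilon^{-1}\log^6 n)}\big)^{\log_{1.5}|U|}$.

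Next I would verify that this distortion bound on pairwise mincut values actually yields a valid $(1+\epsilon')$-approximate Gomory-Hu Steiner tree with $\epsilon'$ the stated factor minus one. For terminals $s',t'$ that end up in the same recursive instance $(G',U')$ (either some $G_v$ or $G_\lar$), by induction $T$ restricted to that instance gives a $\mathrm{apx}(U')$-approximate Gomory-Hu Steiner tree for $(G',U')$, and by \Cref{lem:large}/\Cref{lem:small} the mincut in $G'$ is within the appropriate factor of the mincut in $G$; composing gives the claim, and one checks the partition structure ($f^{-1}$ of a tree component) is preserved because the recursive $f_v$ (resp. $f_\lar$) maps into the corresponding vertex subset and the merge in lines~\ref{line:combine-T}--\ref{line:combine-f} glues these consistently. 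For terminals $s'\in U_v$ and $t'\in U_\lar$ (or in two different $G_v$'s), the minimum-weight edge on their tree path is the edge $(f_v(x_v),f_\lar(y_v))$ of weight $w(\partial_G S_v)$ added on line~\ref{line:combine-T}; here I would use \Cref{lem:Sv} (so $\delta_G S_v\le(1+\gamma)(1+10\epsilon)\lambda$) together with the fact that $\lambda$ is a $(1+\epsilon)$-approximate Steiner mincut and $\lambda_G(s',t')\ge(1-\epsilon)\lambda$ to conclude $w(\partial_G S_v)$ is within a $(1+O(\epsilon))$ factor of $\lambda_G(s',t')$, and that $f^{-1}(\text{component of }s')$ is exactly $S_v$ (up to contraction), which is a cut of the right value separating $s'$ from $t'$.

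Finally, plugging $\gamma=\epsilon^2/\log^6 n$ makes $(1+\gamma)^{O(\epsilon^{-1}\log^6 n)} = e^{O(\epsilon)} = 1+O(\epsilon)$, so the per-level factor is $(1+13\epsilon)(1+O(\epsilon)) = 1+O(\epsilon)$, and raising to the $\log_{1.5}|U|$ power gives $(1+\epsilon)^{O(\log|U|)}$, as claimed. The main obstacle I anticipate is not any single inequality but the bookkeeping in the mixed case: carefully arguing that the minimum-weight edge on an $s'$–$t'$ tree path is the ``right'' one (the boundary edge $w(\partial_G S_v)$ when they are separated at this level, versus a recursively-produced edge when they are not) and that the associated tree-component vertex set pulls back under $f$ to an approximately minimum $(s',t')$-cut — this requires chasing the definitions of $G_v$, $G_\lar$, $f_v$, $f_\lar$ and the merge step simultaneously, and is exactly the place where the one-sided fairness guarantee of \Cref{thm:isolating2} (used already in \Cref{lem:large}) is what saves the argument.
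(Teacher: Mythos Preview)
Your proposal is correct and follows essentially the same approach as the paper: induction over the recursion tree using \Cref{lem:large} and \Cref{lem:small} to track per-level distortion, with \Cref{lem:depth} and \Cref{lem:Sv} bounding the number of $G_\lar$-calls between $G_v$-calls and the number of $G_v$-calls overall. The paper formalizes the accounting slightly differently, carrying an explicit counter $k$ of consecutive $G_\lar$-calls into the induction hypothesis (proving a bound of $((1+13\epsilon)(1+\gamma)^\alpha)^{\log_{1.5}|U|}(1+\gamma)^{-k}$), but this is just a repackaging of your telescoping argument.

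One small correction to your mixed-case sketch: it is \emph{not} necessarily true that the minimum-weight edge on the $s'$--$t'$ path is the boundary edge $(f_v(x_v),f_\lar(y_v))$; it could instead lie in $T_v$, $T_{v'}$, or $T_\lar$. The paper handles this by first observing that every tree edge has weight at least $(1-\epsilon)\lambda$ (since each corresponds to an actual cut, hence at least the Steiner mincut), so the minimum-weight edge lies in $[(1-\epsilon)\lambda,(1+\gamma)(1+10\epsilon)\lambda]$ regardless of which edge it is, and then splits into subcases depending on where it lies. You flag exactly this bookkeeping as the main obstacle, so your plan already anticipates it; just be sure not to assert the boundary edge is the minimum when you write it out.
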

\begin{proof}
To avoid clutter, define $\alpha=C\epsilon^{-1}\log^6n$ for large enough constant $C>0$. Consider the path down the recursion tree leading up to the current recursive instance, and let $k$ be the number of consecutive recursive calls of type $G_\lar$ directly preceding the current instance.
We apply induction on $|U|$ and $k$ to prove an $((1+13\epsilon)(1+\gamma)^\alpha)^{\log_{1.5}|U|}(1+\gamma)^{-k}$-approximation factor. By \Cref{lem:Sv}, we have $|U_v|\le2|U|/3$ for all $v\in R$, so by induction, the recursive outputs $(T_v,f_v)$ are Gomory-Hu Steiner trees with approximation $((1+13\epsilon)(1+\gamma)^\alpha)^{\log_{1.5}|U_v|}\le((1+13\epsilon)(1+\gamma)^\alpha)^{\log_{1.5}|U|-1}$.  By definition, this means that for all $s,t\in U_v$ and the minimum-weight edge $(u,u')$ on the $s$--$t$ path in $T_v$, letting $U'_v\subseteq U_v$ be the vertices of the connected component of $T_v-(u,u')$ containing $s$, we have that $f^{-1}_v(U'_v)$ is a $((1+13\epsilon)(1+\gamma)^\alpha)^{\log_{1.5}|U|-1}$-approximate $(s,t)$-mincut in $G_v$ with value $w_T(u,u')$. Define $U'\subseteq U$ as the vertices of the connected component of $T-(u,u')$ containing $s$. By construction of $(T,f)$ (lines~\ref{line:combine-T}~and~\ref{line:combine-f}), the set $f^{-1}(U')$ is simply $f^{-1}_v(U'_v)$ with the vertex $x_v$ replaced by $V\setminus S_v$ in the case that $x_v\in f^{-1}(U')$. Since $G_v$ is simply $G$ with all vertices $V\setminus S_v$ contracted to $x_v$, we conclude that $\delta_{G_v}( f^{-1}_v(U'_v)) = \delta_G( f^{-1}(U'))$. By \Cref{lem:small}, the values $\lambda_G(s,t)$ and $\lambda_{G_v}(s,t)$ are within factor $(1+13\epsilon)$ of each other, so $\delta_G( f^{-1}(U'))$ approximates the $(s,t)$-mincut in $G$ to a factor $(1+13\epsilon)\cdot((1+13\epsilon)(1+\gamma)^\alpha)^{\log_{1.5}|U|-1}$, which we want to show is at most $((1+13\epsilon)(1+\gamma)^\alpha)^{\log_{1.5}|U|}(1+\gamma)^{-k}$. This follows by \Cref{lem:depth} since w.h.p., we always have $k\le C\epsilon^{-1}\log^6n=\alpha$ for large enough constant $C>0$. Thus, the Gomory-Hu Steiner tree condition for $(T,f)$ is satisfied for all $s,t\in U_v$ for some $v\in R$.

We now focus on the case $s,t\in U_\lar$. By induction, the recursive output $(T_\lar,f_\lar)$ is a Gomory-Hu Steiner tree with approximation $((1+13\epsilon)(1+\gamma)^\alpha)^{\log_{1.5}|U|}(1+\gamma)^{-(k+1)}$. Again, consider $s,t\in U_\lar$ and the minimum-weight edge $(u,u')$ on the $s$--$t$ path in $T_\lar$, and let $U'_\lar\subseteq U_\lar$ be the vertices of the connected component of $T_\lar-(u,u')$ containing $s$. Define $U'\subseteq U$ as the vertices of the connected component of $T-(u,u')$ containing $s$. By a similar argument, we have $\delta_{G_\lar}(f^{-1}_\lar(U'_\lar)) = \delta_G(f^{-1}(U'))$. By \Cref{lem:large}, we also have $\lambda_{G_\lar}(s,t)=(1+\gamma)\lambda_G(s,t)$, so $\delta_G( f^{-1}(U'))$ is a $\left(((1+13\epsilon)(1+\gamma)^\alpha)^{\log_{1.5}|U|}(1+\gamma)^{-(k+1)}\cdot(1+\gamma)\right)$-approximate $(s,t)$-mincut in $G$, fulfilling the Gomory-Hu Steiner tree condition for $(T,f)$ in the case $s,t\in U_\lar$.

There are two remaining cases: $s\in U_v$ and $t\in U_{v'}$ for distinct $v,v'\in R$, and $s\in U_v$ and $t\in U_\lar$; we treat both cases simultaneously. Since $G$ has Steiner mincut at least $\lambda$, each of the contracted graphs $G_\lar$ and $G_v$ also has Steiner mincut at least $\lambda$. Since all edges on the approximate Gomory-Hu Steiner tree correspond to actual cuts in the graph, every edge in $T_v$ and $T_\lar$ has weight at least $\lambda$. By construction, the $s$--$t$ path in $T$ has at least one edge of the form $(f_v(x_v),f_\lar(y_v))$, added on line~\ref{line:combine-T}; this edge has weight $\delta_GS_v\le(1+\epsilon)(1+\gamma)\lambda$ by \Cref{lem:Sv}. Therefore, the minimum-weight edge on the $s$--$t$ path in $T$ has weight at least $\lambda$ and at most $(1+\epsilon)(1+\gamma)\lambda$; in particular, it is a $(1+\epsilon)(1+\gamma)$-approximation of $\lambda_G(s,t)$, which fits the bound since $|U|\ge2$. If the edge is of the form $(f_v(x_v),f_\lar(y_v))$, then by construction, the relevant set $f^{-1}(U')$ is exactly $S_v$, which is a $(1+\epsilon)$-approximate $(s,t)$-mincut in $G$. If the edge is in $T_\lar$ or $T_v$ or $T_{v'}$, then we can apply the same arguments used previously.
\end{proof}

Finally, we can reset $\epsilon\gets\Theta(\epsilon/\log n)$ so that the $(1+\epsilon)^{O(\log|U|)}$-approximation becomes $(1+\epsilon)$. This concludes \Cref{thm:approx-w}.

\section{Expander Decomposition}
\label{sec:expdecomp}
In this section, we show how the fair cut algorithm implies a near-optimal expander decomposition algorithm, following the framework of Saranurak and Wang~\cite{SaranurakW19}. We first begin with some notation exclusive to this section. Define the \emph{volume} of a set of vertices $S$ as $\textbf{\textup{vol}}(S)=\sum_{v\in S}\deg(v)$, and let $G\{S\}$ denote the subgraph $G[S]$ with (weighted) self-loops added to vertices so that all vertex degrees are preserved, i.e., $\deg_G(v)=\deg_{G\{S\}}(v)$ for all $v\in S$. For a graph $G$, define its \emph{conductance} as
\[ \Phi_G = \min_{\emptyset\subsetneq S\subsetneq V} \frac { c(E(S,V\setminus S)) } { \min\{\textbf{\textup{vol}}(S),\textbf{\textup{vol}}(V\setminus S) \} } .\]
We call $G$ a \emph{$\phi$-expander} if $\Phi_G\ge\phi$.

\begin{restatable}[Near-linear expander decomposition]{theorem}{expdecomp}
\label{thm:exp_decomp}
Given a graph $G=(V,E)$ and a parameter $\phi$, there is a randomized $\tilde{O}(m)$-time algorithm that with high probability finds a partitioning of $V$ into $V_1,\ldots,V_k$ such that $\Phi_{G\{V_i\}}\ge\phi$ for all $i\in[k]$ and $\sum_i\delta(V_i)=\tilde{O}(\phi m)$.
\end{restatable}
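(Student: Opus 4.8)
The plan is to follow the two-phase framework of Saranurak and Wang~\cite{SaranurakW19}: repeatedly invoke a \emph{cut-matching} procedure to either certify that the current piece $G\{A\}$ is a near-$\phi$-expander or find a reasonably balanced sparse cut, and then \emph{trim} the smaller side so that the remaining large side becomes a genuine $\phi$-expander (not merely a ``nearly'' one), recursing on the trimmed-off part. The only change from \cite{SaranurakW19} is to replace the two bottleneck flow subroutines with near-linear-time algorithms: the cut-matching step's embedding/flow computation is done by Peng's $(1+\epsilon)$-approximate max-flow~\cite{Peng16} in $\tilde O(m)$ time, and — the crux — the trimming step is recast as a \emph{one-sided fair cut} instance and solved by \Cref{thm:fair} in $\tilde O(m)$ time. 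Because each of $O(\log n)$ recursion levels does $\tilde O(m)$ work total (the pieces at a level partition $V$, up to the usual $O(\log n)$ overhead from boundary edges), the overall runtime is $\tilde O(m)$.

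\textbf{Cut-matching phase.} For a piece with vertex set $A$, run the cut-matching game on $G\{A\}$: in each of $O(\log^2 n)$ rounds the cut player produces a bisection, and we route a multicommodity/single-commodity flow (via a suitable source-sink reduction with the added self-loops accounting for degrees) at congestion $\tilde O(1/\phi)$ using \Cref{thm:congest}/\cite{Peng16}. One of three outcomes occurs: (i) an expander embedding is found, certifying $\Phi_{G\{A\}} = \Omega(\phi/\log^2 n) \ge \tilde\Omega(\phi)$ and we stop on this piece; (ii) a balanced sparse cut $(A_1,A_2)$ with $\delta(A_1,A_2) \le \tilde O(\phi)\min\{\vol(A_1),\vol(A_2)\}$ and both sides $\Omega(\vol(A)/\log^2 n)$ is found, and we recurse on both sides; or (iii) an \emph{unbalanced} sparse cut with small side $A_2$ of volume $\le \vol(A)/(10\log^2 n)$ is found — here the large side $A_1$ is a near-expander but may have a few low-conductance cuts clustered near its boundary, and we pass to the trimming phase. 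This is verbatim the structure of \cite[Section~2]{SaranurakW19}; the charging $\sum_i \delta(V_i) = \tilde O(\phi m)$ follows because every edge cut is charged to the smaller side's volume, which is lost at each level, summing to $\tilde O(\phi \vol(V)) = \tilde O(\phi m)$ over $O(\log n)$ levels.

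\textbf{Trimming phase as a one-sided fair cut.} Given the near-expander large side $A_1$ with small boundary, we want to find $A_1' \subseteq A_1$ such that $G\{A_1'\}$ is a true $\phi$-expander, while the removed part $A_1 \setminus A_1'$ has small volume so it can be charged as above. Set up a flow instance on $G\{A_1\}$: each boundary edge leaving $A_1$ contributes a source of capacity proportional to $\deg$ (equivalently, a super-source $s$ connected to the boundary), and ask for a feasible flow that saturates each boundary edge to a $1/(1+\epsilon)$ fraction, routed within $A_1$ against sink capacities $\approx 2/\phi \cdot \deg(v)$ per vertex — exactly the ``demand from the boundary'' setup of the \textbf{one-sided fair cut} definition. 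Applying \Cref{thm:fair} (in its one-sided form, as used for isolating cuts) returns a cut $(A_1 \setminus A_1', A_1')$ together with a flow that saturates the boundary of $A_1'$ to a $1/(1+\epsilon)$ fraction; the fairness guarantee — $\delta(A_1\setminus A_1', A_1') \le (1+\epsilon)$ times what the flow can push out through the far boundary — is precisely the certificate that \cite{SaranurakW19}'s trimming analysis needs to conclude both that $\vol(A_1 \setminus A_1')$ is small (bounded by $O(1/\phi)$ times the original boundary, hence a constant fraction of what has already been charged) and that $G\{A_1'\}$ has conductance $\ge \phi$ after the self-loop adjustment. We then recurse on $A_1 \setminus A_1'$.

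\textbf{Main obstacle.} The delicate part is the trimming analysis: verifying that the one-sided fair cut guarantee of \Cref{thm:fair} supplies \emph{exactly} the flow/cut certificate that \cite{SaranurakW19} extract from their bespoke push-relabel routine, so that $G\{A_1'\}$ ends up a \emph{genuine} $\phi$-expander rather than an $\Omega(\phi)$-expander (we cannot afford another $\text{polylog}$ loss here, since trimming is applied once per piece and the conductance must not degrade across levels). Concretely, one must: (a) translate the ``saturate every boundary edge to $1/(1+\epsilon)$'' condition plus feasibility against the $2/\phi\cdot\deg$ sink capacities into the statement ``every cut $(S, A_1'\setminus S)$ inside $A_1'$ with $\vol(S)\le\vol(A_1')/2$ has $\delta_G(S) \ge \phi\,\vol(S)$,'' and (b) ensure the volume of the pruned set is dominated by the already-charged quantity. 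Both are done in \cite{SaranurakW19} for the exact-flow certificate; the work is to re-derive them from the $(1+\epsilon)$-approximate fair-cut certificate, absorbing the $(1+\epsilon)$ slack by running \Cref{thm:fair} with $\epsilon = \Theta(1)$ (or a sufficiently small constant) and the original parameters slightly rescaled so the final conductance is $\ge \phi$ on the nose. Everything else — the cut-matching recursion, the $\tilde O(m)$ per-level bound, and the $\sum_i \delta(V_i) = \tilde O(\phi m)$ accounting — carries over from \cite{SaranurakW19} unchanged.
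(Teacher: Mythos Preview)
Your proposal is correct and follows essentially the same approach as the paper: plug fair cuts into the Saranurak--Wang cut-matching/trimming framework, with trimming realized as an $(s,t)$-fair-cut instance on the graph where the boundary is contracted to a source and each vertex gets a degree-proportional sink edge (the paper uses the scaling $3/\phi$ on boundary edges and $\deg(v)$ on sink edges, and constant $\alpha=0.1$, obtaining a $\phi/6$-expander; your $2/\phi$ sink-side scaling is the same idea up to reparametrization). One small point you gloss over: the paper does \emph{not} handle the cut-matching step with plain approximate max-flow as you suggest, but also uses a fair-cut computation there, because the relevant subroutine (Lemma~B.6 of \cite{SaranurakW19}) must in the failure case output \emph{both} a low-conductance cut \emph{and} a feasible flow routing the sources outside that cut --- exactly the simultaneous cut-plus-flow certificate that fair cuts provide and that approximate max-flow alone does not.
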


Note that if $G\{V_i\}$ is a $\phi$-expander, then so is the induced subgraph $G[V_i]$ (which is sometimes more directly applicable). We also remark that \cite{SaranurakW19} prove almost the exact same theorem, except their running time is $\tilde{O}(m/\phi)$, and is therefore slower for small values of $\phi$.

At a high level, we use the same high-level recursive approach, except we replace the flow subroutines in their \emph{trimming} and \emph{cut-matching} steps of \cite{SaranurakW19} with a fair cut computation.
We note that there are known black-box reductions from expander decomposition to computing (approximately) most-balanced sparse cuts. But these reductions have some drawbacks and do not lead to near-optimal algorithms as in \Cref{thm:exp_decomp}. 
The first reduction is implicit by Spielman and Teng \cite{SpielmanT04}. However, they can only obtain a \emph{weak} expander decomposition from most-balanced sparse cut algorithms. It is weak in the sense that each part is only guaranteed to be contained in some expanders, but may not induce an expander itself. Another reduction by Nanongkai and Saranurak \cite{NanongkaiS17} suffers from an inherent $n^{o(1)}$ factor loss in both quality and running time. 
By refining the non-blackbox approach of \cite{SaranurakW19} via fair cuts, we successfully obtain the first expander decomposition algorithm that are optimal up to polylogarithmic factors. 

\subsection{Algorithm overview}

We begin by describing the recursive algorithm of~\cite{SaranurakW19} at a high level. There are two main subroutines, \emph{cut-matching} and \emph{trimming}, to be described later. On input graph $G=(V,E)$ and parameter $\phi$, the algorithm Decomp$(G,\phi)$ outputs a partition of $V$ as follows.
 \begin{enumerate}
 \item Call Cut-Matching$(G,\phi)$, which either certifies that $\Phi_G\ge\phi$ or finds a cut $(A,R)$
 \item If we certify $\Phi_G\ge\phi$, then return $\{V\}$ (the trivial partition)
 \item Else if we find a relatively balanced cut $(A,R)$, where $\textbf{\textup{vol}}(A)$ and $\textbf{\textup{vol}}(R)$ are both $\Omega(\textbf{\textup{vol}}(V)/\log^2m)$:
  \begin{enumerate}
  \item Return Decomp$(G\{A\},\phi)$ $\cup$ Decomp$(G\{R\},\phi)$
  \end{enumerate}
 \item Else, suppose that $\textbf{\textup{vol}}(R) \le O(\textbf{\textup{vol}}(V)/\log^2m)$:
  \begin{enumerate}
  \item $A'=\text{Trimming}(G,A,\phi)$
  \item Return $\{A'\} \cup \text{Decomp}(G\{A'\},\phi)$
  \end{enumerate}
 \end{enumerate}
If Cut-Matching and Trimming run in $T$ time, then the entire recursive algorithm takes $\tilde{O}(T)$ time. In~\cite{SaranurakW19}, the two subroutines are solved in $\tilde{O}(m/\phi)$ time. In this section, we improve both running times to $\tilde{O}(m)$ by substituting their flow subroutines with fair cuts/flows.

\subsection{Trimming step}

To describe the trimming step formally, we need the concept of a \emph{nearly expander}.

\begin{definition}[nearly $\phi$-expander]
Given $G=(V,E)$ and a set of vertices $A\subseteq V$, we say that $A$ is a nearly $\phi$-expander in $G$ if for all subsets $S\subseteq A$ with $\textbf{\textup{vol}}(S)\le\textbf{\textup{vol}}(A)/2$, we have $c(E(S,V\setminus S))\ge\phi\textbf{\textup{vol}}(S)$.
\end{definition}

In the trimming step, we are given a set $A\subseteq V$ such that $A$ is a nearly $\phi$-expander in $G$, and the goal is to ``trim'' $A$ to a subset $A'\subseteq A$ such that $G\{A'\}$ is a $\phi/6$-expander. The formal subroutine is described in the theorem below, copied almost identically to Theorem~2.1 of~\cite{SaranurakW19} except for the improved $\tilde{O}(m)$ running time.

\begin{theorem}[Trimming, Theorem~2.1 of~\cite{SaranurakW19}]
Given graph $G=(V,E)$ and $A\subseteq V$ such that
 \begin{enumerate}
 \item $A$ is a nearly $\phi$-expander in $G$, and
 \item $c(E(A,\overline A))\le\phi\textbf{\textup{vol}}(A)/10$,
 \end{enumerate}
the trimming step finds $A'\subseteq A$ in time $\tilde{O}(m)$ such that $\Phi_{G\{A'\}}\ge\phi/6$. Moreover, $\textbf{\textup{vol}}(A')\ge\textbf{\textup{vol}}(A)-4c(E(A,\overline A))/\phi$ and $c(E(A',\overline{A'}))\le2c(E(A,\overline A))$.
\end{theorem}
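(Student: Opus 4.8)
The plan is to reduce the trimming step to a single call of our $s$-sided fair cut algorithm (\Cref{thm:fair}, via the one-sided version), following the template of Saranurak--Wang~\cite{SaranurakW19} but replacing their $\tilde O(m/\phi)$-time push-relabel flow computation with an $\tilde O(m)$-time fair cut. First I would set up the flow instance exactly as in~\cite{SaranurakW19}: build an auxiliary graph $H$ on vertex set $A\cup\{s\}$ where $s$ is a super-source, each boundary edge of $E(A,\overline A)$ with endpoint $u\in A$ contributes a source edge $(s,u)$ of capacity $2/\phi$ (so the total source capacity is $\tfrac{2}{\phi}c(E(A,\overline A))\le \tfrac{1}{5}\vol(A)$), every original edge inside $G[A]$ keeps capacity $1$, and every vertex $v\in A$ has a sink edge to a super-sink $t$ of capacity $\deg_G(v)$ (this encodes the constraint that a vertex can absorb flow proportional to its degree, which is what makes the surviving set an expander). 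The intuition is: if all source flow can be routed to the sinks while obeying capacities, then $A$ itself is already good; otherwise the min-cut tells us which vertices to peel off.

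The key steps, in order: (1) Invoke the $s$-sided $(1+\epsilon)$-fair cut algorithm on $H$ (with $\epsilon$ a small constant, say $\epsilon = 1/10$, absorbed into the constants $6$, $4$, $2$ of the statement) to obtain a cut $(S,\overline S)$ with $s\in S$ together with a feasible flow $f$ that saturates every edge of $E_H(S,\overline S)$ to a $\tfrac1{1+\epsilon}$ fraction and has no net source/sink at interior vertices of $S\setminus\{s\}$. Set $A' = A\setminus S$ (the sink side, intersected with $A$). (2) Use the \emph{cut} side of the guarantee to bound the new boundary: every edge leaving $A'$ is either an old boundary edge of $A$ or a newly cut edge in $E_H(S,\overline S)$; since the flow routes $\ge \tfrac{1}{1+\epsilon}$ of every cut edge's capacity and all that flow must ultimately reach $t$ through vertices of $A'$ with total sink capacity $\vol(A')$, and the total source is $\le \tfrac15\vol(A)$, a counting argument gives $c(E(A',\overline{A'}))\le 2c(E(A,\overline A))$ and $\vol(A')\ge \vol(A)-4c(E(A,\overline A))/\phi$, exactly as in Lemma~2.2--2.4 of~\cite{SaranurakW19}. (3) Use the \emph{flow} side to prove $\Phi_{G\{A'\}}\ge\phi/6$: take any $S'\subseteq A'$ with $\vol(S')\le\vol(A')/2$; because $A$ was a nearly $\phi$-expander, $c(E_G(S',V\setminus S'))\ge\phi\vol(S')$, and the fair flow lets us charge the flow crossing out of $S'$ in $H$ — which is at most $c(E_{G\{A'\}}(S',A'\setminus S'))$ plus the saturated-to-$1/(1+\epsilon)$ boundary contribution — against $\phi\vol(S')$; rearranging and using that the excess absorbed at each vertex is at most its degree yields conductance $\ge \phi/6$. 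This is precisely the argument of~\cite[Section 2]{SaranurakW19}, where the only change is that ``maxflow'' is replaced by ``$(1+\epsilon)$-fair flow'' and the slack $1+\epsilon$ is swallowed by the constant.

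The main obstacle I expect is verifying that a $(1+\epsilon)$-\emph{approximate} fair flow, rather than an exact maxflow, still suffices in step (3): in~\cite{SaranurakW19} the trimming analysis uses the exact complementary-slackness structure of maxflow/mincut (the cut edges are \emph{exactly} saturated and there is \emph{zero} flow crossing back), whereas our fair flow only saturates cut edges to $\tfrac1{1+\epsilon}$ and may send a small amount of flow in the ``wrong'' direction across interior edges. I would handle this by noting that the one-sided fair cut guarantee gives us a genuine feasible flow $f$ (congestion $\le 1$, the witness flow from \Cref{def:fair cut intro}'s one-sided variant) with $f(v)=0$ for all $v\in S\setminus\{s\}$, so the source-side conservation is exact; the only loss is the $1/(1+\epsilon)$ saturation on boundary edges, which weakens the bound $c(E(A',\overline{A'}))\le 2c(E(A,\overline A))$ and $\Phi_{G\{A'\}}\ge \phi/6$ by a constant factor that is already baked into the choice of constants $2$, $4$, $6$ (matching~\cite{SaranurakW19}, who likewise lose constants). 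The running time is immediate: one call to \Cref{thm:fair} on a graph with $O(m)$ edges costs $\tilde O(m)$, and everything else (building $H$, extracting $A'$, the charging arguments) is linear. Thus the theorem follows by instantiating the~\cite{SaranurakW19} trimming proof verbatim with our fair-cut subroutine in place of their flow subroutine.
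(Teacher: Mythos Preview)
Your high-level plan—reduce trimming to a single fair-cut call and then rerun the Saranurak--Wang analysis with a $(1+\epsilon)$ slack—is exactly what the paper does. However, the specific flow instance you describe has a fatal flaw: you keep the internal edges of $G[A]$ at capacity $1$ and only scale the source edges by $2/\phi$. The paper instead multiplies \emph{every} edge of the contracted graph (internal and boundary alike) by $3/\phi$ before adding sink edges of capacity $\deg(v)$. This is not cosmetic; it is precisely what makes the expander argument in your step~(3) go through.

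To see why your capacities fail, take $U\subseteq A'$ with $c(E(U,A'\setminus U))<\tfrac{\phi}{6}\vol(U)$ and try to derive a contradiction. The nearly-expander property gives $c(E(U,V\setminus A'))\ge\tfrac{5\phi}{6}\vol(U)$. In the paper's instance every edge of $E(U,V\setminus A')$ is a cut edge of $H$-capacity $\tfrac{3}{\phi}c(e)$, so the fair flow pushes at least $\tfrac{1}{1+\alpha}\cdot\tfrac{3}{\phi}\cdot\tfrac{5\phi}{6}\vol(U)>\tfrac{5}{3}\vol(U)$ into $U$, while at most $\vol(U)+\tfrac{3}{\phi}\cdot\tfrac{\phi}{6}\vol(U)=\tfrac{3}{2}\vol(U)$ can exit—contradiction. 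In your instance the edges from $A\setminus A'$ to $U$ are cut edges of $H$-capacity only $1$, so in the worst case (all of $E(U,V\setminus A')$ going to $A\setminus A'$ rather than $V\setminus A$) the fair flow pushes merely $\tfrac{1}{1+\epsilon}\cdot\tfrac{5\phi}{6}\vol(U)\ll\vol(U)$ into $U$, and no contradiction arises. Your ``charging'' sketch in step~(3) cannot be completed with these capacities. A secondary issue: you invoke the one-sided $s$-fair cut, but the expander argument reasons about flow conservation at vertices of $A'\subseteq T$, which the one-sided guarantee does not provide; the paper uses the full two-sided $(1+\alpha)$-fair $(s,t)$-cut so that the witness is a genuine $(s,t)$-flow conserved at every vertex of $A'$.
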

\begin{proof}
Consider the following $(s,t)$-flow problem on a new graph $H=(V_H,E_H)$. Start from $G\{A\}$, and contract $V\setminus A$ into a single vertex and label it the source $s$. Next, multiply the capacity of each edge by $3/\phi$. Finally, add a new sink vertex $t$ and connect it to each vertex $v\in A$ with an edge of capacity $\deg_{G\{A\}}(v)$. Let $\alpha=0.1$, and compute a $(1+\alpha)$-fair cut $(S,T)$. Let $A'=T\setminus\{t\}$, which we now show satisfies the properties of the lemma.

First, suppose for contradiction that $G\{A'\}$ is not a $\phi/6$-expander. Then, there is a violating set $U\subseteq A'$ satisfying
\[ c(E(U,A'\setminus U))\le\frac\phi6\textbf{\textup{vol}}(U) .\]
Since $A$ is a nearly $\phi$-expander,
\[ c(E(U,V\setminus U))\ge\phi\textbf{\textup{vol}}(U) .\]
Taking the difference of the two inequalities above,
\[ c(E(U,V\setminus A'))=c(E(U,V\setminus U))-c(E(U,A'\setminus U))\ge\frac{5\phi}6\textbf{\textup{vol}}(U) .\]
Since $(S,T)$ is a $(1+\alpha)$-fair cut, there is a feasible flow $f$ that saturates each edge of $E_H(S,T)$ to factor $\frac1{1+\alpha}$. Each edge $(u,v)$ in $E(U,V\setminus A')$ corresponds to an edge in $E_H(S,T)$ of capacity $\frac3\phi c_{G\{A\}}(u,v)$, and the flow $f$ must send at least $\frac1{1+\alpha}\cdot\frac3\phi c_{G\{A\}}(u,v) \ge \frac2\phi c_{G\{A\}}(u,v)$ flow along that edge (in the direction from $S$ to $T$). In total, the amount of flow entering $U$ in $H$ is at least
\[ \frac2\phi c_{G\{A\}}(E(U,V\setminus A')) \ge \frac2\phi \cdot \frac{5\phi}6\textbf{\textup{vol}}(U) = \frac53\textbf{\textup{vol}}(U) .\]
On the other hand, at most $\textbf{\textup{vol}}(U)$ flow can leave $U$ along the edges incident to $t$, and at most 
\[ \frac3\phi c_{G\{A\}}(E(U,A'\setminus U)) \le \frac3\phi\cdot\frac\phi6\textbf{\textup{vol}}(U)=\frac12\textbf{\textup{vol}}(U) \]
 flow can cross from $U$ to $A'\setminus U$. This totals at most $\frac32\textbf{\textup{vol}}(U)$ flow that can exit $U$, which is strictly less than the $\ge \frac53\textbf{\textup{vol}}(U)$ flow that enters $U$, a contradiction. Thus, $G\{A'\}$ is a $\phi/6$-expander.

Finally, we show the properties $\textbf{\textup{vol}}(A')\ge\textbf{\textup{vol}}(A)-4c(E(A,\overline A))/\phi$ and $c(E(A',\overline{A'}))\le2c(E(A,\overline A))$ promised by the lemma. Since $(S,T)$ is a $(1+\alpha)$-fair cut, it is in particular a $(1+\alpha)$-approximate $(s,t)$-mincut. Since $(\{s\},V_H\setminus\{s\})$ is an $(s,t)$-cut of capacity $\frac3\phi c(E(A,\overline A))$, it follows that the cut $(S,T)$ has capacity at most $(1+\alpha)\cdot\frac3\phi c(E(A,\overline A))$. To prove the first property above, note that each vertex $v\in A\setminus A'$ is on the $S$-side of the cut $(S,T)$, and therefore contributes $\deg_{G\{A\}}(v)$ to the cut $(S,T)$ from the edge $(v,t)$. Summing over all $v\in A\setminus A'$, we obtain
\[ \textbf{\textup{vol}}(A\setminus A') \le c_H(E(S,T)) \le (1+\alpha)\cdot\frac3\phi c(E(A,\overline A)) \le \frac4\phi c(E(A,\overline A)) ,\]
which proves the first property. For the second property above, note that each edge $(u,v)$ in $E(A',\overline{A'})$ corresponds to an edge in $E(S,T)$ with $3/\phi$ times the capacity, so summing over all such edges,
\[ \frac3\phi c(E(A',\overline{A'})) \le c_H(E(S,T)) \le (1+\alpha)\cdot\frac3\phi c(E(A,\overline A)) ,\]
which proves the second property.
\end{proof}

\subsection{Cut-matching step}

In the cut-matching step, the goal is to either certify that the input graph is an expander, or find a low-conductance cut with a special property: either it is balanced, or if not, we guarantee that the larger side is a nearly expander. The name ``cut-matching'' comes from the \emph{cut-matching game} framework~\cite{khandekar2009graph} that this step uses, though its description is not required in this section.

The formal subroutine is described in the theorem below, copied almost identically to Theorem~2.2 of~\cite{SaranurakW19} except for the improved $\tilde{O}(m)$ running time.

\begin{theorem}[Cut-Matching, Theorem~2.2 of~\cite{SaranurakW19}]
Given a graph $G=(V,E)$ and a parameter $\phi$, the \emph{cut-matching} step takes $\tilde{O}(m)$ time and must end with one of the three cases:
 \begin{enumerate}
 \item We certify $G$ has conductance $\Phi_G\ge\phi$.
 \item We find a cut $(A,\overline A)$ in $G$ of conductance $\Phi_G(\overline A)=O(\phi^2m)$, and $\textbf{\textup{vol}}(A),\textbf{\textup{vol}}(\overline A)$ are both $\Omega(m/\log^2m)$, i.e., we find a relatively balanced low conductance cut.
 \item We find a cut $(A,\overline A)$ with $\Phi_G(\overline A)\le c_0\phi\log^2m$ for some constant $c_0$, and $\textbf{\textup{vol}}(\overline A)\le m/(10c_0\log^2m)$, and $A$ is a nearly $\phi$-expander.
 \end{enumerate}
\end{theorem}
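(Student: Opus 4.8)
The plan is to run the cut-matching game of \cite{khandekar2009graph} in the exact form used by \cite{SaranurakW19}, for $T=\Theta(\log^2 m)$ rounds, but to implement the single-commodity flow computation of each round with one call to our $(1+\alpha)$-fair cut algorithm (\Cref{thm:fair}) at a fixed constant $\alpha=\Theta(1)$, rather than with the push--relabel/blocking-flow routine that makes the \cite{SaranurakW19} running time scale with $1/\phi$. Since everything outside this flow subroutine is unchanged, I would only need to (i) describe the substitution, (ii) check that a fair cut delivers the per-round dichotomy the cut-matching game needs, and (iii) verify that the constants still compose.

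In round $i$, working on $G$ with each vertex carrying mass $\deg(v)$, the cut player produces (via random projections of the embedding built so far) a bipartition $(A_i,\bar A_i)$ with $\vol(A_i)\le\vol(\bar A_i)$; the matching player must either return a fractional matching saturating $A_i$ that embeds into $G$ with congestion $O(1/\phi)$, or find a sparse cut. First I would set up the flow network $H_i$ just as in the trimming proof above --- attach a super-source $s$ to every $v\in A_i$ with capacity $\deg(v)$, a super-sink $t$ to every $v\in\bar A_i$ with capacity $\Theta(\deg(v))$, and multiply every original edge capacity by $\Theta(1/\phi)$ --- and then compute a $(1+\alpha)$-fair $(s,t)$-cut $(S,T)$ of $H_i$ together with its witnessing flow $f$; by \Cref{thm:fair} this costs $\tilde O(|E(H_i)|)=\tilde O(m)$ time. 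Next I would extract the dichotomy: since $(S,T)$ is also a $(1+\alpha)$-approximate $(s,t)$-mincut and $(\{s\},\cdot)$ is an $(s,t)$-cut of capacity $\vol(A_i)$, we have $c_{H_i}(E(S,T))\le(1+\alpha)\vol(A_i)$; if $S=\{s\}$ then $f$ routes all of $A_i$'s mass into $\bar A_i$ with congestion $O(1/\phi)$ in $G$, yielding the matching $M_i$ for this round exactly as in \cite{SaranurakW19}; otherwise $(S\setminus\{s\},T\setminus\{t\})$ is the reported cut. The crucial point is that the fair-cut flow saturates \emph{every} edge of $E_{H_i}(S,T)$ to fraction $\tfrac1{1+\alpha}$, so --- running the same estimate as in the trimming proof, with the $\Theta(1/\phi)$ edge scaling --- the $G$-capacity of the $G$-part of the cut is $O(\phi)$ times the stuck source mass; together with $c_{H_i}(E(S,T))\le(1+\alpha)\vol(A_i)$ and the game's stopping rule this gives the conductance and volume bounds of Cases~2 and~3, and in the unbalanced Case~3 the nearly-$\phi$-expander property of the larger side $A$ follows, as in \cite{SaranurakW19}, from the matchings of the earlier rounds embedding into $G$ with congestion $O(1/\phi)$ (a sparse sub-balanced cut inside $A$ would cut too few matching edges, contradicting the expansion that the accumulated matchings certify).

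To finish, if all $T$ rounds return matchings then $W=\tfrac1T\sum_{i\le T}M_i$ has conductance $\Omega(1/\log^2 m)$ by the cut-matching game lemma \cite{khandekar2009graph} and embeds into $G$ with congestion $O(\log^2 m/\phi)$, so $\Phi_G\ge\phi$ with the hidden constants chosen as in \cite{SaranurakW19} (Case~1); otherwise the first round that fails gives Case~2 or~3. The running time is $T=O(\log^2 m)$ rounds, each costing $\tilde O(m)$ for the cut player's random projections and sorting plus $\tilde O(m)$ for the single fair-cut call, hence $\tilde O(m)$ in total, and the whole procedure succeeds w.h.p.\ since both \Cref{thm:fair} and the cut-matching game are high-probability statements.

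I expect the main obstacle to be obtaining, on a round where the flow fails, a cut that is simultaneously low-conductance \emph{and} --- in the unbalanced case --- leaves behind a nearly $\phi$-expander, using only an approximate flow: a generic $(1+\alpha)$-approximate maxflow need not saturate any particular mincut, whereas the \cite{SaranurakW19} analysis relies on boundary saturation. This is exactly the gap fair cuts are designed to close, so the real work is not conceptual but bookkeeping: checking that a fixed constant slack $\alpha=\Theta(1)$, once composed with the $\Theta(1/\phi)$ capacity scaling and with the volume/balance thresholds of the cut-matching game, still yields the stated $O(\phi)$-type conductance bounds (and a $\phi/6$-style margin wherever \cite{SaranurakW19} needs one) with all constants consistent --- and, secondarily, confirming that the congestion-$O(1/\phi)$ guarantee on each matching survives passing through the fair-cut flow rather than an exact maxflow.
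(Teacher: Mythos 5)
Your high-level idea---replace the flow subroutine inside the cut-matching step by a single fair-cut call per round---is the same one the paper uses, but your execution has a genuine gap, and it sits exactly where the paper is careful. In the framework of \cite{SaranurakW19} (which the statement is quoting), a round in which the flow problem is infeasible does \emph{not} terminate the game: the subroutine (their Lemma~B.6) must return a sparse cut \emph{together with} a feasible flow routing all sources outside that cut, the cut is pruned, and the game continues for the full $T=\Theta(\log^2 m)$ rounds on the surviving part. Case~2 arises only when the \emph{accumulated} pruned volume becomes large, and Case~3's guarantee that $A$ is a nearly $\phi$-expander is certified by the matchings of \emph{all} $T$ rounds embedding into $G\{A\}$ with low congestion. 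Your proposal instead stops at ``the first round that fails'' and claims this yields Case~2 or~3: if that failure happens early, the cut found need not satisfy the $\Omega(m/\log^2 m)$ balance of Case~2, and the few matchings accumulated so far certify nothing like near-expansion of the larger side, so Case~3 fails as well. Relatedly, your per-round dichotomy (``$S=\{s\}$ gives a matching, otherwise report the cut'') omits the second deliverable of the failing case---the flow for the sources in $A^l\setminus S$ that lets the game continue---even though the fair-cut witness flow does provide it (the cut edges incident to the unpruned sources are saturated up to $1/(1+\alpha)$, which the scaled capacities absorb); this is precisely the reason fair cuts, rather than generic approximate max-flows, suffice here.

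By contrast, the paper does not re-derive the cut-matching game at all: it keeps Theorem~2.2 of \cite{SaranurakW19} as a black box and swaps out only the one subroutine (Lemma~B.6) whose running time depends on $1/\phi$, formulating it as a flow problem on the subdivision graph $(G_E\{A\})^{U/(1+\alpha)}$ with $U=1/(\phi\log^2 m)$, and showing that one $(1+\alpha)$-fair cut either certifies feasibility (after rescaling by $1+\alpha$) or yields a cut of conductance $O(\phi\log^2 m)$ whose balance is controlled via the capacity bounds on $A^l$ and $A^r$. Your route could in principle be repaired by restoring the prune-and-continue structure (and redoing the KRV/RST potential analysis on $G$ rather than on the subdivision graph, where the volume accounting is cleaner), but as written the stopping rule is wrong and the ``first failure'' cut does not satisfy either of the last two cases of the theorem.
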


We will not present the entire proof of this theorem, since most of the steps remain unchanged from~\cite{SaranurakW19}. The only step that takes $\tilde{O}(m/\phi)$ time in~\cite{SaranurakW19} is their subroutine Lemma~B.6, so it suffices to describe it and improve its running time to $\tilde{O}(m)$.

First, we introduce some notation from~\cite{SaranurakW19}. Given a graph $G=(V,E)$ and a subset of vertices $A\subseteq V$, denote by $G\{S\}$ the induced subgraph $G[S]$ but with self-loops added to vertices so that any vertex in $S$ has the same degree as its degree in $G$. Given a multi-graph $G=(V,E)$, its \emph{subdivision graph} $G_E=(V',E')$ is the graph where we put a \emph{split node} $x_e$ on each edge $e\in E$ (including the self-loops). Formally, $V'=V\cup X_E$ where $X_E=\{x_e\mid e\in E\}$, and $E'=\{(u,x_e),(v,x_e)\mid e=(u,v)\in E\}$. While~\cite{SaranurakW19} only defines the subdivision graph for unweighted graphs, we can extend the definition to weighted graphs by assigning the edges $(u,x_e),(v,x_e)$ ro have capacity $c(e)$ for each edge $e=(u,v)\in E$. For a split node $x_{(u,v)}$, we abuse notation and define its \emph{capacity} $c(x_{(u,v)})$ to be the capacity $c(u,v)$ of the edge $(u,v)$ in $G$. For a set of split nodes $S$, its total capacity $c(S)$ is the sum of the capacities of the split nodes in $S$.

The input to the subroutine of Lemma~B.6 is
 \begin{enumerate}
 \item A set of vertices $A\subseteq V'$,
 \item A set of source split nodes $A^l\subseteq A\cap X_E$ of total capacity at most $c_{G\{A\}}(A\cap X_E)/8$, and\label{bound-l}
 \item A set of target split nodes $A^r\subseteq A\cap X_E$ of total capacity at least $c_{G\{A\}}(A\cap X_E)/2$.\label{bound-r}
 \end{enumerate}
For any graph $H$ and positive number $U$, let $H^U$ be the graph where each edge has its capacity multiplied by $U$. Let $U=1/(\phi\log^2m)$, and consider a flow problem on $(G_E\{A\})^U$ where each split node $x_{(u,v)}\in A^l$ is a source of $c(u,v)$ units of mass (where $c(u,v)$ is the original capacity in $G_E$, not multiplied by $U$) and each split node $x_{(u,v)}\in A^r$ is a sink with capacity $c(u,v)$. The task is to either find
 \begin{enumerate}
 \item A feasible flow $f$ for the above problem, or
 \item A cut $S$ where $\Phi_{G\{A\}}(S)=O(\phi\log^2m)$ and a feasible flow for the above flow problem when only split nodes $x_{(u,v)}$ in $A^l\setminus S$ are sources of $c(u,v)$ units.
 \end{enumerate}
Lemma~B.6 of~\cite{SaranurakW19} uses a push-relabel or blocking-flow algorithm that runs in $O(m/(\phi\log m))$ time. Using fair cuts, we improve the running time to $\tilde{O}(m)$, independent of $\phi$, in the lemma below.

\begin{lemma}
We can solve the task above in $\tilde{O}(m)$ time.
\end{lemma}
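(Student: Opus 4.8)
The task is a one-sided $(s,t)$-flow/cut problem on the scaled subdivision graph $(G_E\{A\})^U$, where the sources are the split nodes in $A^l$ (each with supply equal to its original capacity) and the sinks are the split nodes in $A^r$ (each with capacity equal to its original capacity). I would reduce this to a single fair-cut computation exactly as in the trimming step. Build an auxiliary graph $H$ as follows: start from $(G_E\{A\})^U$; add a super-source $s$ joined to each $x_{(u,v)}\in A^l$ with an edge of capacity $c(u,v)$ (the \emph{un}-scaled capacity), and add a super-sink $t$ joined to each $x_{(u,v)}\in A^r$ with an edge of capacity $c(u,v)$. Run the $(1+\alpha)$-fair $(s,t)$-cut algorithm of \Cref{thm:fair} on $H$ with $\alpha$ a small constant (say $\alpha=0.1$), obtaining a fair cut $(\mathcal S,\mathcal T)$ with $s\in\mathcal S$, $t\in\mathcal T$, together with the witnessing feasible flow $f$ that saturates every edge of $E_H(\mathcal S,\mathcal T)$ to a $\tfrac1{1+\alpha}$ fraction. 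Since \Cref{thm:fair} runs in $\tO(m/\alpha^3)=\tO(m)$ time (the subdivision graph has $O(m)$ vertices and edges, and scaling capacities does not affect the running time given our $\poly(n)$ bounded-ratio assumption), the time bound is immediate.

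\textbf{Extracting the two outcomes.} There are two cases depending on the fair cut. If $\mathcal S=\{s\}$, i.e.\ no original vertex is on the source side, then every source edge $(s,x_{(u,v)})$ lies in $E_H(\mathcal S,\mathcal T)$, so the witnessing flow $f$ pushes at least $\tfrac1{1+\alpha}c(u,v)$ out of each source; rescaling $f$ by the factor $(1+\alpha)$ inside $(G_E\{A\})^U$ (this keeps it feasible because the interior edges of $(G_E\{A\})^U$ have capacity multiplied by $U$ and were congested by at most $1$, hence by at most $\tfrac1{1+\alpha}$ after we accounted for the source bottleneck — more carefully, one takes the flow $f$ restricted to the non-source/sink edges, which is already feasible, and observes the demand it routes is exactly a feasible flow of the required form after rescaling) yields a feasible flow routing $c(u,v)$ units out of each $x_{(u,v)}\in A^l$ into the sinks, which is outcome~(1). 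Otherwise $S:=\mathcal S\setminus\{s\}\neq\emptyset$ is a nonempty set of vertices of $G_E\{A\}$; I claim it is a low-conductance cut and that the sources \emph{outside} $S$ can be fully routed. For the conductance bound: $(\mathcal S,\mathcal T)$ is in particular a $(1+\alpha)$-approximate $(s,t)$-mincut, and $(\{s\},V_H\setminus\{s\})$ is an $(s,t)$-cut of capacity $c(A^l)\le c_{G\{A\}}(A\cap X_E)/8$, so $c_H(E_H(\mathcal S,\mathcal T))\le(1+\alpha)c(A^l)$; the edges of $(G_E\{A\})^U$ crossing $(\mathcal S,\mathcal T)$ have total capacity at most this, hence the unscaled cut $c_{G\{A\}}(E(S,A\setminus S))\le\tfrac1U(1+\alpha)c(A^l)=O(\phi\log^2 m)\cdot c(A^l)$, and comparing with $\vol$ (using the standard relationship between cuts in $G\{A\}$ and its subdivision, and that $A^l$ has bounded total capacity) gives $\Phi_{G\{A\}}(S)=O(\phi\log^2 m)$. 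For the routing of $A^l\setminus S$: restrict $f$ to the part living in $\mathcal T$; every source edge $(s,x_{(u,v)})$ with $x_{(u,v)}\in A^l\cap\mathcal T=A^l\setminus S$ is crossed by $f$ at $\tfrac1{1+\alpha}$ fraction in the $\mathcal S\to\mathcal T$ direction, so after rescaling we get a feasible flow inside $(G_E\{A\})^U$ that routes $c(u,v)$ units out of each such split node to the sinks — which is outcome~(2).

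\textbf{Main obstacle.} The genuinely delicate point is matching the \emph{one-sided} nature of the required guarantee to what the (two-sided) fair cut provides: we must make sure the flow we extract does not ``cheat'' by pushing flow backwards across $S$, and that the rescaling by $(1+\alpha)$ stays feasible on the interior edges. This is precisely why scaling the interior of the subdivision graph by $U$ (rather than, say, scaling the source/sink edges) matters — it leaves headroom. The cleanest way to handle it is to argue via the $s$-sided fair cut formulation of \Cref{def:fair cut intro} applied at the super-sink side, or equivalently to take a path decomposition of the witnessing flow $f$, discard all flow paths that have an endpoint inside $S$ (when we are in outcome~(2)), and observe the remaining flow is an $(s,t)$-flow that still saturates all source edges of $A^l\setminus S$; this is the same path-surgery trick used in the trimming-step proof and in \Cref{sec:from_almost_to_fair}. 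The remaining arithmetic — bounding $c_{G\{A\}}(E(S,A\setminus S))/\vol(S)$ by $O(\phi\log^2 m)$ from the cut-capacity bound, and checking the constant $c_0$ — is routine and identical to the corresponding computation in Lemma~B.6 of~\cite{SaranurakW19}.
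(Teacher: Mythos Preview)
Your high-level plan---reduce the Lemma~B.6 task to a single fair-cut computation on an auxiliary $(s,t)$-graph and read off either a flow or a low-conductance cut---is exactly what the paper does. But the specific capacities you chose make the ``extract a feasible flow'' part fail, and this is not a cosmetic issue.

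\textbf{The gap in Case~1.} You build $H$ with the interior of $(G_E\{A\})^U$ at full scale $U$ and sink edges at capacity $c(u,v)$. The fair flow $f$ is feasible in $H$, so interior edges carry at most their $H$-capacity; after you multiply $f$ by $(1+\alpha)$ to push the source edges from $\tfrac1{1+\alpha}c(u,v)$ up to $c(u,v)$, the interior edges now carry up to $(1+\alpha)$ times their capacity in $(G_E\{A\})^U$, and the sink edges up to $(1+\alpha)c(u,v)$---neither is feasible. Your parenthetical (``this keeps it feasible because the interior edges\ldots were congested by at most $1$, hence by at most $\tfrac1{1+\alpha}$ after we accounted for the source bottleneck'') is not a valid argument: nothing prevents $f$ from fully saturating some interior edge. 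The paper fixes this by building $H$ over $(G_E\{A\})^{U/(1+\alpha)}$ and giving each sink edge capacity $c(u,v)/(1+\alpha)$; then the fair flow (after trimming via a path decomposition so that each source edge carries \emph{exactly} $c(u,v)/(1+\alpha)$) can be multiplied by $(1+\alpha)$ and lands feasibly in $(G_E\{A\})^U$ with sink usage at most $c(u,v)$. The same rescaling issue recurs in your outcome~(2) flow construction.

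\textbf{The conductance bound in Case~2.} Your argument bounds the \emph{cut size} by $O(\phi\log^2m)\cdot c(A^l)$ but then hand-waves the comparison with volume. This is not routine: you need $\min\{\textup{vol}(S),\textup{vol}(A\setminus S)\}$ bounded below, and there is no reason $\textup{vol}(S)$ should be comparable to $c(A^l)$. The paper does not go through $c(A^l)$ at all here; instead it uses the fair flow directly to show $\textup{vol}_{G\{A\}}(S\setminus\{s\})\ge\tfrac{2}{1+\alpha}c_H(E_m)$ (because every unit of flow entering $S\setminus\{s\}$ from $s$ via the non-cut source edges must eventually cross $E_m\cup E_t$), which together with $c_{G\{A\}}(E(S\setminus\{s\},T\setminus\{t\}))=\tfrac{1+\alpha}{U}c_H(E_m)$ gives the ratio $O(1/U)$ immediately. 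The paper then separately uses assumptions~(\ref{bound-l}) and~(\ref{bound-r}) to show $\textup{vol}(T\setminus\{t\})$ is a constant fraction of the whole volume, ruling out the degenerate case $T=\{t\}$ and handling the $\min$.
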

\begin{proof}
Let $\alpha=0.1$, and consider the flow problem on the graph $H=(G_E\{A\})^{U/(1+\alpha)}$ instead. First, convert it to an $(s,t)$-flow problem by adding a source vertex $s$, connected to each $x_{(u,v)}\in A^l$ with capacity $c_{G\{A\}}(u,v)$, and a sink vertex $t$, connected to each $x_{(u,v)}\in A^r$ with capacity $c_{G\{A\}}(u,v)/(1+\alpha)$. Next, we compute a $(1+\alpha)$-fair cut $(S,T)$ and corresponding feasible flow $f'$ in $\tilde{O}(m)$ time. There are two cases below:
 \begin{enumerate}
 \item $S=\{s\}$. In this case, by definition of fair cuts, the flow $f'$ sends at least $c_{G\{A\}}(u,v)/(1+\alpha)$ flow out of each edge from $s$. By computing a path decomposition and removing paths accordingly, we can modify $f'$ to a new feasible flow $f''$ that sends \emph{exactly} $c_{G\{A\}}(u,v)/(1+\alpha)$ flow along each edge out of $s$, and at most $c_{G\{A\}}(u,v)/(1+\alpha)$ flow along each edge into $t$. Finally, we let flow $f$ be $f''$ multiplied by $(1+\alpha)$, and then restricted to graph $(G_E\{A\})^U$. Since $f''$ is feasible on the edges in $(G_E\{A\})^{U/(1+\alpha)}$, we conclude that $f$ is feasible on $(G_E\{A\})^U$. 
 \item $S\ne\{s\}$. In this case, let $E_s\subseteq E_H(S,T)$ be the edges of the cut incident to $s$, let $E_t\subseteq E_H(S,T)$ be those incident to $t$, and let $E_m=E_H(S,T)\setminus(E_s\cup E_t)$ be the remaining cut edges. Recall that edges in $E_s$ and $E_t$ retain their original capacity from $G_E\{A\}$, while edges in $E_m$ have their capacity scaled by $U/(1+\alpha)$. Also, note that $E_m$ is, up to this scaling factor, exactly the cut $E(S\setminus\{s\},T\setminus\{t\})$ in the original graph $G\{A\}$.\footnote{We show later that the degenerate case $T=\{t\}$ cannot happen.} In other words,
\begin{gather}
c_{G\{A\}}(E(S\setminus\{s\},T\setminus\{t\})) = \frac{1+\alpha}U \cdot c_H(E_m) .\label{eq:expander1}
\end{gather}

Let $\overline E_s$ be the edges incident to $s$ that are not in $E_s$. Since $(S,T)$ is a $(1+\alpha)$-fair cut, there is a flow $f$ from $s$ to $t$ that saturates each edge in $E_H(S,T)$ to fraction at least $\frac1{1+\alpha}$. In particular, this means that the sub-flow from $s$ starting from edges $\overline E_s$ must saturate edges in $E_H(S,T)\setminus E_s$ to fraction at least $\frac1{1+\alpha}$. This implies that $c_H(E_H(S,T)\setminus E_s) \le (1+\alpha) c_H(\overline E_s)$. Moreover, for each edge $(s,x_e)\in \overline E_s$, the split node $x_e$ is on the $S\setminus\{s\}$ side of the cut $E(S\setminus\{s\},T\setminus\{t\})$ in $G\{A\}$, so
\begin{gather}
\textbf{\textup{vol}}_{G\{A\}}(S\setminus\{s\}) \ge \sum_{(s,x_e)\in\overline E_s}\deg_{G\{A\}}(x_e) = 2 c_H(\overline E_s) \ge \frac2{1+\alpha}c_H(E_H(S,T)\setminus E_s) \ge \frac2{1+\alpha}c_H(E_m) .\label{eq:expander2}
\end{gather}
Putting (\ref{eq:expander1}) and (\ref{eq:expander2}) together, we obtain
\begin{gather}
\textbf{\textup{vol}}_{G\{A\}}(S\setminus\{s\}) \ge \frac{2U}{(1+\alpha)^2} c_{G\{A\}}(E(S\setminus\{s\},T\setminus\{t\})) ,\label{eq:expander3}
\end{gather}
so we would be done as long as we show that $\textbf{\textup{vol}}_{G\{A\}}(S\setminus\{s\}) \le O(\textbf{\textup{vol}}_{G\{A\}}(T\setminus\{t\}))$.

Consider now the edges $E_t$. Their capacities are scaled down by $1/(1+\alpha)$, so their total original capacity is at most $(1+\alpha)^2c_{G\{A\}}(A^l)$, which is at most $(1+\alpha)^2c_{G\{A\}}(A\cap X_E)/8$ by property~(\ref{bound-l}). On the other hand, the total capacity of edges incident to $t$ is $c_{G\{A\}}(A^r)/(1+\alpha)$, which is at least $c_{G\{A\}}(A\cap X_E)/(2(1+\alpha))$ by property~(\ref{bound-r}). It follows that at least 
\[  c_{G\{A\}}(A\cap X_E)/(2(1+\alpha))  - (1+\alpha)^2c_{G\{A\}}(A\cap X_E)/8 \ge \Omega(c_{G\{A\}}(A\cap X_E)) \]
total capacity of edges incident to $t$ are not in $E_t$. In other words, their corresponding split nodes are on the $T\setminus\{t\}$ side of the cut $E(S\setminus\{s\},T\setminus\{t\})$, which means that $\textbf{\textup{vol}}_{G\{A\}}(T\setminus\{t\}) \ge \Omega(c_{G\{A\}}(A\cap X_E))$. Now observe that $c_{G\{A\}}(A\cap X_E)$ is a constant fraction of the total volume of the graph $G\{A\}$, so $\textbf{\textup{vol}}_{G\{A\}}(T\setminus\{t\}) \ge \Omega(\textbf{\textup{vol}}_{G\{A\}}(A))$. Together with (\ref{eq:expander3}), we obtain the desired
\[ 
\Phi_{G\{A\}}(S\setminus\{s\}) = \frac{c_{G\{A\}}(S\setminus\{s\},T\setminus\{t\})}{\min\{\textbf{\textup{vol}}_{G\{A\}}(S\setminus\{s\}),\textbf{\textup{vol}}_{G\{A\}}(T\setminus\{t\})\}} \le O(1/U) = O(\phi\log^2m).
\]
\end{enumerate}
\end{proof}

\section*{Acknowledgements}
This project has received funding from the European Research Council (ERC) under the European Union's Horizon 2020 research and innovation programme under grant agreement No 715672. Danupon Nanongkai was also supported by the Swedish Research Council (Reg. No. 2019-05622). Debmalya Panigrahi was supported in part by NSF grants CCF-1750140 (CAREER Award) and CCF-1955703.

\bibliographystyle{alpha}
\bibliography{ref}

\appendix

\section{Parallel Algorithms}

The goal of this section to prove \Cref{thm:ghtree-parallel-intro}. Along the way, we will show that all algorithmic components we use and develop can be parallelized.

\subsection{Congestion Approximators}
\label{sec:parallel_congest}

The first thing we need is a parallel construction of congestion approximators (see \Cref{thm:congest}).

\begin{theorem}
[Parallel Congestion approximator]\label{thm:congest parallel}
There is a randomized algorithm that, given an unweighted graph $G=(V,E)$ with $n$
vertices and $m$ edges, constructs in $m^{1+o(1)}$ work and $m^{o(1)}$ depth with high probability same laminar as in \Cref{thm:congest} except that $\gamma_{\mathcal{S}}=n^{o(1)}$.
\end{theorem}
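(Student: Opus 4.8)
## Proof Proposal for Theorem (Parallel Congestion Approximator)

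The plan is to parallelize the Räcke--Shah--Täubig / Peng construction of congestion approximators. Recall that the standard (sequential) construction builds a hierarchical decomposition tree of the graph whose associated laminar family of vertex sets forms the congestion approximator. The quality $\gamma_{\mathcal S}$ of the resulting approximator is controlled by the quality of the routing/congestion certificates produced at each level of the hierarchy, and each level is obtained by recursively applying an (approximate) sparsest-cut or low-diameter decomposition primitive. The key observation is that every ingredient in this pipeline has a known parallel counterpart, at the (acceptable) cost of replacing $\mathrm{polylog}(n)$ factors by $n^{o(1)}$ factors both in the work/depth and in the approximation quality.

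Concretely, I would proceed as follows. First, recall that the hierarchical decomposition underlying \Cref{thm:congest} can be built by $O(\log n)$ rounds, where in each round one partitions every current cluster into sub-clusters of geometrically smaller diameter (in the appropriate weighted shortest-path metric) while cutting only a small fraction of the total capacity; this is exactly a low-diameter decomposition. Parallel low-diameter decompositions are classical (e.g., the exponential-start-time / ball-growing schemes of Miller--Peng--Xu and follow-ups), and they run in $\mathrm{polylog}(n)$ depth and near-linear work, cutting an $O(\mathrm{polylog} n / \beta)$ fraction of edges when the target diameter is $\beta$. Second, these per-level decompositions must be stitched into a single laminar family; since the family is laminar with at most $2n$ sets and $O(\log n)$ layers, assembling it and computing all the cut values $\delta_H(S)$ is a standard tree-contraction / Euler-tour computation doable in $\mathrm{polylog}(n)$ depth and $\tilde O(m)$ work. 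Third, I would invoke the analysis of \cite{RackeST14,Peng16} essentially verbatim to conclude that the resulting laminar family satisfies properties (1) and (2) of \Cref{thm:congest} unchanged (at most $2n$ sets, each vertex in $O(\log n)$ sets), while the quality bound degrades from $O(\log^4 n)$ to a quantity that is a product of $O(\log n)$ factors each of size $n^{o(1)}$ — hence $\gamma_{\mathcal S} = n^{o(1)}$. (If one insists on a sharper bound, Chuzhoy's deterministic tree-flow-sparsifier machinery or the Räcke--Shah--Täubig oblivious-routing construction can be parallelized similarly, but $n^{o(1)}$ suffices for \Cref{thm:ghtree-parallel-intro}.)

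Two points deserve care. The statement restricts to \emph{unweighted} graphs; this is convenient because the relevant shortest-path / ball-growing metrics then have polynomially bounded aspect ratio, so the low-diameter decomposition can be carried out in $O(\log n)$ scales without blowing up the depth, and one avoids the subtleties of parallel weighted shortest paths. I would therefore keep the unweighted hypothesis throughout, exactly as stated. The other point is that the sequential construction may, at some levels, call an (approximate) maximum-flow or $j$-tree / ultrasparsifier routine; here one substitutes the almost-linear-work, $m^{o(1)}$-depth parallel approximate max-flow of \cite{Sherman13,chang2019improved} (already cited in the introduction for exactly this purpose), which is where the $m^{1+o(1)}$ work and $m^{o(1)}$ depth in the theorem statement come from.

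The main obstacle I expect is \emph{not} any single primitive — each has a known parallel analog — but rather verifying that the recursion in the congestion-approximator construction has $n^{o(1)}$ (ideally $\mathrm{polylog}$, but $n^{o(1)}$ is all we need) depth of recursion and that the errors compose multiplicatively to only $n^{o(1)}$. In the sequential construction the recursion shrinks cluster size by a constant factor, giving $O(\log n)$ depth and a product of $O(\log n)$ quality factors; one must check that the parallel low-diameter-decomposition variant still achieves constant-factor size reduction per level (or at worst $(1+o(1))$-factor, giving $n^{o(1)}$ levels) so that the accumulated quality loss stays $n^{o(1)}$ and the total depth stays $m^{o(1)}$. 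This bookkeeping — tracking how the per-level $n^{o(1)}$ slack multiplies over the levels — is the one genuinely technical part; everything else is assembling off-the-shelf parallel subroutines into the \cite{Peng16} framework. I would present the argument at the level of reducing to these cited parallel primitives plus the composition bound, deferring the reader to \cite{RackeST14,Peng16} for the unchanged structural analysis.
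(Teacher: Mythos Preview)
Your proposal takes a different route from the paper and, more importantly, has a genuine gap.

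The paper does \emph{not} parallelize the R\"acke--Shah--T\"aubig/Peng construction directly. Instead it goes through \emph{boundary-linked expander decomposition}: it invokes the parallel (in fact distributed) expander-decomposition algorithm of \cite{chang2019improved}, extends it slightly to obtain an $(\epsilon,\phi,\alpha)$-boundary-linked decomposition, and then applies the black-box reduction of \cite{goranci2021expander}, which turns $O(\log_{1/\epsilon} m)$ calls to boundary-linked expander decomposition into a congestion approximator of quality $O((1/\phi)\cdot(1/\alpha)^{\log_{1/\epsilon} m})$. With the parameters the paper chooses, this yields $\gamma_{\mathcal S}=2^{\Theta(\sqrt{\log n}\,(\log\log n)^2)}=n^{o(1)}$ in $m^{1+o(1)}$ work and $n^{1/O(\log\log\log n)}$ depth.

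Your plan has two problems. First, your description of the RST hierarchy as ``partition each cluster into sub-clusters of geometrically smaller diameter in a shortest-path metric'' is not what RST does: R\"acke trees are \emph{cut}-based decompositions obtained via approximately balanced sparse cuts (equivalently, approximate max-flow/oblivious-routing computations), not low-diameter decompositions in a distance metric. Miller--Peng--Xu ball growing does not by itself yield a congestion approximator with the cut guarantee of \Cref{thm:congest}.

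Second, and more seriously, you invoke ``the almost-linear-work, $m^{o(1)}$-depth parallel approximate max-flow of \cite{Sherman13,chang2019improved}'' as an ingredient inside the congestion-approximator construction. This is circular: Sherman's framework \emph{requires} a congestion approximator as input, and the parallel max-flow that the introduction attributes to \cite{chang2019improved,Sherman13} is obtained precisely by first building a congestion approximator (via the expander-decomposition route) and then running Sherman on top of it. The reference \cite{chang2019improved} is an expander-decomposition paper, not an independent max-flow algorithm. The expander-decomposition route is exactly how one breaks this chicken-and-egg cycle in parallel, because expander decomposition can be computed with local/bounded-height flow primitives that do not themselves need a congestion approximator. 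Your proposal does not supply an alternative way to break the cycle.
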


We only state the result for unweighted graphs as it follows quite easily from \cite{chang2019improved,goranci2021expander}. We believe that known techniques also imply the same for weighted graphs. Below, we sketch the proof of \Cref{thm:congest parallel}.

First, we need a definition of \emph{boundary-linked expander decomposition} introduced in \cite{goranci2021expander}.
For any graph $G=(V,E)$ and any set $S\subset V$, let $G[S]$ denote
the subgraph of $G$ induced by $S$. For any $w\ge0$, let $G[S]^{w}$
be obtained from $G[S]$ by adding $w$ self-loops to each vertex
$v\in S$ for every boundary edge $(v,x)$, $x\notin S$. 
\begin{defn}
For any graph $G=(V,E)$ with $m$ edges, a \emph{$(\epsilon,\phi,\alpha)$-boundary-linked
expander decomposition} is partition ${\cal U}=(U_{1},\dots,U_{k})$
of vertex set $V$ such that $\sum_{i}|E(U_{i},V\setminus U_{i})|\le\epsilon m$
and $G[U_{i}]^{\alpha/\phi}$ is a $\phi$-expander for all $i$. 
\end{defn}

Note that $(\epsilon,\phi,0)$-boundary-linked expander decomposition
is the standard $(\epsilon,\phi)$-expander decomposition. A parallel
algorithm for computing an expander decomposition of an unweighted
graph was explicitly shown in \cite{chang2019improved}. In fact, the algorithm works even
in the distributed model called CONGEST. 
\begin{theorem}[\cite{chang2019improved}] For any positive integer $k$, $\epsilon\in(0,1)$, and
$\phi\ge(\epsilon/\log n)^{2^{O(k)}}$, there is an algorithm for
computing an $(\epsilon,\phi)$-expander decomposition of an unweighted
graph in CONGEST in $O(n^{2/k}\poly(1/\phi,\log n))$ rounds w.h.p.
In fact, this algorithm has $n^{1/O(\log\log\log n)}$-depth and $m^{1+o(1)}$
work.
\end{theorem}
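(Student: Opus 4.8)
The plan is to separate the statement into its two halves. The round complexity in the CONGEST model, $O(n^{2/k}\poly(1/\phi,\log n))$ for any admissible $(k,\epsilon,\phi)$, is exactly the main expander-decomposition result of~\cite{chang2019improved}, so I would simply quote it; the only thing that genuinely needs argument here is the ``in fact'' clause, namely that the \emph{same} algorithm runs on a PRAM with $n^{1/O(\log\log\log n)}$ depth and $m^{1+o(1)}$ work. For that I would invoke the standard transfer principle: a CONGEST algorithm in which (i)~every vertex exchanges $O(\log n)$-bit messages with each neighbour per round and (ii)~the per-round local computation at a vertex $v$ uses $\tilde O(\deg v)$ work and $\polylog n$ depth, can be simulated on a PRAM so that one round costs $\tilde O(m)$ work and $\polylog n$ depth — route/scatter the messages along the $m$ edges by sorting, then run the local updates in parallel across the $n$ vertices. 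Hence an $R$-round CONGEST algorithm of this type becomes a PRAM algorithm with $\tilde O(R\cdot m)$ work and $\tilde O(R)$ depth.

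The second step is to verify that the algorithm of~\cite{chang2019improved} is of this ``cheap-local-computation'' type. Its per-round work consists of a constant number of lazy-random-walk / matching-player steps in the cut-matching game, threshold sweeps to extract a (relatively balanced) sparse cut, and the bookkeeping needed to set up the recursion; each of these reduces to parallel primitives (prefix sums, sorting, connected components, sparse matrix–vector products) of $\polylog n$ depth and $\tilde O(m)$ work. Moreover the recursion is on vertex-disjoint pieces, so across one recursion level the work still telescopes to $\tilde O(m)$ and the depth only multiplies by the recursion depth, which is $\tilde O(1)$.

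The third step is the parameter choice. Take $k=\Theta(\log\log\log n)$, so that $2^{O(k)}=\poly(\log\log n)$. Then the admissible range $\phi\ge(\epsilon/\log n)^{2^{O(k)}}$ contains every $\phi\ge n^{-o(1)}$ — in particular the values of $\phi$ that arise when this routine is invoked inside the congestion-approximator construction of \Cref{thm:congest parallel} — and $\poly(1/\phi,\log n)=2^{\poly(\log\log n)}=n^{o(1)}$. Plugging $R=O(n^{2/k}\poly(1/\phi,\log n))=n^{1/\Theta(\log\log\log n)}\cdot n^{o(1)}=n^{1/O(\log\log\log n)}$ into the simulation bound of the first step yields depth $n^{1/O(\log\log\log n)}$ and work $\tilde O(m)\cdot n^{1/O(\log\log\log n)}=m^{1+o(1)}$, as claimed.

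The main obstacle is the second step: the CONGEST model charges nothing for local computation, so one cannot treat~\cite{chang2019improved} as a black box and must audit its subroutines to confirm none of them secretly needs super-polylogarithmic local depth, and one must also check that the object produced is precisely the ``standard'' $(\epsilon,\phi)$-expander decomposition (the $\alpha=0$ special case of the boundary-linked notion) needed downstream. If that audit proves delicate, the fallback — and in fact what the surrounding text already uses — is to cite the parallel expander-decomposition routines extracted from~\cite{chang2019improved,goranci2021expander} directly and only reconcile their quality and volume guarantees, sidestepping a from-scratch re-derivation of the simulation.
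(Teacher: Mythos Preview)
The paper provides no proof of this theorem at all: it is stated as a quotation of~\cite{chang2019improved}, with the ``in fact'' clause about PRAM depth and work asserted without justification (the only follow-up is the sentence ``We will choose $k=\log\log\log n$ from now on,'' which matches your parameter choice). So there is nothing to compare your argument against; you are supplying a justification the paper simply omits.

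Your proposed justification --- quote the CONGEST bound, simulate each round on a PRAM with $\tilde O(m)$ work and $\polylog n$ depth after auditing the per-round local computation, then set $k=\Theta(\log\log\log n)$ --- is the natural one and yields the stated bounds. Your own caveat in the second step is the right one: the CONGEST model is oblivious to local computation, so the simulation is not a pure black-box and one does need to check that the subroutines of~\cite{chang2019improved} parallelize; the paper sidesteps this entirely by treating it as known. If you want to match the paper's level of rigor, a one-line citation suffices.
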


We will choose $k=\log\log\log n$ from now on. This algorithm can
be easily extended to compute a $(\epsilon,\phi,\epsilon)$-boundary-linked
expander decomposition. The idea is as follows: whenever we find a
$\phi$-sparse cut, for each cut edge $(u,v)$, we add $(\alpha/\phi)$
self-loops on both $u$ and $v$ before recursing on both sides. The
largest boundary-linked parameter $\alpha$ we can get can be derived
by setting $\epsilon=1/O(\log n)$ and see the largest value of $\phi$
we can get. In this case, it is $1/2^{\Theta(\log\log n)^{2})}$ when
$\epsilon=1/O(\log n)$ and $k=\log\log\log n$. From this, it implies
the following:

\begin{theorem}
\label{thm:parallel expand decomp}
When $\epsilon=1/2^{\Theta(\sqrt{\log n})}$, $\phi\ge(\epsilon/\log n)^{2^{O(\log\log\log)}}\ge1/2^{\Theta(\sqrt{\log n}\cdot\log\log n)}$,
and $\alpha\ge1/2^{\Theta(\log\log n)^{2})}$, there is an algorithm
that w.h.p. computes a $(\epsilon,\phi,\alpha)$-boundary-linked expander
decomposition in $n^{1/O(\log\log\log n)}$-depth and $m^{1+o(1)}$
work. (In fact, the algorithm is implementable in CONGEST in $n^{1/O(\log\log\log n)}$
rounds.)
\end{theorem}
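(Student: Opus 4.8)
The plan is to turn the informal description above into a rigorous argument by running the parallel/CONGEST expander-decomposition algorithm of \cite{chang2019improved} with $k=\log\log\log n$, interleaved with on-the-fly self-loop insertions so that the output becomes boundary-linked rather than a plain decomposition. Recall that the algorithm of \cite{chang2019improved} is recursive: given a current (multi-)graph $G'$ it either certifies $\Phi_{G'}\ge\phi$ (declaring $V(G')$ a part of the decomposition) or produces a $\phi$-sparse cut $(S,\overline S)$ and recurses on the induced subgraphs on $S$ and on $\overline S$. I would modify this so that, immediately before recursing on such a cut, for every cut edge $(u,v)\in E(S,\overline S)$ I add $\lceil\alpha/\phi\rceil$ self-loops to $u$ and to $v$. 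A self-loop is a purely local object sitting on an existing vertex, so this modification does not alter the vertex set, does not change the CONGEST communication pattern, and does not affect the asymptotic round/depth/work bounds of the base algorithm; its only global effect is to increase the edge count, which I bound at the end.

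First I would check that the plain-decomposition guarantee is inherited. Self-loops never cross a cut, so the accounting of \cite{chang2019improved} that charges cut edges against the decrease of a volume potential is unaffected, and with the per-recursion-level sparsity parameter chosen so that the telescoped total cut is $\le\epsilon m$, we keep $\sum_i|E(U_i,V\setminus U_i)|\le\epsilon m$. Next I would establish the boundary-linked property: consider an output part $U_i$ and a boundary edge $(v,x)$ with $v\in U_i$ and $x\notin U_i$ in the final partition. This edge is cut at exactly one node of the recursion tree (once an edge crosses the current partition it never becomes internal again), and at that moment $v$ receives exactly $\lceil\alpha/\phi\rceil$ new self-loops. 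Hence the leaf graph on which the algorithm certifies $\Phi\ge\phi$ is precisely $G[U_i]^{\lceil\alpha/\phi\rceil}$. Since increasing the self-loop multiplier only increases volumes while leaving the crossing capacity of every cut unchanged, the conductance $\Phi_{G[U_i]^{w}}$ is non-increasing in $w$; therefore $\Phi_{G[U_i]^{\alpha/\phi}}\ge\Phi_{G[U_i]^{\lceil\alpha/\phi\rceil}}\ge\phi$, as required.

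It remains to bound the size blow-up and read off the complexity. The total number of cut edges over the whole recursion is $\le\epsilon m$, so the total self-loop volume inserted is $\le 2\lceil\alpha/\phi\rceil\cdot\epsilon m$. In the stated parameter regime, with $k=\log\log\log n$ the precondition $\phi\ge(\epsilon/\log n)^{2^{O(k)}}$ of \cite{chang2019improved} gives $\phi\ge 1/2^{\Theta(\sqrt{\log n}\,\log\log n)}$ for $\epsilon=1/2^{\Theta(\sqrt{\log n})}$, and $\alpha\ge 1/2^{\Theta((\log\log n)^2)}$ is the largest boundary-linkedness compatible with the recursion still terminating; either way $\alpha/\phi\le 2^{o(\log n)}=n^{o(1)}$, so the running multi-graph always has $m^{1+o(1)}$ edges. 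Consequently the work is $(m^{1+o(1)})^{1+o(1)}=m^{1+o(1)}$, and the depth (resp.\ round complexity in CONGEST) is $O(n^{2/k})\cdot\mathrm{poly}(1/\phi,\log n)=n^{1/O(\log\log\log n)}\cdot n^{o(1)}=n^{1/O(\log\log\log n)}$, using $1/\phi=n^{o(1)}$.

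The main obstacle is the simultaneous bookkeeping of the three coupled parameters $(\epsilon,\phi,\alpha)$: one must keep $\phi$ large enough to satisfy the precondition $\phi\ge(\epsilon/\log n)^{2^{O(k)}}$, keep $\alpha/\phi$ down to $n^{o(1)}$ so the self-loop inflation does not break the $m^{1+o(1)}$ work bound, and keep the per-level sparsity consistent with the $\le\epsilon m$ total-cut target, all for the aggressive choice $k=\log\log\log n$. A secondary but genuine point is to verify that the degree-maintenance internals of \cite{chang2019improved}'s algorithm (in particular its trimming/boundary routines, which track vertex degrees) remain correct when degrees are augmented by self-loops mid-computation; this should be routine precisely because self-loops are inert with respect to cuts, but it must be spelled out.
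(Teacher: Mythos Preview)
Your proposal is correct and follows essentially the same approach as the paper: the paper's own justification (the paragraph immediately preceding the theorem) is just the two-sentence sketch ``whenever we find a $\phi$-sparse cut, for each cut edge $(u,v)$, we add $(\alpha/\phi)$ self-loops on both $u$ and $v$ before recursing on both sides,'' together with the derivation of $\alpha$ by plugging $\epsilon=1/O(\log n)$ into the $\phi\ge(\epsilon/\log n)^{2^{O(k)}}$ constraint with $k=\log\log\log n$. You have faithfully expanded this sketch, adding the monotonicity-in-$w$ observation for $G[U_i]^w$ and the $m^{1+o(1)}$ edge-count check, both of which the paper leaves implicit.
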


In \cite{goranci2021expander}, it is shown that constructing congestion approximators can
be reduced to computing boundary-linked expander decomposition a few
times, which is summarized as follows:

\begin{lem}
By calling an algorithm for computing a $(\epsilon,\phi,\alpha)$-boundary-linked
expander decomposition for $O(\log_{(1/\epsilon)}m)$ times, one can
construct a congestion approximator ${\cal S}$ with quality $\gamma_{{\cal S}}=O((1/\phi)\cdot(1/\alpha)^{\log_{(1/\epsilon)}m})$.
\end{lem}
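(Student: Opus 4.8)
The plan is to prove this by the standard recursive construction of a congestion approximator (equivalently, a hierarchical oblivious-routing scheme) out of expander decompositions, as in \cite{goranci2021expander}, while tracking how the boundary-linked parameter $\alpha$ controls the loss incurred at each level. First I would set $G_0=G$ and, given $G_i$ with $m_i$ edges, call the $(\epsilon,\phi,\alpha)$-boundary-linked expander decomposition to partition $V(G_i)$ into clusters $U^i_1,\dots,U^i_{k_i}$, and then contract each cluster to a single vertex to form $G_{i+1}$. Since the decomposition cuts at most $\epsilon m_i$ edges, $m_{i+1}\le\epsilon m_i$, so the recursion halts after $L=O(\log_{1/\epsilon}m)$ levels (this is exactly the claimed number of calls); the sets I place into $\cS$ are the preimages in $G$ of the clusters over all levels. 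I would then record that $\cS$ is laminar, because the level-$i$ clusters refine the level-$(i{+}1)$ clusters, that each vertex of $G$ lies in at most $L$ sets, and $|\cS|=O(n)$ since the cluster count is nonincreasing and each branch removes vertices; also, contraction preserves inter-cluster edges, so $\delta_G(S)$ for a cluster preimage $S$ equals the boundary capacity of the corresponding cluster in $G_i$.

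Next I would establish the quality bound $\gamma_\cS=O((1/\phi)(1/\alpha)^{L})$. By \Cref{fact:mfmc} it suffices to show that every demand $\Delta$ with $|\Delta(S)|\le\delta_G(S)$ for all $S\in\cS$ can be routed in $G$ with congestion $O((1/\phi)(1/\alpha)^{L})$. I would do this by a bottom-up routing argument: maintaining a residual demand $\Delta_i$ supported on $V(G_i)$, at level $i$ I route, inside each cluster $U^i_j$, a flow in the $\phi$-expander $G_i[U^i_j]^{\alpha/\phi}$ that pushes the residual demand of $U^i_j$ onto its boundary vertices in proportion to their boundary capacities. The point is that the self-loops of weight $\alpha/\phi$ per boundary edge are precisely what make this demand feasible for the $\phi$-expander, so the expander-routing guarantee yields congestion $O(1/\phi)$ relative to the rescaled degrees, i.e. congestion $O((1/\phi)(1/\alpha)^{i})$ on the actual edges of $G_i$ — which are edges of $G$ not yet contracted. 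Summing the resulting geometric series over $i=0,\dots,L$ gives total congestion $O((1/\phi)(1/\alpha)^{L})$ on every edge, which is $\gamma_\cS$.

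The step I expect to be the main obstacle is the inductive bookkeeping inside this routing argument: one has to show that the hypothesis $|\Delta(S)|\le\delta_G(S)$ for $S\in\cS$ propagates through the contractions so that, at level $i$, the demand that must leave each cluster $U^i_j$ is at most an $(1/\alpha)^{i}$-fraction of that cluster's boundary capacity. This is exactly where the boundary-linked parameter enters and where the $(1/\alpha)^{L}$ blow-up originates, and it requires being careful about how the ``moved to boundary'' demand at level $i-1$ becomes the per-vertex demand at level $i$. Everything else — laminarity, the $m_{i+1}\le\epsilon m_i$ shrinkage giving $L=O(\log_{1/\epsilon}m)$ levels, and the final flow/cut duality invocation — is routine; since the full argument is carried out in \cite{goranci2021expander}, I would cite it for the details while making the per-level $1/\alpha$ accounting explicit so that the stated quality factor is transparent.
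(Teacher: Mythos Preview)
The paper does not actually prove this lemma: it is stated as a black-box summary of a result from \cite{goranci2021expander}, with no argument given beyond the citation. So there is no ``paper's own proof'' to compare against; you are supplying the argument that the authors deliberately omitted.

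Your outline is the correct one and matches the construction in \cite{goranci2021expander}: recursively decompose, contract clusters, observe $m_{i+1}\le\epsilon m_i$ so the depth is $L=O(\log_{1/\epsilon}m)$, take the laminar family of cluster preimages, and prove the quality bound by a level-by-level routing that pushes residual demand to cluster boundaries using the $\phi$-expansion of $G_i[U^i_j]^{\alpha/\phi}$. You also correctly flag the one nontrivial step, namely the inductive claim that the per-vertex demand entering level $i$ is controlled by degrees up to a factor $(1/\alpha)^i$; this is indeed where the boundary-linked parameter does its work and where the $(1/\alpha)^L$ in $\gamma_{\cS}$ comes from. One minor caution: expander routing in general costs an extra $O(\log n)$ factor on top of $1/\phi$, so strictly speaking the per-level congestion is $\tilde O(1/\phi)$ rather than $O(1/\phi)$; in the parameter regime used here this is absorbed into the $n^{o(1)}$ quality anyway, and the cited reference handles this, but if you write the proof out you should either carry the logarithm or say explicitly that you are suppressing it. Since your plan already defers the full bookkeeping to \cite{goranci2021expander}, your write-up is adequate for the purpose and in fact more informative than what the paper provides.
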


Plugging \Cref{thm:parallel expand decomp} into the above lemma, this implies an algorithm for \Cref{thm:congest parallel} where $n^{1/O(\log\log\log n)}$ depth and $m^{1+o(1)}$ work that computes a congestion approximator ${\cal S}$ with quality
$\gamma_{{\cal S}}=2^{\Theta(\sqrt{\log n}\cdot(\log\log n)^{2})} = n^{o(1)}$.

\subsection{Fair Cuts}

Given the above parallel construction for congestion approximator, we can obtain the following parallel fair cut algorithm:

\begin{theorem}[Parallel Fair Cut]\label{thm:fair parallel}
Given an unweighted graph $G=(V,E)$, two vertices
$s,t\in V$, and $\epsilon\in(0,1]$, we can compute with high probability
a $(1+\epsilon)$-fair $(s,t)$-cut in 
$n^{o(1)}/\poly(\epsilon)$ depth and $m^{1+o(1)}/\poly(\epsilon)$ work.
\end{theorem}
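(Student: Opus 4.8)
The goal is to prove \Cref{thm:fair parallel}: a parallel version of the fair cut algorithm, running in $n^{o(1)}/\poly(\epsilon)$ depth and $m^{1+o(1)}/\poly(\epsilon)$ work on unweighted graphs.

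My plan is to revisit each of the two layers in the sequential construction (\Cref{thm:fair} built on \Cref{thm:almost fair}) and argue that every step parallelizes, feeding in the parallel congestion approximator from \Cref{thm:congest parallel} instead of the sequential one from \Cref{thm:congest}. Concretely, I would proceed as follows. First, I would parallelize $\almostfair$ (\Cref{thm:almost fair}). Replace the call to \Cref{thm:congest} by \Cref{thm:congest parallel}; note this degrades the quality factor to $\gamma_{\cal S}=n^{o(1)}$ rather than $O(\log^4 n)$, which only affects the number of MWU rounds $T=\Theta(\log n/\alpha^2)$ with $\alpha=\epsilon/\gamma_{\cal S}$ — so $T$ becomes $n^{o(1)}/\poly(\epsilon)$, still subpolynomial. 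The step removing $t$ from all sets of ${\cal S}$ (\Cref{prop:no t}) is trivially parallel. The initialization of $V^0$ is the one genuinely sequential-looking step, since we iteratively remove sets $S$ with $\Delta|_{V^0}(S)>\delta_H(S)$; here I would exploit the laminarity of ${\cal S}$: process the laminar family bottom-up in $O(\log n)$ layers, and within each layer all the disjoint candidate sets can be tested and removed simultaneously, using parallel prefix-sums / Euler-tour techniques on the laminar forest to recompute $\Delta|_{V^0}(S)$; this gives $\polylog(n)$ depth and $\tO(m)$ work. Inside each MWU round, computing $f^i$ from the potentials is embarrassingly parallel, computing all $r_S^i$ and updating all weights is an $O(n)$-size computation on the laminar family doable with prefix sums in $\polylog$ depth, and the only subtle step is computing the deletion set $D^i=V_{>x^*}$ in \Cref{lem:1}: the implementation there sorts the $\phi_v^i$ and sweeps; parallel sort is $\polylog$ depth, and the sweep (finding the largest $x$ at which the piecewise-linear function $\Delta|_{V^{i-1}}(V_{>x})-\delta_H(V_{>x})$ hits zero) is a parallel prefix-sum followed by a parallel search. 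So $\almostfair$ runs in $n^{o(1)}/\poly(\epsilon)$ depth and $m^{1+o(1)}/\poly(\epsilon)$ work.

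Second, I would parallelize the outer loop of \Cref{sec:from_almost_to_fair} that converts $\almostfair$ into a fair cut. This loop is inherently sequential in its iterations — each iteration's cut depends on the previous one — but there are only $O(\log(C/\beta)/\beta)=\poly\log(n)/\poly(\epsilon)$ iterations (using $C=\poly(n)$), and each iteration makes two $\almostfair$ calls plus $O(1)$ cut-size comparisons ($\delta_G(P_s^j,T^j)$, etc.), all of which are $\tO(m)$-work, $\polylog$-depth graph operations; the conditional branching (Case 1 vs.\ Case 2) is a constant-time decision. Since the analysis flow $f^j$ is only used in the correctness proof and never computed by the algorithm, nothing there needs parallelizing. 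Multiplying the per-iteration cost by the iteration count still yields $n^{o(1)}/\poly(\epsilon)$ depth and $m^{1+o(1)}/\poly(\epsilon)$ work, which is the claimed bound.

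The main obstacle I anticipate is the $V^0$-initialization step in $\almostfair$: its natural description is a sequential while-loop that removes violating sets one at a time and changes $\Delta|_{V^0}$ as it goes. I expect the fix to be the layered laminar-family argument sketched above — crucially, the monotonicity observation already in \Cref{sec:time almostfair} (once $\Delta|_{V^0}(S)\le\delta_H(S)$ it stays so, and removing $S$ zeroes $\Delta|_{V^0}(S)$) means that removing all currently-violating disjoint sets in one layer simultaneously is safe and only decreases the other $\Delta|_{V^0}$ values, so the parallel version removes a superset in each round and terminates in $O(\log n)$ rounds with the same final $V^0$ up to the congestion-approximator guarantee. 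A secondary point to be careful about is that \Cref{thm:congest parallel} is stated only for unweighted graphs, which is exactly why \Cref{thm:fair parallel} is restricted to unweighted inputs; I would simply inherit that restriction. Assembling these observations gives the theorem.
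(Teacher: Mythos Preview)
Your proposal is correct and follows the same two-layer structure as the paper: parallelize $\almostfair$ by plugging in the parallel congestion approximator (accepting $\gamma_{\cal S}=n^{o(1)}$ and hence $T=n^{o(1)}/\poly(\epsilon)$ MWU rounds), then observe that the outer loop of \Cref{sec:from_almost_to_fair} has only $O(\log(C/\beta)/\beta)$ iterations, each of which is two $\almostfair$ calls plus trivial cut-size arithmetic.

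There is one genuine difference worth noting, in how the sweep-cut step (\Cref{lem:1}) is parallelized. You propose computing all the threshold cut values $\delta_H(S_k)$ directly by a prefix sum: after sorting by potential, each edge $(v_i,v_j)$ with $i<j$ contributes its capacity to $\delta_H(S_k)$ exactly for $i\le k<j$, so one difference array plus one prefix sum yields every $\delta_H(S_k)$ in $\polylog$ depth and $\tO(m)$ work; then a parallel search locates $x^*$. The paper instead reduces the problem of locating the right threshold to Karger's parallel $1$-respecting mincut algorithm on an auxiliary graph obtained by adding a weighted path through the sorted vertices. Both are correct; your approach is more elementary and self-contained, while the paper's reduction leans on an existing black box. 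You also spell out the $V^0$-initialization (bottom-up over the $O(\log n)$ laminar layers, using the monotonicity already noted in \Cref{sec:time almostfair}) more carefully than the paper, which simply asserts that ``the other initialization steps consist of elementary operations'' parallelizable in $\tO(1)$ depth.
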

Before proving the above theorem, we first argue how to obtain a parallel version of the $\almostfair$ algorithm.

See the running time analysis of $\almostfair$ in \Cref{sec:time almostfair}. We can parallelize it as follows. 
We initialize by computing a congestion approximator $\cal{S}$ with quality $\gamma_{\cal{S}}=n^{o(1)}$ via \Cref{thm:congest parallel}. The other initialization steps consist of elementary operations which can be parallelized in $\Otil(1)$ depth and $\Otil(m)$ work. 

For each round of the multiplicative weight update algorithm, the only non-trivial step is to a compute the ``deletion set'' $D^i$ via a sweep cut (\Cref{lem:1}).  

We will prove the below claim at the end.
\begin{claim}
\label{claim:sweep cut parallel}
\Cref{lem:1} admits a parallel implementation with $\Otil(1)$ depth and $\Otil(m)$  work.
\end{claim}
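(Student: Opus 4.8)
The plan is to follow the proof of \Cref{lem:1} in \Cref{sec:time almostfair} and observe that every nontrivial step is either a sort, a prefix-scan, or a parallel aggregation over the edges and vertices of $H$, all of which admit $\Otil(1)$-depth, $\Otil(m)$-work PRAM implementations. Concretely, the sequential algorithm does the following: compute the flow $f^i$ (each edge independently compares $\phi^i_u$ and $\phi^i_v$, so this is embarrassingly parallel), sort the potentials $\{\phi^i_v : v\in V(H)\}$ in $O(n\log n)$ work and $O(\log n)$ depth using a standard parallel sorting network, and then sweep through the sorted list to locate the threshold $x^*$. The only part that requires care is turning the sequential sweep into a parallel computation.

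First I would reduce the search for $x^*$ to evaluating, at each candidate breakpoint $x = \phi^i_v$, the function $g(x) = \Delta|_{V^{i-1}}(V_{>x}) - \delta_H(V_{>x})$. Writing $v_1,\dots,v_n$ for the vertices sorted by decreasing potential, the value $\Delta|_{V^{i-1}}(V_{>\phi^i_{v_k}})$ is just the prefix sum $\sum_{j\le k}\Delta|_{V^{i-1}}(v_j)$ over the first $k$ vertices (with ties in potential grouped together), which is computed for all $k$ simultaneously by a parallel prefix-sum in $O(\log n)$ depth and $O(n)$ work. For $\delta_H(V_{>x})$, I would assign to each edge $e=(u,v)$ of $H$ the two ``breakpoint events'' at $\max(\phi^i_u,\phi^i_v)$ and $\min(\phi^i_u,\phi^i_v)$ — the edge contributes $+c_H(e)$ to the cut as $x$ descends past the larger endpoint potential and $-c_H(e)$ once $x$ descends past the smaller one — bucket these $O(m)$ events by breakpoint (using the sorted order of potentials, this is a stable partition, done with a scan), and take another prefix sum to obtain $\delta_H(V_{>x})$ at every breakpoint. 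Combining the two prefix arrays gives $g$ evaluated at all $O(n)$ breakpoints in $O(\log n)$ depth and $O(m)$ work. Since $g$ is linear between consecutive breakpoints, $x^*$ is found by a parallel search for the last interval where $g$ changes sign to or stays at $0$, which is $O(\log n)$ depth after the arrays are built; then $D^i = V_{>x^*}$ is extracted in $\Otil(1)$ depth and $\Otil(n)$ work. The case $\langle \phi^i, \Delta|_{V^{i-1}}\rangle \le \langle \phi^i, Bf^i\rangle$ (where $D^i=\emptyset$) is detected by a single parallel inner-product computation.

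I expect the main obstacle to be purely bookkeeping: handling ties among potential values correctly so that $V_{>x}$ is well-defined at the breakpoints (one must group all vertices of equal potential into a single step of the sweep and place each edge's $\pm c_H(e)$ events at the correct grouped breakpoint), and making sure the event-bucketing and prefix-sum steps are genuinely work-efficient rather than incurring an extra $\log$ factor that would still be hidden by $\Otil(\cdot)$ anyway. None of this affects the asymptotics, so the claim $\Otil(1)$ depth and $\Otil(m)$ work follows. This in turn makes every round of the MWU algorithm run in $\Otil(1)$ depth and $\Otil(m)$ work; together with the parallel congestion-approximator construction of \Cref{thm:congest parallel} and the $T = \Theta(\log n/\alpha^2)$ bound on the number of rounds, it yields the parallel version of $\almostfair$, and hence of \Cref{thm:fair parallel}, with the stated work/depth bounds.
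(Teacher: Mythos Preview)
Your proposal is correct, and it takes a genuinely different route from the paper's proof. You parallelize the sweep directly: after sorting the potentials, you compute $\Delta|_{V^{i-1}}(S_k)$ by a prefix sum over vertices and $\delta_H(S_k)$ by an edge-event prefix sum (each edge contributes $+c(e)$ at the rank of its higher-potential endpoint and $-c(e)$ at the rank of its lower one), yielding the full array of values $g(\phi^i_{v_k})$ in $\Otil(1)$ depth and $\Otil(m)$ work, after which locating $x^*$ is a scan. The paper, by contrast, only prefix-sums the $\Delta$ side and then, to handle the $\delta_H(S_k)$ term, \emph{reduces to Karger's parallel $1$-respecting mincut}: it adds a path $P=(v_1,\dots,v_n)$ to $H$ with edge $(v_k,v_{k+1})$ weighted $M-\Delta|_{V^{i-1}}(S_k)$, so that the minimum $1$-respecting cut value equals $\min_k\big(\delta_H(S_k)+M-\Delta|_{V^{i-1}}(S_k)\big)$, and wraps this in a binary search to extract the largest qualifying $k$. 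Your approach is more elementary and self-contained---it needs only sorting and prefix scans---whereas the paper's approach leans on an existing nontrivial parallel primitive; both land at the same $\Otil(1)$ depth and $\Otil(m)$ work. The bookkeeping worry you flag (grouping equal potentials, placing edge events consistently) is real but, as you say, routine and hidden by $\Otil(\cdot)$.
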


Since our multiplicative weight update algorithm consists of  $T=O(\log(n)/\alpha^2) = m^{o(1)}/\poly(\epsilon)$ rounds (recall that $\alpha = \epsilon/\gamma_{\cal{S}}$), we can implement the $\almostfair$ algorithm from \Cref{thm:almost fair} in $m^{o(1)}/\poly(\epsilon)$ depth and  $m^{1+o(1)}/\poly(\epsilon)$ work.

Given the parallel implementation of the $\almostfair$ algorithm, we are almost done. The algorithm for computing fair cuts in \Cref{sec:alg fair}  simply calls the $\almostfair$ subroutine for $O(\log(C/\beta)/\beta)$ times where we set $\beta = \Theta(\alpha/\log n)$. Therefore, the algorithm require  $m^{o(1)}/\poly(\epsilon)$ depth and  $m^{1+o(1)}/\poly(\epsilon)$ work. This concludes \Cref{thm:fair parallel}.

\begin{proof}[Proof of \Cref{claim:sweep cut parallel}]
Recall that the problem is to compute $x^*$ which is the largest $x$ such that $\Delta|_{V^{i-1}}(V_{>x})-\delta_{H}(V_{>x}) > 0$ where $V_{>x}=\{v\in V(H):\phi_{v}^{i}>x\}$.

We start by parallel sorting vertices $v$ according to their potential $\phi^i_v$ in decreasing order. Let $v_1,\dots,v_n$ be the vertices after sorting. Let $S_k = \{v_1,\dots,v_k\}$.
We can compute the list of values of $\Delta|_{V^{i-1}}(S_k)$ for all $k\in [n]$ in $O(\log n)$ depth and $O(n)$ work using a classic parallel prefix sum algorithm \cite{ladner1980parallel}.

Observe that our goal is equivalent to  finding the largest $k$ where  $\delta_{H}(S_k) - \Delta|_{V^{i-1}}(S_k) < 0$.
By binary search, we can reduce the problem to checking if there is $k$ where  $\delta_{H}(S_k) - \Delta|_{V^{i-1}}(S_k) < 0$.

Now, this problem can be solved using a parallel 1-respecting mincut algorithm by Karger \cite{Karger2000minimum} (see also Lemma 11 of \cite{geissmann2018parallel}) with $O(\log n)$ depth and $O(m)$ work. The reduction is as follows.
Let $H'$ be the graph obtained from $H$ by inserting the tree $P = (v_1,\dots,v_n)$, which is a path. 
Let $M$ be a big number such that $M - \Delta|_{V^{i-1}}(S_k) > 0$. 
Each tree edge $(v_k,v_{k+1})\in P$, we set its weight to be $M - \Delta|_{V^{i-1}}(S_k)$. 
By computing a mincut in $H'$ that $1$-respect the tree $P$, we will obtain $k$ such that $\delta_{H'}(S_k)$ is minimized. Since $\delta_{H'}(S_k) = \delta_{H}(S_k) + M - \Delta|_{V^{i-1}}(S_k)$, we can just check if $\delta_{H'}(S_k) -M < 0$. 
\end{proof}

\subsection{Isolating Cuts and Gomory-Hu Tree}
Here, we finally prove \Cref{thm:ghtree-parallel-intro}.
We first briefly explain how the approximate isolating cuts algorithm (Algorithm~\ref{alg:isolating}) and Gomory-Hu tree algorithm (Algorithm~\ref{alg:approxGH}) can be parallelized to run in $\tilde O(m)$ work and $\textup{polylog}(n)$ parallel time.

For approximate isolating cuts, Phase~1 of Algorithm~\ref{alg:isolating} requires $O(\log n)$ many calls to $(1+\gamma)$-fair cut, which has a parallel algorithm by \Cref{thm:fair parallel}. For Phase~2, the sets $S_t$ and graphs $G_t$ can be constructed independently for different $t$ in parallel, and for the $(1+\beta)$-approximate minimum cut computation, we can use the parallel $(1+\beta)$-fair cut algorithm of \Cref{thm:fair parallel}, which is also a $(1+\beta)$-approximate minimum cut.

For Gomory-Hu tree, there are a few additional algorithms that need to be investigated. For the ``Cut Threshold Step'' algorithm (Algorithm~\ref{alg:step}), the $O(\log n)$ independent iterations can be executed in parallel, so the entire algorithm can as well. The $(1+\gamma)$-approximate Gomory-Hu Steiner tree ``step'' (Algorithm~\ref{alg:step2}) makes $O(\log^3n)$ (sequential) calls to Algorithm~\ref{alg:step}, so it can also be parallelized. The Gomory-Hu tree algorithm itself (Algorithm~\ref{alg:approxGH}) makes one call to Algorithm~\ref{alg:step2} and, aside from the recursive call on line~\ref{line:recursive}, consists of elementary operations that can directly be parallelized. For the recursive calls, we use \Cref{lem:depth} to argue that the recursion tree has depth $\textup{polylog}(n)$ w.h.p., so the recursive calls can be parallelized as well. (We stop the recursion after a large enough $\textup{polylog}(n)$ many recursive calls, which is all we need w.h.p.)

\section{Proof of Uncrossing Property}\label{sec:uncrossing}

Here, we prove the uncrossing property (\Cref{lem:uncrossing-property}), restated below. We remark that the proof follows the same outline as the proof of \Cref{lem:intersect} for approximate isolating cuts.
\Uncrossing*
\begin{proof}
Let $(U,V\setminus U)$ be a $(u,v)$-mincut. Without loss of generality, assume that $t\notin U$. (Otherwise, we can swap $u$ and $v$ and use $V\setminus U$ in place of $U$.) Our goal is to show that $U\cap S$ is an $\alpha$-approximate $(u,v)$-mincut contained in $S$, so that setting $R=U\cap S$ proves the lemma. Equivalently, we want to show that $\delta(U\cap S) \le \alpha\cdot\delta(U)$.

Using the notation $\uplus$ for disjoint union, we can write
    \begin{align*}
        E(U, V\setminus U) &= E(U\cap S, V\setminus (U\cup S)) \uplus E(U\cap S, S\setminus U) \uplus E(U\setminus S, V\setminus U) \\
        E(U\cap S, V\setminus (U\cap S)) &= E(U\cap S, V\setminus (U\cup S)) \uplus E(U\cap S, S\setminus U) \uplus E(U\cap S, U\setminus S).
    \end{align*}
    Since the first two sets are identical, we only need to compare the third sets $E(U\setminus S, V\setminus U)$ and $E(U\cap S, U\setminus S)$. Since $(S,T)$ is an $\alpha$-fair  $(s,t)$-cut, there is a feasible flow from $s$ to $t$ that, for each edge in $E(S,T)$, sends at least $1/\alpha$ times capacity in the direction from $S$ to $T$. Now, consider the flow on the subset of edges $E(U\cap S, U\setminus S) \subseteq E(S, T)$. This flow must reach $t$ eventually, and it must exit $U\setminus S$ along the edges in $E(U\setminus S, V\setminus (U\cup S))$. Thus, 
    $$\delta(U\cap S, U\setminus S) \le \alpha \cdot \delta(U\setminus S, V\setminus (U\cup S)) \le \alpha \cdot \delta(U\setminus S, V\setminus U).$$
    It follows that $\delta(U\cap S) \le \alpha\cdot \delta(U)$, which proves the lemma.
\end{proof}

\end{document}